
\documentclass[reqno]{article}

\usepackage[utf8]{inputenc}   
\usepackage[T1]{fontenc}      


\usepackage{amsmath,amsthm} 
\usepackage{amssymb,mathrsfs} 
\usepackage{amssymb}
\usepackage{amsfonts}
\usepackage{upgreek}
\usepackage{nicefrac}
\usepackage{geometry}
\usepackage{authblk}
\usepackage{graphicx}
\usepackage[hypertexnames=false]{hyperref}
\hypersetup{
  colorlinks = true,
  citecolor = blue,
}
 \usepackage{color}
 \usepackage{algorithm2e}
\usepackage{stmaryrd}

\usepackage{enumitem}
\usepackage{url}
\usepackage{graphicx} 
\usepackage{lmodern} 
\usepackage{xcolor}
\usepackage{bbm}

\usepackage{ifthen}
\usepackage{xargs}


\usepackage[disable]{todonotes}

\usepackage{aliascnt}
\usepackage{cleveref}
\usepackage{autonum}
\makeatletter
\newtheorem{theorem}{Theorem}
\crefname{theorem}{theorem}{Theorems}
\Crefname{Theorem}{Theorem}{Theorems}

\newtheorem*{lemma_nonumber*}{Lemma}

\newaliascnt{lemma}{theorem}
\newtheorem{lemma}[lemma]{Lemma}
\aliascntresetthe{lemma}
\crefname{lemma}{lemma}{lemmas}
\Crefname{Lemma}{Lemma}{Lemmas}

\newaliascnt{corollary}{theorem}
\newtheorem{corollary}[corollary]{Corollary}
\aliascntresetthe{corollary}
\crefname{corollary}{corollary}{corollaries}
\Crefname{Corollary}{Corollary}{Corollaries}

\newaliascnt{proposition}{theorem}
\newtheorem{proposition}[proposition]{Proposition}
\aliascntresetthe{proposition}
\crefname{proposition}{proposition}{propositions}
\Crefname{Proposition}{Proposition}{Propositions}

\newaliascnt{definition}{theorem}

\aliascntresetthe{definition}
\crefname{definition}{definition}{definitions}
\Crefname{Definition}{Definition}{Definitions}

\newaliascnt{remark}{theorem}
\newtheorem{remark}[remark]{Remark}
\aliascntresetthe{remark}
\crefname{remark}{remark}{remarks}
\Crefname{Remark}{Remark}{Remarks}

\crefname{example}{example}{examples}
\Crefname{Example}{Example}{Examples}

\crefname{figure}{figure}{figures}
\Crefname{Figure}{Figure}{Figures}

\newtheorem{assumption}{\textbf{H}\hspace{-3pt}}
\Crefname{assumption}{\textbf{H}\hspace{-3pt}}{\textbf{H}\hspace{-3pt}}
\crefname{assumption}{\textbf{H}}{\textbf{H}}

\newtheorem{assumptionA}{\textbf{A}\hspace{-3pt}}
\Crefname{assumptionA}{\textbf{A}\hspace{-3pt}}{\textbf{A}\hspace{-3pt}}
\crefname{assumptionA}{\textbf{A}}{\textbf{A}}

\Crefname{assumptionG}{\textbf{G}\hspace{-3pt}}{\textbf{G}\hspace{-3pt}}
\crefname{assumptionG}{\textbf{G}}{\textbf{G}}


\renewcommand{\bar}[1]{\overline{#1}}



\DeclareFontFamily{U}{matha}{\hyphenchar\font45}
\DeclareFontShape{U}{matha}{m}{n}{
      <5> <6> <7> <8> <9> <10> gen * matha
      <10.95> matha10 <12> <14.4> <17.28> <20.74> <24.88> matha12
      }{}
\DeclareSymbolFont{matha}{U}{matha}{m}{n}
\DeclareFontSubstitution{U}{matha}{m}{n}

\DeclareFontFamily{U}{mathx}{\hyphenchar\font45}
\DeclareFontShape{U}{mathx}{m}{n}{
      <5> <6> <7> <8> <9> <10>
      <10.95> <12> <14.4> <17.28> <20.74> <24.88>
      mathx10
      }{}
\DeclareSymbolFont{mathx}{U}{mathx}{m}{n}
\DeclareFontSubstitution{U}{mathx}{m}{n}

\DeclareMathDelimiter{\vvvert}{0}{matha}{"7E}{mathx}{"17}


\newcommand{\ulambda}{\underline{\lambda}}
\newcommand{\olambda}{\overline{\lambda}}

\newcommand{\kappavara}{\kappa_1}
\newcommand{\kappavarb}{\kappa_2}

\newcommand{\lambdaref}{\lambda_{\rm ref}}

\newcommand{\nax}{\nabla_x}
\newcommand{\divx}{\operatorname{div}_x}

\newcommand{\nav}{\nabla_v}
\newcommand{\Piv}{\Pi_v}

\newcommand{\measv}{\nu}

\usepackage{cancel}

\newcommand{\Lmu}{{\rmL^2(\mu)}}
\newcommand{\tLmu}{{\rmL_0^2(\mu)}}



\newcommand{\Lpi}{{\rmL^2(\pi)}}
\newcommand{\tLnu}{{\rmL_0^2(\nu)}}

\newcommand{\calBLmu}{{\calBLmu}}
\newcommand{\calBLnu}{{\calBLnu}}
\newcommand{\calBLkappa}{{\calBLkappa}}

\newcommand{\lang}{\left\langle}
\newcommand{\rang}{\right\rangle}









\newcommand{\half}{{\nicefrac{1}{2}}}

\newcommand{\ie}{\textit{i.e.}}


\newcommand{\calA}{\mathcal{A}}
\newcommand{\calB}{\mathcal{B}}

\newcommand{\calD}{\mathcal{D}}

\newcommand{\calL}{\mathcal{L}}

\newcommand{\calR}{\mathcal{R}}
\newcommand{\calS}{\mathcal{S}}
\newcommand{\calT}{\mathcal{T}}

\newcommand{\bbN}{\mathbb{N}}
\newcommand{\bbR}{\mathbb{R}}
\newcommand{\bbS}{\mathbb{S}}
\newcommand{\bbT}{\mathbb{T}}
\newcommand{\bbZ}{\mathbb{Z}}

\newcommand{\bbE}{\mathbb{E}}

\newcommand{\rme}{\mathrm{e}}

\newcommand{\rmH}{\mathrm{H}}
\newcommand{\rmL}{\mathrm{L}}

\newcommand{\bfG}{\mathbf{G}}

\newcommand{\bfM}{\mathbf{M}}

\newcommand{\bfone}{\mathbf{1}}

\newcommand{\Fscr}{\mathscr{F}}



\def\msa{\mathsf{A}}
\def\msd{\mathsf{X}}
\def\msdd{\mathsf{D}}

\def\msc{\mathsf{C}}
\def\mse{\mathsf{E}}
\def\msf{\mathsf{F}}

\def\msh{\mathsf{H}}
\def\msm{\mathsf{M}}

\def\msv{\mathsf{V}}

\def\msx{\mathsf{X}}

\def\mca{\mathcal{A}}
\def\bmca{\bar{\mathcal{A}}}
\def\mcd{\mathcal{D}}
\def\mcl{\mathcal{L}}
\def\mcs{\mathcal{S}}
\def\mct{\mathcal{T}}
\def\tmct{\tilde{\mathcal{T}}}
\def\tmcs{\tilde{\mathcal{S}}}

\def\mcab{\bar{\mathcal{A}}}
\def\mcbb{\mathcal{B}}  
\newcommand{\mcb}[1]{\mathcal{B}(#1)}
\def\mcc{\mathcal{C}}

\def\mcx{\mathcal{X}}
\def\mce{\mathcal{E}}
\def\mcf{\mathcal{F}}

\def\mcv{\mathcal{V}}
\def\mcr{\mathcal{R}}
\def\tmcr{\tilde{\mathcal{R}}}


\def\rset{\mathbb{R}}

\def\nset{\mathbb{N}}

\def\Rset{\mathbb{R}}


\def\mrm{\mathrm{m}}
\def\rmd{\mathrm{d}}

\def\mrl{\mathrm{L}}

\def\rmH{\mathrm{H}}
\def\rml{\mathrm{L}}

\def\rme{\mathrm{e}}
\def\rmn{\mathrm{n}}

\def\mrC{\mathrm{C}}
\def\mrCb{\mathrm{C}_{\operatorname{b}}}
\def\rmCb{\mathrm{C}_{\operatorname{b}}}

\def\rmc{\mathrm{c}}
\def\rmC{\mathrm{C}}


\newcommand{\abs}[1]{\left\vert #1 \right\vert}
\newcommand{\absLigne}[1]{\vert #1 \vert}
\newcommand{\tvnorm}[1]{\| #1 \|_{\mathrm{TV}}}

\newcommandx{\Vnorm}[2][1=V]{\| #2 \|_{#1}}
\newcommandx{\normpi}[2][2=\mrl^2(\pi)]{\left\Vert  #1 \right\Vert_{#2}}
\newcommandx{\normH}[2][2=2]{\left\Vert  #1 \right\Vert}
\newcommandx{\normHLigne}[2][2=2]{\Vert  #1 \Vert}
\newcommandx{\normHLine}[2][2=2]{\Vert  #1 \Vert}
\newcommandx{\normmu}[2][2=2]{\left\Vert  #1 \right\Vert_{#2}}
\newcommandx{\normopmu}[2][2=2]{\left\vvvert  #1 \right\vvvert_{#2}}
\newcommandx{\normoppi}[2][2=\rml^2(\pi)]{\left\vvvert  #1 \right\vvvert_{#2}}
\newcommandx{\normopH}[2][2=2]{\left\vvvert  #1 \right\vvvert}
\newcommandx{\normop}[2][2=2]{\left\vvvert  #1 \right\vvvert}
\newcommand{\ps}[2]{#1^{\top}#2}
\newcommand{\pspi}[2]{\left\langle#1,#2 \right\rangle_2}
\newcommand{\psH}[2]{\left\langle#1,#2 \right\rangle}
\newcommand{\psmu}[2]{\left\langle#1,#2 \right\rangle_2}
\newcommand{\psmuLigne}[2]{\langle#1,#2 \rangle_2}
\newcommandx{\normpiLine}[2][2=2]{\Vert  #1 \Vert_{#2}}
\newcommandx{\normpiLigne}[2][2=2]{\Vert  #1 \Vert_{#2}}
\newcommandx{\normmuLine}[2][2=2]{\Vert  #1 \Vert_{#2}}
\newcommandx{\normmuLigne}[2][2=2]{\Vert  #1 \Vert_{#2}}
\newcommandx{\normopmuLine}[2][2=2]{\vvvert  #1 \vvvert_{#2}}
\newcommandx{\normopmuL}[2][2=2]{\normopmuLine{#1}[#2]}
\newcommandx{\normopHLine}[2][2=2]{\vvvert  #1 \vvvert}
\newcommandx{\normopLine}[2][2=2]{\vvvert  #1 \vvvert}
\newcommand{\pspiLine}[2]{\langle#1,#2 \rangle_2}

\newcommandx{\VnormEq}[2][1=V]{\left\| #2 \right\|_{#1}}
\newcommandx{\norm}[2][1=]{\ifthenelse{\equal{#1}{}}{\left\Vert #2 \right\Vert}{\left\Vert #2 \right\Vert^{#1}}}
\newcommandx{\normLigne}[2][1=]{\ifthenelse{\equal{#1}{}}{\Vert #2 \Vert}{\Vert #2\Vert^{#1}}}

\newcommand{\parenthese}[1]{\left(#1 \right)}

\newcommand{\parentheseDeux}[1]{\left[ #1 \right]}
\newcommand{\defEns}[1]{\left\lbrace #1 \right\rbrace }


\newcommandx\probaMarkovTilde[2][2=]
{\ifthenelse{\equal{#2}{}}{{\widetilde{\mathbb{P}}_{#1}}}{\widetilde{\mathbb{P}}_{#1}\left[ #2\right]}}

\newcommand{\plusinfty}{+\infty}
\def\ie{\textit{i.e.}}

\def\eqsp{\;}
\renewcommand{\iint}[2]{\{ #1,\ldots,#2\}}
\newcommand{\coint}[1]{\left[#1\right)}

\newcommand{\ooint}[1]{\left(#1\right)}
\newcommand{\ccint}[1]{\left[#1\right]}

\newcommand{\oointLigne}[1]{(#1)}
\newcommand{\ccintLigne}[1]{[#1]}

\newcommand{\oointLine}[1]{(#1)}

\newcommandx\sequence[3][2=,3=]
{\ifthenelse{\equal{#3}{}}{\ensuremath{\{ #1_{#2}\}}}{\ensuremath{\{ #1_{#2}, \eqsp #2 \in #3 \}}}}
\newcommandx\sequenceD[3][2=,3=]
{\ifthenelse{\equal{#3}{}}{\ensuremath{\{ #1_{#2}\}}}{\ensuremath{( #1)_{ #2 \in #3} }}}

\newcommandx{\sequencen}[2][2=n\in\N]{\ensuremath{\{ #1_n, \eqsp #2 \}}}
\newcommandx\sequenceDouble[4][3=,4=]
{\ifthenelse{\equal{#3}{}}{\ensuremath{\{ (#1_{#3},#2_{#3}) \}}}{\ensuremath{\{  (#1_{#3},#2_{#3}), \eqsp #3 \in #4 \}}}}
\newcommandx{\sequencenDouble}[3][3=n\in\N]{\ensuremath{\{ (#1_{n},#2_{n}), \eqsp #3 \}}}

\def\iid{i.i.d.}

\def\eg{e.g.}
\def\Id{\operatorname{Id}}
\def\Idd{\operatorname{I}_d}
\newcommand{\ensemble}[2]{\left\{#1\,:\eqsp #2\right\}}
\newcommand{\ensembleLigne}[2]{\{#1\,:\eqsp #2\}}


\def\Tr{\mathrm{Tr}}
\def\generator{\calL}
\newcommand{\core}{\rmc}

\def\bfe{\mathbf{e}}

\def\domain{\mathrm{D}}
\def\card{\operatorname{card}}
\def\bigone{\operatorname{1}}
\def\bigO{\mathcal{O}}
\def\Cp{C_{\operatorname{P}}}
\def\core{\mathsf{C}}
\def\entropyA{\mathscr{H}}
\def\range{\operatorname{Ran}}
\def\kernel{\operatorname{Ker}}
\def\spec{\operatorname{Spec}}

\def\ff{f}
\def\gg{g}

\newcommand\overlineb[1]{{#1}}
\def\poly{\mathrm{poly}}
\def\veps{\varepsilon}
\def\trace{\operatorname{Tr}}
\def\sign{\operatorname{sgn}}
\def\diag{\operatorname{diag}}

\def\rateConv{\alpha}

\def\sphere{\mathbb{S}}
\newcommand{\1}{\mathbbm{1}}

\newcommand{\vphi}{\varphi}
\newcommand{\even}{\mathrm{e}}

\def\restriction#1#2{\mathchoice
              {\setbox1\hbox{${\displaystyle #1}_{\scriptstyle #2}$}
              \restrictionaux{#1}{#2}}
              {\setbox1\hbox{${\textstyle #1}_{\scriptstyle #2}$}
              \restrictionaux{#1}{#2}}
              {\setbox1\hbox{${\scriptstyle #1}_{\scriptscriptstyle #2}$}
              \restrictionaux{#1}{#2}}
              {\setbox1\hbox{${\scriptscriptstyle #1}_{\scriptscriptstyle #2}$}
              \restrictionaux{#1}{#2}}}
\def\restrictionaux#1#2{{#1\,\smash{\vrule height .8\ht1 depth .85\dp1}}_{\,#2}}


\newcommand\blfootnote[1]{%
  \begingroup
  \renewcommand\thefootnote{}\footnote{#1}%
  \addtocounter{footnote}{-1}%
  \endgroup
}

\title{Hypocoercivity of Piecewise Deterministic Markov Process-Monte Carlo}

\author[1]{Christophe Andrieu}
\author[2]{Alain Durmus}
\author[3]{Nikolas N\"usken}
\author[4]{Julien Roussel}

\affil[1]{School of Mathematics, University of Bristol, UK.}
\affil[2]{CMLA - \'Ecole normale supérieure Paris-Saclay, CNRS, Université Paris-Saclay, 94235 Cachan, France.}
\affil[3]{Imperial College London, UK.}
\affil[4]{\'Ecole des ponts ParisTech and INRIA, Paris, France.}

\begin{document}

\maketitle

\blfootnote{$^{1}$c.andrieu@bristol.ac.uk; $^{2}$alain.durmus@cmla.ens-cachan.fr; $^{3}$nik.nuesken@gmx.de;  $^{4}$julien.roussel@enpc.fr}
\begin{abstract}
    {In this work, we establish  $\rml^2$-exponential convergence for a broad class of Piecewise Deterministic Markov Processes recently proposed in the context of Markov Process Monte Carlo methods and covering in particular the Randomized Hamiltonian Monte Carlo \cite{Dolbeault15,BouRabee17}, the Zig-Zag process \cite{Bierkens16}  and the Bouncy Particle Sampler  \cite{peters2012rejection,BouchardCote15}. The kernel of the symmetric part of the generator of such processes is non-trivial, and we follow the ideas recently introduced in \cite{Dolbeault09,Dolbeault15} to develop a rigorous framework for hypocoercivity in a fairly general and unifying set-up, while deriving tractable estimates of the constants involved in terms of the parameters of the dynamics. As a by-product we characterize the scaling properties of these algorithms with respect to the dimension of classes of problems, therefore providing some theoretical evidence to support their practical relevance.}
\end{abstract}


\section{Introduction}


Consider a probability distribution $\pi$ defined on the Borel $\sigma$-field $\mcx$ of some domain $\msx = \bbR^d$ or $\msx = \bbT^d$ where $\bbT = \bbR / \bbZ$. Assume that $\pi$ has a density with respect to the Lebesgue measure also denoted $\pi$ and of the form
$	\pi = \rme^{-U} / \int_{\msx} \rme^{-U(y)} \rmd y $
where $U\colon\msx\to\bbR$ is a continuously differentiable function and is  referred to as the potential associated with $\pi$. Sampling from such distributions is of interest in computational statistical mechanics and in Bayesian statistics and allows one, for example, to compute efficiently expectations of functions $\ff : \msx \to \rset$ with respect to $\pi$ by invoking empirical process limit theorems, e.g. the law of large numbers. In practical set-ups, sampling exactly from $\pi$ directly is either impossible or computationally prohibitive. A standard and versatile approach to sampling from such distributions  consists of using Markov Chain Monte Carlo (MCMC) techniques \cite{GelmanCaStDuVeRu2014,liu2008monte,robert2013monte}, where the ability of simulating realizations of ergodic Markov chains leaving $\pi$ invariant is exploited. Markov Process Monte Carlo (MPMC) methods are the continuous time counterparts of MCMC but their exact implementation is most often impossible on computers and requires additional approximation, such as time discretization of the process in the case of the Langevin diffusion. A notable exception, which has recently attracted significant attention, is the class of MPMC relying on Piecewise Deterministic Markov Processes (PDMP) \cite{Davis1984,davis:1993}, which in addition to being simpler to simulate than earlier MPMC, are nonreversible, offering the promise of better performance. We now briefly introduce a class of processes covering existing algorithms. The generic mathematical notation we use in the introduction is fairly standard and fully defined at the end of the section.

Known PDMP Monte Carlo methods rely on the use of the auxiliary variable trick, that is the introduction of an instrumental variable and probability distribution $\mu$ defined on an extended domain, of which $\pi$ is a marginal distribution, which may facilitate simulation. In the present set-up, one introduces the velocity variable $v \in \msv \subset \bbR^d$ associated with a probability distribution $\nu$ defined on the $\sigma$-field $\mcv$ of $\msv$, where the subset $\msv$ is assumed to be closed. Standard choices for $\nu$ include the centered normal distribution with covariance matrix $m_2 \Idd$, where $\Idd$ is the $d$-dimensional identity matrix, the uniform distribution on the unit sphere $\bbS^{d-1}$, or the uniform distribution on $\msv=\{-1, 1 \}^d$. Let $\mse = \msx \times \msv$ and define the probability measure $\mu = \pi \otimes \measv$. The aim is now to sample from the probability distribution $\mu$.

We denote by $\rmCb^2(\mse)$ the set of bounded functions of $\rmC^2(\mse)$. The PDMP Monte Carlo algorithms we are aware of fall in a class of processes associated with generators of the form, for $\ff \in \rmCb^2(\mse)$ and $(x,v) \in \mse$,
\begin{align}
  \label{eq:generator pdmp}
& \qquad \calL_1 \ff(x,v) \\
& \qquad = v^\top \nax \ff(x,v) + \sum_{k=1}^K \lambda_k(x,v) \left( \mcbb_k - \Id \right)\ff(x,v) +m_2^{\half} \lambdaref(x)\mcr_v \ff(x,v)\eqsp,
\end{align}
where $K\in \nset$,  $\lambda_k : \mse \to \bbR_+$ for $k \in \{1,\ldots, K\}$, $\lambdaref : \msx \to \bbR_+$, $(\calR_v,\domain(\calR_v))$ and $(\mcbb_k,\domain(\mcbb_k))$ for $k \in \{1,\ldots, K\}$ are operators we specify below, and for $i \in  \{1,\ldots, d\}$ we assume

\begin{equation}
 \label{eq:def_m_2}
 	m_2 = \int_{\msv} v_i^2 \ \rmd \measv (v) \eqsp,
 \end{equation}
 which is assumed to be finite.  For any $k \in \{1,\ldots, K\}$, $\lambda_k$ will be referred to as a jump rate and $\lambdaref$ as the refreshment rate.

 In the case where $\msv = \rset^d$ and $\nu$ is the zero-mean Gaussian distribution on $\rset^d$ with covariance matrix $m_2 \Idd$, we also consider generators of the form, for any $\ff \in \rmCb^2(\mse)$ and $(x,v) \in \mse$,
 \begin{equation}
 \label{eq:generator_r_d_2}
 \mcl_2 \ff(x,v) = \mcl_1 \ff(x,v) - m_2 F_0(x)^{\top} \nabla_v f(x,v) \eqsp,
 \end{equation}
 where $F_0 : \msx \to \rset^d$. 

 For any $k \in \{1,\ldots , K\}$, the jump operators $\mcbb_k$ we consider are associated with continuous vector fields $F_k : \msx \to \rset^d$ of the form, for any $\ff : \mse \to \rset$ and $(x,v) \in \mse$,
 \begin{equation}
    \label{eq:def-bounces}
 \begin{aligned}
   \mcbb_k \ff(x, v) &= \ff \left( x,v - 2 \big(v^\top \rmn_k(x)\big) \, \rmn_k(x) \right) \eqsp, \\
   \rmn_k(x) &= \begin{cases}
	F_k(x)/\abs{F_k(x)} & \text{ if $F_k(x) \neq 0$} \eqsp,\\
	0 & \text{ otherwise \eqsp.} 
      \end{cases}
    \end{aligned}
  \end{equation}
These operators correspond to reflections of the velocity through the hyperplanes orthogonal to $F_k(X)$ at the event position $X$, \ie \ a flip of the component of the velocity in the direction given by $F_k$ inducing an elastic ``bounce'' of the position trajectory with the hyperplane.  As we shall see, the $K+1$ vector fields $F_k$ are tied to the potential $U$ by the relation $\nax U = \sum_{k=0}^K F_k$, required to ensure that $\mu$ is left invariant by the associated semi-group.  Informally, assuming for the moment that $\lambdaref=0$ and $F_0 = \nax U_0$ for some $U_0 \colon \msx \to \rset$, the corresponding process follows the solution of Hamilton's equations $(\dot{x}_t,\dot{v}_t) = \big(v_t, -\nax U_0(x_t)\big)$ for a random time of distribution governed by an inhomogeneous Poisson process with rate $(x,v) \mapsto \sum_{k=1}^K \lambda_k(x,v)$. When an event occurs and the current state of the process is $(X,V)$, one chooses between the $K$ possible updates of the state available, with probability proportional to $\lambda_1(X,V),\ldots,\lambda_K(X,V)$, with the particularity here that the position $X$ is left unchanged.

The vector fields $\{F_k : \msx \to \rset^d \, ; \, k \in \{1,\ldots,K\} \}$ and jump rates $\{\lambda_k : \mse \to \rset_+ \, ; \, k \in \{1,\ldots,K\} \}$ are linked by the  relations $\lambda_k(x,v) -\lambda_k(x,-v)=v^\top F_k(x)$ for $k \in \{1,\ldots, K\}$ and $(x,v) \in \mse$, together with other conditions,  required to ensure that $\mu$ is an invariant distribution of the associated semi-group. A standard choice, sometimes referred to as canonical, consists of choosing jump rates $\lambda_k(x,v)  =[\ps{v}{F_k(x)}]_+$  for $k \in \{1,\ldots, K\}$  and $(x,v) \in \mse$.

Denote by $\rmL^2(\mu)$ the set of measurable functions $\gg : \mse \to \rset$ such that $\int_{\mse} \gg^2 \, \rmd \mu < \plusinfty$. We let $\normpiLine{\cdot}$ be the norm induced by the scalar product
\begin{equation} \label{eq:def-L2-inner-product}
\text{for all }	 \ff, \gg \in \Lmu \eqsp, \quad \pspi{ \ff}{ \gg} = \int_\mse \ff \, \gg \, \rmd \mu \eqsp,
\end{equation}
making $\rmL^2(\mu)$ a Hilbert space. 

 The operator  $\calR_v$ will be referred to  as the refreshment operator, a standard
 example of which is $\calR_v = \Piv - \Id$ where $\Piv$ is the following
 orthogonal projector in $\Lmu$: for any $f \in \Lmu$, 
\begin{equation}
\label{eq:def_piv}
 \Piv \ff(x,v) = \int_\msv \ff(x, w) \, \rmd \measv(w) \eqsp,
\end{equation}
in which case the velocity is drawn afresh from the marginal invariant distribution, while the position is left unchanged. In this scenario the informal description of the process given above carries on with $\lambdaref \neq 0$ added to the rate $(x,v) \mapsto \sum_{k=1}^K \lambda_k(x,v)$, $\Piv$ an additional possible update to the velocity chosen with probability proportional to $\lambdaref$. Another possible choice is the generator of an Ornstein-Uhlenbeck operator leaving $\nu$ invariant.

 In all the paper we assume the following condition to hold for either $\mcl_1$ or $\mcl_2$, a condition satisfied by the examples covered in this manuscript.

 \begin{assumptionA}
 \label{as:generator}
 \begin{enumerate}
 \item  The operator $\mcl$ is closed in $\mrl^2(\mu)$, generates a strongly continuous contraction semi-group $(P_t)_{t \geq 0}$ on $\mrl^2(\mu)$, \ie~$P_0 = \Id$, for any $t,s \in \rset_+$, $P_{s+t} = P_sP_t$, for any $f \in \Lmu$, $\normpiLine{P_tf} \leq \normpiLine{f}$ and   $\lim_{t \to 0} \normpiLine{P_t f -f} = 0$.
    \item $\mu$ is a a stationary measure for $(P_t)_{t \geq 0}$, \ie~for any $t \in \rset_+$, $\mu P_t = \mu$.
    \item There exists a core $\msc$ for $\mcl$ such that $\msc$ is dense in $\mrl^2(\mu)$ and $\msc \subset \domain(\mcl) \cap \domain(\mcl^{\star})$, where $(\generator^{\star},\domain(\generator^{\star}))$ is the
adjoint of $\generator$  on $\mrl^2(\mu)$.
 \end{enumerate}
\end{assumptionA}

Note that if $\mcl$ generates a strongly continuous contraction semi-group then $\domain(\generator)$ is dense by
\cite[Theorem 2.12]{ethier:kurtz:1986} and the
adjoint of $\generator$  on $\mrl^2(\mu)$ is therefore well-defined and closed by \cite[Theorem 5.1.5]{pedersen1995analysis}, and $\domain(\generator^{\star})$ is dense. { Establishing that an operator $\mcl$ generates a continuous contraction semigroup is well known to be difficult in general, although we note recent progress in this direction in \cite{grothaus2020hypocoercivity}. However as discussed in Section \ref{sec:DMSforPDMP}, concerned with the application of our abstract results to PDMPs, operators such as defined in \eqref{eq:generator pdmp} and \eqref{eq:generator_r_d_2} can be shown to arise from well defined processes. Indeed \cite{Davis1984} establishes the existence of PDMP processes and identifies the extended generator solving the associated Martingale problem. Building on this earlier work \cite{durmus-invariant-2018arXiv180705421D} have recently developed a general framework to characterize the strong generator of a broad class of PDMPs for which $\mrC_{b}^{k}(\mse)$, the set of real valued functions with up to order $k \in \nset$ bounded differentials defined on a Riemanian manifold $\mse$, can be shown to be a core. }

We now describe how various choices of $K$ and $F_k$ lead to known algorithms. For simplicity of exposition, we assume for the moment that $\msv=\bbR^d$, $\nu$ is the zero-mean Gaussian distribution with covariance matrix $m_2 \Idd$ and $\calR_v=\Piv-\Id$, but as we shall see later our results cover more general scenarios. 
\begin{itemize}
\item The particular choice $K=0$ and $F_0 = \nax U$ corresponds to the procedure described in \cite{Duane87} as a motivation for the popular hybrid Monte Carlo method. This process is also known as the Linear Boltzman/kinetic equation in the statistical physics literature \cite{bhatnagar:gross:krook:1954} or randomized Hamiltonian Monte Carlo \cite{BouRabee17}.
  In this scenario the process follows the isocontours of $\mu$ for random times distributed according to an inhomogeneous Poisson law of parameter $\lambdaref>0$, triggering events where the velocity is sampled afresh from $\nu$.
\item The scenario where $K=d$, $F_0 = 0$ and for $k \in \iint{1}{d}$, $x \in \msx, \ F_k(x) = \partial_k U(x) \bfe_k$ where $(\bfe_k)_{k \in \iint{1}{d}}$ is the canonical basis, corresponds to the Zig-Zag (ZZ) process \cite{Bierkens16}, where the $x$ component of the process follows straight lines in the direction $v$ which remains constant between events. In this scenario, the choice of $\mcbb_k$ to update the velocity, consists of negating its $k$-th component; see also \cite{Faggionato2009} for related ideas motivated by other applications.
\item The standard Bouncy Particle Sampler (BPS) of \cite{peters2012rejection}, extended by \cite{BouchardCote15}, correspond to the choice $K=1$, $F_0 = 0$ and $F_1 = \nax U$. 

\item More elaborate versions of the ZZ and BPS processes, motivated by computational considerations, take advantage of the possibility to decompose the energy as $U=\sum_{k=0}^K U_k$ and corresponds to the choice $F_k = \nax U_k$ \cite{Michel14,BouchardCote15}, where in the former the sign flip operation is replaced with a component swap. 
\item It should be clear that one can consider more general deterministic dynamics with $F_0\neq0$, effectively covering the Hamiltonian Bouncy Particle Sampler, suggested in \cite{Vanetti17}.
\item We remark that the well-known Langevin algorithm corresponds to $K=0$, $F_0 = \nax U$ and the situation where $\calR_v$ is the Ornstein-Uhlenbeck process. 
\end{itemize}
More general bounces involving randomization (see~\cite{Vanetti17,Wu17,Michel17}) can also be considered in our framework, at the cost of additional complexity and reduced tightness of our bounds.

The main aim of the present paper is the study of the long time behaviour  for the class of processes described above using hypercoercivity methods popularized by  \cite{Villani09}. More precisely, consider $(P_t)_{t \ge 0}$ the semigroup associated to the PDMP with generator $\calL \in \{\mcl_1,\mcl_2\}$ defined above, we aim to find simple and verifiable conditions on $U,F_k,\calR_v$ and $\lambdaref$ ensuring the existence of $A\ge 1$ and $\alpha > 0$, and their explicit computation in terms of characteristics of the data of the problem, such that for any $\ff \in \rml_0^2(\mu)=\ensemble{ \gg \in \Lmu}{\ \int_{\mse} \gg \, \rmd \mu = 0 }$ and $t \geq 0$, 
\begin{equation}
\label{eq:decay estimate}
\normmu{ P_t \ff } \le A \rme^{-\alpha t} \normmu{ \ff } \eqsp .
\end{equation}

Establishing such a result is of interest to practitioners for multiple reasons. Explicit bounds may provide insights into expected performance properties of the algorithm in various situations or regimes. In particular the above leads to an upper bound on the integrated autocorrelation, which is a performance measure of  Monte Carlo estimators of $\int_{\mse} \ff \, \rmd \mu$, $f \in \mrl^2_0(\mu)$, defined by 
\begin{equation}
\lim_{T\to\infty}  T \, {\rm Var}_{\mu}\left(T^{-1} \int_0^T \ff(X_t,V_t) \, \rmd t \right) / \normmu{ \ff }^2  \le \left. 2A  \middle/ \alpha \right.  \eqsp,
\end{equation}
where $(X_t,V_t)_{t \geq 0}$ is a trajectory of a PDMP process of generator $\mcl$ with $(X_0,V_0)$ distributed according to $\mu$. 
 For a class of problems of, say, increasing dimension $d\to\infty$, weak dependence of $A$ and $\alpha$ on $d$ indicates scalability of the method. It is worth pointing out that the result above is equivalent to the existence of $A\ge 1$ and $\alpha > 0$ such that for any measure $\rho_0\ll\mu$ such that $\normpiLine{\rmd\rho_0/\rmd\mu}<\infty$
 \begin{multline}
   \label{eq:5}
   \tvnorm{\rho_0 P_{t}-\mu} = \int_{\mse} \abs{\rmd(\rho_0 P_{t})/\rmd\mu-\bigone} \rmd \mu \leq \normpi{\rmd(\rho_0 P_{t})/\rmd\mu-\bigone}\\
   \leq A \rme^{-\alpha t} \normpi{\rmd\rho_0/\rmd\mu-\bigone} \eqsp,
 \end{multline}
where for $t \geq 0$ the probability measure $\rho_0 P_{t}$ on $\mse$ is such that for $(x,v)\in E$ and any measurable function $f$ such that the integrals exists, $\rho_0 P_{t}f(x,v)=\int_\mse P_tf(y,w)\rmd \rho_0(y,w)$ and  the leftmost inequality is standard and a consequence of the Cauchy-Schwarz inequality. Our hypocoercivity result therefore also allows characterization of convergence to equilibrium of PDMPs in various scenarios and regimes, leading in particular to the possibility to compare performance of algorithms started from the same initial distribution. Establishing similar results for different metrics may be a useful complement to our characterization of algorithmic computational complexity and is left for future work.

In \cite{mouhot:nuemann:2006,Villani09}, 
convergence of the type \eqref{eq:decay estimate} is established using an appropriate $\rmH^1$-norm associated with $\mu$. 
The method which was developed in these papers is closely related to hypoellipticity theory \cite{hormander:1967,eckman:hairer:2003,herau:nier:2004} for Partial Differential Equation and in particular the kinetic Fokker-Planck equation. Convergence for linear Boltzman equations was first derived in \cite{herau:2006,mouhot:nuemann:2006}. Since then, several works have extended and completed these results \cite{Dolbeault15,hankwan:leautaud:2015,achleitner:arnold:carlen:2016,bouin:hoffman:mouhot:2017,evans2017hypocoercivity,monmarche2017note}.

\subsection*{Notation and conventions}
Denote by $(\bfe_i)_{i \in \iint{1}{d}}$ the canonical basis of $\rset^d$ and $\Idd$ the $d$-dimensional identity matrix. The Euclidean norm on $\rset^d$ or $\rset^{d \times d}$ is denoted by $\absLigne{\cdot}$, and is associated with the usual Frobenius inner product ${\rm Tr}(\Phi^\top \Gamma)$ for any $\Phi,\Gamma$ in $\rset^d$ or $\rset^{d \times d}$.

Let $\msm$ be a smooth submanifold of $\rset^n$, for $n \in \nset$. 
For any $k \in \nset$, denote by $\mrC^k(\msm,\rset^m)$ the set of $k$-times differentiable functions from $\msm$ to $\rset   ^m$, $\mrCb^k(\msm,\rset^m)$ stands for the subset of bounded functions in $\mrC^k(\msm,\rset^m)$ with bounded differentials up to order $k$. $\mrC^k(\msm)$ and $\mrCb^k(\msm)$ stand for $\mrC^k(\msm,\rset)$ and $\rmCb^k(\msm,\rset)$ respectively.

For $f :\msx \to \rset$ and $i \in \iint{1}{d}$, $x \mapsto \partial_{x_i}f(x)$ stands for the partial derivative of $f$ with respect to the $i^{\text{th}}$-coordinate, if it exists. Similarly, for $f:\msx \to \rset$, $i,j \in \iint{1}{d}$, denote by $\partial_{x_i,x_j} f = \partial_{x_i} \partial_{x_j} f$ when $\partial_{x_i} \partial_{x_j} f$ exists.  For $f =(f_1,\ldots,f_m) \in \mrC^1(\msx,\rset^m)$, $\nax f$ stands for the gradient of $f$ defined for any $x \in \msx$ by $\nax f(x) = (\partial_{x_j} f_i(x))_{i\in \{1,\ldots,m\}, \, j \in \{1,\ldots,d\}} \in \rset^{d \times m}$. For ease of notation, we also denote by $(\nabla_x,\domain(\nabla_x))$ the densely defined closed extension of $(\nabla_x,\rmCb^1(\msx))$ on $\rml^2(\pi)$, see \cite[p. 88]{yoshida:1980}.
For any $f \in \rmC^k(\msx, \rset^m)$, $k \in \nset$ and $p \geq 0$, define $$\norm{f}_{k,p} = \sup_{x \in \msx} \, \sup_{(i_1,\ldots,i_k) \in \{1,\ldots,d\}^k} \defEns{\normLigne{\partial_{x_{i_1}, \ldots,x_{i_k}} f(x)}/(1+\norm{x}^p)} \eqsp.$$
We set for $k \geq 0$,
\begin{equation}
 \rmC_{\poly}^k(\msx, \rset^m) = \ensemble{f \in \rmC^k(\msx,\rset^m)}{\inf_{p \geq 0} \norm{f}_{k,p} < \plusinfty} \eqsp,
\end{equation}
and $ \rmC_{\poly}^k(\msx)$ simply stands for $ \rmC_{\poly}^k(\msx, \rset)$.
For any $f \in \rmC^2(\msx,\rset)$, we let $\Delta_x f$ denote the Laplacian of $f$. $\Id$ stands for the identity operator. For two self-adjoint operators $(\mca,\domain(\mca))$ and $(\mcbb,\domain(\mcbb))$ on a Hilbert space $\msh$ equipped with the scalar product $\psH{\cdot}{\cdot}$ and norm $\normH{\cdot}$, denote by $\mca \succeq \mcbb$ if $\lang \ff, \mca \ff \rang \geq \lang \ff, \mcbb \ff \rang$ for all $\ff \in \domain(\mca) \cap \domain(\mcbb)$. Then, define $(\mca \mcbb, \domain(\mca \mcbb))$ with domain, if not specified, $\domain(\mca\mcbb)= \domain(\mcbb) \cap \{\mcbb^{-1}\domain(\mca) \}$.
For a bounded operator $\mca$ on $\msh$, we let $\normopHLine{ \mca }= \sup_{\ff \in \msh, f \neq 0} \normHLine{\mca\ff}/\normHLine{\ff}$. $\Pi$ is said to be an orthogonal projection if $\Pi$ is a bounded symmetric operator $\msh$ and $\Pi^2=\Pi$. An unbounded operator $(\mca,\domain(\mca))$ is said to be symmetric (respectively anti-symmetric) is for any $f,g \in \domain(\mca)$, $\psH{\mca f }{g} = \psH{f}{\mca g}$ (respectively $\psH{\mca f }{g} = -\psH{f}{\mca g}$). If $\mca$ is densely defined, $\mca$ is said to be self-adjoint if $\mca = \mca^{\star}$. If in addition $\mca$ is closed, $\msc \subset \domain(\mca)$ is said to be a core for $\mca$ if the closure of $\restriction{\mca}{\msc}$ is $\mca$.  Denote by $\bigone_\msf$ the constant function equals to $1$ from a set $\msf$ to $\rset$. For any unbounded operator $(\mca,\domain(\mca))$, we denote by $\range(\mca) = \ensembleLigne{\mca f}{f \in \domain(\mca)}$ and $\kernel(\mca) = \ensembleLigne{f \in \domain(\mca)}{\mca f = 0}$. For any probability measure $\mathrm{m}$ on a measurable space $(\msm,\mathcal{F})$, we denote by $\mrl^2(\mrm)$ the Hilbert space of measurable functions $f$ satisfying $\int_{\msm} f^2 \rmd \mrm < \plusinfty$, equipped with the inner product $\langle f , g \rangle_{\mrm} = \int_{\msm} fg \, \rmd \mrm$,  and $\mrl^2_0(\mrm) = \{ f \in \mrl^2(\mrm) \, : \, \int_{\msm} f \rmd \mrm = 0\}$. We will use the same notation for vector and matrix fields $\Phi,\Gamma \in \left(\bbR^d\right)^\msm$ or $\left(\bbR^{d \times d}\right)^\msm$, \ie~$\langle \Phi ,  \Gamma \rangle_{\mrm} = \int_\msm \rm{Tr}\left(\Phi^\top \Gamma\right) \, \rmd \mrm$ and no confusion should be possible. When $\mrm=\mu$ we replace $\mrm$ with $2$ in this notation.
 For any $x \in \msm$ denote by $\updelta_{x}$ the Dirac distribution at $x$. 
We define the total variation distance between two probability measures $\mrm_1,\mrm_2$ on $(\msm,\mathcal{F})$ by $\tvnorm{\mrm_1-\mrm_2} = \sup_{\msa \in \mcf} \abs{\mrm_1(\msa)-\mrm_2(\msa)}$. For a square matrix $A$ we let ${\rm diag}(A)$ be its main diagonal and for a vector $v \in \rset^d$ we let ${\rm diag}(v)$ be the square matrix of diagonal $v$ and with zeros elsewhere. For $a,b \in \rset$ we let $a \wedge b$ denote their minimum.   For any $i,j \in \bbN$, $\delta_{i,j}$ denotes the Kronecker symbol which is $1$ if $i=j$ and $0$ otherwise. For any $n_1,n_2 \in \nset$, $n_1 < n_2$, we let $\sum_{n_2}^{n_1} = 0$. For any $x \in \rset$ we let $(x)_+=\max \{0, x\}$ be its positive part.


\section{Main results and organization of the paper}
\label{s:main results}

We now state our main results. In the following, for any densely defined operator $(\mcc,\domain(\mcc))$ we let $(\mcc^{\star},\domain(\mcc^{\star}))$ denote its $\Lmu$-adjoint. First we specify conditions imposed on the potential $U$.

\begin{assumption} \label{as:U} The potential $U \in \rmC^3_{\poly}(\msx)$ and satisfies 
\begin{enumerate}
\item \label{item:condition_hessian} there exists $c_1 \ge 0$ such that, for any $x \in \msx$,  $	\nax^2 U(x) \succeq - c_1 \Idd$;
\item \label{item:liminf} 
\begin{equation}
	\liminf_{|x| \to \infty} \left\{ | \nax U (x)|^2/2 - \Delta_x U(x) \right\} > 0 \eqsp.
\end{equation}
\end{enumerate}
\end{assumption}

From \cite{persson1960bounds,Bakry08}, \Cref{as:U}-\ref{item:liminf} is equivalent to assuming that $\pi$ satisfies a Poincar\'e inequality on $\msx$, that is the existence of $\Cp>0$ such that, for any $\ff \in \rmC^2(\msx)$ satisfying $\int_{\msx} \ff \rmd \pi = 0$,
\begin{equation} \label{eq:poincare_assumption}
\normmu{ \nax \ff }^2 \ge \Cp \normmu{ \ff }^2 \eqsp. 
\end{equation}
Further, \Cref{as:U}-\ref{item:liminf} also implies the existence of $c_2 > 0$ and $\varpi \geq 0$ such that for any $x \in \msx$,
\begin{equation}
 \label{eq:laplacian bound}
 \Delta_x U(x) \leq c_2 d^{1+\varpi}  + | \nax U(x) |^2/2 \eqsp.
\end{equation}
\Cref{as:U}-\ref{item:liminf} indeed implies that the quantity considered is bounded from below, the scaling in $d$ in front of $c_2$ will appear natural in the sequel. We have opted for this formulation of the assumption required of the potential to favour intuition and link it to the necessary and sufficient condition for geometric convergence of Langevin diffusions, but our quantitative bounds below will be given in terms of the Poincar\'e constant $\Cp$ for simplicity (see \cite[Section 4.2]{bakry:gentil:ledoux:2014} for quantitative estimates of $\Cp$ depending on potentially further conditions on $U$). \Cref{as:U}-\ref{item:condition_hessian} is realistic in most applications, can be checked in practice and has the advantage of leading to simplified developments. It is possible to replace this assumption with $\sup_{x \in \msd}\{\vert \nax^2U(x) \vert/ (1+\vert \nax U(x) \vert)\}<\infty$ and rephrase our results in terms of any finite upper bound of this quantity (see \cite[Sections~2 and 3]{Dolbeault15}).
Finally the Poincaré inequality \eqref{eq:poincare_assumption} implies by \cite[Proposition 4.4.2]{bakry:gentil:ledoux:2014} that there exists $s >0$ such that
\begin{equation}
 \label{eq:borne_moment}
 \int_{\rset^d}\rme^{s \abs{x}} \, \rmd \pi(x) < \plusinfty \eqsp.
\end{equation}

\begin{assumption}
\label{as:Fk} The family of vector fields $\{F_k : \msx \to \rset^d \, ; \, k \in \iint{0}{K}\}$ satisfies 
\begin{enumerate} 
\item for $k \in \iint{0}{K}$, $F_k \in \mrC^2(\msx,\rset^d)$;
\item \label{item:eq:decomposition U} for all $x \in \msx$, $	\nax U(x) = \sum_{k=0}^K F_k(x)$;
\item \label{item:Fk bounded} for all $k\in\iint{0}{K}$ there exists $a_k \ge 0$ such that for all $x \in \msx$,
 \begin{equation}
 \label{eq:def_a_k}
	|F_k|(x) \le a_k \left\{ 1 + |\nax U |(x) \right\} \eqsp.
\end{equation}
\end{enumerate}
\end{assumption}
This assumption is in particular trivially true for the Zig-Zag and the Bouncy Particle Samplers. In turn we assume the jump rates to be related to the family of vector fields $\{F_k : \msx \to \rset^d \, ; \, k \in \{1,\ldots,K\} \}$ through the following conditions.

  \begin{assumption} \label{as:intensities} There exist a  continuous  function $\varphi : \rset \to \rset_+$, $C_{\vphi} \geq 1$ and $c_{\vphi} \geq 0$ satisfying for any $s \in \rset$, 
    \begin{equation}
      \label{eq:prop_vphi}
      \varphi(s) - \varphi(-s) = s \eqsp, \qquad \text{and} \qquad \abs{s} \leq \varphi(s) + \varphi(-s) \leq  c_{\vphi} m_2^\half+ C_{\vphi} \abs{s} \eqsp,
    \end{equation}
such that for any $k \in \{1,\ldots,K\}$ and $(x,v) \in \mse$, $\lambda_k(x,v)= \vphi\big(\ps{v}{F_k(x)}\big)$. 
\end{assumption}
We note that the canonical choice $\vphi(s) = (s)_+$ satisfies these conditions and that the first condition of \eqref{eq:prop_vphi} is equivalent to $\vphi(s) - (s)_+ = \vphi(-s) -(-s)_+$, implying that $\vphi(s)\geq (s)_+$ for all $s\in\rset$ and therefore that the left hand side inequality in \eqref{eq:prop_vphi} is automatically satisfied. If we further assume the existence of $C,c \geq 0$ such that  for all $s \in \rset$, $ \vphi(s) \leq  c m_2^\half + C \, (s)_+$ then the second inequality is satisfied with $C_{\vphi} =  C$ and $c_{\vphi} = 2c$.  As remarked in \cite{andrieu:livingstone:2018}, the first condition of \eqref{eq:prop_vphi} holds for rates based on the choice
\[
\vphi(s)=-\log\left(\phi\big(\exp(-s)\big)\right) \eqsp,
\]
such that $\phi \colon \rset_+ \to [0,1]$ satisfies $r\phi(r^{-1}) = \phi(r)$ for all $r \in \rset_+ \setminus \{0\}$. The canonical choice corresponds to $\phi(r) = 1\wedge r$, but the (smooth) choice $\phi(r) = r/(1+r)$ is also possible. 

%


\begin{assumption}
\label{as:radial} Assume that $\msv$ and $\nu$ satisfy the following conditions.
 \begin{enumerate}
\item \label{item:as:radial:general} 
 $\msv$ is stable under bounces, i.e. for all $(x,v) \in \mse$ and $k \in \iint{1}{K}$, $v - 2 (v^\top \rmn_k(x))\, \rmn_k(x) \in \msv$, where $\rmn_k(x)$ is defined by \eqref{eq:def-bounces}. 
\item \label{item:as:radial:general_2} For any $\msa \in \mcv$,
  $x \in \msx$, we have
  $\nu \parenthese{\defEns{\Id-2\rmn_k(x) \rmn_k(x)^{\top}} \msa } =
  \nu(\msa)$, for any $k \in \iint{1}{K}$.
 \item \label{item:as:radial:general_3} For any bounded and measurable
 function $g :\rset^2 \to \rset$, $i,j \in\iint{1}{d}$ such that $i \neq j$,
 $\int_{\msv} g(v_i,v_j) \, \rmd \nu (v) = \int_{\msv} g(-v_1,v_2) \, \rmd
 \nu (v)$;
\item \label{assum:fourth_moment} $\nu$
 has finite fourth order marginal moment and for $i \in\iint{1}{d}$
 \begin{equation}
 \label{eq:def_m_4}
m_4 = (1/3) \normmu{ v_i^2}^2 = (1/3) \int_{\msv} v_i^4 \, \rmd \measv (v) < \plusinfty \eqsp,
\end{equation}
and for any $i,j,k,l\in\iint{1}{d} $ such that $\card(\{i,j,k,l\})>2$
\begin{equation}
  \int_{\msv}v_i v_j v_k v_l\, \rmd \measv( v) = 0 \eqsp.
\end{equation}
\end{enumerate}
\end{assumption}
Note that in the case where $\msv$ and $\nu$ are rotation invariant,
\ie~for any rotation $O$ on $\rset^d$, $O \msv = \msv$ and for any
$\msa \in \mcv$, $\nu(O \msa) = \nu(\msa)$, then
\Cref{as:radial}-\ref{item:as:radial:general}-\ref{item:as:radial:general_2}-\ref{item:as:radial:general_3}
are automatically satisfied.

By \Cref{as:radial}-\ref{item:as:radial:general_3}, we have
$\int_{\msv} v_1 v_2 \rmd \nu (x) = 0$ taking $g(v_1,v_2) = v_1v_2$ for
any $(v_1,v_2) \in \rset^2$ and therefore for any $i,j \in\iint{1}{d}$
such that $i \neq j$, $\int_{\msv} v_iv_j \, \rmd \nu (v) = 0$. In
addition, under \Cref{as:radial}-\ref{assum:fourth_moment}, from the
Cauchy-Schwarz inequality, we obtain that
\begin{equation}
 \label{eq:def_m_2_2}
 m_{2,2} = \normmu{ v_1 v_2 }^2 = \int_{\msv} v_1^2 v_2^2 \, \rmd \measv (v) < \infty \eqsp,
\end{equation}
and note that in the Gaussian case we have the relation $m_4 = m_{2,2} = m_2^2$.
Finally, under \Cref{as:radial}, for any $f,g \in \mrl^2(\mu)$ and $k \in \{1,\ldots,K\}$, $\psmu{\mcbb_k f}{g} = \psmu{ f}{\mcbb_k g}$, that is $\mcbb_k$ is symmetric on $\mrl^2(\mu)$. 

%

In this paper we consider operators $(\mcr_v,\domain(\mcr_v))$ on $\mrl^2(\mu)$ satisfying the following conditions. In the sequel, we identify $\mrl^2_0(\nu)$ as a subset of $\mrl^2_0(\mu)$. 
\begin{assumption}
\label{as:operator on velocities} 
\begin{enumerate}
\item \label{item:as:operator_on_velocities_self_adjoint} $\calR_v$ satisfies the detailed balance condition: $	\calR_v = \calR_v^{\star}$ and $\mrC^2_{\poly}(\mse) \subset \domain(\calR_v)$;
  \item \label{item:as:operator_on_velocities_projection}
For any $f \in \mrl^2(\pi)$ and $g \in \rmC^2_{\poly}(\mse)$ such that $fg \in \mrl^2(\mu)$ then $fg \in \domain(\mcr_v)$ and  $\mcr_v(fg) = f \mcr_v(g)$; in addition, $\mcr_v(\bigone_{\mse})=0$.
\item \label{item:refreshment} $\calR_v$ admits a spectral gap of size $1$ on $\tLnu$: for any $g \in \mrl^2_0(\nu)\cap \domain(\mcr_v)$, $\pspi{-\calR_v g}{g} \geq \normmu{g}^2$; in addition,  it holds for any $i \in \{1,\ldots,d\}$, $v_i \in \domain(\mcr_v)$ and $	-\calR_v (v_i ) = v_i $.
\end{enumerate}
\end{assumption}
Typically, $\mcr_v$ is of the form $\Id \otimes \tilde{\mcr}_v$ where
$(\tilde{\mcr}_v,\domain(\tmcr_v))$ is a self-adjoint operator on
$\mrl^2(\nu)$ with spectral gap equals $1$. Then, condition
\Cref{as:operator on
 velocities}-\ref{item:as:operator_on_velocities_projection} is equivalent to
$\tmcr_v(\bigone_{\msv}) = 0$, which implies that for any $g \in \domain(\tmcr_v)$, we have
\begin{equation}
	\int_{\msv} \tmcr_v g \, \rmd \nu = \psmu{\bigone_{\mse}}{\mcr_v g} = \psmu{\calR_v^{\star}(\bigone_{\msv})}{g} = \psmu{\calR_v(\bigone_{\msv})}{g} = 0 \eqsp,
\end{equation}
 so that the process associated with $\tmcr_v$ preserves the probability measure $\measv$.

Note that \Cref{as:operator on
 velocities}-\ref{item:as:operator_on_velocities_projection} implies
that $\calR_v \Piv = 0$, whereas \Cref{as:operator on
 velocities}-\ref{item:refreshment} implies that
$-\calR_v (v_1 \Piv) = v_1 \Piv$, where $\Piv$ is defined by
\eqref{eq:def_piv}. 
Assumption \Cref{as:operator on velocities} is satisfied when $\calR_v=\Pi_v$, or $\mcr_v = \Id \otimes \tmcr_v$ with $\tmcr_v$ the generator of the Ornstein-Uhlenbeck process defined for any $g \in \mrCb^2(\rset^d)$ by
 \begin{equation}
 \tmcr_v g = -\ps{\nabla_v g}{v} + \Delta_v g \eqsp. 
 \end{equation}
\begin{assumption}
\label{as:refreshment}
The refreshment rate $\lambdaref : \msx \to \bbR_+$ is bounded from below and from above as follows: there exist $\ulambda>0$ and $c_{\lambda} \geq 0$ such that for all $x \in \msx$, 
\begin{equation}
 0 < \ulambda \le \lambdaref(x) \le \ulambda (1+c_{\lambda}|\nax U(x)|) \eqsp.
\end{equation}	
\end{assumption}
Under the previous assumptions we can prove exponential convergence of the semigroup. 
\begin{theorem}
  \label{thm:hypocoercivity}
Assume that $\mcl_i$, $i \in \{1,2\}$ given by \eqref{eq:generator pdmp} or \eqref{eq:generator_r_d_2} satisfies \Cref{as:generator} with $\msc = \rmCb^2(\mse)$ and  \Cref{as:U},  \Cref{as:Fk}, \Cref{as:intensities}, \Cref{as:radial}, \Cref{as:operator on velocities} and \Cref{as:refreshment} hold.  Then there exist $A > 0$ and $\alpha > 0$ such that, for any $\ff \in \tLmu$, and $t \in \rset_+$,
\begin{equation}
 \label{eq:cv_expo_L2}
\normmu{ P_t \ff } \le A \, \rme^{-\alpha t} \normmu{ \ff } \eqsp.
\end{equation}
The constants $A$ and $\alpha$ are given in explicit form in \eqref{eq:def_alpha_C_eps} in  \Cref{thm:DMSmain} (Section \ref{sec:DMSabstract}), in terms of the constant appearing in \Cref{as:U}, \Cref{as:Fk}, \Cref{as:radial}, \Cref{as:operator on velocities} and \Cref{as:refreshment}, where $\epsilon$ can be taken to be $\epsilon_0$ given in \eqref{eq:epsilon_0}, $	\lambda_v = \underline{\lambda}$, $\lambda_x = \Cp / (1+\Cp) $
  and $R_0 = (4+2\sqrt{3}) \vee (\ulambda/2^{\half}) \vee \bar{R}_0 $ where 
\begin{align}
\label{eq:def_R_0_cas_gene}
\bar{R}_0 &= \frac{\sqrt{2m_{2,2}+3(m_{4}-m_{2,2})_{+}}}{m_2}\defEns{\frac {2^{1/2} (1+C_{\vphi})\kappavara}{\kappavarb} \sum_{k=1}^K a_k + \kappa_1 } \\
&\qquad \qquad \qquad \qquad + \frac{\ulambda}{2^\half} \defEns{ 1+ \frac {2 c_{\lambda}\kappavara } { \kappavarb}} + \frac{c_\vphi K}{2 ^\half}\eqsp,
 \end{align}

  $\kappavara = (1 + c_1/2)^{\half}$ and $\kappavarb^{-1} = \Cp^{-1}(1+4 c_2 d^{1+\varpi}  + 16 \Cp^2)^{\half}$.
\end{theorem}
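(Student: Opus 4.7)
The plan is to invoke the abstract hypocoercivity framework of Dolbeault--Mouhot--Schmeiser, which the paper reformulates as Theorem~\ref{thm:DMSmain} in Section~\ref{sec:DMSabstract}. That abstract result takes a generator $\calL$ on $\Lmu$, decomposed as $\calL = \mss + \mct$ with $\mss$ symmetric and $\mct$ antisymmetric, and concludes \eqref{eq:cv_expo_L2} from four ingredients: (i) microscopic coercivity $\psmu{-\mss f}{f} \geq \lambda_v \normmu{(\Id-\Pi)f}^2$, where $\Pi = \Piv$ is the projector onto $\ker(\mss)$; (ii) macroscopic coercivity $\normmu{\mct \Pi f}^2 \geq \lambda_x \normmu{\Pi f}^2$ for $f \in \tLmu$; (iii) boundedness of the auxiliary operator $\msa \mct (\Id-\Pi)$ and of $\msa \mss$ on $\Lmu$, where $\msa = (\Id + (\mct \Pi)^{\star}(\mct \Pi))^{-1}(\mct \Pi)^{\star}$ is the Herau--Villani regularizing operator; and (iv) the mixed-term bound needed to control $\psmu{(\mct \Pi + \Pi \mct) A f}{f}$ by a constant $R_0$. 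Thus the task is to exhibit the decomposition, identify the kernel, and verify (i)--(iv) with the constants announced in the theorem.

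First I would write out the symmetric and antisymmetric parts of $\calL_i$. Using \Cref{as:radial} (which makes each bounce operator $\mcbb_k$ self-adjoint on $\Lmu$) together with \Cref{as:intensities}, a direct computation of the $\Lmu$-adjoints yields that the symmetric part is $\mss f = -\sum_{k=1}^K \tfrac{1}{2}(\vphi(v^\top F_k) + \vphi(-v^\top F_k))(\Id - \mcbb_k)f + m_2^{\half}\lambdaref \mcr_v f$, while the antisymmetric part contains the transport term $v^\top \nax f$, the antisymmetric piece of the jump rates (which uses the first identity in \eqref{eq:prop_vphi}), and in the case of $\mcl_2$ the additional drift $-m_2 F_0^\top \nav f$. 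The kernel of $\mss$ consists of functions independent of $v$: indeed $\mcr_v$ kills only functions of $x$ by \Cref{as:operator on velocities}-\ref{item:as:operator_on_velocities_projection}, and the jump part is nonnegative, vanishing on $v$-independent functions. Hence $\Pi = \Piv$ defined in \eqref{eq:def_piv}.

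Checking (i) and (ii) is relatively short. For microscopic coercivity, the jump contribution is nonnegative so $\psmu{-\mss f}{f} \geq m_2^{\half}\ulambda \, \psmu{-\mcr_v f}{f} \geq m_2^{\half}\ulambda \, \normmu{(\Id-\Piv)f}^2$ using \Cref{as:refreshment} and the spectral-gap assumption in \Cref{as:operator on velocities}-\ref{item:refreshment}; this gives $\lambda_v = \ulambda$ up to the velocity normalisation. For macroscopic coercivity, an easy calculation gives $(\mct \Piv f)(x,v) = v^\top \nax(\Piv f)(x)$, so $\normmu{\mct \Piv f}^2 = m_2 \int |\nax(\Piv f)|^2 \, \rmd \pi$; the Poincaré inequality \eqref{eq:poincare_assumption} for $\pi$ then yields $\lambda_x = \Cp/(1+\Cp)$ after the standard rescaling that accounts for the orthogonal split $f = (\Id-\Piv)f + \Piv f$.

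The main obstacle, and where explicit constants are earned, is condition (iii)--(iv): bounding $\normop{\msa \mct (\Id - \Pi)}$, $\normop{\msa \mss}$, and the mixed term $\psmu{(\mct\Pi + \Pi\mct)\msa f}{f}$. For $\msa g$ one must solve an elliptic-type equation of the form $(\Id + (\mct\Pi)^\star(\mct\Pi))h = (\mct\Pi)^\star g$ on position space, which reduces via $(\mct\Pi)^\star g = -\divx(m_2 \Piv(vg))$ to a weighted Poisson equation on $\pi$, amenable to $\rmH^1(\pi)$ elliptic regularity. The bounds then follow from controlling the jump contributions $\sum_k \vphi(v^\top F_k)(\mcbb_k-\Id)$ by using the growth $|F_k| \leq a_k(1+|\nax U|)$ in \Cref{as:Fk}-\ref{item:Fk bounded}, the pointwise inequality \eqref{eq:laplacian bound} coming from \Cref{as:U}-\ref{item:liminf}, the velocity-moment identities \eqref{eq:def_m_2}, \eqref{eq:def_m_4}, \eqref{eq:def_m_2_2} (including the vanishing of odd and mixed moments from \Cref{as:radial}-\ref{item:as:radial:general_3}), and the bound $|\vphi(s)|\leq c_\vphi m_2^\half + C_\vphi|s|$ from \Cref{as:intensities}. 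Carrying these estimates through yields the explicit $\bar R_0$ in \eqref{eq:def_R_0_cas_gene}, with the constants $\kappavara, \kappavarb$ reflecting the Hessian control from \Cref{as:U}-\ref{item:condition_hessian} and the Laplacian bound \eqref{eq:laplacian bound}. Once (i)--(iv) are quantified, substitution into \eqref{eq:def_alpha_C_eps} with $\epsilon = \epsilon_0$ in \eqref{eq:epsilon_0} yields the claimed $A$ and $\alpha$.
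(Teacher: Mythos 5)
Your proposal follows the same route as the paper's own proof: decompose $\calL_i$ into its $\Lmu$-symmetric and anti-symmetric parts, identify $\ker(\calS)$ with $\range(\Piv)$, verify the DMS conditions (microscopic coercivity with $\lambda_v=\ulambda$, macroscopic coercivity with $\lambda_x=\Cp/(1+\Cp)$, the $R_0$ bound via the weighted elliptic estimates of \Cref{sec:ellipt-regul-estim}, and the auxiliary algebraic conditions $\Piv\calT\Piv=0$ and $\range(\Piv)\subset\ker(\calS^\star)$), then invoke \Cref{thm:DMSmain}. The only cosmetic difference is that you state macroscopic coercivity in the original Dolbeault--Mouhot--Schmeiser form directly on $\calT\Piv$, whereas \Cref{as:DMSabstract}-\ref{item:DMS-macro} already absorbs the map $\Phi(z)=z/(1+z)$ into $\bmca\calT\Piv$; both yield the same constant.
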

\begin{proof}
  The proof is postponed to  \Cref{ss:proof-main-result}.
\end{proof}

 The following details the expected scaling behaviour with $d$ of $A$ and $\alpha$. The proof can be found in \Cref{sec:proof-theor-refthm:s}.
\begin{corollary}\label{thm:scaling-with-d} Consider the assumptions and notation of \Cref{thm:hypocoercivity}. Further suppose that there exists  $m_b>0$ satisfying 
\begin{equation} \label{eq:assumption-moments-velocity}
m_2^{-1}\sqrt{2m_{2,2}+3(m_{4}-m_{2,2})_{+}} \leq m_b \eqsp,
\end{equation}
which together with  $\Cp,c_1,c_2$
and $\norm{a}_{\infty}=\sup_{k \in \{1,\ldots,K\}} a_k$ are independent of $d$. Then $A \leq 3^{\half}$ and there exists  $C^{\alpha}(\Cp,c_1,c_2,\norm{a}_{\infty},m_b) >0$, independent of $d,\ulambda,c_\lambda$ and $C_\vphi,c_\vphi$, such that for $d$ large enough, 
 \begin{align}
\label{eq:borne_alpha_comp_calcul_dim}
\qquad    \alpha &> C^{\alpha}(\Cp,c_1,c_2,\norm{a}_{\infty},m_b) \, \ulambda \, m_2^{\half} \\
   &  \qquad \times \big[\{c_{\vphi}K \}\vee \{(1+C_{\vphi})d^{(1+\varpi)/2} K + 1\}\vee\{\ulambda(1+c_{\lambda} d^{(1+\varpi)/2})\}\big]^{-2} \eqsp. 
\end{align}
Thus, if $\ulambda$, $c_{\lambda}$, $C_{\vphi}$ and $c_{\vphi}$ are fixed, we get that $\alpha^{-1}$ is in general at most of order $\bigO(m_2^{-\half} d^{1+\varpi} K^2)$ if $K \geq 1$.
\end{corollary}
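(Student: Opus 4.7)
The plan is to specialise the abstract bounds of \Cref{thm:DMSmain} to the values of $\lambda_v$, $\lambda_x$, and $R_0$ prescribed in \Cref{thm:hypocoercivity}, and then to trace how each remaining quantity depends on the dimension $d$. The key observation is that $\kappavara = (1+c_1/2)^{1/2}$ is independent of $d$, while $\kappavarb^{-1} = \Cp^{-1}(1+4c_2 d^{1+\varpi}+16\Cp^2)^{1/2}$ grows like $d^{(1+\varpi)/2}$, and the velocity-moment prefactor $m_2^{-1}\sqrt{2m_{2,2}+3(m_4-m_{2,2})_{+}}$ is uniformly controlled by $m_b$ through \eqref{eq:assumption-moments-velocity}. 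Accordingly, each of the three summands in \eqref{eq:def_R_0_cas_gene} admits a clean upper bound: the bounce term by a universal multiple of $(1+C_\vphi)\,d^{(1+\varpi)/2}K\|a\|_\infty$ plus a $d$-independent remainder, the refreshment term by $\ulambda(1+c_\lambda d^{(1+\varpi)/2})$, and the jump-rate offset by $c_\vphi K$. Bounding the total by the corresponding maximum, and absorbing the additive $d$-independent constants $(4+2\sqrt{3})\vee(\ulambda/\sqrt 2)$ into the relevant factors, yields an upper bound for $R_0$ of exactly the three-term maximum form appearing inside the brackets in \eqref{eq:borne_alpha_comp_calcul_dim}.

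Next, for the prefactor $A$: by inspection of \eqref{eq:def_alpha_C_eps} the constant $A$ produced by the DMS framework is of the form $\sqrt{(1+\epsilon)/(1-\epsilon)}$ evaluated at $\epsilon=\epsilon_0$, and the very definition \eqref{eq:epsilon_0} of $\epsilon_0$ is engineered so that $\epsilon_0\leq 1/2$, which yields $A\leq\sqrt{3}$ by a direct algebraic check. For the rate $\alpha$, the abstract bound \eqref{eq:def_alpha_C_eps} reads, up to an absolute constant, $\alpha\gtrsim\epsilon_0\,\min\{\lambda_v,m_2^{1/2}\lambda_x\}$, with $\epsilon_0$ of order $R_0^{-2}$ — that scaling being exactly what ensures positivity of the modified $\mathrm{H}^1$-type quadratic form underlying the DMS construction. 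Since $\lambda_v=\ulambda$ and $\lambda_x=\Cp/(1+\Cp)$ contain no hidden dimension dependence, combining this with the upper bound on $R_0$ from the first step gives \eqref{eq:borne_alpha_comp_calcul_dim}, with $C^\alpha$ depending only on $\Cp,c_1,c_2,\|a\|_\infty,m_b$. The final asymptotic statement is then immediate: when $\ulambda,c_\lambda,C_\vphi,c_\vphi$ are fixed and $K\geq 1$, the second argument $(1+C_\vphi)d^{(1+\varpi)/2}K+1$ dominates the maximum for large $d$, so that $\alpha^{-1}$ is at most of order $\bigO(m_2^{-1/2}d^{1+\varpi}K^2)$.

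The main obstacle is purely one of bookkeeping: one must verify that \emph{all} $d$-dependence has been isolated inside the factor $R_0^2$, and in particular that the only dimensional growth in the auxiliary constants $\kappavara,\kappavarb$ enters through the stated $d^{(1+\varpi)/2}$ behaviour of $\kappavarb^{-1}$, while the sum $\sum_{k=1}^K a_k$ is handled uniformly through $\sum_{k=1}^K a_k\leq K\|a\|_\infty$. Once these scalings are separated from the absolute constants absorbed into $C^\alpha$, the final estimate \eqref{eq:borne_alpha_comp_calcul_dim} follows by arithmetic from the explicit expressions in \Cref{thm:DMSmain} and \eqref{eq:def_R_0_cas_gene}.
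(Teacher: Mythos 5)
Your overall strategy coincides with the paper's: specialize the abstract bound from \Cref{thm:DMSmain}, use that all dimension dependence in $\bar{R}_0$ from \eqref{eq:def_R_0_cas_gene} enters through $\kappavarb^{-1}\sim d^{(1+\varpi)/2}$ and through $\sum_k a_k \leq K\|a\|_\infty$, and then track the $d$-dependence via $\epsilon_0 \asymp R_0^{-2}$. The paper implements exactly this via \Cref{lemma:behaviour_rate_R_0}. However, two of your intermediate quantitative claims are stated incorrectly.

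First, the constant $A$ is \emph{not} of the form $\sqrt{(1+\epsilon)/(1-\epsilon)}$; by \eqref{eq:def_alpha_C_eps} it is $A(\epsilon) = \sqrt{(1+2^{1/2}\lambda_v\epsilon)/(1-2^{1/2}\lambda_v\epsilon)}$. Hence $\epsilon_0\leq 1/2$ is neither the relevant condition nor automatic (and if $\lambda_v$ is large, $\epsilon_0\leq 1/2$ does not even guarantee the norm equivalence). What is actually used is $\sqrt{2}\lambda_v\epsilon_0\leq 1/2$, which follows by combining $\epsilon_0\leq 1/(4R_0)$ (from \Cref{lemma:behaviour_rate_R_0}-(a), valid for $R_0\geq 4+2\sqrt{3}$) with the enforced lower bound $R_0\geq \ulambda/\sqrt{2}=\lambda_v/\sqrt{2}$ from \Cref{thm:hypocoercivity}. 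Second, the lower bound for the rate is a \emph{product}, not a minimum: from \eqref{eq:def_alpha_C_eps} and $\Lambda(\epsilon_0)\geq \epsilon_0\lambda_x/4$ one gets $\alpha(\epsilon_0)\geq \lambda_v\lambda_x m_2^{1/2}\,\epsilon_0/6$, not $\alpha\gtrsim \epsilon_0\min\{\lambda_v,m_2^{1/2}\lambda_x\}$. The $\min$ form would not deliver the target prefactor $\ulambda\, m_2^{1/2}$ in \eqref{eq:borne_alpha_comp_calcul_dim} — if both $\ulambda$ and $m_2^{1/2}$ are large, $\min\{\ulambda,m_2^{1/2}\}\ll \ulambda m_2^{1/2}$ and the argument would fail. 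With the correct product form and $\lambda_x=\Cp/(1+\Cp)$ absorbed into $C^\alpha$, the bound goes through exactly as in the paper; with your $\min$ form it does not.

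Once these two steps are replaced by the precise statements of \Cref{lemma:behaviour_rate_R_0} (which the paper invokes directly), the remainder of your bookkeeping for $\bar{R}_0$ — isolating the $d^{(1+\varpi)/2}$ growth of $\kappavarb^{-1}$, using the $m_b$ bound on the velocity moments, bounding $\sum_k a_k$ by $K\|a\|_\infty$, and identifying the dominant term in the maximum for large $d$ — is correct and matches the paper's reasoning.
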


We now discuss the assumptions of the theorem, and application of its conclusion to various instances of PDMP-MC and two examples of potentials.  Assumption \Cref{as:U} is problem dependent and verifiable in practice, while \Cref{as:Fk}, \Cref{as:radial}, \Cref{as:operator on velocities} and \Cref{as:refreshment} are user controllable and we have already discussed standard choices satisfying these conditions. More delicate may be establishing that \Cref{as:generator} holds and that $\mrCb^2(\mse)$ is indeed a core for the generator $\mcl$.    As shown in \cite{durmus-invariant-2018arXiv180705421D}, BPS and ZZ are well defined Markov process whose generators admit $\mrCb^2(\mse)$ as a core and similar arguments can be used to establish that it is also a core for the RHMC. Further, it is not difficult to show that for the class of processes described earlier, for any $f\in \mrCb^2(\mse)$, $\psmu{\mcl f} {\bfone}=0$, therefore implying that $\mu$ is an invariant distribution and that \Cref{as:generator} holds. 

First we note that the spectral gap is indeed expected to be proportional to  $m_2^\half$, since if $(X_t,V_t)_{t \geq 0}$ is a PDMP with generator of  the form \eqref{eq:generator pdmp} or \eqref{eq:generator_r_d_2} for $m_2=1$, then $(X_{m^{\half} t},m^{\half} V_{m^{\half} t})_{t \geq 0}$ is a PDMP with generator of  the same form with $m_2=m$. We therefore set $m_2=1$ below, a condition satisfied when $\nu$ is the uniform distribution on the sphere $\sqrt d \, \sphere^{d-1}$or $\{-1,1\}^d$, or the $d$-dimensional zero-mean Gaussian distribution with covariance matrix $\Idd$, all of which also satisfy \eqref{eq:assumption-moments-velocity}. More generally, by \Cref{lem:moments-spherically-symmetric} in \Cref{app:radial}, property \eqref{eq:assumption-moments-velocity} is satisfied if $\nu$ is a spherically symmetric distribution
 on $\rset^d$ corresponding to random variables $V=B^{\half} W$
for $W$ uniformly distributed on the  hypersphere $ \sqrt d \, \bbS^{d-1}$ and $B$ a non-negative random variable independent of $W$ and of first and second order moments $\gamma_1$ and $\gamma_2$ respectively such that $\gamma_2^{1/2}/\gamma_1$ is upper bounded by a constant independent of the dimension.

By \cite[Proposition 5.1.3, Corollary 5.7.2]{bakry:gentil:ledoux:2014}, independence of $\Cp$ on $d$ is satisfied for strongly convex potentials $U$: \ie~whenever there exists $m>0$ such that $\nax^2U(x) \succeq m \Idd$ for any $x \in \rset^d$ which implies that one can take $\Cp=m$. This is the case for $U(x)=\sum_{i=1}^{d}\big(1+x_{i}^{2}\big)^{\beta}/2$ or $U(x)=(1+\vert x\vert^{2})^{\beta}$ with $\beta \geq 1$, for which \eqref{eq:laplacian bound} is also satisfied with $\varpi=0$ and $\varpi = 1-1/\beta$ respectively  (see \Cref{lemma:potential:independent} and  \Cref{lemma:potential:polynomial} in \Cref{sec:examples-potentials}). We note that from the Holley-Stroock perturbation principle \cite{Holley87}, uniformly bounded perturbations of a strongly convex potential lead to independence of $\Cp$ on $d$. For $\beta \in [1/2,1)$ $\Cp>0$, but is dependent on $d$, see \cite[Chapter 4]{bakry:gentil:ledoux:2014}. However recent progress in the precise quantitative  estimation of spectral gaps of certain probability measures \cite{Bobkov2003,BONNEFONT20162456} allows for the strong convexity property to be relaxed to simple convexity and beyond, but leads to a dependence of $\Cp$ on $d$ which can be characterised. 

Now further assume that $C_\vphi,c_\vphi$ and  that the refreshment rate are uniformly bounded in the position $x$, implying $c_{\lambda}=0$. Then by \Cref{thm:scaling-with-d}-\eqref{eq:borne_alpha_comp_calcul_dim}, there exists  $C^{\alpha}(\Cp,c_1,c_2,\norm{a}_{\infty},m_b,c_\vphi, C_\vphi)>0$ such that for $d$ sufficiently large
\begin{equation}
\alpha \geq C^{\alpha}(\Cp,c_1,c_2,\norm{a}_{\infty},m_b,c_\vphi, C_\vphi) \,\left\{ \big[\ulambda \big(1+K^2 d^{1+\varpi}\big)^{-\half}\big] \wedge \ulambda^{-1} \right\} \eqsp,
\end{equation}
from which we deduce the optimal scaling of the refreshment rate, namely $C_1^{\ulambda} \, \big(1+K^2 d^{1+\varpi}\big)^\half \leq \ulambda \leq C_2^{\ulambda} \, \big(1+K^2 d^{1+\varpi}\big)^\half$ for $C_1^{\ulambda}, C_2^{\ulambda}>0$ (which we denote $\Theta\big((1+K^2 d^{1+\varpi})^\half\big)$ hereafter to alleviate notation).  Using the description of RHMC, ZZ and BPS provided in the introduction we deduce the first three lines of Table \ref{tab:summary}, where $\alpha=\omega(s)$ is used as a short hand notation for $\alpha \geq C^{\alpha}(\Cp,c_1,c_2,\norm{a}_{\infty},m_b,c_\vphi, C_\vphi) s$ for $s \to 0$. The fourth line uses our specialised results of Section \ref{ss:scaling-ZZ}, showing that the conclusion of Theorem \ref{thm:scaling-with-d} is not optimal for ZZ. 

In \cite{bierkens2018arXiv180711358B}  scaling limits of particular functionals of the ZZ and BPS processes are studied, leading to quantitative estimates of the time required to achieve near independence at equilibrium. More specifically they consider the scenario where the target distribution is a centred normal distribution of covariance matrix $\Idd$ and focus on the angular momentum, the negative log-target density and the first coordinate of the process. Our more general results, obtained using a different argument, are in agreement after noticing that \cite{bierkens2018arXiv180711358B} considered the scenario $m_2=d^{-1}$ and using our earlier remark on the dependence of our estimate of the absolute spectral gap on $m_2^\half$.  In \cite{deligiannidis2018randomized} it is shown, again using an approach different from ours, that the RHMC has dimension free convergence rate in a scenario similar to ours.

\bgroup
\def\arraystretch{1.5}
\begin{table} 
\begin{centering}

\begin{tabular}{c|c|c||c|c|}
\multicolumn{1}{c}{} &\multicolumn{1}{c}{} &\multicolumn{1}{c}{} & \multicolumn{2}{c}{$U(x)=$} \tabularnewline
\multicolumn{1}{c}{} & \multicolumn{1}{c}{$\underline{\lambda}$} & \multicolumn{1}{c}{$\alpha$} & \multicolumn{1}{c}{\smaller $\frac{1}{2} \sum_{i=1}^{d}(1+x_{i}^{2})^{\beta} $}   & \multicolumn{1}{c}{\smaller $(1+\vert x\vert^{2})^{\beta}$}\tabularnewline
\cline{2-5} 
RHMC & $\Theta(1)$ & $\omega\big(\underline{\lambda}\wedge\underline{\lambda}^{-1}\big)$ &  & \tabularnewline
\cline{2-3} 
BPS & $\Theta\big(d^{(1+\varpi)/2}\big)$ & $\omega\big(d^{-(1+\varpi)/2}\big)$ & $\beta\geq1$   & $\beta\geq1$\tabularnewline
\cline{2-3} 
ZZ (crude) & $\Theta\big(d^{(3+\varpi)/2}\big)$ & $\omega\big(d^{-(3+\varpi)/2}\big)$ & $\varpi=0$ & $\varpi=1-1/\beta$ \tabularnewline
\cline{2-5} 
ZZ (Section \ref{ss:scaling-ZZ}) & $\Theta(1)$ & $\omega(1)$ & $\beta\geq1$ & $\beta=2$\tabularnewline
\cline{2-5} 
\end{tabular}
\vspace{.25cm}
\caption{Left hand side: summary of the dependence of $\alpha$ on $d$ for $C_{P},c_{1},c_{2},\|a\|_{\infty}$
constant, $m_{2}=1$ and optimal choice of $\underline{\lambda}$.
Right hand side: summary of application to two examples of potentials.}\label{tab:summary}
\par
\end{centering}
\end{table}

While nonreversibily of the processes considered here may be practically beneficial, it is only recently that the tools allowing our work have been developed \cite{villani2006hypocoercive,Villani09}. Our method of proof relies on the framework proposed recently in \cite{Dolbeault09,Dolbeault15,2017arXiv170806180B}  to study the solutions of the forward Kolmogorov equation associated with the linear kinetic process, but we study the dual backward Kolmogorov equation for a broader class of processes as is the case in \cite{GS2014,GS2015,GS2016} who provide the first rigorous derivation of the results of \cite{Dolbeault09,Dolbeault15,2017arXiv170806180B}. This, combined with the flexibility of the framework of \cite{Dolbeault15,2017arXiv170806180B} explains the differing inner product used throughout, which we have found to lead to simpler computations while yielding identical conclusions. The estimate \eqref{eq:decay estimate} (with constant $A=1$) would follow straightforwardly from a Gr\"onwall argument if the generator $\mcl$ of the semigroup was coercive, that is it satisfied $\psmu{\generator f}{f} \leq -a \normmu{f}^2$ for some $a>0$ and any $f$ in a core of $\generator$.  Unfortunately, the symmetric part of the generator corresponding to a PDMP is degenerate in general, in the sense that it has a nontrivial null space. Hence, the aforementioned coercivity clearly fails to hold. However, it is possible to equip $\mrl^2(\mu)$ with an equivalent scalar product derived from $\psmu{\cdot}{\cdot}$ with respect to which $\mathcal{L}$ is coercive. The constant $\alpha$ is then given by the coercivity bound, while the constant $A$ can be obtained from estimates relating the two equivalent scalar products.

The paper is organised as follows. In \Cref{sec:DMSabstract} we develop our framework for hypocoercivity suited to PDMP-MC processes, based on the ideas of \cite{Dolbeault15}. In addition to providing a rigorous framework we further optimize the constants involved, ultimately leading to  \Cref{thm:hypocoercivity}. The proofs of \Cref{thm:hypocoercivity} and its corollary are given in \Cref{sec:postponed-proofs}. In \Cref{ss:scaling-ZZ}, we specialize our results to the case of the Zig-Zag process for which better estimates are possible, leading to attractive scaling properties with the dimension $d$.  Various intermediate technical results have been moved to Appendices where, for completeness, we have also included classical facts from functional analysis.


\section{The DMS framework for hypocoercivity} \label{sec:DMSabstract}
As stated above our results rely on the ideas proposed by \cite{Dolbeault09,Dolbeault15,2017arXiv170806180B} for which a rigorous framework was subsequently given in \cite{GS2014,GS2015,GS2016,grothaus2017weak}. We derive here a novel proof, which borrows elements of \cite{GS2014,GS2015,GS2016,grothaus2017weak} but leads to a different set of conditions motivated by our application to PDMP-Monte Carlo methods. We further provide explicit and optimized estimates of the constants involved in terms of accessible characteristics of the process. We first present abstract results which form the core of all of our proofs and then establish more specific ones common to all the processes considered in this paper, implying some of the abstract conditions. More specific results relating to the Zig-Zag process are treated in  \Cref{ss:scaling-ZZ}.

\subsection{Abstract DMS results}

We let $\calS$ and $\calT$ be the
$\Lmu$-symmetric and $\Lmu$-anti-symmetric parts of a generator
$\calL$ satisfying \Cref{as:generator}, that is
\begin{equation}
 \label{eq:def_calS_calT}
\calS = (\calL+\calL^{\star})/2 \quad \text{and} \quad \calT = (\calL-\calL^{\star})/2 \eqsp, \eqsp \text{ defined on } \domain(\calS) = \domain(\calT) = \msc \eqsp.
\end{equation}
Consider the following additional assumption to \Cref{as:generator}.
\begin{assumptionA}
  \label{ass:stability_c_piv}
  $\Piv \msc \subset \msc$ where $\Pi_v$  is  defined by  \eqref{eq:def_piv} and $\msc$ is given in \Cref{as:generator}.
\end{assumptionA}

Note that since   $\Piv \msc \subset \msc$,  we  have $\msc \subset \domain(\mct \Piv)$ and the restriction of $\mct\Piv$ to $\msc$ exists. Under \Cref{as:generator} and
  \Cref{ass:stability_c_piv}, $\mct \Piv$ is a closable operator of closure $(\bar{\mct \Piv}, \domain(\bar{\mct \Piv}))$ since $\mct$
  is anti-symmetric and $\msc$ is dense. Although this result follows easily from standard theory of unbounded (anti)-symmetric operators on Hilbert space, a proof is given for completeness in \Cref{lem:closure_anti_symm}. We point out here the difference with the corresponding assumption of \cite{GS2014,GS2015,GS2016}, which justifies the development of a novel theoretical framework. Indeed, motivated by our applications and what is currently understood of their theoretical properties, our starting point is the restriction of $\mct \Piv$ to the core $\msc$ and we assume closability, while in \cite[p.3522]{GS2014}  or \cite[p. 155 condition D4]{GS2016} the authors  consider directly a closed extension of $(\mct \Piv,\msc)$, say $(\mct \Piv,\msdd)$. Applying the results of \cite{GS2014,GS2015,GS2016} would require showing that $\msc$ is a core for $(\mct\Piv,\msdd)$ which, to the best of our knowledge, appears to be very difficult for the processes we are interested in.

\Cref{lem:def_inverse_dms}  in \Cref{ss:spctral bounds} justifies the definition of the operator $\calA$,
\begin{equation}
\label{eq:defcalA}
	\calA = \left( m_2 \Id + (\calT \Piv)^{\star} (\overline{\calT \Piv}) \right)^{-1} (-\calT \Piv)^{\star}\eqsp, \quad \domain(\calA) = \domain((\calT \Pi_v)^{\star}) \eqsp,
      \end{equation}
      where $m_2$ is given by \eqref{eq:def_m_2} and $(\overline{\mct \Piv}, \domain(\overline{\mct \Piv}))$ and $((\mct\Piv)^{\star},\domain((\mct \Piv)^{\star}))$ are the closure and the adjoint of $(\mct \Piv,\msc)$ respectively. Key properties are that $\range(\calA) \subset \domain(\overline{\mct \Piv})$, $\mca$ is closable with $\bmca$ bounded, and $\overline{\mct \Piv} \mca$ is also closable of bounded closure. To show this result we adapt  \cite[Lemma 2.4]{GS2014} since their lemma assumes that $(\mct,\domain(\mct))$ is closed whereas, motivated by our applications, we assume $(\mct \Piv, \msc)$ to be a densely defined and closable operator instead. Below $\normopmu{ \cdot}$ refers to the operator norm associated to $\normmu{\cdot}$, as defined in the notation paragraph in the introduction.

 \begin{lemma}
   \label{lem:bounded_A}
   Let $(\mct,\domain(\mct))$ be an anti-symmetric  densely defined operator on $\rml^2(\mu)$.  Assume that there exists $\msdd \subset \domain(\mct \Piv) \cap \domain(\mct)$, such that  $(\mct \Piv,\msdd)$ is a densely defined closable operator. 
      \begin{enumerate}[label=(\alph*), wide, labelwidth=!, labelindent=0pt]
 \item \label{lem:bounded_A_item_a_0}
 The closure of $(\mct \Piv,\msdd)$, $(\overline{\mct \Piv}, \domain(\overline{\mct \Piv}))$ satisfies $\domain(\overline{\mct \Piv}) \subset \domain((\Piv \mct)^{\star})$ and for any $f \in \domain(\overline{\mct \Piv})$, $(\Piv \mct)^{\star} f= -\overline{\mct \Piv}f$, where $((\Piv \mct)^{\star},\domain((\Piv \mct)^{\star} ))$ is the adjoint of $(\Piv \mct,\msdd)$.
 \item  \label{lem:bounded_A_item_a}
The operator $\mca$ defined by \eqref{eq:defcalA} satisfies $\range(\mca) \subset \domain(\overline{\mct \Piv})$, is closable and its closure   $\bmca$ is a bounded operator on $\rml^2(\mu)$  with	$\normopmu{ \bmca} \le 1/ (2 m_2)^{\half}$ and  $\Piv \bmca = \bmca$  on $\Lmu$.
 \item  \label{lem:bounded_A_item_b} Assume in addition that 
  for any $f \in \msdd$, $\Piv \mct \Piv f=0$. Then, the operator $(\overline{\mct \Piv} \mca,\domain(\overline{\mct \Piv} \mca))$ is also closable and its closure $\mce$ is bounded and satisfies for any $f \in \mrl^2(\mu)$, $\normmu{ \mce f} \le \normmu{(\Id-\Piv)f}$.
 \end{enumerate}
 \end{lemma}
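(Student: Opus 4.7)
The plan is to adapt the strategy of \cite[Lemma~2.4]{GS2014} to the present weaker hypothesis that only $(\mct \Piv, \msdd)$, rather than $(\mct, \domain(\mct))$ itself, is closable. I treat the three items in order.

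For (a), given $f \in \domain(\overline{\mct \Piv})$, I pick a sequence $f_n \in \msdd$ with $f_n \to f$ and $\mct \Piv f_n \to \overline{\mct \Piv} f$ in $\rml^2(\mu)$. Since $\msdd \subset \domain(\mct) \cap \domain(\mct \Piv)$, both $f_n$ and $\Piv f_n$ lie in $\domain(\mct)$, so for any $g \in \domain(\mct) = \domain(\Piv \mct)$, self-adjointness of $\Piv$ and anti-symmetry of $\mct$ yield
\begin{equation*}
\pspi{\Piv \mct g}{f_n} = \pspi{\mct g}{\Piv f_n} = -\pspi{g}{\mct \Piv f_n}\eqsp,
\end{equation*}
and passing to the limit gives $\pspi{\Piv \mct g}{f} = -\pspi{g}{\overline{\mct \Piv} f}$, which is the stated identity.

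For (b), I first invoke von Neumann's theorem: since $\overline{\mct \Piv}$ is closed and densely defined, $\mathcal{B} := m_2 \Id + (\mct \Piv)^{\star} \overline{\mct \Piv}$ is self-adjoint with $\mathcal{B} \succeq m_2 \Id$, hence boundedly invertible, and in particular $\range(\mca) \subset \domain(\mathcal{B}) \subset \domain(\overline{\mct \Piv})$. The norm bound follows from an energy identity: setting $g = \mca f$, the defining equation $\mathcal{B} g = -(\mct \Piv)^{\star} f$ tested against $g$ gives
\begin{equation*}
m_2 \normmu{g}^2 + \normmu{\overline{\mct \Piv} g}^2 = -\pspi{f}{\overline{\mct \Piv} g}\eqsp,
\end{equation*}
and Young's inequality on the right-hand side yields $m_2 \normmu{g}^2 + \tfrac12 \normmu{\overline{\mct \Piv} g}^2 \leq \tfrac12 \normmu{f}^2$, so that $\normopmu{\bmca} \leq (2m_2)^{-1/2}$. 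To obtain $\Piv \bmca = \bmca$, I would establish the mapping property $(\mct \Piv)^{\star} : \domain((\mct \Piv)^{\star}) \to \Piv \rml^2(\mu)$: using $\Piv^2 = \Piv$, for $\phi \in \domain((\mct \Piv)^{\star})$ and $\psi \in \msdd$,
\begin{equation*}
\pspi{(\Id - \Piv)\psi}{(\mct \Piv)^{\star} \phi} = \pspi{\mct \Piv \psi - \mct \Piv \psi}{\phi} = 0\eqsp,
\end{equation*}
and density of $(\Id - \Piv)\msdd$ in $(\Id - \Piv)\rml^2(\mu)$, inherited from density of $\msdd$ and continuity of $\Id - \Piv$, promotes this to $(\mct \Piv)^{\star} \phi \in \Piv \rml^2(\mu)$. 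Rearranging $\mathcal{B} g = -(\mct \Piv)^{\star} f$ then places $m_2 g$ in $\Piv \rml^2(\mu)$, so $\Piv g = g$ on $\domain(\mca)$, which extends to $\Piv \bmca = \bmca$ by continuity of $\Piv$.

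For (c), the additional assumption $\Piv \mct \Piv h = 0$ for $h \in \msdd$ extends by continuity of $\Piv$ and passage to the closure to $\Piv\,\overline{\mct \Piv} = 0$ on $\domain(\overline{\mct \Piv})$, so $\overline{\mct \Piv} g \in (\Id - \Piv)\rml^2(\mu)$ for every $g \in \range(\mca)$. The right-hand side of the energy identity above becomes $-\pspi{(\Id-\Piv)f}{\overline{\mct \Piv} g}$, and the refined Young's step yields $\normmu{\overline{\mct \Piv} \mca f} \leq \normmu{(\Id - \Piv) f}$ on the dense subspace $\domain(\mca)$; the standard extension of a bounded densely defined operator then gives the closure $\mce$ with the stated bound. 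The main subtlety throughout is bookkeeping with the adjoint of the merely closable $\mct \Piv$ on $\msdd$: in particular the mapping property $(\mct \Piv)^{\star} : \domain((\mct \Piv)^{\star}) \to \Piv \rml^2(\mu)$ must be obtained from $\Piv^2 = \Piv$ and a density argument rather than directly from closedness of $\mct$, which is the simplifying ingredient used in \cite{GS2014}.
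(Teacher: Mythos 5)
Your overall strategy — establish the adjoint relation in (a) by a direct limiting argument, prove the norm bound in (b) and the refined bound in (c) via the energy identity $m_2\normmuLine{g}^2 + \normmuLine{\overline{\mct\Piv}g}^2 = -\psmuLine{f}{\overline{\mct\Piv}g}$ with $g=\mca f$, and then extend by density — is essentially the same plan as the paper's, with one presentational difference: you inline the content of \Cref{prop:abstract bound} and \Cref{lem:def_inverse_dms} directly as explicit Young/Cauchy--Schwarz steps, while the paper routes the argument through those abstract lemmas. Parts (a) and (c) are correct and equivalent to the paper's proof (your limit argument for (a) is what the paper's ``graph inclusion plus closedness of the adjoint'' argument amounts to; your refined Cauchy--Schwarz step in (c) is exactly the computation the paper carries out).

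There is, however, a genuine gap in your argument for $\Piv\bmca = \bmca$ in part (b). You compute
\begin{equation}
\pspi{(\Id - \Piv)\psi}{(\mct \Piv)^{\star}\phi} \;=\; \pspi{\mct\Piv\big[(\Id-\Piv)\psi\big]}{\phi} \;=\; 0 \eqsp,
\end{equation}
which uses the adjoint relation at the vector $(\Id-\Piv)\psi$. Since $(\mct\Piv)^{\star}$ is the adjoint of $(\mct\Piv,\msdd)$ (equivalently of its closure $\overline{\mct\Piv}$), this identity is only licensed when $(\Id-\Piv)\psi \in \domain(\overline{\mct\Piv})$. The stated hypotheses give $\msdd \subset \domain(\mct\Piv)\cap\domain(\mct)$ and closability of $(\mct\Piv,\msdd)$, but \emph{not} that $\msdd$ is invariant under $\Id-\Piv$; a priori $(\Id-\Piv)\psi$ need not lie in $\msdd$ nor in $\domain(\overline{\mct\Piv})$, so the adjoint relation you write down is not available. (It would be available with respect to the \emph{full-domain} operator $\big(\mct\Piv,\,\Piv^{-1}\domain(\mct)\big)$, whose domain is $(\Id-\Piv)$-invariant since $\Piv(\Id-\Piv)\psi = 0 \in \domain(\mct)$; but the adjoint of the $\msdd$-restriction is a strictly larger operator, and the conclusion is needed on its full domain.) The paper sidesteps this by invoking \Cref{lem:grothaus_lemme_2_3}-\ref{lem:grothaus_lemme_2_3_b}, i.e.\ \cite[Lemma 2.2]{GS2014}, rather than reproving it. In the concrete setting of the paper the issue disappears, because $\msdd = \core = \rmCb^2(\mse)$ is $\Piv$-invariant by \Cref{ass:stability_c_piv}, so both $\Piv\msdd\subset\msdd$ and $(\Id-\Piv)\msdd\subset\msdd$ hold and your display is justified. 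If you want your abstract argument to be self-contained, you should either add $\Piv\msdd\subset\msdd$ as a hypothesis and note it is satisfied in the application, or, as the paper does, fall back on the cited lemma.
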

 \begin{proof}
 To establish this result, we make use of classical results on unbounded operators in Hilbert spaces which for completeness, are given in \Cref{ss:spctral bounds}.
   
 \begin{enumerate}[wide, labelwidth=!, labelindent=0pt, label= (\alph*)]
   \item Since $\mct$ is assumed to be anti-symmetric, we have for any $f \in \domain(\Piv \mct)$, $ g \in \msdd$,  $\psmu{\Piv \mct f }{g} = - \psmu{f}{\mct \Piv g}$ since $\Piv g \in \domain(\mct)$ as $\msdd \subset \domain(\mct \Piv)$. By definition of $(\mct \Piv)^{\star}$, we obtain that $\msdd \subset \domain((\Piv \mct)^{\star})$, and for any $f \in \msdd$, $ \mct \Piv f = -(\Piv \mct)^{\star} f$. Therefore $\{(f,  \mct \Piv f) \, : \, f \in \msdd\} \subset \{(f,-(\Piv \mct)^{\star} f) \, : \, f \in \domain((\Piv \mct)^{\star})\}$, and we obtain the desired result by definition of the operator $(\overline{\mct \Piv}, \domain(\overline{\mct \Piv}))$ since $-(\Piv \mct)^{\star}$ is closed by \cite[Theorem 5.1.5]{pedersen1995analysis}.  
   \item The fact  that $\range(\mca) \subset \domain(\overline{\mct \Piv})$, $\mca$ is closable and the bound follow directly from \Cref{lem:def_inverse_dms} and \Cref{prop:abstract bound}-\ref{prop:abstract bound_a}-\ref{prop:abstract bound_d}. We turn to the statement $\Piv \bmca = \bmca$.  By  \Cref{lem:def_inverse_dms}, the operator $\mcc= (m_2 \Id + (\calT \Piv)^{\star} (\overline{\calT \Piv}))^{-1}$ is well-defined, bounded and  $\range(\mcc) = \domain( (\mct \Piv)^{\star} (\overline{\mct \Piv}) )$. Therefore using \Cref{lem:grothaus_lemme_2_3}-\ref{lem:grothaus_lemme_2_3_a} (since $\mct \Piv$ is densely defined), we have for any $f \in \domain(\mct)$,
     \begin{equation}
       \label{eq:proof_lemma:op norms_0}
       \mca f = \mcc \Piv \mct f = m_2^{-1}\defEns{\Id- (\mct \Piv)^{\star} (\overline{\mct \Piv}) \mcc }\Piv \mct f  \eqsp,
     \end{equation}
 where the argument for the last equality can found in the proof of Proposition \ref{prop:abstract bound}.
 Therefore, by applying $\Piv$ to both sides and using \Cref{lem:grothaus_lemme_2_3}-\ref{lem:grothaus_lemme_2_3_b}, we deduce that for any $f \in \domain(\mct)$, $ \Piv \mca f = \mca f$. The proof is then concluded upon noting that  $\domain(\mct)$ is dense and $\Piv$ is continuous.
\item  
  
  
For any   $f \in \msdd$, since $\Piv \mct \Piv f=0$, \eqref{eq:proof_lemma:op norms_0} becomes
     \begin{equation}
         \mca f = \mcc \Piv \mct (\Id - \Piv) f = m_2^{-1}\defEns{\Id- (\mct \Piv)^{\star} (\overline{\mct \Piv}) \mcc }\Piv \mct (\Id - \Piv) f \eqsp.
     \end{equation}
      Therefore, we get for any $f\in\msdd$,
     \begin{align}
       &  m_2     \normmu{\mca f}^2\\
       &= \psmu{\Piv \mct (\Id - \Piv) f}{\mca f} - \psmu{(\mct \Piv)^{\star} (\overline{\mct \Piv}) \mcc \Piv \mct (\Id - \Piv) f }{\mca f}\\
                         & =  \psmu{ -( \mct \Piv)^{\star}(\Id - \Piv) f}{\mca f} - \psmu{(\mct \Piv)^{\star} (\overline{\mct \Piv}) \mcc \Piv \mct (\Id - \Piv) f}{\mca f}\\
                             & =-\psmu{ (\Id - \Piv) f}{(\overline{ \mct \Piv})\mca f} - \normmu{(\overline{ \mct \Piv}) \mca f}^2 \eqsp,
     \end{align}
     using successively that  $(\Id-\Piv)f \in \domain(\mct) $ since
     $f\in \msdd \subset \domain(\mct \Piv)$, \Cref{lem:grothaus_lemme_2_3} and
     $\mca f \in \domain(\overline{\mct \Piv})$. Using the Cauchy-Schwarz inequality we obtain that for any $f \in \msdd$, $\normmuLine{(\overline{\mct\Piv})\mca f } \leq \normmuLine{(\Id-\Piv)f}$. Using that 
     $\msdd$ is dense in $\Lmu$ together with the bounded linear transformation extension theorem \cite[Theorem I.7]{reed1972methods} concludes the proof.
   \end{enumerate}
 \end{proof}
The main result of \cite{Dolbeault15} can be formulated under the following abstract assumption, which we shall assume to hold from now on, and the proof of our main theorem relies on optimized estimates of the constants involved.
\begin{assumptionA}[DMS abstract conditions] \label{as:DMSabstract}
Let $\core$ be as in \Cref{as:generator}. Assume further that it satisfies \Cref{ass:stability_c_piv} and the following conditions   
\begin{enumerate}
\item \label{item:DMS-micro} there exists $\lambda_v>0$ satisfying for any $\ff \in \core$
\begin{equation}
-\psmu{ \calS \ff}{ \ff} \geq \lambda_v m_2^{\half} \normmu{ (\Id-\Piv) \ff }^2 \eqsp;
\end{equation}
\item \label{item:DMS-macro} there exists $\lambda_x\in\ooint{0,1}$ satisfying for any $\ff \in \core$
 \begin{equation}
 \label{eq:DMS_macro}
-\psmu{ \bmca\calT\Piv\ff}{\ff} \ge \lambda_x \normmu{ \Piv \ff }^2 \eqsp;
\end{equation}
\item \label{item:DMS-RSandRT} there exists $R_0 \geq 0$ satisfying for any $\ff \in \core$
 \begin{equation}
 \abs{\psmu{ \bmca \calT(\Id-\Piv)\ff}{\ff}+\psmu{ \bmca\calS\ff}{\ff}}\leq R_0\normmu{(\Id-\Piv )\ff} \normmu{\Piv \ff} \eqsp;
 \end{equation}
\item \label{item:DMS-proj}   for any $f \in \msc$, $\Piv \calT \Piv f =0 $; 
\item \label{item:DMS-kernel_calS} finally,
 $\range(\Pi_v) \subset \kernel(\calS^{\star})$.
\end{enumerate}
\end{assumptionA}

\begin{theorem} \label{thm:DMSmain}
Assume \Cref{as:generator}, \Cref{ass:stability_c_piv} and \Cref{as:DMSabstract}. 
\begin{enumerate}
\item \label{thm:DMSmain_item_a} Then, for any $\ff \in \mrl^2_0(\mu)$, $t \in \rset_+$ and $\epsilon\in (0, (2^{\half} \lambda_v)^{-1} \wedge \{ 4 \lambda_x /(4\lambda_x+R_0^2)\} )$
\begin{equation}
\normmu{ P_t \ff } \le A(\epsilon) \rme^{-\alpha(\epsilon) t} \normmu{ \ff } \eqsp,
\end{equation}
with 
\begin{equation}
\label{eq:def_alpha_C_eps}
\alpha(\epsilon) = \lambda_v m_2^\half \frac {\Lambda(\epsilon)}{1+2^{\half} \lambda_v \epsilon}>0 \quad \mbox{and} \quad A(\epsilon) = \sqrt{\frac {1+2^\half \lambda_v \epsilon}{1-2^\half \lambda_v \epsilon}} \eqsp,
\end{equation}
where
\begin{equation}
 \label{eq:def_lambda_0}
\Lambda(\epsilon)=\frac{1-\epsilon(1-\lambda_{x})-\sqrt{[1-\epsilon(1-\lambda_{x})]^{2}-4\epsilon\lambda_{x}(1-\epsilon)+\epsilon^{2}R_{0}^{2}}}{2} \eqsp. 
\end{equation}

\item \label{thm:DMSmain_item_b} Further, if $2^\half R_0 \geq \lambda_v $ then
 $\alpha \colon \big(0, 4 \lambda_x /(4\lambda_x+R_0^2) \big) \rightarrow
 \rset_+$ has a unique maximum at $\epsilon^{\star}$ such that $\alpha(\epsilon_0) < \alpha(\epsilon^{\star}) <3\alpha(\epsilon_0)$, with
\begin{equation}
  \label{eq:epsilon_0}
	\epsilon_0 = \frac {1+\lambda_x - (1-\lambda_x) \sqrt{\frac{R_0{^2}}{R_0{^2}+4 \lambda_x}}}{(1+\lambda_x)^2 + R_0{^2}} \in (0, (2^{\half} \lambda_v)^{-1} \wedge \{ 4 \lambda_x /(4\lambda_x+R_0^2)\} ) \eqsp,
\end{equation}
 so that $A(\epsilon_0)<\plusinfty$ is well defined.
In addition, if $R_0 \geq 2$ then $\epsilon_0 < 3 \lambda_x/(4 \lambda_x + R_0^2)$.
 \end{enumerate}
\end{theorem}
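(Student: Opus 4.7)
The plan is to follow the Dolbeault--Mouhot--Schmeiser strategy: twist the $\rml^2(\mu)$-norm by a bounded perturbation built from $\bmca$ and verify that the twisted functional decays exponentially along the semigroup. Concretely, introduce
\begin{equation*}
\mathsf{H}[f]^2 := \tfrac{1}{2}\|f\|_2^2 + \epsilon \lambda_v m_2^{1/2}\psmu{\bmca f}{f}\eqsp.
\end{equation*}
Since \Cref{lem:bounded_A} gives $\normopmu{\bmca} \le (2m_2)^{-1/2}$, Cauchy--Schwarz yields the two-sided bound $\tfrac{1-\sqrt{2}\lambda_v\epsilon}{2}\|f\|_2^2 \leq \mathsf{H}[f]^2 \leq \tfrac{1+\sqrt{2}\lambda_v\epsilon}{2}\|f\|_2^2$ for all $\epsilon \in (0, (\sqrt{2}\lambda_v)^{-1})$, and the ratio of the bounds gives the displayed formula for $A(\epsilon)$.

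For $f\in \msc$, compute
\begin{equation*}
\tfrac{\rmd}{\rmd t}\mathsf{H}[P_t f]^2\bigl|_{t=0} = \psmu{\mcs f}{f} + \epsilon\lambda_v m_2^{1/2}\bigl[\psmu{\bmca \mcl f}{f} + \psmu{\bmca f}{\mcl f}\bigr]\eqsp.
\end{equation*}
Assumption~H1 controls the first term. Splitting $\mcl = \mcs + \mct\Piv + \mct(\Id-\Piv)$ in $\psmu{\bmca\mcl f}{f}$, H2 handles the $\mct\Piv$-piece and H3 combines the $\mcs$ and $\mct(\Id-\Piv)$ pieces into the $R_0$-bound. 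For the adjoint-side term, the key simplifications use $\Piv\bmca = \bmca$ from \Cref{lem:bounded_A}, placing $\bmca f \in \range(\Piv)$: (i) by H5, $\range(\Piv)\subset \kernel(\mcs^{\star})$, so $\psmu{\bmca f}{\mcs f}=0$; (ii) by H4, $\mct\Piv f \in \range(\Id-\Piv)$, orthogonal to $\range(\Piv)$, so $\psmu{\bmca f}{\mct\Piv f}=0$; (iii) only $\psmu{\bmca f}{\mct(\Id-\Piv)f}$ survives, and the closure-adjoint identity of \Cref{lem:bounded_A} rewrites it as $-\psmu{\overline{\mct\Piv}\bmca f}{(\Id-\Piv)f}$, which is bounded in absolute value by $\|(\Id-\Piv)f\|_2^2$ via Cauchy--Schwarz and the bound $\|\overline{\mct\Piv}\bmca f\|_2 \le \|(\Id-\Piv)f\|_2$ from the same lemma.

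Assembling all contributions yields, with $f_t = P_t f$,
\begin{equation*}
-\tfrac{\rmd}{\rmd t}\mathsf{H}[f_t]^2 \geq \lambda_v m_2^{1/2}\bigl[(1-\epsilon)\|(\Id-\Piv)f_t\|_2^2 + \epsilon\lambda_x\|\Piv f_t\|_2^2 - \epsilon R_0\|(\Id-\Piv)f_t\|_2\|\Piv f_t\|_2\bigr]\eqsp.
\end{equation*}
The $2\times 2$ quadratic form in $(\|(\Id-\Piv)f_t\|_2, \|\Piv f_t\|_2)$ has smallest eigenvalue equal to $\Lambda(\epsilon)$ of \eqref{eq:def_lambda_0}, as follows from the algebraic identity $[1-\epsilon(1-\lambda_x)]^2 - 4\epsilon\lambda_x(1-\epsilon) = [1-\epsilon(1+\lambda_x)]^2$, and it is positive precisely on $(0, 4\lambda_x/(R_0^2+4\lambda_x))$. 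Combining with the upper norm equivalence yields $-\tfrac{\rmd}{\rmd t}\mathsf{H}[f_t]^2 \geq 2\alpha(\epsilon) \mathsf{H}[f_t]^2$, and Grönwall plus the lower equivalence transfer this to $\|P_t f\|_2 \leq A(\epsilon)\rme^{-\alpha(\epsilon) t}\|f\|_2$ for $f \in \msc$. Density of $\msc$ in $\rml^2_0(\mu)$ together with the contractivity of $P_t$ extend the bound to all of $\rml^2_0(\mu)$, completing part (a).

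For part (b), $\Lambda$ is smooth on its positivity interval, and solving $\Lambda'(\epsilon)=0$ produces, after squaring and using the simplification $(1+\lambda_x)^2(4\lambda_x+R_0^2) - 4\lambda_x[(1+\lambda_x)^2+R_0^2] = R_0^2(1-\lambda_x)^2$, a quadratic in $\epsilon$ whose smaller root coincides with the displayed $\epsilon_0$, which is the unique maximizer of $\Lambda$. Under $\sqrt{2}R_0 \geq \lambda_v$, direct estimates on the numerator and denominator of $\epsilon_0$ put it into $(0, (\sqrt{2}\lambda_v)^{-1} \wedge 4\lambda_x/(R_0^2+4\lambda_x))$, so $A(\epsilon_0)<\infty$. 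Since the denominator $1+\sqrt{2}\lambda_v\epsilon$ is strictly increasing, the maximizer $\epsilon^{\star}$ of $\alpha$ satisfies $\epsilon^{\star}<\epsilon_0$, hence $\alpha(\epsilon_0)<\alpha(\epsilon^{\star})$; conversely the trivial inequality $\alpha(\epsilon^{\star}) \leq \lambda_v m_2^{1/2}\Lambda(\epsilon_0) = \alpha(\epsilon_0)(1+\sqrt{2}\lambda_v\epsilon_0) < 3\alpha(\epsilon_0)$ follows from $\sqrt{2}\lambda_v\epsilon_0<1$. The refinement $\epsilon_0<3\lambda_x/(4\lambda_x+R_0^2)$ when $R_0\geq 2$ is obtained by a direct algebraic estimate. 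The main obstacle throughout is the rigorous domain management in the derivative computation: namely, ensuring $\bmca f \in \domain(\mcs^{\star})$ via H5, and legitimizing the pivotal adjoint rewrite $\psmu{\bmca f}{\mct(\Id-\Piv)f} = -\psmu{\overline{\mct\Piv}\bmca f}{(\Id-\Piv)f}$ through the closure-adjoint correspondence of \Cref{lem:bounded_A}, so that the quadratic bound on $\overline{\mct\Piv}\bmca$ of the same lemma can be invoked.
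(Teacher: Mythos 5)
Your argument for part \ref{thm:DMSmain_item_a} follows the paper's proof essentially line by line: same twisted functional $\mathsf{H}_{\varepsilon}$ with $\varepsilon = \epsilon\lambda_v m_2^{1/2}$, same two-sided equivalence from $\normopmu{\bmca}\le (2m_2)^{-1/2}$, same splitting of the derivative into the three pieces $\psmu{\mcs f}{f}$, $\psmu{\bmca f}{\mcl f}$, $\psmu{\bmca \mcl f}{f}$ handled via \Cref{as:DMSabstract} and \Cref{lem:bounded_A}, same $2\times 2$ matrix and Gr\"onwall step. One point to tighten: ``density of $\msc$ plus contractivity of $P_t$'' is not quite enough, because $P_t$ need not preserve $\msc$ and the differential inequality must hold along the whole trajectory $(f_t)_{t>0}$, not just at $t=0$. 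The paper resolves this by first extending the bounds on $\Fscr_1,\Fscr_2,\Fscr_3$ from $\msc$ to all of $\domain(\mcl)$ (using that $\msc$ is a core and $\bmca,\Piv$ are bounded), and then invoking Dynkin's formula to place $f_t\in\domain(\mcl)$ for $t>0$; your version glosses this.

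For part \ref{thm:DMSmain_item_b}, two remarks. First, your upper bound on $\alpha(\epsilon^{\star})$ is a genuinely cleaner route than the paper's: the chain
\begin{equation}
\alpha(\epsilon^{\star}) < \lambda_v m_2^{1/2}\Lambda(\epsilon^{\star}) \leq \lambda_v m_2^{1/2}\Lambda(\epsilon_0) = \alpha(\epsilon_0)\big(1+2^{1/2}\lambda_v\epsilon_0\big) < 2\,\alpha(\epsilon_0)
\end{equation}
only needs that the denominator of $\alpha$ is $\geq 1$, that $\Lambda$ is maximized at $\epsilon_0$, and that $2^{1/2}\lambda_v\epsilon_0<1$; it replaces the paper's Taylor--theorem argument in \Cref{prop:bound-on-Phi-epsilon-star} and actually sharpens the constant from $3$ to $2$. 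Second, however, there is a gap: the statement asserts \emph{uniqueness} of the maximizer $\epsilon^{\star}$ of $\alpha$, and your proposal takes this for granted. You only show $\alpha$ increases at $0$ and decreases at $\epsilon_0$. The paper's argument needs the strict concavity of $\Lambda$ (\Cref{lem:second-derivative-Lambda0}) to conclude that $\Psi(\epsilon) = \Lambda'(\epsilon)(1+2^{1/2}\lambda_v\epsilon) - 2^{1/2}\lambda_v\Lambda(\epsilon)$ is strictly decreasing, hence has a unique zero, which is the critical point of $\alpha$. You should supply this. Similarly, the refinement $\epsilon_0 < 3\lambda_x/(4\lambda_x+R_0^2)$ for $R_0\geq 2$ is only asserted; the relevant elementary estimate (bounding $(4\lambda_x/R_0^2+1)^{1/2}\le 1+2\lambda_x/R_0^2$) is carried out in \Cref{lem:second-derivative-Lambda0} and is not immediate, so you should at least record it.
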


The main idea of \cite{Dolbeault15} behind the proof of \Cref{thm:DMSmain} is the introduction of an equivalent norm for $\varepsilon \in \rset_+$ (instead of the $\Lmu$ norm, which corresponds to $\varepsilon=0$)
\begin{equation} \label{eq:defcalH}
	\entropyA_{\veps}(\ff) = (1/2) \normmu{ \ff }^2 + \varepsilon \psmu{ \ff}{ \mcab \ff },
\end{equation}
for which $(P_t)_{t \geq 0}$ is exponentially contracting. More precisely, \cite[Theorem 2]{Dolbeault15} shows that for some $\varepsilon \in \oointLine{-(m_2/2)^{\half},(m_2/2)^{\half}}$  there exists $\alpha(\varepsilon) >0$ such that for any $\ff \in \rml^2_0(\mu)$, $\entropyA_{\veps}(P_t\ff) \leq \rme^{-\alpha(\varepsilon)t} \entropyA_{\veps}(\ff)$. Then, the convergence in $\mrl^2_0(\mu)$ follows by \Cref{lem:bounded_A}-\ref{lem:bounded_A_item_a} which implies that $\entropyA_{\veps}(\cdot)$ defines a norm which is equivalent to $\normmuLine{ \cdot }$: for $\varepsilon \in \oointLine{-(m_2/2)^{\half},( m_2/2)^{\half}}$ and for any $\ff \in \mrl^2(\mu)$, it holds 
\begin{equation} \label{eq:equinorm}
(	1-(m_2/2)^{-\half} \varepsilon) \normmu{ \ff }^2 \le 2\entropyA_{\veps}(\ff) \le (1+(m_2/2)^{-\half} \varepsilon) \normmu{ \ff }^2.
\end{equation}

Therefore, for a family $\big\{f_t \in \mrl^2_0(\mu) \big\}_{t\geq 0}$, exponential decay of $t\mapsto \entropyA_{\veps}(\ff_t)$ is equivalent to that of $t\mapsto \normmu{ \ff_t }^2$ , a property exploited in the following proof. We first establish the following results which give estimates of the functional $\{\Fscr_i \, : \, i \in \{1,2,3\}\}$ defined for any $g \in \domain(\generator)$ by
\begin{equation}
\label{eq:def_Fscr}
  \Fscr_1(g) = \psmu{\generator g}{g} \eqsp, \quad  \Fscr_2(g) = \psmu{\generator g}{ \mcab g} \eqsp, \quad \Fscr_3(g) = \psmu{\mcab \generator   g}{ g}\eqsp. 
 \end{equation}
 \begin{lemma}
   \label{lem:preli_proof_dms_bound_fscr}
   Assume that $\calL$ satisfies \Cref{as:generator}, \Cref{ass:stability_c_piv}, and \Cref{as:DMSabstract}. Then, for any $g \in \domain(\generator)$, we have
   \begin{equation}
     \label{eq:lem:preli_proof_dms_bound_fscr}
   \begin{aligned}
     \Fscr_1(g) \leq & -\lambda_v m_2^{\half}\normmu{(\Id-\Piv)g}^2 \eqsp, \quad  \Fscr_2(g) \leq \normmu{(\Id-\Piv)g}^2 \eqsp, \\
                                                       &   \Fscr_3(g) \leq -\lambda_x \normmu{\Piv g}^2 +R_0\normmu{(\Id-\Piv)g}\normmu{\Piv g} \eqsp.
\end{aligned}
\end{equation}
\end{lemma}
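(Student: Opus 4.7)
The plan is to first establish each inequality for functions $g$ in the core $\msc$ of $\mcl$, working with the decomposition $\mcl = \mcs + \mct$ on $\msc$ introduced in \eqref{eq:def_calS_calT}, and then extend to arbitrary $g \in \domain(\mcl)$ by density. The extension is essentially automatic: by definition of a core, for any $g \in \domain(\mcl)$ one finds a sequence $(g_n)_{n \in \bbN}$ in $\msc$ with $g_n \to g$ and $\mcl g_n \to \mcl g$ in $\Lmu$, and the boundedness of $\mcab$ on $\Lmu$ together with the continuity of $\Piv$ allows passage to the limit on both sides of each of the three inequalities.

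The bound on $\Fscr_1$ is immediate: antisymmetry of $\mct$ on $\msc$ yields $\psmu{\mct g}{g} = 0$, so $\Fscr_1(g) = \psmu{\mcs g}{g}$, and \Cref{as:DMSabstract}-\ref{item:DMS-micro} gives the desired estimate at once. For $\Fscr_3$ I would further split $\mct g = \mct \Piv g + \mct (\Id - \Piv) g$ so that
\begin{equation*}
\Fscr_3(g) = \psmu{\mcab \mct \Piv g}{g} + \bigl\{\psmu{\mcab \mct (\Id - \Piv) g}{g} + \psmu{\mcab \mcs g}{g}\bigr\},
\end{equation*}
then apply \Cref{as:DMSabstract}-\ref{item:DMS-macro} to the first term and \Cref{as:DMSabstract}-\ref{item:DMS-RSandRT} to the bracketed sum to conclude.

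The bound on $\Fscr_2$ is the main obstacle and requires careful use of the operator properties collected in \Cref{lem:bounded_A}. Writing $\Fscr_2(g) = \psmu{\mcs g}{\mcab g} + \psmu{\mct g}{\mcab g}$, the first summand vanishes because $\mcab g = \Piv \mcab g \in \range(\Piv) \subset \kernel(\mcs^{\star})$ by \Cref{lem:bounded_A}-\ref{lem:bounded_A_item_a} and \Cref{as:DMSabstract}-\ref{item:DMS-kernel_calS}. Self-adjointness of $\Piv$ combined with $\Piv \mcab = \mcab$ turns the second summand into $\psmu{\Piv \mct g}{\mcab g}$, and \Cref{as:DMSabstract}-\ref{item:DMS-proj} rewrites it as $\psmu{\Piv \mct (\Id - \Piv) g}{\mcab g}$. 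I would then invoke the identity $(\Piv \mct)^{\star} = -\overline{\mct \Piv}$ on $\domain(\overline{\mct \Piv})$ given by \Cref{lem:bounded_A}-\ref{lem:bounded_A_item_a_0}, which is applicable since $(\Id - \Piv) g \in \msc \subset \domain(\Piv \mct)$ and $\mcab g \in \domain(\overline{\mct \Piv})$, to obtain
\begin{equation*}
\psmu{\Piv \mct (\Id - \Piv) g}{\mcab g} = -\psmu{(\Id - \Piv) g}{\overline{\mct \Piv} \mcab g} = -\psmu{(\Id - \Piv) g}{\mce g},
\end{equation*}
and conclude via Cauchy-Schwarz together with $\normmu{\mce g} \leq \normmu{(\Id - \Piv) g}$ from \Cref{lem:bounded_A}-\ref{lem:bounded_A_item_b}. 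The delicate point is the domain bookkeeping: because $(\mct \Piv, \msc)$ is only assumed closable rather than closed, one must carefully distinguish the closure $\overline{\mct \Piv}$ from the adjoint $(\mct \Piv)^{\star}$, and verify that $\mcab g$ indeed lies in the domain of the closure before the above adjoint move is legitimate.
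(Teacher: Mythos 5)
Your proof is correct and follows essentially the same route as the paper's: reduce to the core by density, use symmetry/antisymmetry to isolate $\mcs$ in $\Fscr_1$, combine \Cref{as:DMSabstract}-\ref{item:DMS-macro}-\ref{item:DMS-RSandRT} via the same splitting for $\Fscr_3$, and for $\Fscr_2$ apply $\Piv \mcab = \mcab$, $\range(\Piv) \subset \kernel(\mcs^{\star})$, $\Piv\mct\Piv = 0$, the identity $(\Piv\mct)^{\star} = -\overline{\mct\Piv}$ on $\domain(\overline{\mct\Piv})$, and the bound $\normmu{\overline{\mct\Piv}\mca g} \le \normmu{(\Id-\Piv)g}$ from \Cref{lem:bounded_A}. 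The domain bookkeeping you flag (distinguishing the closure $\overline{\mct\Piv}$ from the adjoint $(\mct\Piv)^{\star}$, and checking $\mcab g = \mca g \in \domain(\overline{\mct\Piv})$ for $g$ in the core) is exactly the care taken in the paper.
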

 \begin{proof}

   Note that since  $\core$ is a core for $\generator$ and $\bmca$ and $\Piv$ are bounded, we only need to show that \eqref{eq:lem:preli_proof_dms_bound_fscr} holds for all $g \in \core$. In addition, since $\mcab$ is an extension of $\mca$ by \Cref{lem:bounded_A}-\ref{lem:bounded_A_item_a}, and for any $g \in \core  \subset \domain((\mct \Piv)^{\star}) = \domain(\mca)$ from \Cref{lem:grothaus_lemme_2_3}-\ref{lem:grothaus_lemme_2_3_a} as $\Piv(\core) \subset \core =  \domain(\mct)$ by \Cref{ass:stability_c_piv}, we deduce
   \begin{equation}
     \label{eq:eq_mca_mcab}
     \mcab g = \mca g \eqsp.
   \end{equation}
   
   Using that $\mcs$ is symmetric, $\mct$ is anti-symmetric and $\core \subset \domain(\generator)\cap \domain(\generator^{\star})$, we get that for any $g \in \core$, $\Fscr_1(g) = \psmu{\mcs g}{g} \leq -\lambda_vm_2^{\half} \normmu{(\Id-\Piv)g}$ by \Cref{as:DMSabstract}-\ref{thm:DMSmain_item_a}.

  Second, using that  
   $\Piv \mcab = \mcab$ by \Cref{lem:bounded_A}-\ref{lem:bounded_A_item_a} and \eqref{eq:eq_mca_mcab},  we have  for any $g \in \msc$,
   \begin{equation}
     \Fscr_2(g) = \psmu{\Piv \mca g}{\mcs g} + \psmu{\Piv \mca g}{\mct g} = \psmu{\Piv \mca g}{\mct g} \eqsp,
   \end{equation}
   where the last equality follows from   $\range(\Piv) \subset \kernel(\mcs^{\star})$.
   In addition, since $\Piv$ is symmetric, $\Piv \mct \Piv g = 0$, $\range(\mca) \subset \domain(\overline{\mct \Piv})\subset\domain((\Piv \mct)^{\star})$ by \Cref{lem:bounded_A}-\ref{lem:bounded_A_item_a_0}-\ref{lem:bounded_A_item_a}, so $(\Piv \mct)^{\star} \mca = -\overline{\mct \Piv} \mca$ by \Cref{lem:bounded_A}-\ref{lem:bounded_A_item_a_0} and $\normmu{\overline{\mct \Piv} \mca g} \leq \normmu{(\Id-\Piv)g}$ by \Cref{lem:bounded_A}-\ref{lem:bounded_A_item_b}, we obtain for any $g \in \core$,
   \begin{multline}
     \Fscr_2(g) = \psmu{\mca g}{\Piv \mct (\Id-\Piv)g} = \psmu{(\Piv \mct)^{\star} \mca g}{(\Id-\Piv)g}\\ = -\psmu{\overline{\mct \Piv} \mca g}{(\Id - \Piv) g} \leq \normmu{(\Id-\Piv)g}^2 \eqsp. 
   \end{multline}

   Finally, using \Cref{as:DMSabstract}-\ref{item:DMS-macro}-\ref{item:DMS-RSandRT} we have that for any $g \in \msc \subset \domain(\mcl) \cap \domain(\mcl^{\star}) \cap \domain(\calT \Piv)$,
   \begin{align}
     \Fscr_3(g) &= \psmu{\mcab \mct \Piv g}{g} + \psmu{\mcab \mct (\Id-\Piv) g}{g} + \psmu{\mcab \mcs g}{g} \\
     & \leq - \lambda_x\normmu{\Piv g}^2 + R_0 \normmu{(\Id-\Piv)g}\normmu{\Piv g} \eqsp.
   \end{align}
 \end{proof}

\begin{proof}[Proof of Theorem \ref{thm:DMSmain}]
The first part of the proof follows along the same lines as \cite[Theorem 2.18]{GS2014}.
  Let $\ff \in \mrl^2_0(\mu)$ and $\varepsilon >0$. For ease of notation, set for any $t \geq 0$, $\ff_t = P_t \ff$. From the Dynkin formula \cite[Proposition 1.5]{ethier:kurtz:1986}, for any $t >0$ $f_t \in \domain(\generator)$  
 and ${\rmd}\ff_{t}/{\rmd}t=\mcl\ff_{t}$. Therefore, for any $t >0$,
 \begin{equation}
   \label{eq:derivation_proof_dms}
   -\frac {\rmd}{\rmd t} \entropyA_{\veps}(\ff_t) = - [\Fscr_1(\ff_t) + \varepsilon\defEns{\Fscr_2(\ff_t)+\Fscr_3(\ff_t)}] \eqsp,
 \end{equation}
 where $\{\Fscr_i \, : \, i \in \{1,2,3\}\}$ are  defined in \eqref{eq:def_Fscr}. Then by \Cref{lem:preli_proof_dms_bound_fscr}, we obtain that for any $t >0$,
 \begin{align}
 \nonumber
  -\frac {\rmd}{\rmd t} \entropyA_{\veps}(\ff_t) &\ge \lambda_v m_2^{\half} \normmu{ (\Id-\Piv) \ff_t }^2 \\
 \nonumber
 & \quad  + \varepsilon \left[ \lambda_x \normmu{ \Piv \ff_t }^2 - \normmu{ (\Id-\Piv) \ff_t }^2 - R_0 \normmu{ (\Id-\Piv) \ff_t } \normmu{ \Piv \ff_t } \right]\\
 \nonumber
   &=\begin{pmatrix} \normmu{ \Piv \ff_t} \\ \normmu{ (\Id-\Piv) \ff_t} \end{pmatrix}^\top 
 \begin{pmatrix} \varepsilon \lambda_x & - \varepsilon R_0/2 \\ - \varepsilon R_0/2 \quad   & \lambda_v m_2^{\half} - \varepsilon \end{pmatrix}
                                                                                       \begin{pmatrix} \normmu{ \Piv \ff_t} \\ \normmu{ (\Id-\Piv) \ff_t} \end{pmatrix} \\
   & \geq \Lambda_0(\varepsilon) \normmu{\ff_t}^2 \eqsp, 
\end{align}
where 
\begin{align}
  2 \Lambda_0(\varepsilon) &= \lambda_v m_2^{\half} -\varepsilon (1 - \lambda_x) \\
&  \qquad - \sqrt{(\lambda_v m_2^{\half} -\varepsilon (1-\lambda_x))^2 - [4\varepsilon\lambda_{x}(\lambda_{v}m_{2}^{\half}-\varepsilon)-\varepsilon^{2}R_{0}^{2}] } \eqsp,
\end{align}
is the smallest eigenvalue of the symmetric matrix, positive for $0\leq \varepsilon \leq 4 \lambda_x \lambda_v m_2^{\half} /(4\lambda_x+R_0^2)$
from \Cref{lem:derivativeLambda0} in \Cref{sec:optim-varepsilon} (as $\lambda_x \leq 1$ by \Cref{as:DMSabstract}-\ref{item:DMS-macro}). Using \eqref{eq:equinorm}, we get
\begin{equation}
 \label{eq:inequality of D}
 -\frac {\rmd}{\rmd t} \entropyA_{\veps}(\ff_t) \geq \frac{2\Lambda_0(\varepsilon)}{1+(m_2/2)^{-\half} \varepsilon} \entropyA_{\veps}(\ff_t)\eqsp.
\end{equation}
From Grönwall's lemma and \eqref{eq:equinorm}, we obtain for $0\leq \varepsilon \leq (m_2/2)^{\half} \wedge\{ 4 \lambda_x \lambda_v m_2^{\half} /(4\lambda_x+R_0^2)\}$, $\normmu{f_t} \leq C_0(\varepsilon) \rme^{-\alpha_0(\varepsilon)t} \normmu{f_0}, \text{ where }$
\begin{equation}
 \alpha_0(\varepsilon)= \frac{\Lambda_0(\varepsilon)}{1+(m_2/2)^{-\half} \varepsilon} \; \text{ and } \; C_0(\varepsilon) = \sqrt{\frac {1+( m_2/2)^{-\half}\varepsilon}{1-(m_2/2)^{-\half}\varepsilon}} \eqsp.
\end{equation}
 For notational simplicity we let $\epsilon = \varepsilon/(\lambda_v m_2^\half)$ and note that with the definitions in \eqref{eq:def_alpha_C_eps}-\eqref{eq:def_lambda_0}, for $\epsilon < 4 \lambda_x /(4\lambda_x+R_0^2)$, $\alpha(\epsilon)=\alpha_0(\varepsilon)>0$ and $\lambda_v m_2^\half \Lambda(\epsilon)=\Lambda_0(\varepsilon)>0$,  and for $\epsilon \leq (2^{\half} \lambda_v)^{-1}$ the two norms are equivalent and $A(\epsilon)=C_0(\varepsilon)$ is well defined. This concludes the proof of \ref{thm:DMSmain_item_a}.

From \Cref{prop:bound-on-Phi-epsilon-star} and associated notation in \Cref{sec:optim-varepsilon}, $\epsilon \mapsto \alpha(\epsilon)$ has a unique, but intractable, maximum, $\epsilon^{\star}\in (0, 4 \lambda_x /(4\lambda_x+R_0^2) )$. However from \Cref{lem:second-derivative-Lambda0}-\ref{lem:second-derivative-Lambda0_b} and \Cref{prop:bound-on-Phi-epsilon-star} the unique maximum $\epsilon_0 \in (\epsilon^{\star}, 4 \lambda_x /(4\lambda_x+R_0^2))$ of $\epsilon \mapsto \Lambda(\epsilon)$, defined by \eqref{eq:def_eta_star}, provides us with a tractable proxy such that $\alpha(\epsilon_0) < \alpha(\epsilon^{\star}) <3\alpha(\epsilon_0)$. In addition, since $\lambda_x \leq 1$ and for $2^\half R_0 \geq \lambda_v$ we get

\begin{equation}
	\epsilon_0 < \frac {(1+\lambda_x)} {(1+\lambda_x)^2 + R_0^2} \leq (2R_0)^{-1} \le (2^{\half}\lambda_v)^{-1} \eqsp,
\end{equation}
 which implies that $A(\epsilon_0)$ is well defined (and the two norms equivalent). The last statement follows from \Cref{lem:second-derivative-Lambda0}-\ref{lem:second-derivative-Lambda0_c} in \Cref{sec:optim-varepsilon}.

\end{proof}

The following lemma provides us with simple estimates of $\alpha(\epsilon_0)$ and $A(\epsilon_0)$ defined in \Cref{thm:DMSmain}.
 \begin{lemma}\label{lemma:behaviour_rate_R_0} Let $\epsilon \mapsto \alpha(\epsilon), A(\epsilon)$ and $\epsilon_0$ be as in \Cref{thm:DMSmain} and let $\lambda_x \in \ooint{0,1}$. Then  
 \begin{enumerate}

\item for any $R_0 \geq 4 + 12^{\half}$,
\begin{equation}
\label{eq:borne_eps_0_R_0}
  \lambda_x/(1 +R_0^{2}) \leq \epsilon_0 \leq 2/(4 +R_0^2) \leq 1/(4 R_0)\eqsp,
 \end{equation}
 
 \item for any $R_0 \geq (4+12^{\half}) \vee (\lambda_v/2^{\half})$,
 \begin{equation} 
A(\epsilon_0) \leq 3^\half \quad \text{and} \quad  \lambda_v \lambda_x m_2^\half \epsilon_0/8 \leq \alpha(\epsilon_0) \leq 4 \lambda_v \lambda_x m_2^\half \epsilon_0 \eqsp.
 \end{equation}
 \end{enumerate}
 \end{lemma}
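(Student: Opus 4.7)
The plan is to handle parts (a) and (b) separately, with the principal difficulty concentrated in the sharp lower bound on $\alpha(\epsilon_0)$, which requires a closed form for $\Lambda(\epsilon_0)$.

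For part (a), I would work directly from the explicit expression \eqref{eq:epsilon_0}. Setting $s = R_0/\sqrt{R_0^2+4\lambda_x} \in [0,1]$, the numerator satisfies $1+\lambda_x - (1-\lambda_x)s \in [2\lambda_x,\, 1+\lambda_x]$ (using $s\leq 1$ for the lower bound and $s\geq 0$ together with $1-\lambda_x\geq 0$ for the upper bound). The upper bound $\epsilon_0\leq 2/(4+R_0^2)$ then follows by cross-multiplication, reducing to $R_0^2\geq 2(1+\lambda_x)$, which holds since $\lambda_x\leq 1$ and $R_0\geq 4+\sqrt{12}$ imply $R_0^2\geq 4$; and $2/(4+R_0^2)\leq 1/(4R_0)$ is exactly the quadratic $R_0^2-8R_0+4\geq 0$ whose larger root is $4+\sqrt{12}$, explaining the threshold. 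The lower bound $\lambda_x/(1+R_0^2)\leq \epsilon_0$ follows from $2\lambda_x \leq 1+\lambda_x-(1-\lambda_x)s$ and the comparison $(1+\lambda_x)^2+R_0^2\leq 2(1+R_0^2)$, which amounts to $(1+\lambda_x)^2\leq R_0^2+2$ and holds since $(1+\lambda_x)^2\leq 4$.

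For part (b), I would first establish $A(\epsilon_0)\leq \sqrt 3$: this is equivalent to $\sqrt 2\, \lambda_v\epsilon_0\leq 1/2$, which follows from $\epsilon_0\leq 1/(4R_0)$ and $R_0\geq \lambda_v/\sqrt 2$. Consequently $1\leq 1+\sqrt 2\,\lambda_v\epsilon_0\leq 3/2$, so the bounds on $\alpha(\epsilon_0) = \lambda_v m_2^{1/2} \Lambda(\epsilon_0)/(1+\sqrt{2}\,\lambda_v \epsilon_0)$ reduce to showing $\lambda_x\epsilon_0/4 \leq \Lambda(\epsilon_0)\leq 4\lambda_x\epsilon_0$ (the factor $6$ in the conclusion coming from $(2/3)\cdot(1/4)=1/6$). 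The upper bound is the easy direction: rationalising the square root in \eqref{eq:def_lambda_0} gives $\Lambda(\epsilon) = N(\epsilon)/[2(D(\epsilon)+\sqrt{D(\epsilon)^2-N(\epsilon)})] \leq N(\epsilon)/[2D(\epsilon)]$ with $N(\epsilon) = 4\epsilon\lambda_x(1-\epsilon) - \epsilon^2 R_0^2$ and $D(\epsilon) = 1 - \epsilon(1-\lambda_x)$; then $N(\epsilon_0)\leq 4\epsilon_0\lambda_x$ and $D(\epsilon_0) \geq 1-\epsilon_0 \geq 1/2$ yield $\Lambda(\epsilon_0)\leq 4\epsilon_0\lambda_x$.

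The sharp lower bound on $\Lambda(\epsilon_0)$ is the main obstacle and requires a closed form. Assuming first $\lambda_x<1$, I would exploit the critical-point condition $\Lambda'(\epsilon_0) = 0$: since $D' = -(1-\lambda_x)$, solving gives $\sqrt{D^2-N}|_{\epsilon_0} = D(\epsilon_0) + N'(\epsilon_0)/[2(1-\lambda_x)]$, and thus $\Lambda(\epsilon_0) = -N'(\epsilon_0)/[4(1-\lambda_x)]$. Substituting the definition of $\epsilon_0$ and rationalising via $u-v = (u^2-v^2)/(u+v)$ with $u = R_0^2 + 2\lambda_x(1+\lambda_x)$ and $v = R_0\sqrt{R_0^2+4\lambda_x}$, the key cancellation $u^2-v^2 = 4\lambda_x^2[R_0^2+(1+\lambda_x)^2]$ exactly matches the denominator $(1+\lambda_x)^2+R_0^2$ of $\epsilon_0$, yielding the clean formula
\begin{equation*}
\Lambda(\epsilon_0) \;=\; \frac{2\lambda_x^2}{R_0^2+2\lambda_x(1+\lambda_x)+R_0\sqrt{R_0^2+4\lambda_x}},
\end{equation*}
which extends continuously to $\lambda_x=1$ (verifiable directly at $\epsilon_0=2/(4+R_0^2)$ from $\Lambda(\epsilon) = (1-\sqrt{1-4\epsilon(1-\epsilon)+\epsilon^2R_0^2})/2$). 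Combining this with the expression for $\epsilon_0$, and bounding the denominator above by $2R_0^2+6\lambda_x$ (using $\sqrt{R_0^2+4\lambda_x}\leq R_0 + 2\lambda_x/R_0$) and the numerator of $\epsilon_0$ above by $2\lambda_x(R_0^2+2)/R_0^2$, the target $\Lambda(\epsilon_0) \geq \lambda_x\epsilon_0/4$ reduces after simplification to $R_0^4 \geq 5R_0^2 + 6$, which holds for $R_0^2 \geq 6$ and hence for $R_0\geq 4+\sqrt{12}$. The hard part will be this algebra: removing the apparent singularity at $\lambda_x=1$ and recognising the factorisation of $u^2-v^2$ both require care.
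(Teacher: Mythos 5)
Your proposal is correct, but part (b) takes a genuinely different route from the paper for the two-sided bound on $\alpha(\epsilon_0)$.

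Part (a) and the bound $A(\epsilon_0)\leq 3^{1/2}$ are essentially the same as the paper's: both reduce $\epsilon_0$ to the obvious envelope $\epsilon_0\leq(1+\lambda_x)/[(1+\lambda_x)^2+R_0^2]$, evaluate at $\lambda_x=1$, and use $\sqrt{2}\,\lambda_v\epsilon_0\leq 1/2$.

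For the bounds on $\Lambda(\epsilon_0)$, the paper never computes a closed form. It instead factors
$\Lambda(\epsilon)=\tfrac12\,[1-\epsilon(1-\lambda_x)]\,\bigl[1-(1-\epsilon b_\Lambda(\epsilon))^{1/2}\bigr]$
with $b_\Lambda(\epsilon)=[4\lambda_x(1-\epsilon)-\epsilon R_0^2]/[1-\epsilon(1-\lambda_x)]^2$, sandwiches it via the elementary inequality $a/2\leq 1-\sqrt{1-a}\leq a$ on $[0,1]$, and then bounds $b_\Lambda(\epsilon_0)$ by $\lambda_x/D^2\leq b_\Lambda(\epsilon_0)\leq 4\lambda_x/D^2$ using the auxiliary estimate $\epsilon_0<3\lambda_x/(4\lambda_x+R_0^2)$ from \Cref{thm:DMSmain}-\ref{thm:DMSmain_item_b}. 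Your route is to exploit $\Lambda'(\epsilon_0)=0$ to eliminate the square root, obtaining the exact identity $\Lambda(\epsilon_0)=-N'(\epsilon_0)/[4(1-\lambda_x)]$ and, after the algebraic cancellation $u^2-v^2=4\lambda_x^2 R_1^2$, the closed form $\Lambda(\epsilon_0)=2\lambda_x^2/[R_0^2+2\lambda_x(1+\lambda_x)+R_0\sqrt{R_0^2+4\lambda_x}]$, from which the sandwich $\lambda_x\epsilon_0/4\leq\Lambda(\epsilon_0)\leq 4\lambda_x\epsilon_0$ follows by elementary estimates and the quadratic threshold $R_0^2\geq 6$. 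I verified all the algebra, including $u^2-v^2=4\lambda_x^2 R_1^2$ and the reduction to $R_0^4\geq 5R_0^2+6$, and the chain is sound. Your approach is heavier on algebra but produces an \emph{exact} formula for $\Lambda(\epsilon_0)$ (which could be of independent interest if one wants sharper prefactors); the paper's sandwich argument is lighter, reuses \Cref{thm:DMSmain}-\ref{thm:DMSmain_item_b} directly, and never risks a singularity at $\lambda_x=1$. On that last point, note the hypothesis is $\lambda_x\in\ooint{0,1}$, an open interval, so the division by $1-\lambda_x$ in $\Lambda(\epsilon_0)=-N'(\epsilon_0)/[4(1-\lambda_x)]$ is always licit and your continuity-at-$\lambda_x=1$ remark, while harmless, is not actually required.
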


 \begin{proof}
   The proof is postponed to \Cref{sec:proof:behaviour_R_0}.
 \end{proof}

\subsection{DMS for PDMP: generic results} \label{sec:DMSforPDMP}

\begin{proposition} \label{prop:adjoint-calL}
Assume that $\mcl_i$, $i \in \{1,2\}$, defined by \eqref{eq:generator pdmp} or \eqref{eq:generator_r_d_2}, with $\mcbb_k$ given in \eqref{eq:def-bounces}, satisfies \Cref{as:generator} with $\msc = \rmCb^2(\mse)$ together with  \Cref{as:U},  \Cref{as:Fk}, \Cref{as:intensities}, \Cref{as:radial}  \Cref{as:operator on velocities} and \Cref{as:refreshment}. Then the $\Lmu$-adjoint of $\calL_i$ for $i \in \{1,2\}$ defined by \eqref{eq:generator pdmp} or \eqref{eq:generator_r_d_2}  is given for any $\ff \in \mrCb^{2}(\mse)$ by

\begin{equation}
\calL^{\star}_i \ff = - v^\top \nax \ff + \updelta_{i,2} m_2 F_0^\top \nav \ff + \sum_{k=1}^K \varphi\big(-v^\top F_k\big)[ (\mcbb_k-\Id)\ff ]+ m_2^{\half}\lambdaref \calR_v \ff \eqsp.
\end{equation}

\end{proposition}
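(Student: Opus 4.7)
The plan is to exploit the density of $\core = \mrCb^2(\mse)$ granted by \Cref{as:generator} and to establish $\psmu{\calL_i f}{g} = \psmu{f}{\calL_i^\star g}$ for arbitrary $f,g \in \mrCb^2(\mse)$ by splitting $\calL_i$ into its four constituent pieces, namely the transport $v^\top \nax$, the bounces $\lambda_k(\mcbb_k - \Id)$ for $k \in \iint{1}{K}$, the refreshment $m_2^\half \lambdaref \mcr_v$ and, when $i=2$, the extra drift $-m_2 F_0^\top \nav$, and then computing each adjoint separately on $\mrCb^2(\mse)$ before summing.

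The transport piece is handled by integration by parts in $x$: since $\pi \propto \rme^{-U}$ with $U \in \mrC^1$ and $\pi$ is rapidly decreasing thanks to \eqref{eq:borne_moment}, one obtains $(v^\top\nax)^\star = -v^\top\nax + v^\top\nax U$. For each bounce term, the map $R_k(x)v := v - 2(v^\top \rmn_k(x))\rmn_k(x)$ is an involution that leaves $\nu$ invariant by \Cref{as:radial}-\ref{item:as:radial:general_2}, so the change of variables $v \mapsto R_k v$ yields $\psmu{\lambda_k \mcbb_k f}{g} = \psmu{f}{\lambda_k(x,R_k v)\,\mcbb_k g}$. A direct computation gives $(R_k v)^\top F_k(x) = v^\top F_k(x) - 2(v^\top \rmn_k)(\rmn_k^\top F_k) = -v^\top F_k(x)$, hence $\lambda_k(x, R_k v) = \varphi(-v^\top F_k)$ by \Cref{as:intensities}. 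Combining this with the identity $\varphi(s) - \varphi(-s) = s$ in order to rewrite $\varphi(-v^\top F_k) - \varphi(v^\top F_k) = -v^\top F_k$, the adjoint of $\lambda_k (\mcbb_k - \Id)$ works out to $\varphi(-v^\top F_k)(\mcbb_k - \Id) - v^\top F_k$.

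The refreshment piece is immediate from \Cref{as:operator on velocities}-\ref{item:as:operator_on_velocities_self_adjoint}: $\mcr_v$ is $\mu$-self-adjoint and commutes with the position-only multiplier $\lambdaref(x)$, so $(\lambdaref \mcr_v)^\star = \lambdaref \mcr_v$. For $i=2$, where $\nu$ is the zero-mean Gaussian with covariance $m_2 \Idd$, integration by parts in $v$ against the Gaussian density, using $\nav \nu(v) = -(v/m_2)\nu(v)$, yields $(-m_2 F_0^\top \nav)^\star = m_2 F_0^\top \nav - v^\top F_0$.

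Assembling the four pieces and invoking \Cref{as:Fk}-\ref{item:eq:decomposition U} in the form $v^\top \nax U = \sum_{k=0}^K v^\top F_k$, the multiplicative $v^\top$-contributions collapse to $v^\top \nax U - \sum_{k=1}^K v^\top F_k - \delta_{i,2}\, v^\top F_0 = (1-\delta_{i,2})\, v^\top F_0$. For $i=2$ this vanishes identically. For $i=1$, the invariance of $\mu$ granted by \Cref{as:generator} forces $\calL_1^\star \bigone_\mse = 0$, i.e.\ $v^\top F_0 = 0$ in $\Lmu$, which by continuity of $F_0$ from \Cref{as:Fk} gives $F_0 \equiv 0$. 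In both cases the remaining terms coincide with the claimed formula. The main nontrivial step is the bounce computation, where the argument hinges on the triple interplay between the reflection identity $(R_k v)^\top F_k = -v^\top F_k$, the $\nu$-invariance of the involution $R_k$, and the twisted-reversibility property $\varphi(s) - \varphi(-s) = s$, which together exchange forward and reversed rates up to the residual drift $-v^\top F_k$ absorbed into the sum.
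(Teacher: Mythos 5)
Your proposal is correct and follows essentially the same route as the paper: compute the adjoint of each piece of $\calL_i$ by integration by parts in $x$ (transport), integration by parts in $v$ (Hamiltonian drift when $i=2$), self-adjointness of $\calR_v$ (refreshment), and the reflection change of variable $v \mapsto R_k v$ combined with $\lambda_k(x,R_k v) = \varphi(-v^\top F_k)$ and the twisted-reversibility identity $\varphi(s)-\varphi(-s)=s$ (bounces), then cancel the residual multiplicative term using $\nax U = \sum_{k=0}^K F_k$. The one place you go beyond the paper is to spell out why the residual $v^\top F_0$ must vanish in the $i=1$ case (invariance of $\mu$ plus continuity of $F_0$), a point the paper leaves implicit by remarking that the $i=1$ case "follows along the same lines."
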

\begin{proof}
We only consider the case $i=2$ since the proof for $i=1$ follows along the same lines. In addition, since $\mcr_v$ is self-adjoint by \Cref{as:operator on velocities} and $\rmCb^2(\mse) \subset \domain(\calR_v)$, we can consider the case $\lambdaref(x) = 0$ for any $x \in \msx$. 
Based on \eqref{eq:generator pdmp}-\eqref{eq:generator_r_d_2}, using that for any $k \in \{1,\ldots,K\}$, $\mcbb_k$ is symmetric on $\mrl^2(\mu)$, for any $(x,v) \in \mse$, $\mcbb_k \lambda_k(x,v) = \lambda_k(x,-v)$ and by integration by part, for any $f,g \in \mrCb^2(\mse)$, we obtain 
\begin{align}
 & \psmu{g}{\mcl f} = \psmu{- v^\top \nax g + (v^\top \nax U)g + m_2 F_0^\top \nav g - (v^\top F_0)g }{f} \\
&  \phantom{\psmu{g}{\mcl f}} \qquad \qquad  + \psmu{ \textstyle\sum_{k=1}^K (\mcbb_k - \Id) [\lambda_k(x,v)g] }{f} \\
  & = \psmu{- v^\top \nax g + [v^\top (\nax U-F_0)] g + m_2 F_0^\top \nav g}{f} \\
  & \phantom{\psmu{g}{\mcl f}} \qquad \qquad + \psmu{\textstyle\sum_{k=1}^K\{\lambda_k(x, -v) \mcbb_k g - \lambda_k(x,v) g\} }{f} \\
  & = \psmu{\mcl_i^{\star} g}{f} + \psmu{ [v^\top (\nax U-F_0)] g  + g \textstyle\sum_{k=1}^K\{\lambda_k(x,-v) - \lambda_k(x,v)\}  }{f} \eqsp.
\end{align}
Using that $\sum_{k=0}^{K} F_k= \nax U$ by \Cref{as:Fk}-\ref{item:eq:decomposition U} and that $\lambda_k(x,v) - \lambda_k(x,-v) = \ps{v}{F_k(x)}$  for any $k \in \{1,\ldots,K\}$ and $(x,v) \in \mse$ by \Cref{as:intensities},  concludes the proof.

\end{proof}

The following provides expressions for the $\Lmu$-symmetric and $\Lmu$-anti-symmetric parts of $\calL$ for all the PDMP processes considered in this paper. Define $\lambda_k^{\even} : \mse \to \rset_+$ for any $(x,v) \in \mse$ and $k \in \{1,\ldots,K\}$ by
 \begin{equation}
   \label{eq:def_lambda_even}
   \lambda^{\even}_k(x,v) = \lambda_k(x,v) + \lambda_k(x,-v) \eqsp.
 \end{equation}

\begin{proposition}
 \label{prop:dms_pdmp_1} 
Assume that $\mcl_i$, $i \in \{1,2\}$, defined by \eqref{eq:generator pdmp} or \eqref{eq:generator_r_d_2}, with $\mcbb_k$ given in \eqref{eq:def-bounces}, satisfies \Cref{as:generator} with $\msc = \rmCb^2(\mse)$ together with  \Cref{as:U},  \Cref{as:Fk}, \Cref{as:intensities}, \Cref{as:radial}  \Cref{as:operator on velocities} and \Cref{as:refreshment}.
 Let $\mcs$ and $\mct_i$ be the symmetric and anti-symmetric parts of $\mcl_i$ respectively,  defined by \eqref{eq:def_calS_calT}.
\begin{enumerate}[wide, labelwidth=!, labelindent=0pt]
\item \label{item:sym skewsym}
    Then for any $\ff \in \mrCb^2(\mse)$, $\mct_i \ff = \tmct_if$ and $\mcs \ff= \tmcs f$ where $\tmct_i$ and $\tmcs$ are the operators defined for any $g \in \rmC^2_{\poly}(\mse)$ by 

    \begin{align}
      \label{eq:def_t_i}
      \tmct_i g &= v^\top \nax g - \delta_{i,2} m_2 F_0^\top \nav g + \frac 1 2 \sum_{k=1}^K (v^\top F_k) \, (\mcbb_k-\Id) g \eqsp, \\
            \label{eq:def_s}
	\tmcs g &= \frac 1 2 \sum_{k=1}^K  \lambda_k^{\even} \, (\mcbb_k-\Id) g + m_2^{\half} \lambdaref \calR_v g \eqsp.
\end{align}

\item \label{item:prop:dms_pdmp_1_b} 
  $\mcs$ satisfies    \Cref{as:DMSabstract}-\ref{item:DMS-kernel_calS}.
\item \label{item:prop:dms_pdmp_1_c} $\rmC^1_{\poly}(\mse) \subset \domain(\mct_i^\star) \cap \domain(\mcs^\star)$ and for any $f \in \rmC^1_{\poly}(\mse)$, $\mct_i^{\star}f = -\tmct_i f$ and $\mcs^{\star} f = \tmcs f$.  
\end{enumerate}
\end{proposition}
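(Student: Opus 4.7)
The plan is to treat the three statements in sequence, reusing \Cref{prop:adjoint-calL} for (a) and then leveraging the explicit formulas thereby obtained for (b) and (c).

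For part (a), I would simply add and subtract $\mcl_i$ and $\mcl_i^{\star}$ from \Cref{prop:adjoint-calL}. The transport term $v^{\top}\nax$ and the drift $-m_2 F_0^{\top}\nav$ (for $i=2$) are both anti-symmetric under $v\mapsto -v$, so they sit entirely in $\tmct_i$. The refreshment contribution $m_2^{\half}\lambdaref\calR_v$ is $\Lmu$-symmetric by \Cref{as:operator on velocities}-\ref{item:as:operator_on_velocities_self_adjoint}, so it sits entirely in $\tmcs$. For the jump part, the identity $\lambda_k(x,v)-\lambda_k(x,-v)=v^{\top}F_k(x)$ from \Cref{as:intensities} produces the factor $(v^{\top}F_k)/2$ in $\tmct_i$, while the sum $\lambda_k(x,v)+\lambda_k(x,-v)=\lambda_k^{\even}(x,v)$ produces the prefactor of $\tmcs$. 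Collecting terms gives \eqref{eq:def_t_i}--\eqref{eq:def_s}.

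For part (b), the objective is to prove that any $g\in\range(\Piv)$ lies in $\domain(\mcs^{\star})$ with $\mcs^{\star}g=0$. Since such a $g$ is a function of $x$ alone, I would show directly that $\psmu{g}{\mcs h}=0$ for every $h\in\msc=\rmCb^2(\mse)$. The refreshment contribution vanishes because, by \Cref{as:operator on velocities}-\ref{item:as:operator_on_velocities_projection}-\ref{item:as:operator_on_velocities_self_adjoint}, $g\in\mrl^2(\pi)\subset\domain(\mcr_v)$ with $\mcr_v g=0$, so $\psmu{g}{\lambdaref\mcr_v h}=\psmu{\mcr_v(\lambdaref g)}{h}=0$. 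For each bounce, I would use that $\mcbb_k$ is $\Lmu$-symmetric (by \Cref{as:radial}-\ref{item:as:radial:general_2}) and that $\mcbb_k\lambda_k^{\even}=\lambda_k^{\even}$ (because reflecting $v$ in $\rmn_k$ sends $v^{\top}F_k$ to $-v^{\top}F_k$, hence swaps $\lambda_k(x,\cdot v)$ and $\lambda_k(x,-v)$). Thus $\psmu{g}{\lambda_k^{\even}\mcbb_k h}=\psmu{\mcbb_k(g\lambda_k^{\even})}{h}=\psmu{g\lambda_k^{\even}}{h}$, which cancels the $-\Id$ contribution. Summing yields $\psmu{g}{\mcs h}=0$ for all $h\in\msc$; since $\msc$ is a core, this gives $g\in\domain(\mcs^{\star})$ and $\mcs^{\star}g=0$.

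For part (c), given $f\in\rmC^1_{\poly}(\mse)$ and $h\in\msc$, the task is to verify $\psmu{f}{\mcl_i h}=\psmu{-\tmct_i f+\tmcs f}{h}$ so that $f\in\domain(\mcl_i^{\star})$ (whence $f\in\domain(\mct_i^{\star})\cap\domain(\mcs^{\star})$ via the decomposition) and then combining with the analogous identity for $\mcl_i^{\star}$ from \Cref{prop:adjoint-calL} furnishes the claimed formulas via the identity $\mct_i^{\star}=(\mcl_i^{\star}-\mcl_i)/2$ on polynomial-growth test functions. The bounce and refreshment terms need no integration by parts: the self-adjointness of $\mcbb_k$ on $\mrl^2(\mu)$ and of $\mcr_v$ on $\mrl^2(\mu)$ (\Cref{as:operator on velocities}-\ref{item:as:operator_on_velocities_self_adjoint}, noting $\rmC^2_{\poly}(\mse)\subset\domain(\mcr_v)$) transfer the operators to $f$ directly, and the identities from part (a) then give the stated rearrangement. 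The transport and drift terms are handled by classical integration by parts in $x$ against $\pi$ (producing the compensation by $v^{\top}\nax U$) and in $v$ against the Gaussian $\nu$ (producing the compensation by $F_0^{\top}v/m_2$), after which the $\nax U$ compensation is absorbed by the jump contribution via \Cref{as:Fk}-\ref{item:eq:decomposition U}, exactly as in the proof of \Cref{prop:adjoint-calL}.

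The main technical obstacle is controlling the boundary/integrability issues in this last step: $f\in\rmC^1_{\poly}$ is allowed polynomial growth in both $x$ and $v$, so one must verify that $\int f\,v^{\top}\nax h\,\rmd\mu$ (and its dual) is well defined and that the integration-by-parts contributes no boundary term. This relies on the exponential moment \eqref{eq:borne_moment} of $\pi$ coming from \Cref{as:U}, on $h\in\rmCb^2(\mse)$, and on the bound \Cref{as:Fk}-\ref{item:Fk bounded} controlling $|F_k|$ by $|\nax U|$, all of which combine to justify the use of the dominated convergence theorem in a truncation argument. The same ingredients, together with the moment assumption \Cref{as:radial}-\ref{assum:fourth_moment} on $\nu$ and the bound \Cref{as:intensities} on $\varphi$, ensure absolute convergence of the jump integrals for $f$ of polynomial growth.
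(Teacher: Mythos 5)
Your proposal is correct and follows essentially the same path as the paper's (very terse) proof: part (a) by combining \Cref{prop:adjoint-calL} with the symmetric/anti-symmetric splitting \eqref{eq:def_calS_calT}, part (b) by the $\Piv$-invariance built into $\mcbb_k$ and $\mcr_v$, and part (c) by integration by parts against $\pi$ and $\nu$. The one place worth tightening is the logical route in (c): showing $f\in\domain(\mcl_i^{\star})$ does not, by itself, yield $f\in\domain(\mct_i^{\star})\cap\domain(\mcs^{\star})$ ``via the decomposition'', so it is cleaner to verify directly that $\psmu{f}{\mct_i h}=-\psmu{\tmct_i f}{h}$ and $\psmu{f}{\mcs h}=\psmu{\tmcs f}{h}$ for all $h\in\msc$ (the integration-by-parts computation you sketch already establishes exactly these two identities), which immediately gives both domain memberships and the formulas.
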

Note that the symmetric parts of $\mcl_i$ for $i \in \{1,2\}$ are the same and equal to $\mcs$. 
\begin{proof}
  \ref{item:sym skewsym} follows from \Cref{prop:adjoint-calL} and the definitions of $\calS$ and $\calT$ in \eqref{eq:def_calS_calT}.  \ref{item:prop:dms_pdmp_1_b} is a direct consequence of the first result and the definition of $(\mcs^{\star},\domain(\mcs^{\star}))$. Simple integration by parts and definitions of $(\mcs^{\star},\domain(\mcs^{\star}))$, $(\mct_i^{\star},\domain(\mct_i^{\star}))$ imply \ref{item:prop:dms_pdmp_1_c}. 
  
\end{proof}

 We define the directional derivative operator 
\begin{equation}
  \label{eq:direct_derivat}
\text{ for any }  \ff \in \domain(\calD) = \rmCb^{1}(\mse) \eqsp, \,
 \calD \ff(x,v) = v^{\top} \nabla_x \ff(x,v) \eqsp.
 \end{equation}
 The operators $(\calD, \rmCb^1(\mse))$ and  $(\calD \Piv, \rmCb^1(\mse))$ are densely defined on $\mrl^2(\mu)$ and closable. The proof is similar to that for the operator $\nabla_x$ and is omitted, see for example \cite[p. 88]{yoshida:1980}. 
 Note that by \eqref{eq:def_t_i}, a simple computation gives that for any $f \in \rmCb^2(\mse)$ and $i \in \{1,2\}$, since $\Pi_v f \in \rmCb^2(\mse)$,
 \begin{equation}
   \label{eq:mct_mcd_rmc_2}
   \mct_i \Pi_v f = \mcd \Pi_v f \eqsp.
 \end{equation}
 
 \begin{lemma}
   \label{lem:relation_mct_mcd}
Assume that $\mcl_i$, $i \in \{1,2\}$, defined by \eqref{eq:generator pdmp} or \eqref{eq:generator_r_d_2}, with $\mcbb_k$ given in \eqref{eq:def-bounces}, satisfies \Cref{as:generator} with $\msc = \rmCb^2(\mse)$ together with  \Cref{as:U},  \Cref{as:Fk}, \Cref{as:intensities}, \Cref{as:radial}  \Cref{as:operator on velocities} and \Cref{as:refreshment}.
Then, with $\mct_i$ the anti-symmetric part of $\mcl_i$  defined by \eqref{eq:def_calS_calT} and the operator $\mca_i$ defined by \eqref{eq:defcalA} relative to $\mct_i$, it holds:
  \begin{enumerate}
  \item    \label{lem:relation_mct_mcd_a}   $\mct_i$ satisfies \Cref{ass:stability_c_piv} and \Cref{as:DMSabstract}-\ref{item:DMS-proj} with $\core=\mrCb^2(\mse)$ and  $((\overline{\mct_i \Piv}),\domain(\overline{\mct_i \Piv})) =  ((\overline{\mcd \Piv}), \domain(\overline{\mcd \Piv}))$;
  \item    \label{lem:relation_mct_mcd_a_b} $\rmCb^2(\mse) \subset \domain((\mct_i \Piv)^{\star} \overline{\mct_i \Piv})$ and for any $f \in \rmCb^2(\mse)$, $  (\mct_i \Piv)^{\star} \mct_i \Piv f = m_2 \nabla_x^{\star} \nabla_x \Piv f$;
  \item \label{lem:relation_mct_mcd_b} $\{m_2 \Id +  (\mct_i \Pi_v)^{\star} \overline{\mct_i \Piv}\}^{-1} \Piv  =  m_2^{-1} \{\Id +  \nabla_x^{\star} \nabla_x \}^{-1} \Piv $ on $\Lmu$;
  \item \label{lem:relation_mct_mcd_c} $\mca^{\star}_i = m_2^{-1} (\overline{\mcd \Pi_v}) \{\Id +  \nabla_x^{\star} \nabla_x \}^{-1} \Piv$ and for any $f \in \rmCb^{2}(\mse)$, there exists a unique  function  $u \in \rmC_{\poly}^3(\msx)$, such that $m_2^{-1}\{\Id+ \nabla_x^{\star} \nabla_x \}^{-1} \Piv f = u$ and
    \begin{equation}
      \label{eq:mca_i_f_u}
      \mca^{\star}_i f = -\ps{v}{\nabla_x u} = -m_2^{-1} (\overline{\mcd \Pi_v}) \{\Id +  \nabla_x^{\star} \nabla_x \}^{-1} \Piv f \eqsp.
    \end{equation}
  \end{enumerate}
\end{lemma}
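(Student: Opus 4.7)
The plan hinges on the identity \eqref{eq:mct_mcd_rmc_2}, which tells us that on $\rmCb^2(\mse)$ the operator $\mct_i \Piv$ coincides with the pure transport operator $\mcd \Piv$, reducing all four assertions to calculations with $\mcd$ acting on functions that depend only on $x$. For (a), I would first verify $\Piv \rmCb^2(\mse) \subset \rmCb^2(\mse)$: since $\Piv f(x,v) = \int_\msv f(x,w)\,\rmd\nu(w)$ depends only on $x$ and $\nu$ has finite second moment $m_2$, differentiation under the integral places $\Piv f$ in $\rmCb^2(\msx)$. The agreement in \eqref{eq:mct_mcd_rmc_2} then makes $(\mct_i \Piv, \rmCb^2(\mse))$ closable with the same closure as $(\mcd \Piv, \rmCb^2(\mse))$, and a density argument (approximating $\rmCb^1(\mse)$-functions by $\rmCb^2(\mse)$-functions in the graph norm of $\mcd \Piv$) identifies this common closure with $(\overline{\mcd \Piv}, \domain(\overline{\mcd \Piv}))$. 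The condition \Cref{as:DMSabstract}-\ref{item:DMS-proj} becomes $\Piv(v^{\top}\nabla_x \Piv f) = 0$, which holds because $\nabla_x \Piv f$ depends only on $x$ and $\int_\msv v_i\,\rmd\nu = 0$, a consequence of \Cref{as:radial}-\ref{item:as:radial:general_3} applied with $g(v_1,v_2) = v_1$.

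For (b), I would test against $\phi \in \rmCb^2(\mse)$:
\[
\langle (\mct_i \Piv)^{\star}\mct_i \Piv f,\, \phi\rangle_2 = \langle \mct_i \Piv f,\, \mct_i \Piv \phi\rangle_2 = \langle v^{\top}\nabla_x \Piv f,\, v^{\top}\nabla_x \Piv \phi\rangle_2,
\]
and carry out the $v$-integration using the covariance identity $\int_\msv v v^{\top}\,\rmd\nu = m_2\,\Idd$ (which follows from \eqref{eq:def_m_2} together with \Cref{as:radial}-\ref{item:as:radial:general_3}). This collapses the quadratic form to $m_2 \langle \nabla_x \Piv f,\, \nabla_x \Piv \phi \rangle_{\pi} = m_2 \langle \nabla_x^{\star}\nabla_x \Piv f,\, \phi\rangle_2$, which is valid since $\Piv f \in \rmCb^2(\msx)$ belongs to $\domain(\nabla_x^{\star}\nabla_x)$. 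This delivers the operator identity in (b). Statement (c) then follows from the observation that $m_2 \Id + (\mct_i \Piv)^{\star}\overline{\mct_i \Piv}$ restricted to the closed invariant subspace $\range(\Piv) \cong \Lpi$ acts as $m_2(\Id + \nabla_x^{\star}\nabla_x)$, which is boundedly invertible by the spectral theorem applied to the nonnegative self-adjoint operator $\nabla_x^{\star}\nabla_x$; checking that the original operator annihilates $(\Id - \Piv)\Lmu$ on the complement lifts the identity from $\Lpi$ to $\Lmu$.

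For (d), I would take the adjoint of the closure of $\mca_i$ from \eqref{eq:defcalA}. Using that the factor $(m_2 \Id + (\mct_i \Piv)^{\star}\overline{\mct_i \Piv})^{-1}$ is bounded self-adjoint while $((\mct_i \Piv)^{\star})^{\star} = \overline{\mct_i \Piv} = \overline{\mcd \Piv}$ (by (a)), the adjoint formula yields the claimed expression, with the sign originating from the factor $-(\mct_i \Piv)^{\star}$ in the definition of $\mca_i$. The regularity statement then amounts to showing that $u := m_2^{-1}(\Id + \nabla_x^{\star}\nabla_x)^{-1}\Piv f$, which is the unique $\Lpi$-solution of the elliptic equation $u - \Delta_x u + \nabla_x U \cdot \nabla_x u = m_2^{-1}\Piv f$, lies in $\rmC_{\poly}^3(\msx)$. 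Since $\Piv f \in \rmCb^2(\msx)$ and $U \in \rmC^3_{\poly}(\msx)$, standard local Schauder-type estimates for second-order elliptic operators give $u \in \rmC^3(\msx)$, while polynomial growth of $u$ and its derivatives is obtained from weighted resolvent bounds combined with the exponential moment control \eqref{eq:borne_moment} on $\pi$, so that $\mca_i^{\star} f = -v^{\top}\nabla_x u$ as required.

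The main obstacle will be the careful tracking of unbounded-operator domains across parts (a)--(c): in particular the density/graph-norm argument identifying $\overline{\mct_i \Piv}$ with $\overline{\mcd \Piv}$ in (a), and lifting the quadratic-form identity of (b) to a genuine operator identity that pins down $\domain((\mct_i \Piv)^{\star}\overline{\mct_i \Piv})$ rather than merely a form-domain relation. The elliptic-regularity step in (d) is the other delicate ingredient, since combining interior smoothness with polynomial growth control for the Kolmogorov-type operator $\Id - \Delta_x + \nabla_x U \cdot \nabla_x$ against the weight $\rme^{-U}$ needs standard but non-trivial PDE estimates adapted to \Cref{as:U}.
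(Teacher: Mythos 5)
Your plan for parts (a), (b) and (d) tracks the paper's argument faithfully: the identity \eqref{eq:mct_mcd_rmc_2} reduces $\mct_i\Piv$ to $\mcd\Piv$, the covariance identity $\int_\msv vv^\top\,\rmd\nu = m_2\Idd$ collapses the quadratic form in (b), and adjoint calculus built on \Cref{lem:bounded_A} together with the elliptic regularity input gives (d). The one substantive gap is in part (c), and it is one you yourself flag but do not close. Asserting that $m_2\Id + (\mct_i\Piv)^{\star}\overline{\mct_i\Piv}$ "acts as" $m_2(\Id + \nabla_x^{\star}\nabla_x)$ on $\range(\Piv)$ is an identity between two self-adjoint unbounded operators, whereas (b) only establishes agreement on $\rmCb^2(\msx)$, and $\rmCb^2(\msx)$ is not obviously a core for either of them, so agreement there does not automatically propagate to coincidence of the operators or of their resolvents. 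The paper bypasses the core question entirely: since both $\{m_2\Id + (\mct_i\Piv)^{\star}\overline{\mct_i\Piv}\}^{-1}\Piv$ and $m_2^{-1}\{\Id + \nabla_x^{\star}\nabla_x\}^{-1}\Piv$ are bounded, it suffices to verify agreement on a dense subset of $\Lpi$, and the Pardoux--Veretennikov regularity result (\cite[Theorem 2]{pardoux2001}) supplies exactly that: every $g \in \rmCb^2(\msx)$ can be written as $m_2(\Id + \nabla_x^{\star}\nabla_x)u$ with $u \in \rmC_{\poly}^3(\msx)$, hence $m_2(\Id + \nabla_x^{\star}\nabla_x)\big(\rmC_{\poly}^3(\msx)\big)$ is dense in $\Lpi$, and on this set the two resolvents visibly coincide by (b), extended from $\rmCb^2(\msx)$ to $\rmC_{\poly}^3(\msx)$ by graph-norm approximation. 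You already invoke precisely this regularity input in (d) for the $\rmC_{\poly}^3$ assertion about $u$; the missing step is to deploy it one part earlier as a density-of-range statement rather than only as a pointwise regularity fact. One further small point: the operator does not "annihilate" $(\Id-\Piv)\Lmu$ --- it acts there as multiplication by $m_2$ --- though this is harmless since (c) carries a $\Piv$ on the right, so that subspace never enters.
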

\begin{proof}
  \ref{lem:relation_mct_mcd_a} First note that $\rmCb^2(\mse)$ is a core for $(\overline{\mcd \Piv},\domain(\overline{\mcd \Piv}))$ since for any $f \in \rmCb^1(\mse)$, there exists a sequence of functions $(f_n)_{n \in \nset}$ such that for any $n \in \nset$, $f_n \in \rmCb^2(\mse)$, $\lim_{n \to \plusinfty} \normmu{f-f_n}= 0$ and $\lim_{n\to \plusinfty}\normmu{\nabla_xf - \nabla_x f_n} = 0$. 
Then the proof is completed upon using \eqref{eq:direct_derivat} and \eqref{eq:mct_mcd_rmc_2}. 

\ref{lem:relation_mct_mcd_a_b} By \eqref{eq:mct_mcd_rmc_2}, we have for any $f \in \rmCb^2(\mse)$, that $\mct_i \Piv f = \ps{v}{\nabla_x \Piv f}$. 
  It suffices then to verify  that with $g : (x,v) \mapsto \ps{v}{\nabla_x (\Piv f)(x)}$,  then $g  \in \domain((\mct_i \Piv )^{\star})$ and $(\mct_i \Piv)^{\star} g = m_2 \nabla^{\star}_x \nabla_x \Piv f$, \ie~for any $h \in \domain(\mct_i \Piv)$, we have $\psmu{\mct_i \Piv h}{g} = m_2 \psmu{h}{\nabla^{\star}_x \nabla_x \Piv f}$. But  $\domain(\mct_i \Piv) = \core =  \rmCb^2(\mse)$ by assumption and definition see \eqref{eq:def_calS_calT}. Then
using \eqref{eq:direct_derivat}, \eqref{eq:mct_mcd_rmc_2} and an integration by part we obtain for any $h \in \rmCb^2(\mse)$,
\begin{align}
  \label{eq:11}
  \psmu{\mct_i \Piv h}{g} &= \psmu{\mcd \Pi_v h}{g} = \int_{\mse} \{\ps{v}{ \nabla_x (\Pi_vh)(x)}\}\{\ps{v}{\nabla_x (\Piv f)(x)}\} \rmd \mu(x,v) \\
& = m_2 \int_{\msx} \ps{\{\nabla_x (\Pi_vh)(x)\}}{\nabla_x(\Piv f)(x)} \rmd \pi(x) \\
&  = m_2 \int_{\msx} (\Pi_vh)(x)\nabla_x^{\star} \nabla_x(\Piv f)(x) \rmd \pi(x) \\
& = m_2 \int_{\msx} \Piv \parentheseDeux{h\nabla_x^{\star} \nabla_x(\Piv f)}(x) \rmd \pi(x) \\
  &  = m_2 \int_{\mse}  h(x,v)\nabla_x^{\star} \nabla_x(\Piv f)(x) \rmd \mu(x,v) \eqsp,
\end{align}
where we have used the definition of $\Pi_v$ \eqref{eq:def_piv} in the last step.

  
  \ref{lem:relation_mct_mcd_b}
  Note that we only need to show that $\{ m_2 \Id + (\mct_i \Pi_v)^{\star} \overline{\mct_i \Piv}\}^{-1}$ and   $m_2^{-1} \{\Id + \nabla_x^{\star} \nabla_x \}^{-1}$   are equal on a dense subset of $\rml^2(\pi)$ since they are bounded. We now show that this statement is true  choosing the subset
  $ m_2 \{\Id + \nabla_x^{\star} \nabla_x      \}(\rmC^3_{\poly}(\msx))$.  First, for any $h \in \rmCb^2(\mse)$, we have using \ref{lem:relation_mct_mcd_a}, \ref{lem:relation_mct_mcd_a_b} and the definition \eqref{eq:direct_derivat} that
      \begin{equation}
        \label{eq:relation_mct_mcd_b_1}
        \{m_2 \Id +  (\mct_i \Pi_v)^{\star} \overline{\mct_i \Piv} \}  h=         \{m_2 \Id +  (\mct_i \Pi_v)^{\star} \mct_i \Piv \}  h=  m_2 \{\Id +  \nabla_x^{\star} \nabla_x \Piv \}   h \eqsp.
      \end{equation}
      Second, for any $g \in \rmC^3_{\poly}(\msx)$, there exists a
      sequence $(g_n)_{n \in \nset}$ such that for any $n \in \nset$,
      $g_n \in \rmCb^2(\msx)$, $(g_n)_{n\in\nset}$,
      $(\nabla_x g_n)_{n \in \nset}$ and
      $(\nabla_x^2 g_n)_{n \in \nset}$ converge in \-~$\mrl^2(\pi)$ to
      $g$, $\nabla_x g$ and $\nabla_x^2g$ respectively, which
      implies that the sequences 
      $\{ [m_2 \Id + (\mct_i \Pi_v)^{\star} \overline{\mct_i \Piv} ]
      g_n\}_{n \in \nset}$ and $\{ m_2 [\Id + (\nabla_x)^{\star} \nabla_x ] g_n\}_{n \in
        \nset}$ are $\rml^2(\pi)$ convergent. Therefore, since
      $ \{m_2 \Id +  (\mct_i \Pi_v)^{\star} \overline{\mct_i \Piv} \}$
      and $m_2 \{\Id +  \nabla_x^{\star} \nabla_x \}$ are clos\-ed, we
      get that $\rmC^3_{\poly}(\msx)$ is included in the domain of
      these two operators and \eqref{eq:relation_mct_mcd_b_1} holds for any $h \in \rmC^3_{\poly}(\msx)$. 
      \cite[Theorem 2]{pardoux2001} or \cite[Lemma
      17]{brosse:durmus:moulines:sabanis:2019}\footnote{Note that the
        result is stated for functions $f \in \rmC^3_{\poly}(\rset^d)$
but  the proof can be easily extended to
        $f \in \rmC^2_{\poly}(\msx)$} show that for any $f \in \mrCb^2(\msx)$,
      there exists $u \in \rmC^3_{\poly}(\msx)$ such that
      $m_2\{ \Id + \nabla_x^{\star} \nabla_x\} u = f$. Therefore,
      it holds that
      \[ \rmCb^2(\msx) \subset m_2 \{\Id +  \nabla_x^{\star} \nabla_x
      \}(\rmC^3_{\poly}(\msx)), \] so the subset  $ m_2 \{\Id +  \nabla_x^{\star} \nabla_x
      \}(\rmC^3_{\poly}(\msx)) $ is dense in $\mrl^2(\pi)$. In addition, since we have shown that the operators $\{ m_2 \Id + (\mct_i \Piv)^{\star}\overline{\mct_i\Pi_v} \}$ and   $m_2\{ \Id + \nabla_x^{\star} \nabla_x\}$ coincide on $\rmC_{\poly}^3(\msx)$, 
      $\{ m_2 \Id + (\mct_i \Piv)^{\star}\overline{\mct_i\Pi_v} \}^{-1}$ and   $m_2^{-1}\{ \Id + \nabla_x^{\star} \nabla_x\}^{-1}$ coincide on $m_2 \{\Id +  \nabla_x^{\star} \nabla_x
      \}(\rmC^3_{\poly}(\msx)) $.

      \ref{lem:relation_mct_mcd_c} As $\mca_i$ is bounded, it is
      sufficient to show that the operators $\mca_i^{\star}$ and
      $m_2^{-1} (\overline{\mcd \Pi_v}) \{\Id + \nabla_x^{\star}
      \nabla_x \}^{-1} \Piv$ coincide on a dense subset of
      $\rml^2(\mu)$. First, for all $f,g \in \rmCb^2(\mse)$, we get
      that $\psmu{\mca_i g}{f}= \psmu{\Pi_v\mca_i g}{f}$ by
      \Cref{lem:bounded_A}-\ref{lem:bounded_A_item_a}. Now using the
      definition of $\mca_i$ \eqref{eq:defcalA}, that $\Piv$ and
      $\{m_2 \Id +  (\mct_i \Pi_v)^{\star} \overline{\mct_i \Piv}\}^{-1}$ are bounded and      self-adjoint, since $\Piv$ is an orthogonal projection and  by \Cref{prop:abstract bound}-\ref{prop:abstract bound_a}-\ref{lem:relation_mct_mcd_b},  we
      get for any $f \in \rmCb^2(\mse)$,
      \begin{align}
       \psmu{\mca_i g}{f} &= m_2^{-1} \psmu{(-\Piv \mct_i)^{\star}g}{\{\Id + \nabla_x^{\star} \nabla_x \}^{-1} \Pi_v f} \\
&= m_2^{-1}  \psmu{ \mct_i \Piv g}{\{\Id + \nabla_x^{\star} \nabla_x \}^{-1} \Pi_v f}\eqsp,
      \end{align}
      where we have used \Cref{lem:grothaus_lemme_2_3}-\ref{lem:grothaus_lemme_2_3_a} for the last equality and  $\domain(\mct_i) = \rmCb^2(\mse)$.
            \cite[Theorem 2]{pardoux2001} or \cite[Lemma
      17]{brosse:durmus:moulines:sabanis:2019} show that there exists $u \in \rmC^3_{\poly}(\msx)$ satisfying 
      $m_2\{ \Id + \nabla_x^{\star} \nabla_x\} u = \Pi_v f$ and therefore, we get that
      \begin{equation}
          \psmu{\mca_i g }{f} =   \psmu{ \mct_i \Piv g}{u} = -\psmu{g}{ \ps{v}{\nabla_x u}} \eqsp,
        \end{equation}
        using an integration by part for the last identity. This
        result shows that for any $f \in \rmCb^2(\mse)$, we have that
        $\mca_i^{\star} f= - \ps{v}{\nabla_x u}$. In addition, for any
        $g \in \rmC_{\poly}^1(\mse)$, there exists a sequence
        $(f_n)_{n \in \nset}$ such that $f_n \in \rmCb^1(\mse)$ and $\lim_{n\to \plusinfty} \normmu{g-f_n}=0$, $\lim_{n \to \plusinfty} \normmu{\nabla_xg - \nabla_x f_n} = 0$. Therefore we get that $\rmC_{\poly}^1(\mse) \subset \domain(\overline{\mcd \Piv})$ and for any $g \in \rmC^1_{\poly}(\mse)$, $\overline{\mcd \Piv} g(x,v)= \ps{v}\nabla_xg(x,v)$ for any $(x,v) \in \mse$. Therefore, we get the desired conclusion that 
        $ \mca_i^{\star} f= - \ps{v}{\nabla_x u}=  - m_2^{-1} (\overline{\mcd \Pi_v}) \{\Id +  \nabla_x^{\star} \nabla_x \}^{-1} \Piv f $, which completes the proof.
\end{proof}
Establishing \Cref{as:DMSabstract}-\ref{item:DMS-micro} (referred to as microscopic coercivity in \cite{Dolbeault15}) for the processes considered is fairly straightforward in the present framework.
\begin{proposition}
  \label{prop:microscopic coercivity}
Assume that $\mcl_i$, $i \in \{1,2\}$ given by \eqref{eq:generator pdmp} or \eqref{eq:generator_r_d_2}, where $\mcbb_k$ is defined in \eqref{eq:def-bounces} satisfies \Cref{as:generator} with $\msc = \rmCb^2(\mse)$. Assume in addition that  \Cref{as:U},  \Cref{as:Fk}, \Cref{as:intensities}, \Cref{as:radial}  \Cref{as:operator on velocities} and \Cref{as:refreshment} hold.
Let $\mcs$ be the symmetric part of $\mcl_i$  defined by \eqref{eq:def_calS_calT}. Then \Cref{as:DMSabstract}-\ref{item:DMS-micro} is satisfied with $\lambda_v = \ulambda$ and $\core = \mrCb^2(\mse)$.
\end{proposition}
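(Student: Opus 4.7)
The plan is to take $f \in \rmCb^2(\mse)$ (which is a core for $\calL$ by \Cref{as:generator}, hence enough to establish microscopic coercivity on) and analyse the two contributions to $\tmcs f$ from \Cref{prop:dms_pdmp_1}-\ref{item:sym skewsym} separately: the bounce part $\tfrac12 \sum_k \lambda_k^{\even}(\mcbb_k - \Id)$ and the refreshment part $m_2^{1/2}\lambdaref \mcr_v$. I would show that the first part contributes a nonnegative Dirichlet form to $-\langle \tmcs f, f\rangle_\mu$ and can simply be discarded, while the second part alone produces the desired lower bound $\underline{\lambda}\, m_2^{1/2} \normmu{(\Id - \Piv)f}^2$.

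For the bounce part, I would first check that each $\mcbb_k$ is self-adjoint on $\rmL^2(\nu)$. This uses the substitution $v \mapsto \mcbb_k v$ at fixed $x$, together with \Cref{as:radial}-\ref{item:as:radial:general_2} which says that the distribution $\nu$ is invariant under the reflection $v \mapsto v - 2(v^\top \rmn_k(x))\rmn_k(x)$. Next, I would verify that $\lambda_k^{\even}$ is $\mcbb_k$-invariant: computing $(\mcbb_k v)^\top F_k(x) = -v^\top F_k(x)$ (since $\rmn_k$ is parallel to $F_k$) together with \Cref{as:intensities} gives $\lambda_k(x, \mcbb_k v) = \varphi(-v^\top F_k) = \lambda_k(x,-v)$, hence $\lambda_k^{\even}(x, \mcbb_k v) = \lambda_k^{\even}(x,v)$. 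Then the standard Dirichlet form identity gives
\begin{equation}
-\frac{1}{2}\pspi{\lambda_k^{\even}(\mcbb_k-\Id)f}{f} = \frac{1}{4}\int_\mse \lambda_k^{\even}\,(\mcbb_k f - f)^2 \, \rmd\mu \geq 0,
\end{equation}
since $\lambda_k^{\even}\geq 0$. So this part of $-\langle \tmcs f, f\rangle_\mu$ is simply thrown away.

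For the refreshment part, I would first note that $\Piv \mcr_v = \mcr_v \Piv = 0$: the right identity is \Cref{as:operator on velocities}-\ref{item:as:operator_on_velocities_projection}, and the left identity follows from self-adjointness of $\mcr_v$ together with the same clause, applied to test against constants. Consequently $\mcr_v f = \mcr_v(\Id - \Piv) f$ and $\Piv \mcr_v f = 0$; writing $f = \Piv f + (\Id-\Piv)f$ and using that $\lambdaref$ depends on $x$ only while $\Piv f$ has no $v$-dependence, the cross-term $-\int \lambdaref\, \mcr_v f \cdot \Piv f\,\rmd\mu$ vanishes because $\int \mcr_v(\Id-\Piv)f\,\rmd\nu = 0$ at fixed $x$. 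Thus
\begin{equation}
-m_2^{1/2} \pspi{\lambdaref \mcr_v f}{f} = m_2^{1/2}\int_\msx \lambdaref(x) \Bigl[-\int_\msv \mcr_v(\Id-\Piv)f \cdot (\Id-\Piv)f \, \rmd\nu \Bigr] \rmd\pi(x).
\end{equation}
For each fixed $x$ the inner bracket is nonnegative, so using $\lambdaref(x)\geq \underline{\lambda}$ pointwise (\Cref{as:refreshment}) and then applying the $\rmL^2(\nu)$ spectral gap of \Cref{as:operator on velocities}-\ref{item:refreshment} to $(\Id - \Piv)f(x, \cdot) \in \rmL_0^2(\nu)$ gives
\begin{equation}
-m_2^{1/2}\pspi{\lambdaref \mcr_v f}{f} \geq \underline{\lambda}\, m_2^{1/2}\,\normmu{(\Id - \Piv) f}^2.
\end{equation}
Combining the two parts yields the result with $\lambda_v = \underline{\lambda}$.

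The main point to be careful with is not any single estimate—each is standard—but rather the bookkeeping at the level of domains: one must use that $\rmCb^2(\mse)$ lies both in $\domain(\mcs)$ and in the domain of each $\lambda_k^{\even}(\mcbb_k - \Id)$ and of $\mcr_v$ (the latter given explicitly in \Cref{as:operator on velocities}-\ref{item:as:operator_on_velocities_self_adjoint}), so that the various integrations by parts / symmetrisations are unambiguous. Once those are in place the inequality is a one-line consequence of the Dirichlet-form identity for $\mcbb_k$ and the sliced Poincaré inequality for $\mcr_v$.
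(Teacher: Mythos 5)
Your proposal is correct and follows essentially the same route as the paper: split $-\psmu{\tmcs f}{f}$ into the jump and refreshment parts, discard the jump contribution after showing it is a nonnegative Dirichlet form (the paper reaches the same conclusion via a weighted Cauchy–Schwarz argument on $(\lambda_k^{\even})^{1/2}f$ and $(\lambda_k^{\even})^{1/2}\mcbb_k f$, which is equivalent to your symmetrization identity), and obtain the bound from $\lambdaref\geq\ulambda$ together with the $\rmL^2_0(\nu)$ spectral gap of $\mcr_v$ applied fiberwise. The ingredients you invoke — symmetry of $\mcbb_k$ from \Cref{as:radial}-\ref{item:as:radial:general_2}, invariance of $\lambda_k^{\even}$ under $\mcbb_k$, and $\mcr_v\Piv=\Piv\mcr_v=0$ — are exactly those used in the paper.
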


\begin{proof}

From \Cref{as:operator on velocities}-\ref{item:refreshment} and \Cref{as:refreshment}, it holds that for any $f \in \rmCb^2(\mse)$, we have 
\begin{equation}
\label{eq:microscopic coercivity_proof_1}  
	-\psmu{\lambdaref m_2^\half \calR_vf}{f} \geq \ulambda m_2^{\half} \psmu{(\Id - \Piv)f}{f} \eqsp.
 \end{equation}
 In addition,  any $f \in \rmCb^2(\mse)$ satisfies $\max_{k \in \{1,\ldots,K\}} \normmu{v^{\top} F_k f} < \plusinfty$ by \Cref{as:U},\eqref{eq:borne_moment} and \eqref{eq:def_a_k}, then by \Cref{as:intensities} for any $k \in \{1, \ldots,K\}$, $\sup_{k \in \{1,\ldots,K\}}\normmu{ \lambda^{\even}_k f} < \plusinfty$. Therefore, using the Cauchy-Schwarz inequality, that $\mcbb_k$ is a symmetric involution on $\rmL^2(\mu)$ by \Cref{as:radial}, and  $\mcbb_k \lambda^{\even}_k = \lambda^{\even}_k$ by definition \eqref{eq:def_lambda_even}, we obtain for any $k \in \{1,\ldots,K\}$ and $f \in \rmCb^2(\mse)$,
\begin{align}
\psmu{\lambda_k^{\even} \, \mcbb_k f}{f} &\leq \normmuLine{(\lambda_k^{\even})^\half f}\normmuLine{(\lambda_k^{\even})^\half \calB_k f} = \normmuLine{(\lambda_k^{\even})^{\half} f}^2 \eqsp. 
\end{align}
As a result, we deduce $\psmu{\lambda_k^{\even} \, (\Id - \mcbb_k)f}{f} \geq 0$. Combining this result and \eqref{eq:microscopic coercivity_proof_1} in the expression for $\mcs$ given in \eqref{eq:def_s} in \Cref{prop:dms_pdmp_1} completes the proof. 
\end{proof}

The following lemma establishes equivalence between \Cref{as:DMSabstract}-\ref{item:DMS-macro} and the Poincar\'e inequality \Cref{as:U} , which allows one to refer to the expansive body of literature on the topic and implies dependence on the properties of the potential $U$ only.
 \begin{proposition} \label{lem:linkPoincare-DMSabstract}
   Assume that $\mcl_i$, $i \in \{1,2\}$ given by \eqref{eq:generator pdmp} or \eqref{eq:generator_r_d_2}, where $\mcbb_k$ as in \eqref{eq:def-bounces} satisfies \Cref{as:generator} with $\msc = \rmCb^2(\mse)$. Assume in addition that  \Cref{as:U},  \Cref{as:Fk}, \Cref{as:intensities}, \Cref{as:radial}  \Cref{as:operator on velocities} and \Cref{as:refreshment} hold.
Let $\mct_i$ be the anti-symmetric part of $\mcl_i$ defined by \eqref{eq:def_calS_calT} and $\mca_i$ be defined by \eqref{eq:defcalA} relative to $\mct_i$. Then,
 \Cref{as:DMSabstract}-\ref{item:DMS-macro}, \ie~\eqref{eq:DMS_macro}, holds with
 \begin{equation}
 \label{eq:def_lambda_x}
\lambda_x = \Cp/(1+\Cp) \eqsp. 
\end{equation}
\end{proposition}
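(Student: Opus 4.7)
The strategy is to reduce the macroscopic coercivity inequality \eqref{eq:DMS_macro} to the Poincaré inequality \eqref{eq:poincare_assumption} using the explicit formula for $\mca_i^\star$ derived in \Cref{lem:relation_mct_mcd}. Since $\bar{\mca}_i$ is bounded and closable with $(\bar{\mca}_i)^{\star}=\mca_i^\star$, for any $f \in \mrCb^2(\mse)$ (the chosen core) duality yields
\begin{equation*}
-\psmu{\bar{\mca}_i \mct_i \Pi_v f}{f} = -\psmu{\mct_i \Pi_v f}{\mca_i^\star f}.
\end{equation*}
By \eqref{eq:mct_mcd_rmc_2} we have $\mct_i \Pi_v f = \ps{v}{\nabla_x g}$ with $g := \Pi_v f$ viewed as a function of $x$ only, and by \Cref{lem:relation_mct_mcd}-\ref{lem:relation_mct_mcd_c} there exists a unique $u \in \rmC^3_{\poly}(\msx)$ with $m_2(\Id+\nabla_x^\star \nabla_x)u = g$ and $\mca_i^\star f = -\ps{v}{\nabla_x u}$.

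Next I integrate over the velocity. The orthogonality relations $\int_\msv v_i v_j \, \rmd\nu = m_2 \delta_{ij}$ (consequence of \Cref{as:radial}-\ref{item:as:radial:general_3}-\ref{assum:fourth_moment}, discussed after that assumption) together with Fubini give
\begin{equation*}
-\psmu{\bar{\mca}_i \mct_i \Pi_v f}{f} = \int_\msx \int_\msv \ps{v}{\nabla_x g}\,\ps{v}{\nabla_x u}\,\rmd\nu(v)\,\rmd\pi(x) = m_2 \pspi{\nabla_x g}{\nabla_x u}.
\end{equation*}
Setting $A := \nabla_x^\star \nabla_x$ and using the equation $m_2(\Id+A)u = g$, one rewrites $Au = m_2^{-1}g - u$ and obtains the identity
\begin{equation*}
m_2 \pspi{\nabla_x g}{\nabla_x u} = m_2 \pspi{g}{Au} = \normpi{g}^2 - \pspi{g}{(\Id+A)^{-1}g},
\end{equation*}
where $m_2 u = (\Id+A)^{-1}g$.

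Finally I invoke Poincaré. Since $g = \Pi_v f$ has the same $\pi$-mean as the $\mu$-mean of $f$ (which is zero in the regime where macroscopic coercivity is actually used in \Cref{lem:preli_proof_dms_bound_fscr}), I apply \eqref{eq:poincare_assumption} in the form of the operator inequality $A \succeq \Cp \, \Id$ on the closure of mean-zero smooth functions. By the spectral theorem for the non-negative self-adjoint operator $A$ on $\rml^2_0(\pi)$, this gives $(\Id+A)^{-1} \preceq (1+\Cp)^{-1}\Id$, whence
\begin{equation*}
\pspi{g}{(\Id+A)^{-1}g} \le \frac{1}{1+\Cp}\,\normpi{g}^2,
\end{equation*}
and consequently $-\psmu{\bar{\mca}_i \mct_i \Pi_v f}{f} \ge \bigl(\Cp/(1+\Cp)\bigr)\,\normmu{\Pi_v f}^2$, which is \eqref{eq:DMS_macro} with $\lambda_x$ as in \eqref{eq:def_lambda_x}. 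The argument extends from $\mrCb^2(\mse)$ to the stated class by density and boundedness of $\bar{\mca}_i$. The main technical subtlety is the translation of Poincaré from its quadratic-form expression to the operator inequality on $(\Id+A)^{-1}$; this can also be verified by a direct completion-of-squares argument, writing $\pspi{g}{(\Id+A)^{-1}g} = \normpi{h}^2$ for $h = (\Id+A)^{-1/2}g$ and bounding $\normpi{g}^2 = \normpi{h}^2 + \pspi{h}{Ah} \ge (1+\Cp)\normpi{h}^2$.
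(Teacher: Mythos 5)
Your proof is correct and arrives at the same $\lambda_x = \Cp/(1+\Cp)$, but takes a somewhat different route from the paper. The paper derives the spectral bound $\spec\big(m_2^{-1}(\overline{\mcd\Piv})^\star\overline{\mcd\Piv}\big)\subseteq\coint{\Cp,\infty}$ from the Poincar\'e inequality, identifies $-\bar\mca_i\overline{\mcd\Piv}$ with $\Phi\big(m_2^{-1}(\mcd\Piv)^\star\overline{\mcd\Piv}\big)$ for $\Phi(z) = z/(1+z)$, and concludes via the spectral mapping theorem. You instead exploit the concrete representation $\mca_i^\star f = -\ps{v}{\nabla_x u}$ where $m_2(\Id+\nabla_x^\star\nabla_x)u = \Piv f$ (from \Cref{lem:relation_mct_mcd}-\ref{lem:relation_mct_mcd_c}), integrate out the velocity, and reduce the quadratic form to $\normmu{\Piv f}^2 - \psmu{\Piv f}{(\Id+\nabla_x^\star\nabla_x)^{-1}\Piv f}$; this is precisely the decomposition $\Phi(z) = 1 - (1+z)^{-1}$ made explicit, and the final step rests on the operator inequality $(\Id+\nabla_x^\star\nabla_x)^{-1}\preceq(1+\Cp)^{-1}\Id$ on mean-zero functions. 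Both proofs therefore turn on the same spectral fact about $\Phi$; yours is a more direct, PDE-level computation, while the paper's spectral-mapping argument is a bit more abstract and would carry over without the explicit formula for $\mca_i^\star$. You also flag a subtlety the paper leaves implicit: as literally written, \eqref{eq:DMS_macro} fails for constant $f$ (the left-hand side vanishes while the right-hand side does not), so both arguments tacitly restrict to mean-zero $f$, the only case in which macroscopic coercivity is actually invoked in \Cref{lem:preli_proof_dms_bound_fscr}.
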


\begin{proof}

  From the assumed Poincar\'e inequality \eqref{eq:poincare_assumption} we have for any $\ff \in \rmCb^1(\mse)$
\begin{equation}
\normmu{m_{2}^{-\half}\calD\Piv \ff}^{2}=\normmu{\nax\Piv \ff}^{2}\geq \Cp\normmu{\Piv \ff}^{2} \eqsp.
\end{equation}
Then, by definition of $\overline{\mcd \Piv}$ this inequality holds also for any $f \in \domain(\overline{\mcd \Piv})$ replacing $\calD\Piv \ff$ by $\overline{\calD\Piv} \ff$.  Therefore, we obtain since $(\mcd \Piv)^{\star \star} = \overline{\mcd \Piv}$ that 
for any $\ff \in \domain((\mcd \Piv)^{\star}\overline{\mcd \Piv})$,
\begin{equation}
 \label{eq:lem_proof_poincare_dms_as}
\psmu{ \ff }{m_{2}^{-1}(\calD\Piv)^{\star}\overline{\calD\Piv} \ff }\geq\Cp\normmu{\Piv \ff}^{2} \eqsp.
\end{equation}
In addition by \cite[Theorem~5.1.9]{pedersen1995analysis}, $(\overlineb{\mcd\Piv})^{\star} \overline{\mcd\Piv}$ is a self-adjoint operator. These results and \eqref{eq:lem_proof_poincare_dms_as} imply that $\spec(m_{2}^{-1}(\overlineb{\mcd\Piv})^{\star} \overline{\mcd\Piv})\subseteq \coint{\Cp,\infty}$
by  \cite[Theorem~4.3.1]{davies:1995}.

On the other hand, since by \Cref{lem:relation_mct_mcd}-\ref{lem:relation_mct_mcd_a}, $\overline{\mcd \Piv} = \overline{\overlineb{\mct_i \Piv}}$, we have  $(\mcd \Piv)^{\star} = (\overlineb{\mct_i \Piv})^{\star}$ and 
\begin{equation}
  \mca_i = -\big(m_{2}\Id+(\calD\Piv)^{\star}\overline{\calD\Piv}\big)^{-1}(\calD\Piv)^{\star} \eqsp.
\end{equation}
Therefore, for any $\ff \in \domain((\mcd \Piv)^{\star}\overline{\mcd \Piv})$,
\begin{align}
  - \mcab_i \, \overline{\calD\Piv} \ff &= - \calA_i \overline{ \calD\Piv} \ff =\big(m_{2}\Id+(\calD\Piv)^{\star}\overline{\calD\Piv}\big)^{-1}(\calD\Piv)^{\star}\overline{\calD\Piv} \ff \\
  &= \Phi\big(m_{2}^{-1}(\calD\Piv)^{\star}\overline{\calD\Piv}\big) \ff \eqsp,
\end{align}
where $\Phi(z)= z / (1+z)$. Since $\domain((\mcd \Piv)^{\star}\overline{\mcd \Piv})$ is a core for $\overline{\mcd \Piv}$  by \cite[Theorem 5.1.9.]{pedersen1995analysis}, from the spectral mapping theorem \cite[Theorem 2.5.1, Corollary 2.5.4]{davies:1995}, and the fact that
$\Phi\colon \coint{0,\infty}\rightarrow \ccint{0,1}$ is non-decreasing, we get that $- \bmca_i \,  \overline{ \calD\Piv}$ can be extended on $\mrl^2(\mu)$ as a self-adjoint bounded operator $\mce$ and  ${\spec}(\mce)\subseteq \coint{\Phi(\Cp),1}$.

Finally, from the fact that $\Piv$ is a projector, we deduce from \Cref{lem:bounded_A}-\ref{lem:bounded_A_item_a} that $ - \bar{\mca}_i \mct_i \Piv \ff = -\Piv \bar{\mca}_i \, \overline{\mcd \Piv} \Piv  \ff=  \Piv\mce\Piv \ff$ for any $f \in \rmCb^2(\mse) \subset \domain(\overline{\mcd \Piv})$  and therefore, we get that for any $f \in \rmCb^2(\mse)$
\begin{equation}
- \psmu{\Piv \ff}{\bmca\mct_i\Piv \ff } =\psmu{\Piv \ff}{\mce \Piv \ff } \geq\frac{\Cp}{1+\Cp}\normmu{\Piv \ff }^{2}=\lambda_{x}\normmu{\Piv \ff}^{2} \eqsp,
\end{equation}
which concludes the proof.
\end{proof}


\Cref{as:DMSabstract}-\ref{item:DMS-RSandRT} is usually a more involved condition to check. For $\ff \in \Lmu$ denote by
\begin{equation}
 \label{eq:def_u}
 u_f = m_2^{-1} ( \Id +\nax^{\star} \nax)^{-1} \Piv f  \eqsp.
\end{equation}
In the scenarios considered here, condition \Cref{as:DMSabstract}-\ref{item:DMS-RSandRT} relies on estimates of $\normmu{u_f}$, $\normmu{\nabla_x u_f}$ and $\normmu{\nabla_x^2 u_f}$ which are obtained by noticing that by definition $u_f$ is solution of the following partial differential equation
\begin{equation}
m_2 ( \Id+\nax^{\star} \nax)u_f=\Piv \ff \eqsp.
\end{equation}
In the next section, we show how general, but potentially rough, estimates can be obtained, while in \Cref{ss:scaling-ZZ} we show how tighter bounds can be obtained in specific scenarios where we can take advantage of the structure at hand, in particular when interested in the scaling properties of the algorithm with $d$.


\subsection{Computation of $R_{0}$ in the general setting}
\label{app:bound RT RS}
In all this section, we consider $u_f$ defined for any $f \in \mrl^2(\mu)$ by \eqref{eq:def_u}. Recall that from \Cref{lem:relation_mct_mcd}-\ref{lem:relation_mct_mcd_c}, if $f \in \mrCb^2(\mse)$ then $u_f \in \rmC^3_{\poly}(\rset^d)$ and satisfies \eqref{eq:mca_i_f_u}.

\begin{lemma} \label{lemma:RS bounded}
Assume that $\mcl_i$, $i \in \{1,2\}$ given by \eqref{eq:generator pdmp} or \eqref{eq:generator_r_d_2}, where $\mcbb_k$ is given in \eqref{eq:def-bounces}, satisfies \Cref{as:generator} with $\msc = \rmCb^2(\mse)$. Assume in addition that  \Cref{as:U},  \Cref{as:Fk}, \Cref{as:intensities}, \Cref{as:radial}  \Cref{as:operator on velocities} and \Cref{as:refreshment} hold.
Let  $\mcs$ be the symmetric part of  $\calL_{i}$  defined by \eqref{eq:def_calS_calT} and the operator $\mca_i$ defined by \eqref{eq:defcalA} relative to $\mct_i$. 
\begin{enumerate}[label=(\alph*)]
\item \label{lem:bound_a_s_item_a}
For any $f \in \rmCb^2(\mse)$, 
\[
\vert\psmu{\overline{\mca}_i\mcs(\Id-\Piv)f}{ f}\vert\leq\normmuLine{(\Id-\Piv)f}\normmuLine{(\Id-\Piv)\tmcs \mca_i^{\star} f} \eqsp,
\]
where $\tmcs$ is given by \eqref{eq:def_s}. 
\item \label{lem:bound_a_s_item_b} 
 For any $f \in \rmCb^2(\mse)$,
 \begin{equation}
 \label{eq:lem:bound_a_s_item_b_1}
 \normmuLine{(\Id-\Piv)\tmcs \mca_i^{\star} f}=\normmuLine{\bfG^{\top}\nax u_f} \eqsp,
 \end{equation}
with $\bfG$ given for any $(x,v) \in \mse$ by

\begin{equation}
 \label{eq:def_bfG}
 \bfG(x,v)=\sum_{k=1}^{K}\lambda_k^{\even}(x,v) \big(\rmn_{k}^{\top}(x)v\big) \rmn_{k}+m_2^{\half}\lambdaref(x) v \eqsp,
\end{equation}

and $u_f,\{\lambda_k^{\even} : \mse \to \rset_+ \, : \, k \in \{1,\ldots,K\}\} $ are  defined by \eqref{eq:def_u} and \eqref{eq:def_lambda_even} respectively.
In addition

\begin{align}
 \normmuLine{\bfG^{\top}\nabla_{x}u_f}&\leq   m_2 \big( \normmuLine{ \lambdaref \nax u_f } +  c_\vphi K \normmuLine{\nax u_f}\big)
  \\
   \label{eq:lem_bound_a_019}
&\qquad \qquad   + C_\vphi \sqrt{2 m_{2,2} + 3(m_4-m_{2,2})_+} \sum_{k=1}^K \normmuLine{F_k^\top \nax u_f }  \eqsp.
\end{align}

\end{enumerate}
\end{lemma}

\begin{proof}
 We only consider the case $i=2$ since the case $i=1$ is obtained by taking $F_0=0$.

 \ref{lem:bound_a_s_item_a} By  \Cref{lem:bounded_A}-\ref{lem:bounded_A_item_a}, $\bmca_i$ is a bounded operator. Therefore, we have for any  $f \in \rmCb^2(\mse)$ that $\psmuLigne{\overline{\mca}_i \mcs(\Id -\Piv) f}{f} = \psmuLigne{ \mcs(\Id -\Piv) f}{\mca_i^\star f}$. Then, by  \Cref{lem:relation_mct_mcd}-\ref{lem:relation_mct_mcd_c}, we have that $\mca_i^\star f = -v^\top \nabla_x u_f$, with $u_f \in \rmC_{\poly}^3(\mse)$. This result,    \Cref{prop:dms_pdmp_1}-\ref{item:prop:dms_pdmp_1_c}, and the fact that $\Id-\Piv$ is an orthogonal projector imply that
 \begin{equation}
\psmu{\overline{\mca}_i\mcs(\Id-\Piv)f}{ f} =\psmu{(\Id-\Piv)f}{(\Id-\Piv) \tmcs \mca_i^{\star} f} \eqsp. 
\end{equation}

The proof is completed upon using the Cauchy-Schwarz inequality.

\ref{lem:bound_a_s_item_b}  Notice that

 \begin{align}
 \nonumber
 \tmcs \mca_2^{\star} f &= - \left( \frac 1 2 \sum_{k=1}^K  \lambda_k^{\even} (\mcbb_k - \Id) + m_2^{\half} \lambdaref \calR_v \right) v^\top \nax u_f\\
 \label{eq:bound_a_s_2}
 & =  \sum_{k=1}^K \lambda_k^{\even} \, (v^\top \rmn_{k}) (\rmn_{k}^\top \nax u_f) + m_2^{\half} \lambdaref v^\top \nax u_f 
 = \bfG^{\top}\nax u_f\eqsp,
 \end{align}
 where we have used \Cref{as:operator on velocities}-\ref{item:refreshment} for the last equality. Combining \eqref{eq:bound_a_s_2} and the fact  that $\Piv \tmcs \mca_2^{\star} f = 0$ completes the proof of \eqref{eq:lem:bound_a_s_item_b_1}.

We now show  \eqref{eq:lem_bound_a_019} for any $f \in \rmCb^2(\mse)$. But it  is a direct consequence of the triangle inequality, the definition of $\{\lambda^{\even}_k : \mse \to \rset_+ \, ; \, k \in \{1,\ldots,K\}\}$ given in \eqref{eq:def_lambda_even}, \Cref{as:intensities}, the Cauchy-Schwarz inequality, \Cref{lemma:velocities norm bound} and the identity  $F_k = \rmn_k \abs{F_k}$ for any $k \in \{1,\ldots,K\}$:
 \begin{equation}
 \begin{aligned}
   \normmuLine{ \mcs \mca_2^{\star} f } &\le  m_2^{\half}\normmuLine{ \lambdaref v^\top \nax u_f }\\
   & \qquad + \sum_{k=1}^K \defEns{C_\vphi\normmuLine{ (v^\top \rmn_{k})^2 \, F_k^\top \nax u_f } +c_\vphi m_2^\half \normmuLine{ (v^\top \rmn_{k}) \, \rmn_k^\top \nax u_f } }\\
   &=  m_2\normmuLine{ \lambdaref \nax u_f } +m_2 c_\vphi K \normmuLine{\nax u_f}\\
   & \qquad + C_\vphi \sqrt{2 m_{2,2} + 3(m_4-m_{2,2})_+} \sum_{k=1}^K \normmuLine{F_k^\top \nax u_f }  \eqsp.
 \end{aligned}
 \end{equation}
\end{proof}

\begin{lemma} \label{lemma:RT bounded}
Assume that $\mcl_i$, $i \in \{1,2\}$ given by \eqref{eq:generator pdmp} or \eqref{eq:generator_r_d_2}, where $\mcbb_k$ is given in \eqref{eq:def-bounces}, satisfies \Cref{as:generator} with $\msc = \rmCb^2(\mse)$. Assume in addition that  \Cref{as:U},  \Cref{as:Fk}, \Cref{as:intensities}, \Cref{as:radial}  \Cref{as:operator on velocities} and \Cref{as:refreshment} hold.
Let  $\mct_i$ be the anti-symmetric part of  $\calL_{i}$  defined by \eqref{eq:def_calS_calT} and the operator $\mca_i$ defined by \eqref{eq:defcalA} relative to $\mct_i$. Then,
\begin{enumerate}[label=(\alph*)]
\item \label{lem:bound_a_t_item_a} For any $f \in \rmCb^2(\mse)$, we get
 \[
\vert\psmu{\overline{\mca}_i\mct_i(\Id-\Piv)f}{ f}\vert\leq\normmuLine{(\Id-\Piv)f}\normmuLine{(\Id-\Piv)\tmct_i \mca_i^{\star} f} \eqsp,
\]
where $\tmct_i$ is given in \eqref{eq:def_t_i}. 
\item \label{lem:bound_a_t_item_b} For any $f\in \rmCb^2(\mse)$
 \begin{equation}
 \label{eq:lem:bound_a_t_item_b}
\normmuLine{(\Id-\Piv)\tmct_i \mca_i^{\star} f}= 2m_{2,2}\normmuLine{\bfM}^{2}+3(m_{4}-m_{2,2})\normmuLine{{\rm diag}(\bfM)}^{2} \eqsp,
 \end{equation}
with
\begin{equation}
 \label{eq:def_bfM}
 \bfM =\nabla_{x}^{2}u_f +\sum_{k=1}^{K} (F_{k}^{\top} \nabla_{x}u_f) \rmn_{k}\rmn_{k}^{\top} \eqsp, 
\end{equation}
and $u_f$ defined by \eqref{eq:def_u}. 
\end{enumerate}
\end{lemma}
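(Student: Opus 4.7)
The plan is to treat (a) by an adjoint manipulation that parallels \Cref{lemma:RS bounded}\,(a), and then (b) by an explicit fourth--moment computation in the velocity variable.

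For (a), since $\bmca_i$ is bounded on $\rml^{2}(\mu)$ by \Cref{lem:bounded_A}\,(b), I move it across: $\psmu{\bmca_i\mct_i(\Id-\Piv)f}{f} = \psmu{\mct_i(\Id-\Piv)f}{\mca_i^{\star}f}$. By \Cref{lem:relation_mct_mcd}\,(c), $\mca_i^{\star}f=-v^{\top}\nabla_x u_f$ lies in $\rmC^{1}_{\poly}(\mse)\subset\domain(\mct_i^{\star})$, and \Cref{prop:dms_pdmp_1}\,(c) gives $\mct_i^{\star}=-\tmct_i$ on this space. Since $(\Id-\Piv)f\in\rmCb^{2}(\mse)=\msc=\domain(\mct_i)$, I transpose $\mct_i$ to obtain $-\psmu{(\Id-\Piv)f}{\tmct_i\mca_i^{\star}f}$; inserting the orthogonal projector $\Id-\Piv$ in the right-hand slot and applying Cauchy--Schwarz yields the claim.

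For (b), I substitute $\mca_i^{\star}f=-v^{\top}\nabla_x u_f$ into the expression \eqref{eq:def_t_i} for $\tmct_i$. The transport term produces $-v^{\top}(\nabla_x^{2}u_f)v$; the drift $-\delta_{i,2}m_2 F_0^{\top}\nabla_v$ produces a piece depending only on $x$; and each reflection, using the identity $\mcbb_k(-v^{\top}\nabla_x u_f)=-v^{\top}\nabla_x u_f+2(v^{\top}\rmn_k)(\rmn_k^{\top}\nabla_x u_f)$ together with $v^{\top}F_k=(v^{\top}\rmn_k)|F_k|$ and $F_k=\rmn_k|F_k|$, contributes $(v^{\top}\rmn_k)^{2}F_k^{\top}\nabla_x u_f=v^{\top}\big[(F_k^{\top}\nabla_x u_f)\rmn_k\rmn_k^{\top}\big]v$. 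Collecting the $v$-quadratic contributions expresses $\tmct_i\mca_i^{\star}f$ as $-v^{\top}\bfM v+R(x)$, where $\bfM(x)$ is the symmetric matrix appearing in \eqref{eq:def_bfM} (up to an overall sign that is immaterial to its Frobenius norm) and $R(x)$ collects all $v$-independent pieces. Applying $\Id-\Piv$ annihilates $R$ and, using $\int v^{\top}\bfM v\,\rmd\nu=m_2\mathrm{tr}(\bfM)$ from \Cref{as:radial}, reduces to $-(v^{\top}\bfM v - m_2\mathrm{tr}(\bfM))$.

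It then remains to evaluate $\int_{\msx}\int_{\msv}(v^{\top}\bfM v-m_2\mathrm{tr}\bfM)^{2}\rmd\nu\,\rmd\pi$ via the fourth-moment tensor implied by \Cref{as:radial}\,(c)--(d). The vanishing of odd univariate moments and the index restriction $|\{i,j,k,l\}|\leq 2$ give $\int v_i v_j v_k v_l\,\rmd\nu = m_{2,2}(\delta_{ij}\delta_{kl}+\delta_{ik}\delta_{jl}+\delta_{il}\delta_{jk})+3(m_4-m_{2,2})\delta_{ijkl}$. Substituting into $\sum_{ijkl}\bfM_{ij}\bfM_{kl}\int v_iv_jv_kv_l\rmd\nu$, symmetry of $\bfM$ merges the $\delta_{ik}\delta_{jl}$ and $\delta_{il}\delta_{jk}$ contributions into $2|\bfM|^{2}$, the $\delta_{ijkl}$ piece assembles $|\mathrm{diag}(\bfM)|^{2}$, and the $m_{2,2}(\mathrm{tr}\bfM)^{2}$ coefficient cancels against the centering contribution $-2m_2\mathrm{tr}(\bfM)\int v^{\top}\bfM v\,\rmd\nu+m_2^{2}(\mathrm{tr}\bfM)^{2}$; integrating against $\pi$ yields the stated identity. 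The main obstacle is careful bookkeeping of signs in $\bfM$ and of the three contraction patterns in the fourth-moment tensor, together with verifying that the trace terms cancel exactly after the $\Piv$-centering; the remaining algebra is mechanical once this centering has been correctly isolated.
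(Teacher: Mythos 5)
Your proof takes essentially the same route as the paper's: part~(a) is the identical adjoint manipulation (move $\bmca_i$ across via \Cref{lem:bounded_A}, substitute $\mca_i^\star f = -v^\top\nabla_x u_f$ from \Cref{lem:relation_mct_mcd}, transpose $\mct_i$ using \Cref{prop:dms_pdmp_1}~(c), insert $\Id-\Piv$, Cauchy--Schwarz), and part~(b) reproduces the paper's computation of $\tmct_i\mca_i^\star f$ as $-v^\top\bfM v + R(x)$ and then evaluates the centered second moment, the only cosmetic difference being that you re-derive the fourth-moment tensor inline rather than cite \Cref{lemma:velocities norm bound}. Two points are worth making precise, though in both cases the paper's own text contains the same slip. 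First, the parenthetical ``up to an overall sign'' is not accurate: your (correct) computation produces the matrix $\nabla_x^2 u_f - \sum_{k}(F_k^\top\nabla_x u_f)\rmn_k\rmn_k^\top$, which differs from \eqref{eq:def_bfM} by a \emph{relative} sign between the two terms, not an overall sign. An overall sign is indeed invisible to $|\bfM|$, but a relative sign is not; the downstream \Cref{rem:RT bounded} is still unaffected because it only invokes the triangle inequality $|\bfM|\le|\nabla_x^2 u_f|+\sum_k|F_k^\top\nabla_x u_f|$, which is sign-insensitive, and the paper's own displayed intermediate expression $v^\top\nabla_x^2u_f v-\sum_k(v^\top F_k)(\rmn_k\rmn_k^\top v)^\top\nabla_x u_f$ carries the same minus. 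Second, in your final assembly the $\delta_{ij}\delta_{kl}$ contraction contributes $m_{2,2}(\operatorname{Tr}\bfM)^2$ while the centering against $m_2\operatorname{Tr}\bfM$ removes $m_2^2(\operatorname{Tr}\bfM)^2$; these cancel only when $m_{2,2}=m_2^2$, so strictly speaking the stated equality \eqref{eq:lem:bound_a_t_item_b} carries an extra $(m_{2,2}-m_2^2)\normpi{\operatorname{Tr}\bfM}^2$ term. This is inherited from the completion-of-the-square step at the end of the proof of \Cref{lemma:velocities norm bound}, which silently assumes $m_{2,2}=m_2^2$ (true for the Gaussian and Rademacher cases, but not, e.g., for the uniform distribution on the sphere); you should flag this explicitly rather than present the cancellation as automatic.
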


\begin{remark}
\label{rem:RT bounded}
 A general, but potentially rough, bound on the right hand side of \eqref{eq:lem:bound_a_t_item_b} can be obtained as follows. From the fact that $\normmu{{\rm diag}(\bfM)}\leq\normmu{\bfM}$, it holds that 
\[
\normmuLine{(\Id-\Piv)\tmct_i \mca_i^{\star} f}\leq \sqrt{2m_{2,2}+3(m_{4}-m_{2,2})_{+}}\normmuLine{\bfM}
\]
where from the triangle inequality and the property $\vert \rmn_{k}(x)\rmn_{k}(x)^{\top} \vert=1$
\[
\normmuLine{\bfM}\leq\normmuLine{\nabla_{x}^{2}u_f}+\sum_{k=1}^{K}\normmuLine{F_{k}^{\mathsf{\top}}\nabla_{x}u_f} \eqsp.
\]
\end{remark}

\begin{remark} Specific scenarios lead to simplifications of these bounds and the bounds in Lemma \ref{lemma:ZZ RS bounded}:
\begin{enumerate}
\item from \Cref{lem:moments-spherically-symmetric} in \Cref{app:radial}, for radial distributions $m_4=m_{2,2}$ leading to a simplification of this bound,
\item further if $\nu$ is the centred normal distribution of covariance $m_2 \Idd$, then $m_{2,2}=m_2^2$, leading to further simplifications,
\item if $K=0$, and hence $F_0 = \nax U$, the scenario considered by \cite{Dolbeault15}, then one finds that the bound depends on $\normmu{\nax^2 u_f}$ only.
\end{enumerate}
\end{remark}

\begin{proof}
 We proceed as in the proof of \Cref{lemma:RS bounded}. 
 We only consider the case $i=2$ since the case $i=1$ is obtained by taking $F_0=0$.

 \ref{lem:bound_a_s_item_a} By  \Cref{lem:bounded_A}-\ref{lem:bounded_A_item_a}, $\bmca_2$ is a bounded operator. Therefore, we have for any  $f \in \rmCb^2(\mse)$ that $\psmuLigne{\overline{\mca}_2 \mct_2(\Id -\Piv) f}{f} = \psmuLigne{ \mct_2(\Id -\Piv) f}{\mca_2^\star f}$. Then, by  \Cref{lem:relation_mct_mcd}-\ref{lem:relation_mct_mcd_c}, we have that $\mca_2^\star f = -v^\top \nabla_x u_f$, with $u_f \in \rmC_{\poly}^3(\mse)$. This result,    \Cref{prop:dms_pdmp_1}-\ref{item:prop:dms_pdmp_1_c}, the fact that $\Id-\Piv$ is an orthogonal projector and $F_{k}=\rmn_{k}\vert F_{k}\vert$, imply that for any
 \begin{equation}
\psmu{\overline{\mca}_2\mct_2(\Id-\Piv)f}{ f} =-\psmu{(\Id-\Piv)f}{(\Id-\Piv) \tmct_2 \mca_2^{\star} f} \eqsp, 
\end{equation}
The proof is completed upon using the Cauchy-Schwarz inequality. 

\ref{lem:bound_a_t_item_b} Notice that for any $(x,v) \in \mse$,
\begin{align}
  -  \tmct_2\mca_2^{\star} f(x,v) & =v^{\top}\nabla_{x}^{2}u_f(x)v-m_{2}F_{0}^{\top}(x)\nabla_{x}u_f(x)\\
  & \qquad -\sum_{k=1}^{K}(v^{\top}F_{k}(x))\big(\rmn_{k}(x)\rmn_{k}(x)^{\top}v\big)^{\top}\nabla_{x}u_f(x)\\
  \label{eq:bounded_RT_1_general}
 & =v^{\top}\bfM(x) v-m_{2}F_{0}^{\top}(x)\nabla_{x}u_f(x) \eqsp.
\end{align} 
By \eqref{eq:bounded_RT_1_general}, we obtain that for any $f \in \rmCb^2(\mse)$,
$(x,v) \in \mse$, 
\begin{equation}
  - (\Id-\Piv) \tmct_2\mca_i^{\star} f(x,v) = v^{\top}\bfM(x) v - m_2 \trace(\bfM(x)) \eqsp.
\end{equation}
Combining this result and \Cref{lemma:velocities norm bound}, 
we deduce
\begin{align}
\normmuLine{ (\Id-\Piv) \tmct_2 \mca_i^{\star} f}^{2} & =2m_{2,2}\normmuLine{\bfM}^{2}+3(m_{4}-m_{2,2})\normmuLine{{\rm diag}(\bfM)}^{2}\\
 & \leq\big[2m_{2,2}+3(m_{4}-m_{2,2})_+\big]\normmuLine{\bfM}^{2} \eqsp,
\end{align}
which completes the proof.
\end{proof}

\begin{remark}
\label{rem:RS bounded}
 Combining \Cref{coro:bound regularization} and \Cref{coro:bound Fk} in \Cref{sec:ellipt-regul-estim}, by definition of $u_f$ in \eqref{eq:def_u} and using \Cref{as:refreshment}, we obtain that
\begin{align}
m_2 \normmuLine{\nax u_f} &\leq 2^{-\half}\normmuLine{ \Piv f},\\
	 \sum_{k=1}^K \normmuLine{ F_k^\top \nax u_f } &\le \frac {2^{\half}\kappavara}{m_2 \kappavarb} \sum_{k=1}^K a_k \normmuLine{ \Piv f } \eqsp, \\
	 m_2 \normmuLine{\lambdaref \nax u_f } & \le \ulambda \defEns{ 2^{-\half}+ \frac {2^{\half}c_{\lambda}\kappavara } { \kappavarb}} \normmuLine{ \Piv f } \eqsp.
\end{align}

\end{remark}


%
%
%
%

      \section{Postponed proofs}
      \label{sec:postponed-proofs}
      \subsection{Proof of Theorem \ref{thm:hypocoercivity}} \label{ss:proof-main-result}

In this section we prove that \Cref{ass:stability_c_piv} and  \Cref{as:DMSabstract} holds for the dynamics described in \Cref{s:main results} in order to obtain \Cref{thm:hypocoercivity} as a consequence of the abstract \Cref{thm:DMSmain}.
 Under the assumptions of the theorem, we can set $\msc$ to be
 $\mrCb^2(\mse)$. \Cref{ass:stability_c_piv} and \Cref{as:DMSabstract}-\ref{item:DMS-proj} hold by \Cref{lem:relation_mct_mcd}-\ref{lem:relation_mct_mcd_a}.  \Cref{as:DMSabstract}-\ref{item:DMS-micro}
 follows from \Cref{prop:microscopic coercivity} with
 $\lambda_v = \ulambda$. \Cref{as:DMSabstract}-\ref{item:DMS-macro}
 follows from \Cref{lem:linkPoincare-DMSabstract} 
 with $\lambda_x =
 \Cp/(1+\Cp)$. \Cref{as:DMSabstract}-\ref{item:DMS-kernel_calS} follows from \Cref{prop:dms_pdmp_1}-\ref{item:prop:dms_pdmp_1_b}.  
 We are left with checking
 \Cref{as:DMSabstract}-\ref{item:DMS-RSandRT}.
By  \Cref{lemma:RS
 bounded}-\ref{lem:bound_a_t_item_b}, \Cref{lemma:RT bounded}-\ref{lem:bound_a_t_item_b}, \Cref{rem:RT bounded},  we get setting $m = \sqrt{2m_{2,2}+3(m_{4}-m_{2,2})_{+}}$, for any $f \in \rmCb^2(\mse)$ that
\begin{align}
&\normmu{\tmcs \mca_i^{\star} f}  + \normmu{(\Id-\Piv) \tmct_i \mca_i^{\star} f }    \leq
                m \defEns{\normmuLine{\nabla_{x}^{2}u_f}+(1+C_{\vphi}) \sum_{k=1}^{K}\normmuLine{F_{k}^{\mathsf{\top}}\nabla_{x}u_f}} \\
  &\phantom{\normmu{\tmcs \mca_i^{\star} f}  + \normmu{(\Id-\Piv) \tmct_i \mca_i^{\star} f }    aaaa}+ 	 m_2 \normmuLine{\lambdaref \nax u_f } + m_2 c_\vphi K \normmuLine{\nax u_f} \\
 &    \leq \parentheseDeux{\frac{m}{m_2}\defEns{\frac {2^{1/2}(1+C_{\vphi})\kappavara}{\kappavarb} \sum_{k=1}^K a_k + \kappavara } + \frac {\ulambda}{2^\half} \left\{ 1 + \frac{2 c_\lambda \kappavara}{\kappavarb }  \right\} + \frac{c_\vphi K}{2^\half} } \normmu{\Piv f} \eqsp,  
\end{align}
where we have used that 
 $\normmu{\nax^2 u_f} \leq m_2^{-1} \kappavara \normmu{ \Piv f } $ by 
 \Cref{prop:bound hessian} in \Cref{sec:ellipt-regul-estim} and 
 \Cref{rem:RS bounded}, with 
 $\kappavara$ and $\kappavarb$ given in \eqref{eq:def_kappa_a} and \eqref{eq:def_kappab} respectively. The proof of \Cref{as:DMSabstract}-\ref{item:DMS-RSandRT} is then completed using \Cref{lemma:RT bounded}-\ref{lem:bound_a_t_item_a} and \Cref{lemma:RS bounded}-\ref{lem:bound_a_s_item_a}.

\subsection{Proof of \Cref{lemma:behaviour_rate_R_0}}
\label{sec:proof:behaviour_R_0}
\begin{proof}[Proof of \Cref{lemma:behaviour_rate_R_0}]
 Fix $\lambda_x \in \ooint{0,1}$.
\begin{enumerate}[wide, labelwidth=!, labelindent=0pt] \item Using that $t \mapsto (1+t)/\big[(1+t)^2+R_0^2\big]$  is nondecreasing on $\ooint{0,1}$ since $R_0 \geq 4$, we obtain that for any $R_0 \geq 4+2 \sqrt{3}$, \eqref{eq:borne_eps_0_R_0} is satisfied.
\item Since for any $a >0$, $s \mapsto (s+a)/(s-a)$ for $s>a$ is nonincreasing, we deduce from above that for $R_0 \geq (4+2 \sqrt{3}) \vee (\lambda_v/2^{\half})$, 
 \begin{equation}
 A(\epsilon_0)^2 \leq  \frac{4 R_0 +2^\half \lambda_v}{4 R_0 - 2^\half \lambda_v} \leq \frac{2^{3/2} \lambda_v +2^\half \lambda_v}{2^{3/2} \lambda_v - 2^\half \lambda_v} < 3^{\half}\eqsp.
 \end{equation}
 For the second part of the statement, first note that
 \begin{equation}
 \Lambda(\epsilon) = 2^{-1}[1-\epsilon(1-\lambda_x)]\big[1-\big(1-\epsilon b_{\Lambda}(\epsilon)\big)^{\half}\big] \eqsp,
 \end{equation}
 where $b_{\Lambda}(\epsilon)=\big[  4 \lambda_x (1-\epsilon)- \epsilon R_0^2 \big]/[1-\epsilon(1-\lambda_x)]^2  \in \ccint{0,\epsilon^{-1}}$ for $\epsilon \leq (2^\half\lambda_v)^{-1} \wedge \{ 4 \lambda_x /(4\lambda_x+R_0^2)\}$.  Using that  for any $a \in \ccint{0,1}$, $a/2 \leq 1-(1-a)^{1/2} \leq a$  we deduce that for $\epsilon \leq (2^\half\lambda_v)^{-1} \wedge \{ 4 \lambda_x /(4\lambda_x+R_0^2)\}$,
 \begin{equation}
   4^{-1}[1-\epsilon(1-\lambda_x)] \epsilon b_{\Lambda}(\epsilon) \leq \Lambda(\epsilon) \leq 2^{-1}[1-\epsilon(1-\lambda_x)] \epsilon b_{\Lambda}(\epsilon)  \eqsp. 
 \end{equation}
Further for $R_0 \geq (4+2 \sqrt{3}) \vee (\lambda_v/2^{\half})$ we have $\epsilon_0 \leq (2^\half\lambda_v)^{-1} \wedge \{ 3 \lambda_x /(4\lambda_x+R_0^2)\}$ from \Cref{thm:DMSmain}-\ref{thm:DMSmain_item_b}, leading to 
  \[\lambda_x /[1-\epsilon_0(1-\lambda_x)]^2 \leq b_{\Lambda}(\epsilon_0) \leq 4\lambda_x /[1-\epsilon_0(1-\lambda_x)]^2 \eqsp,
  \] 
  and consequently, using \eqref{eq:borne_eps_0_R_0},
  \begin{equation}
 \epsilon_0 \lambda_x/4 \leq  \Lambda(\epsilon_0) \leq 2\lambda_x \epsilon_0/[1-2(1-\lambda_x)/(4+R_0^2)] \leq 4 \lambda_x \epsilon_0 \eqsp,
\end{equation}
where we have used that $\lambda_x \leq 1$ for the last inequality. Finally we note that from \eqref{eq:borne_eps_0_R_0}
\[
\frac{1}{2} \leq \frac{1}{1 + 2^{3/2} \lambda_v/(4+R_0^2)} \leq \frac{1}{1+2^\half \lambda_v \epsilon_0} \leq 1 \eqsp,
\]
where the leftmost inequality follows from the fact that for $2^\half R_0 \geq \lambda_v$
\[
\frac{2^{3/2} \lambda_v}{4+R_0^2} \leq \frac{2^{3/2} \lambda_v}{4+2^{-1} \lambda_v^2} \leq 1 \eqsp.
\]
\end{enumerate}
\end{proof}

\subsection{Proof of  Theorem \ref{thm:scaling-with-d}}
\label{sec:proof-theor-refthm:s}
\begin{proof}[Proof of Theorem \ref{thm:scaling-with-d}]
  Since $\lambda_v = \ulambda$ and $R_0 \geq (4+2 \sqrt{3}) \vee (\ulambda/2^{\half})$ by \Cref{thm:hypocoercivity}, from  \Cref{thm:DMSmain} and \Cref{lemma:behaviour_rate_R_0},  $A < 3^{\half}$  while with $\lambda_x=\Cp/(1+\Cp )$
  \begin{equation}
    \label{eq:borne_alpha_discussion}
   \ulambda \lambda_x m_2^\half \epsilon_0/6 \leq \alpha(\epsilon_0) 
   \quad \text{with} \quad \lambda_x/(1 +R_0^{2}) \leq \epsilon_0 \leq 2/(4 +R_0^2) \eqsp .
  \end{equation}
By \eqref{eq:def_R_0_cas_gene}, if $c_1,c_2, \norm{a}_{\infty},m_b$ are fixed, there exist  $C^R_1(\Cp,c_1,c_2, \norm{a}_{\infty},m_b) >0$, independent of $d,\ulambda$, $c_\lambda$, $C_{\vphi}$ and $c_{\vphi}$ such that
 \begin{equation}
\bar{R}_0 \leq  C^R_1(\Cp,c_1,c_2, \norm{a}_{\infty},m_b) \bar{R}_1 \eqsp,
 \end{equation}
 where $\bar{R}_1= c_\vphi K + (1+C_{\vphi}) d^{(1+\varpi)/2} K+\ulambda(1+c_{\lambda} d^{(1+\varpi)/2})$. Combining this bound with \eqref{eq:borne_alpha_discussion} concludes the proof.
\end{proof}


\section{The Zig-Zag sampler--optimization}
\label{ss:scaling-ZZ}
In this section, we specify our results in the case of the Zig-Zag sampler for which better estimates can be obtained, leading to better scaling properties with respect to $d$. The Zig-Zag process corresponds to the instantiation of \eqref{eq:generator pdmp} for which $F_0=0$, $K=d$, $F_i(x) = \partial_{x_i} U(x) \bfe_i$, $\rmn_{i}(x)=\bfe_{i}$, $\lambdaref(x) = \ulambda >0$\footnote{which corresponds to $c_{\lambda} = 0$ in \Cref{as:refreshment}}, for $i \in \{1,\ldots,d\}$ and $x \in \msd$,  and $\calR_v = \Pi_v-\Id$. The corresponding generator takes the simplified form, for $f \in \mrCb^2(\mse)$ and any $(x,v) \in \mse$ 

\begin{align}
  \generator f(x,v) &= v^{\top} \nabla_xf(x) + \sum_{i=1}^d \vphi\big(v_i \partial_{x_i} U (x)\big)\big[f\big(x,(\Id - 2 \bfe_i \bfe_i^{\top})v\big) - f(x,v)\big]\\
   \label{eq:def_generator_zz}
  & \qquad \qquad \qquad \qquad \qquad \qquad  + \lambdaref(x) m_2^\half \mcr_v f(x,v) \eqsp,
\end{align}
where $\vphi : \rset \to \rset_+$ is a continuous function and satisfies \eqref{eq:prop_vphi} in \Cref{as:intensities}.

In the next two subsections we first consider general velocity distributions and then show how our results can be specialized to the scenario where $\msv = \{-m_2^\half,+m_2^\half \}^d$ for $m_2>0$ and $\nu$ is the uniform distribution on $\msv$.

\subsection{General velocity distribution}

\begin{theorem}
 \label{theo:ZZ spectral gap}
Consider the Zig-Zag process with generator defined by \eqref{eq:def_generator_zz} with $\lambdaref=\ulambda$, $\mcr_v = \Piv-\Id$ and  $\vphi : \rset \to \rset_+$ is a continuous function satisfying \eqref{eq:prop_vphi} in \Cref{as:intensities}. Assume  \Cref{as:generator} with $\core=\rmCb^2(\mse)$, \Cref{as:U}, \Cref{as:Fk}, \Cref{as:radial}, \Cref{as:operator on velocities}, \Cref{as:refreshment} hold and that there exists $c_3\geq 0$ such that for any $g \in \Lpi^d$
\begin{equation} \label{eq:ZZ-cond-hessian-U}
\psmu{g}{\big[\nax^{2}U-{\rm diag}(\nax^{2}U)\big]g}\geq-c_3 \normmu{g}^{2} \eqsp.
\end{equation}
Then, \Cref{thm:DMSmain} holds with $\lambda_x$ as in \eqref{eq:def_lambda_x}, $\lambda_v=\ulambda$ and
\begin{equation}
\label{eq:R_0_zz}
 R_0 = \frac{(6  m_4)^\half (2+C_\vphi)}{m_2} \parenthese{\left(1+c_1/2\right)^\half+1+ (c_3/2)^\half} +  \frac{\ulambda   + c_\vphi}{2^\half}    \eqsp.
\end{equation}
\end{theorem}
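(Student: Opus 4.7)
The strategy is to verify \Cref{as:DMSabstract} for the Zig-Zag generator and then invoke the abstract \Cref{thm:DMSmain}. All items except the remainder estimate \ref{item:DMS-RSandRT} follow immediately from the generic machinery of \Cref{sec:DMSabstract}: \Cref{ass:stability_c_piv} and \Cref{as:DMSabstract}-\ref{item:DMS-proj} from \Cref{lem:relation_mct_mcd}-\ref{lem:relation_mct_mcd_a}; \Cref{as:DMSabstract}-\ref{item:DMS-kernel_calS} from \Cref{prop:dms_pdmp_1}-\ref{item:prop:dms_pdmp_1_b}; microscopic coercivity with $\lambda_v=\ulambda$ from \Cref{prop:microscopic coercivity} (using $c_\lambda=0$); and macroscopic coercivity with $\lambda_x=\Cp/(1+\Cp)$ from \Cref{lem:linkPoincare-DMSabstract}. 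Only the value of $R_0$ requires genuine new work.

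By \Cref{lemma:RS bounded}-\ref{lem:bound_a_s_item_a} and \Cref{lemma:RT bounded}-\ref{lem:bound_a_t_item_a}, it suffices to produce the bound $\normmu{(\Id-\Piv)\tmcs \mca_i^\star f} + \normmu{(\Id-\Piv)\tmct_i \mca_i^\star f} \leq R_0\normmu{\Piv f}$ for $f\in\rmCb^2(\mse)$, with $u_f = m_2^{-1}(\Id+\nax^\star\nax)^{-1}\Piv f$. Specialising \eqref{eq:def_bfG} and \eqref{eq:def_bfM} to ZZ (where $\rmn_k=\bfe_k$) gives
\[
\bfG^{\top}\nax u_f = \sum_{k=1}^d \lambda_k^{\even}(x,v)\, v_k\, \partial_{x_k} u_f + m_2^\half \ulambda\, v^{\top}\nax u_f,
\qquad
\bfM = \nax^2 u_f + \diag\bigl((\partial_{x_k}U \cdot \partial_{x_k}u_f)_k\bigr).
\]
Rather than invoke the generic triangle-inequality estimate \eqref{eq:lem:bound_a_s_item_b_2} and \Cref{rem:RT bounded}, which would incur the unwanted factors $\sum_{k=1}^d a_k = d$ and $c_\vphi K$, I would expand $\normmu{\bfG^{\top}\nax u_f}^2$ as a double sum and use \Cref{as:radial}-\ref{item:as:radial:general_3} to cancel the $k\neq l$ cross terms, the key point being that $\lambda_k^{\even}$ depends on $v$ only through $v_k$ and is even in it. Combining this decoupling with the pointwise bound $\lambda_k^{\even}\leq c_\vphi m_2^\half + C_\vphi\abs{v_k\partial_{x_k}U}$ from \eqref{eq:prop_vphi} and the marginal moment identities $\int v_k^2 \rmd\nu = m_2$, $\int v_k^4 \rmd\nu = 3 m_4$, collapses the $d$ summands into a single scalar norm $\normmu{\nax U \odot \nax u_f}$, where $\odot$ denotes the Hadamard product. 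An analogous computation for $\normmu{(\Id-\Piv)\tmct_i\mca_i^\star f}$, starting from $-\tmct_i \mca_i^\star f = v^{\top}\bfM v$ and applying \Cref{lemma:velocities norm bound} to the quartic velocity moments, yields a bound in terms of $\normmu{\nax^2 u_f}$ and of the same $\normmu{\nax U\odot \nax u_f}$.

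The main obstacle is then the dimension-free elliptic regularity estimate $\normmu{\nax U \odot \nax u_f} \leq C\,\normmu{\Piv f}$. This is where hypothesis \eqref{eq:ZZ-cond-hessian-U} is decisive: testing the PDE $u_f + \nax^\star\nax u_f = m_2^{-1}\Piv f$ against the coordinate-wise nonlinear test $\sum_k \partial_{x_k}\bigl(\partial_{x_k}U \cdot \partial_{x_k}u_f\bigr)$ and integrating by parts in $L^2(\pi)$ produces a boundary term $\sum_k \normmu{\partial_{x_k}U\cdot\partial_{x_k}u_f}^2 = \normmu{\nax U\odot \nax u_f}^2$ together with cross terms that factor through the operator $\nax^2 U - \diag(\nax^2 U)$, precisely the quantity controlled from below by $-c_3 \Idd$ in \eqref{eq:ZZ-cond-hessian-U}. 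Combining this Hadamard regularity estimate with the Hessian bound $\normmu{\nax^2 u_f}\leq m_2^{-1}(1+c_1/2)^\half\,\normmu{\Piv f}$ from \Cref{prop:bound hessian} and the gradient bound $m_2\normmu{\nax u_f}\leq 2^{-\half}\normmu{\Piv f}$ from \Cref{rem:RS bounded}, and collecting the prefactors $\sqrt{6 m_4}$ (from $\int v_k^4\rmd\nu = 3m_4$ in each term), $(2+C_\vphi)$ (after grouping the drift-type and jump-type contributions sharing the same elliptic norm) and $(\ulambda + c_\vphi)/\sqrt{2}$ (from the velocity-linear component of $\bfG$), produces the announced value of $R_0$ in \eqref{eq:R_0_zz}, and \Cref{thm:DMSmain} concludes the proof.
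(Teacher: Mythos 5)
Your outline follows the same route as the paper: the generic items of \Cref{ass:stability_c_piv} and \Cref{as:DMSabstract} are verified exactly as you list via \Cref{lem:relation_mct_mcd}, \Cref{prop:dms_pdmp_1}, \Cref{prop:microscopic coercivity} and \Cref{lem:linkPoincare-DMSabstract}, and the real work is the dimension-free $R_0$ obtained by exploiting the ZZ structure. Your use of \Cref{as:radial}-\ref{item:as:radial:general_3} to kill the $k\neq\ell$ cross terms in $\normmu{\bfG^{\top}\nax u_f}^2$ (exploiting that $\lambda_k^{\even}$ depends on $v$ only through $v_k$ and is even) and of \Cref{lemma:velocities norm bound} for the quartic moments is precisely what \Cref{lemma:ZZ RS bounded} and \Cref{lemma:ZZ RT bounded} do, and your identification of $\normmu{\nax U\odot\nax u_f}$ as the decisive quantity, together with the role of \eqref{eq:ZZ-cond-hessian-U} in controlling the off-diagonal Hessian contribution, is on target.

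The one place where your proposal wobbles is the description of the key regularity estimate as ``testing the PDE $u_f+\nax^{\star}\nax u_f=m_2^{-1}\Piv f$ against the nonlinear test $\sum_k\partial_{x_k}(\partial_{x_k}U\,\partial_{x_k}u_f)$ and integrating by parts.'' The paper does not test the PDE against any $u_f$-dependent function, and it is not evident that such a pairing would produce $\normmu{\nax U\odot\nax u_f}^2$ as a boundary term as you claim; you would need to verify this, and in fact the pairing generates third derivatives of $u_f$ that do not collapse in the way your sketch suggests. The paper's argument is cleaner and purely algebraic: from \Cref{lem:adjoint-nax}-\ref{lem:adjoint-nax_a_0}, $\partial_{x_k}U\,\partial_{x_k}u_f=\partial_{x_k}^{2}u_f+\partial_{x_k}^{\star}\partial_{x_k}u_f$, reducing the task to bounding $\sum_k\normmu{\partial_{x_k}^{\star}\partial_{x_k}u_f}^{2}$. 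The one-dimensional commutation $\partial_{x_k}\partial_{x_k}^{\star}=\partial_{x_k}^{\star}\partial_{x_k}+\partial_{x_k x_k}U$ gives $\sum_k\normmu{\partial_{x_k}^{\star}\partial_{x_k}u_f}^{2}=\normmu{\diag(\nax^{2}u_f)}^{2}+\psmu{\nax u_f}{\diag(\nax^{2}U)\nax u_f}$, and the full commutation \eqref{eq:commutation_nabla} relates this to $\normmu{\nax^{\star}\nax u_f}^{2}-\psmu{\nax u_f}{(\nax^{2}U-\diag(\nax^{2}U))\nax u_f}$, at which point \eqref{eq:ZZ-cond-hessian-U} gives \eqref{eq:ZZ-bound-sum-partial-star-partial} directly. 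The bound on $\normmu{\Piv f}$ then follows from the resolvent estimates of \Cref{prop:bound hessian} and \Cref{coro:bound regularization} rather than from a weak formulation. So: right strategy, right hypothesis, but the mechanism in the crucial step is different from, and cleaner than, what you propose, and yours would need to be checked before it could be accepted.
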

\begin{remark}
 \label{rem:ZZ spectral gap}
 From \Cref{as:U} we have for any $g \in \Lpi ^ d$
\[
\psmu{ g}{\nax^{2}Ug}\geq-c_{1}\normmu{g}^{2}
\]
 and therefore \eqref{eq:ZZ-cond-hessian-U} holds if there exist $\bar{c}_1>0$ such that for any $g\in \Lpi ^ d$,
\[
\psmu{ g}{{\rm diag}(\nax^{2}U)g}\leq \bar{c}_1\normmu{g}^{2} \eqsp,
\]
which is itself implied by $\bar{c}_1 \Id \succeq {\rm diag}(\nax^{2}U(x))$ for all $x \in \msd$, since the matrix ${\rm diag}(\nax^{2}U(x))$ is symmetric. Note that this is the case when for all $x \in \msd$, $\vert {\rm diag}(\nax^{2}U(x)) \vert \leq \bar{c}_1$ or $\vert \nax^{2}U(x) \vert \leq \bar{c}_1$, for example.
\end{remark}
The proof is very similar to that of \Cref{thm:hypocoercivity} and follows from the application of \Cref{thm:DMSmain} and the following lemmas whose proofs can be found in \Cref{ssec:post_proof_zig_zag}. 

\begin{lemma}
\label{lemma:ZZ RS bounded}
Consider the Zig-Zag process with generator $\generator$ defined by \eqref{eq:def_generator_zz} with $\lambdaref=\ulambda$, $\mcr_v = \Piv-\Id$ and  $\vphi : \rset \to \rset_+$ is a continuous function satisfying \eqref{eq:prop_vphi} in \Cref{as:intensities}.
Assume \Cref{as:generator} with $\core= \rmCb^2(\mse)$, \Cref{as:U}, \Cref{as:Fk}, \Cref{as:radial}, \Cref{as:operator on velocities}, \Cref{as:refreshment} and \eqref{eq:ZZ-cond-hessian-U} hold. Let $\mcs$ and  $\mct$ be the symmetric and anti-symmetric parts of  $\generator$ respectively and $\mca$ the operator defined by \eqref{eq:defcalA} relative to $\mct$.
 Then for any $f \in \rmCb^2(\mse)$,
\begin{align}
&  \normmuLine{(\Id-\Piv)\tmcs \mca^{\star} f} \leq  \big(\ulambda  + c_\vphi \big) m_2 \normmu{ \nax u_f } \\
  & \qquad \qquad  (6 m_4)^\half C_\vphi   \left( \normmu{\nax^2 u_f} +  \normmu{\nax^{\star} \nax u_f }+ c_3^\half \normmu{\nax u_f} \right)  \eqsp,
\end{align}
where $u_f$ is given by \eqref{eq:def_u}.
\end{lemma}

\begin{lemma}
\label{lemma:ZZ RT bounded}
Consider the Zig-Zag process with generator $\generator$ defined by \eqref{eq:def_generator_zz} with $\lambdaref=\ulambda$, $\mcr_v = \Piv-\Id$ and  $\vphi : \rset \to \rset_+$  a continuous function satisfying \eqref{eq:prop_vphi} in \Cref{as:intensities}.
Assume \Cref{as:generator} with $\core= \rmCb^2(\mse)$, \Cref{as:U}, \Cref{as:Fk}, \Cref{as:radial}, \Cref{as:operator on velocities}, \Cref{as:refreshment} and \eqref{eq:ZZ-cond-hessian-U} hold. Let  $\mct$ be the anti-symmetric part of  $\generator$ and $\mca$ the operator  defined by \eqref{eq:defcalA} relative to $\mct$. 
 Then for any $f \in \rmCb^2(\mse)$
\begin{align}
  & \normmu{(\Id-\Piv) \tmct \mca^{\star}  f} \\
  & \qquad \qquad \leq [6(4m_4-m_{2,2})]^\half \left(\normmu{\nax^{2}u_f}+\normmu{\nax^{*}\nax u_f}+c_3^\half \normmu{\nax u_f}\right) \eqsp,
\end{align}
where  $u_f$ is defined by \eqref{eq:def_u}.
\end{lemma}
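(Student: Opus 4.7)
The plan is to specialise the general formula provided by \Cref{lemma:RT bounded}-\ref{lem:bound_a_t_item_b} to the Zig-Zag setting, estimate the two norms $\normmu{\bfM}$ and $\normmu{\diag(\bfM)}$ in terms of $u_f$ via the commutation identity for $\partial_{x_i}^{\star}\partial_{x_i}$, and finally collect numerical constants using the Cauchy--Schwarz bound $m_{2,2}\leq 3 m_4$.

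First, I would substitute $K=d$, $F_k(x)=\partial_{x_k}U(x)\bfe_k$ and $\rmn_k(x)=\sign(\partial_{x_k}U(x))\bfe_k$, so that $\rmn_k\rmn_k^\top=\bfe_k\bfe_k^\top$, into the definition \eqref{eq:def_bfM}. This makes $\bfM$ differ from $\nax^2 u_f$ only on the diagonal, with
\begin{equation}
\bfM_{ij}=\partial_{x_i x_j}^2 u_f \quad (i\neq j), \qquad \bfM_{ii}=\partial_{x_i}^2 u_f+(\partial_{x_i}U)(\partial_{x_i}u_f).
\end{equation}
Using \Cref{lem:adjoint-nax} as in the proof of \Cref{lemma:ZZ RS bounded} (same commutation identity $\partial_{x_i}U\partial_{x_i}u_f=\partial_{x_i}^2 u_f+\partial_{x_i}^{\star}\partial_{x_i}u_f$), I would rewrite the correction on the diagonal as $\bfM_{ii}=2\partial_{x_i}^2 u_f+\partial_{x_i}^{\star}\partial_{x_i}u_f$, so that $\bfM=\nax^2 u_f+\bfN$ with $\bfN$ diagonal, $\bfN_{ii}=\partial_{x_i}^2 u_f+\partial_{x_i}^{\star}\partial_{x_i}u_f$.

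Next, setting $B:=2\normmu{\nax^2 u_f}+\normmu{\nax^{\star}\nax u_f}+c_3^{\half}\normmu{\nax u_f}$, I would prove that $\normmuLine{\bfM}\vee\normmuLine{\diag(\bfM)}\leq B$. For $\normmuLine{\bfN}$ this uses the triangle inequality on the vector $(\bfN_{ii})_{i}$ together with $\sum_i \normmu{\partial_{x_i}^2 u_f}^2\leq \normmu{\nax^2 u_f}^2$ and the estimate $(\sum_i \normmu{\partial_{x_i}^{\star}\partial_{x_i}u_f}^2)^{\half}\leq \normmu{\nax^{\star}\nax u_f}+c_3^{\half}\normmu{\nax u_f}$, which is exactly \eqref{eq:ZZ-bound-sum-partial-star-partial} used in \Cref{lemma:ZZ RS bounded}  (and which encodes assumption \eqref{eq:ZZ-cond-hessian-U}). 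Then $\normmuLine{\bfM}\leq\normmu{\nax^2 u_f}+\normmuLine{\bfN}\leq B$ and a similar computation on $\diag(\bfM)_i=2\partial_{x_i}^2 u_f+\partial_{x_i}^{\star}\partial_{x_i}u_f$ gives $\normmuLine{\diag(\bfM)}\leq B$ as well.

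Finally, I would combine this with the identity from \Cref{lemma:RT bounded}-\ref{lem:bound_a_t_item_b},
\begin{equation}
\normmuLine{(\Id-\Piv)\tmct \mca^{\star} f}^2=2m_{2,2}\normmuLine{\bfM}^2+3(m_4-m_{2,2})\normmuLine{\diag(\bfM)}^2,
\end{equation}
distinguishing whether $m_4\geq m_{2,2}$ (bound the second term by $\leq 3(m_4-m_{2,2})B^2$ to obtain the prefactor $3m_4-m_{2,2}$) or $m_4<m_{2,2}$ (drop the negative term to obtain the prefactor $2m_{2,2}$). In both cases, Cauchy--Schwarz on $\measv$ gives $m_{2,2}\leq 3m_4$, and an elementary comparison shows that both prefactors are at most $6(4m_4-m_{2,2})$, which yields the claimed bound. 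The routine part is the commutation/triangle inequality manipulation on $\bfN$; the only mildly delicate step is the sign-case analysis for $m_4-m_{2,2}$ together with the numerical verification that $\max\{3m_4-m_{2,2},2m_{2,2}\}\leq 6(4m_4-m_{2,2})$ when $m_{2,2}\leq 3m_4$.
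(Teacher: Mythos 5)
Your overall architecture (specialise \Cref{lemma:RT bounded}-\ref{lem:bound_a_t_item_b}, compute $\bfM$ for the Zig-Zag, use the commutation identity and \eqref{eq:ZZ-bound-sum-partial-star-partial}) is correct and matches the paper's, but one algebraic step loses a factor of $2$ on $\normmu{\nax^{2}u_f}$ that cannot be recovered afterwards. You bound $\normmu{\bfM}$ by decomposing $\bfM=\nax^{2}u_f+\bfN$ with $\bfN$ diagonal and applying the triangle inequality; since $\bfN_{ii}=\partial_{x_i}^2 u_f+\partial_{x_i}^{\star}\partial_{x_i}u_f$ this double-counts the diagonal of $\nax^{2}u_f$ and yields $\normmu{\bfM}\leq B:=2\normmu{\nax^{2}u_f}+\normmu{\nax^{\star}\nax u_f}+c_3^{\half}\normmu{\nax u_f}$. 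Your final bound is then at best $[6(4m_4-m_{2,2})]^{\half}B$, which has coefficient $2$ rather than $1$ on $\normmu{\nax^{2}u_f}$ and therefore does not imply the stated inequality. This is not slack you can recover from the numerical prefactor: for instance in the Rademacher case $m_{2,2}=3m_4$, so $3m_4-m_{2,2}=0$ and $\max\{3m_4-m_{2,2},2m_{2,2}\}=2m_{2,2}=6(4m_4-m_{2,2})$, meaning your prefactor already saturates the claimed one and the extra factor $2$ on $\normmu{\nax^{2}u_f}$ is a genuine overshoot.

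The fix, and what the paper actually does, is to never pass through a linear bound $\normmu{\bfM}\leq B$. Instead use the Pythagoras identity $\normmu{\bfM}^{2}=\normmu{\bfM-\diag(\bfM)}^{2}+\normmu{\diag(\bfM)}^{2}$ and note that the off-diagonal part of $\bfM$ equals the off-diagonal part of $\nax^{2}u_f$, so $\normmu{\bfM-\diag(\bfM)}\leq\normmu{\nax^{2}u_f}$ with no factor of $2$. This rewrites
\begin{equation}
2m_{2,2}\normmu{\bfM}^{2}+3(m_4-m_{2,2})\normmu{\diag(\bfM)}^{2}
=2m_{2,2}\normmu{\bfM-\diag(\bfM)}^{2}+(3m_4-m_{2,2})\normmu{\diag(\bfM)}^{2}\eqsp,
\end{equation}
which, since $3m_4-m_{2,2}\geq 0$ by Cauchy--Schwarz, also makes your case split on the sign of $m_4-m_{2,2}$ unnecessary. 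One then estimates $\normmu{\diag(\bfM)}^{2}$ directly in squared form (writing $\diag(\bfM)_{i}=2\partial_{x_i}^{2}u_f+\partial_{x_i}^{\star}\partial_{x_i}u_f$ and using $(2a+b)^{2}\leq 8a^{2}+2b^{2}$ plus \eqref{eq:ZZ-bound-sum-partial-star-partial}); the coefficients then combine exactly as $2m_{2,2}+8(3m_4-m_{2,2})=6(4m_4-m_{2,2})$ on the $\normmu{\nax^{2}u_f}^{2}$ term, and the weaker coefficient $2(3m_4-m_{2,2})\leq 6(4m_4-m_{2,2})$ on the remaining two, giving the claimed constant with coefficient $1$. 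Replace your triangle-inequality bound on $\normmu{\bfM}$ by this orthogonal decomposition and keep the squared estimates until the very last step.
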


\begin{proof}[Proof of \Cref{theo:ZZ spectral gap}]
Checking  \Cref{ass:stability_c_piv} and \Cref{as:DMSabstract}-\ref{item:DMS-micro}-\ref{item:DMS-macro}-\ref{item:DMS-proj}-\ref{item:DMS-kernel_calS} is identical to the work done in the proof of \Cref{thm:hypocoercivity} with the constants $\lambda_v = \ulambda$ and $\lambda_x$ given by \eqref{eq:def_lambda_x}.
We are left with checking \Cref{as:DMSabstract}-\ref{item:DMS-RSandRT}.
By the improved bounds from  \Cref{lemma:ZZ RS bounded} and \Cref{lemma:ZZ RT bounded}, we have for any $f \in \rmCb^2(\mse)$, 
\begin{align}
&  \normmu{\tmcs \mca^{\star} f} +\normmu{(\Id-\Piv) \tmct \mca^{\star} f }\leq \big(\ulambda  + c_\vphi \big) m_2 \normmu{\nax u_f}  \\
 & \qquad \qquad  + (6  m_4)^\half (2 + C_\vphi)\left(\normmu{\nax^{2}u_f}+\normmu{\nax^{*}\nax u_f}+c_3^\half \normmu{\nax u_f}\right)  \eqsp.
\end{align}
Using \Cref{prop:bound hessian} and \Cref{coro:bound Fk}, we obtain that for any $f \in \rmCb^2(\mse)$, 
\begin{align}
&\normmu{(\Id-\Piv)\tmcs \mca^{\star} f} +\normmu{(\Id-\Piv) \tmct \mca^{\star} f } \\
&\leq \defEns{\frac{(6  m_4)^\half (2+C_\vphi)}{m_2} \parenthese{\left(1+c_1/2\right)^\half+1+ (c_3/2)^\half} + \frac{\ulambda   +  c_\vphi }{2^\half}} \normmu{\Piv f} \eqsp, 
\end{align}
The proof is then completed by \Cref{lemma:RS bounded}-\ref{lem:bound_a_s_item_a} and  \Cref{lemma:RT bounded}-\ref{lem:bound_a_t_item_a}.
\end{proof}

We discuss in the following the dependence on the dimension of the convergence rate $\alpha(\epsilon_0)$ and the constant $A(\epsilon_0)$ given by \Cref{thm:DMSmain} based on the constant provided by \Cref{theo:ZZ spectral gap}. Similarly to the general case, we need to impose some conditions on $m_2$ and $m_4$. Here, we assume that $m_4^{1/2}/m_2$ does not depend on $d$, which holds in the case where $\nu$ is the uniform distribution on $\msv = \{-1,1\}^d$ or the $d$-dimensional zero-mean Gaussian distribution with covariance matrix $\Idd$.

In the case where $\pi$ is the \iid~product of one-dimensional distributions $\pi_i$ on $(\rset,\mcb{\rset})$ associated with potentials $U_i : \rset \to \rset$ satisfying \Cref{as:U}, \ie~for any $x \in \msx$, $U(x) = \sum_{i=1}^d U_i(x_i)$, $\nax^2 U(x) = \diag(\nax^2U(x))$ for any $x \in \msx$ and therefore \eqref{eq:ZZ-cond-hessian-U} holds with $c_3=0$. Then, the convergence rate $\rateConv(\veps_0)$ and the constant $A(\veps_0)$ in \Cref{thm:DMSmain} do not depend on the dimension but only on the constants $c_1$, $c_2$, $\ulambda$, $c_\lambda$ and $\Cp$ associated to each $U_i$.

Consider now the case where the potential $U$ is strongly convex and gradient Lipschitz, \ie~there exist $m,L >0$ such that $m \Idd \preceq \nax^2 U(x) \preceq L \Idd$ for any $x \in \msx$. Then, since for any $i \in \{1,\ldots,d\}$ and $x \in \msx$, $\partial_{x_i,x_i} U(x) = \ps{\bfe_i}{\nax^2 U(x) \bfe_i} \leq L$ by assumption, \Cref{rem:ZZ spectral gap} implies that \eqref{eq:ZZ-cond-hessian-U} holds for $c_3 = L-m$. In addition, \Cref{as:U} holds with $c_1=0$ and $c_2= L$ and by \cite[Proposition 5.1.3, Corollary 5.7.2]{bakry:gentil:ledoux:2014}, $U$ satisfies \eqref{eq:poincare_assumption} with $\Cp=m$. Then, the convergence rate $\rateConv(\veps_0)$ and the constant $A(\veps_0)$ in \Cref{thm:DMSmain} do not depend on the dimension but only on $L$, $m$, $\ulambda$ and $\olambda$. In addition, we observe that the larger $L-m$ is, the larger $R_0$ given in \eqref{eq:R_0_zz} is, which in turn make the convergence rate $\rateConv(\veps_0)$ worse since it is of order  $\bigO(1/R_0^2)$ as $R_0 \to \plusinfty$ by \Cref{lemma:behaviour_rate_R_0}. This result is expected in the Gaussian case $U(x) = x^{\top} \Sigma x$ for any $x \in \msx$, since $L-m$ is the diameter of the set of eigenvalues of $\Sigma$ which is a characterization of the conditioning of the problem.

\subsection{\texorpdfstring{$d$}{d}-dimensional Radmacher distribution}
\label{ss:scaling-radmacher} 
We now consider the case $\msv = \{-m_2^{\half},+m_2^{\half}\}^d$ and $\nu$ is the uniform distribution on $\msv$ which corresponds to the original setting of the Zig-Zag process. This process has been proved to be ergodic~\cite{BRZ2018} even in the absence of refreshment, that is $\lambdaref=0$. 
We note that in this scenario $m_4 = m_2^2/ 3$ and $m_{2,2} = m_2^2$ which leads to simplified expressions for the bounds in  \Cref{lemma:ZZ RS bounded} and \Cref{lemma:ZZ RT bounded} upon revisiting their proofs. However this has no qualitative impact. In this section we show that hypocoercivity holds with our techniques for $\lambdaref(x)=0$ for ``most of $\msd$'' for a particular type of partial refreshment update. 

Consider the scenario where $\calR_v$ is a mixture of the bounces $\{ \calB_k, k=1,\ldots,d \}$, for any $f \in \mrl^2(\mu)$, $(x,v) \in \mse$,
\begin{equation} \label{eq:radmacher-def-Rv-mixture}
	\lambdaref \calR_v f(x,v) = \sum_{k=1}^d \lambda_{{\rm ref}, k}(x) \big[f\big(x, v - 2 v_k \bfe_k \big) - f(x,v)\big] \eqsp,
\end{equation}
with $\lambda_{{\rm ref}, k}\colon \msd \to \rset_+$ for $k \in \{1, \ldots, d\}$ satisfying \Cref{as:refreshment}, and $\lambdaref = \sum_{k=1}^d \lambda_{{\rm ref}, k}$, that is when the process refreshes, $k \in \{1,\ldots,d\}$ is chosen at random with probability proportional to $(\lambda_{{\rm ref}, 1},\ldots,\lambda_{{\rm ref}, d})$ and the component $v_k$ of $v$ is updated to $-v_k$.

\begin{proposition}
\label{propo:zz_micro}
Consider the Zig-Zag process with generator $\generator$ and refreshment operator as in \eqref{eq:def_generator_zz} and \eqref{eq:radmacher-def-Rv-mixture} respectively, with  $\vphi : \rset \to \rset_+$ is a continuous function satisfying \eqref{eq:prop_vphi} in \Cref{as:intensities}.
Assume \Cref{as:generator} with $\core= \rmCb^2(\mse)$, \Cref{as:U}, \Cref{as:Fk}, \Cref{as:radial}, \Cref{as:operator on velocities}, \Cref{as:refreshment} and \eqref{eq:ZZ-cond-hessian-U} hold. Let $\mcs$  be the symmetric  part of  $\generator$ defined by \eqref{eq:def_calS_calT}.
\begin{enumerate}[wide, labelwidth=!, labelindent=0pt]
\item \label{propo:zz_micro_a} the symmetric part of the generator is given for any $f \in \rmCb^{2}(\mse)$, $(x,v) \in \mse$ by
  \begin{align}
&    \calS f(x,v) =  \sum_{k=1}^d \left\lbrace \frac{\vphi\big(v_k\partial_{x_k} U(x)\big) + \vphi\big(-v_k\partial_{x_k} U(x)\big) }{2} \right. \\
      & \qquad \qquad \qquad \qquad\qquad \qquad  +  m_2^\half\lambda_{{\rm ref}, k}(x) \Bigg\} \big[f\big(x,v - 2 v_k \bfe_k\big) - f(x,v)\big] \eqsp;
\end{align}
\item \label{propo:zz_micro_b} the microscopic coercivity condition \Cref{as:DMSabstract}-\ref{item:DMS-micro} is satisfied, i.e. for any $f \in \rmCb^{2}(\mse)$, $(x,v) \in \mse$

 \begin{align}
 \label{eq:zz_micro_1}
 -\psmu{ \calS \ff}{ \ff} 
 &\geq \lambda_v m_2^{\half} \normmu{ (\Id-\Piv) \ff }^2 \\
 \quad \text{with} \quad \lambda_v &= \min_{k \in \{1,\ldots,d \}, x \in \msd} \left\lbrace \frac{ |\partial_{x_k} U(x)|} 2 + \lambda_{{\rm ref}, k}(x) \right\rbrace\eqsp. 
\end{align}

\end{enumerate}
\end{proposition}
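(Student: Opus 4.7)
\noindent\textbf{Plan of proof for Proposition \ref{propo:zz_micro}.}

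For part \ref{propo:zz_micro_a}, I would simply compute $\calS = (\calL + \calL^{\star})/2$ directly. Applying \Cref{prop:adjoint-calL} with $i=1$, $K=d$, $F_k = \partial_{x_k} U \, \bfe_k$ (so that $v^\top F_k = v_k \partial_{x_k}U$) and observing that the refreshment operator $\calR_v$ specified in \eqref{eq:radmacher-def-Rv-mixture} is self-adjoint on $\Lmu$ (each $\mcbb_k$ is self-adjoint under \Cref{as:radial}, and multiplication by the $x$-dependent weight $\lambda_{\mathrm{ref},k}$ commutes with it), the transport term $v^\top\nax f$ cancels in the average. The two jump-rate symbols $\varphi(v_k\partial_{x_k}U)$ and $\varphi(-v_k\partial_{x_k}U)$ add, while the refreshment term reproduces itself. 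Grouping the coefficients of $(\mcbb_k-\Id)f$ yields exactly the stated formula.

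For part \ref{propo:zz_micro_b}, the strategy is a standard carré-du-champ symmetrization followed by a hypercube Poincaré inequality. Writing $\calS f = \sum_k a_k(\mcbb_k-\Id)f$ with
\begin{equation*}
a_k(x,v) = \tfrac{1}{2}\big(\varphi(v_k\partial_{x_k}U(x)) + \varphi(-v_k\partial_{x_k}U(x))\big) + m_2^{\half}\lambda_{\mathrm{ref},k}(x),
\end{equation*}
I note that $a_k\circ \mcbb_k = a_k$ and that $\mu$ is invariant under each involution $\mcbb_k$ (by \Cref{as:radial} together with the fact that $\nu$ is uniform on $\{-m_2^{\half}, m_2^{\half}\}^d$). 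A change of variables $v\mapsto \mcbb_k v$ in the integral $\int a_k(f-\mcbb_k f)f\,\rmd\mu$ and an averaging step then gives the usual Dirichlet-form identity
\begin{equation*}
-\psmu{\calS f}{f} = \frac{1}{2}\sum_{k=1}^d \int_{\mse} a_k\, (f-\mcbb_k f)^2\, \rmd\mu.
\end{equation*}

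The next step is to lower-bound $a_k$ pointwise. From \eqref{eq:prop_vphi} I have $\varphi(s)+\varphi(-s)\ge |s|$, and since $|v_k|=m_2^{\half}$ on $\msv$ this gives $a_k(x,v)\ge m_2^{\half}\big(|\partial_{x_k}U(x)|/2 + \lambda_{\mathrm{ref},k}(x)\big) \ge m_2^{\half}\lambda_v$ uniformly in $(x,v)$. Finally, I invoke the tight Poincaré (spectral gap) inequality for the uniform measure on the two-point set applied to each coordinate: for $g\in\Lnu$, the standard tensorization argument yields
\begin{equation*}
\normmu{(\Id-\Piv)f}^2 \leq \tfrac{1}{4} \sum_{k=1}^d \int_{\mse} (f - \mcbb_k f)^2 \, \rmd\mu.
\end{equation*}
Combining these three ingredients delivers $-\psmu{\calS f}{f} \ge 2\lambda_v m_2^{\half}\normmu{(\Id-\Piv)f}^2$, which is even stronger than the claimed \eqref{eq:zz_micro_1}.

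The only mildly delicate point is justifying the symmetrization and the Poincaré step for $f \in \rmCb^2(\mse)$, but both reduce to elementary manipulations using the invariance of $\nu$ under each coordinate flip; the rest is a direct assembly of the three estimates above.
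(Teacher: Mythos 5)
Your proposal is correct and takes essentially the same route as the paper's proof: part \ref{propo:zz_micro_a} is the specialization of \Cref{prop:dms_pdmp_1}-\ref{item:sym skewsym}, and for part \ref{propo:zz_micro_b} both arguments symmetrize to obtain the Dirichlet form $-\psmu{\calS f}{f}=\tfrac{1}{2}\sum_{k=1}^d\int a_k(f-\mcbb_k f)^2\,\rmd\mu$, bound $a_k\ge m_2^{1/2}\lambda_v$ pointwise via \eqref{eq:prop_vphi} and $|v_k|=m_2^{1/2}$, and then apply the hypercube Poincar\'e inequality fiberwise in $x$. The only notable difference is that you invoke the sharp Poincar\'e constant $1/4$ where the paper's display \eqref{eq:proof_zz_micro_2} carries the looser factor $1/2$, so your argument actually delivers $-\psmu{\calS f}{f}\ge 2\lambda_v m_2^{1/2}\normmu{(\Id-\Piv)f}^2$, a factor of two stronger than the stated \eqref{eq:zz_micro_1}.
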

 \begin{remark}
In other words \Cref{as:DMSabstract}-\ref{item:DMS-micro} holds if for any $\varepsilon>0$, for all $k \in \{1,\ldots,d\}$, $\lambda_{{\rm ref}, k}$ vanishes everywhere, except on $\{x \in \msx \, :\, \exists k \in \{1,\ldots,d\} \mid \absLigne{\partial_{x_k} U}(x) < \varepsilon \}$. We also note that a similar result holds for the case where $\mcr_v = \Piv-\Id$, that is \Cref{as:DMSabstract}-\ref{item:DMS-micro} holds whenever $\lambdaref$ vanishes everywhere, except on $\{x \in \msx \, :\, \exists k \in \{1,\ldots,d\}, \, \mid \absLigne{\partial_{x_k} U}(x) < \varepsilon \}$ for $\varepsilon >0$.
\end{remark}
\begin{proof}
The first statement is a direct application of \Cref{prop:dms_pdmp_1}-\ref{item:sym skewsym}.  For the second statement, using that $\nu$ is the uniform distribution on $\msv = \{-m_2^\half, m_2^\half\}^d$, from the polarization identity and since $\vphi$ satisfies \Cref{as:intensities}, we get for any $f \in \rmCb^{2}(\mse)$, setting $\vphi^{\even}(s) = \vphi(s)+\vphi(-s)$,
\begin{align}
-\psmu{ \calS \ff}{ \ff} &= \frac{1}{2} \int_{\mse} \sum_{k=1}^d \defEns{ \frac{\vphi^{\even}(v_k\partial_{x_k} U(x)) }{2} +  m_2^\half\lambda_{{\rm ref}, k}(x) } \\
&\qquad \qquad \qquad \qquad   \times \big[f(x,v) - f\big(x,(\Id - 2 \bfe_k \bfe_k^{\top})v\big) \big]^2 \, \rmd \mu (x,v)\\
& \geq ( \lambda_v m_2^{\half}/2) \int_{\mse} \sum_{k=1}^d \big[ f(x,v) - f\big(x,(\Id - 2 \bfe_k \bfe_k^{\top})v\big)\big]^2 \, \rmd \mu (x,v) \eqsp,
  \label{eq:proof_zz_micro_1}
\end{align}
where $\lambda_v$ is defined in \eqref{eq:zz_micro_1}.
Now by the Poincar{\'e} inequality for any $g \in \rml^2_0(\nu)$, see \eg~\cite[p. 52]{odonnell:2014}, it holds that 
\begin{equation}
 \label{eq:proof_zz_micro_2}
 (1/2) \int_{\msv}\sum_{k=1}^d \big[ g(v) - g\big((\Id - 2 \bfe_i \bfe_i^{\top})v\big) \big]^2 \, \rmd \nu(v) \geq \int_{\msv}\sum_{k=1}^d g^2(v) \, \rmd \nu(v) \eqsp.
\end{equation}
Now since for any $f \in \rmCb^{2}(\mse)$, $ \psmu{ \calS \ff}{ \ff} = \psmu{ \calS (\Id-\Piv)\ff}{ (\Id-\Piv)\ff}$ and for any $x \in \msx$, $v \mapsto (\Id-\Piv)f(x,v) \in \mrl^2_0(\nu)$, then combining \eqref{eq:proof_zz_micro_1} and \eqref{eq:proof_zz_micro_2} and using Fubini's theorem concludes the proof of \eqref{eq:zz_micro_1}.
\end{proof}

\subsection{Postponed proofs}
\label{ssec:post_proof_zig_zag}

\begin{proof}[Proof of \Cref{lemma:ZZ RS bounded}]
  We use \Cref{lemma:RS bounded} and its notation, where $K=d$, for $k \in \{1,\ldots,d\}$, $F_k = \partial_{x_k} U$ and $\rmn_k = \sign(\partial_{x_k} U) \bfe_k$. In this setting and by \eqref{eq:def_bfG}, it follows that for any $(x,v)\in \mse$,
\[
\bfG(x,v)= \sum_{k=1}^d \lambda^\even_k(x,v) v_k  \bfe_k + \ulambda m_2^{1/2} v \eqsp.
\]
By the triangle inequality and since $\int_{\msv} g(v_i) g(v_j)  v_iv_j {\rm d}\nu(v)= 0$ for $i,j \in \{1,\ldots,d\}$, $i \neq j$, and any even measurable bounded function $g: \rset \to \rset$ by \Cref{as:radial}-\ref{item:as:radial:general_3}, we get 
\begin{align}
&  \normmu{\bfG^{\top} \nax u_f} \\
  &  \leq \normmu{\textstyle{\sum}_{k=1}^d \{\vphi(v_k \partial_{x_k} U) + \vphi(-v_k \partial_{x_k} U)\} v_k  \partial_{x_k} u_f} + \ulambda m_2 \normmu{ \nax u_f }  \\
  \label{eq:proof_lem_19_date_16_04}
  &  = \parentheseDeux{\sum_{k=1}^d   \normmu{ \{\vphi(v_k \partial_{x_k} U) + \vphi(-v_k \partial_{x_k} U)\}  v_k  \partial_{x_k} u_f}^2}^{\half} + \ulambda m_2 \normmu{ \nax u_f } \eqsp.
\end{align}
Then by \Cref{as:intensities}, \Cref{as:radial}-\ref{item:as:radial:general_3}, the triangle inequality (on $\mrl^2(\mu)^d$) and since for any $i \in \{1,\ldots,d\}$, 
$\int_{\msv}v_i^4 \rmd \nu (v) = 3 m_4$ by
\Cref{as:radial}-\ref{assum:fourth_moment} we obtain
\begin{align}
&  \parentheseDeux{\sum_{k=1}^d   \normmu{ (c_{\vphi} m_2^\half + C_{\vphi} \abs{v_k \partial_{x_k} U})  v_k  \partial_{x_k} u_f}^2}^{\half}  \\
  &\qquad \qquad  \leq c_{\vphi}  m_2^\half  \parentheseDeux{\sum_{k=1}^d  \normmu{  v_k  \partial_{x_k} u_f}^2 }^{\half} +  C_{\vphi}  \parentheseDeux{ \sum_{k=1}^d \normmu{\abs{v_k \partial_{x_k} U}  v_k  \partial_{x_k} u_f}^2}^{\half}  \\
  & \qquad \qquad  \leq c_{\vphi}  m_2 \normmu{\nabla_x u_f} +  C_{\vphi}(3m_4)^{\half} \parentheseDeux{\sum_{k=1}^d \normmu{ \partial_{x_k} U  \partial_{x_k} u_f}^2}^{\half} \eqsp. 
\end{align}
Plugging this result in \eqref{eq:proof_lem_19_date_16_04}, we get 
\begin{equation}
  \normmu{\bfG^{\top}\nax u_f}  \leq 
  \label{eq:bound_a_s_zz_1} (  c_{\vphi}  + \ulambda )m_2 \normmu{\nabla_x u_f} +  C_{\vphi}(3m_4)^{\half} \parentheseDeux{\sum_{k=1}^d \normmu{ \partial_{x_k} U  \partial_{x_k} u_f}^2}^{\half} \eqsp. 
\end{equation}

To bound the sum we
note that for $k\in\{1,\ldots,d\}$
$\partial_{x_{k}}U\partial_{x_{k}}u_f =
\partial_{x_{k}}^{2}u_f+\partial_{x_{k}}^{*}\partial_{x_{k}}u_f$ by
\Cref{lem:adjoint-nax}-\ref{lem:adjoint-nax_a_0}, which together with
the fact $(a+b)^2 \leq 2(a^2+b^2)$ leads to 
\begin{equation}
\normmu{\partial_{x_{i}}U\partial_{x_{i}}u_f}^2 \leq 2\big( \normmu{\partial_{x_{i}}^{2}u_f}^2+\normmu{\partial_{x_{i}}^{*}\partial_{x_{i}}u_f}^2 \big) \eqsp.
\end{equation}
Then, using that for $a,b\geq0$ $\sqrt{a+b} \leq \sqrt{a} +\sqrt{b}$ twice and \eqref{eq:ZZ-bound-sum-partial-star-partial}, we deduce
\begin{align}
 \nonumber
 \left(\sum_{k=1}^d\normmu{\partial_{x_{k}}U\partial_{x_{k}}u_f}^2\right)^\half & \leq 2^{\half} \defEns{ \sum_{k=1}^d \parenthese{ \normmu{\partial_{x_{k}}^{2}u_f}^2+\normmu{\partial_{x_{k}}^{*}\partial_{x_{k}}u_f}^2 } }^\half\\
 \nonumber
 & \leq 2^{\half} \defEns{\left( \sum_{k=1}^d \normmu{\partial_{x_{k}}^{2}u_f}^2 \right)^\half + \left( \sum_{k=1}^d \normmu{\partial_{x_{k}}^{*}\partial_{x_{k}}u_f}^2 \right)^\half}\\
 \label{eq:bound_a_s_zz_2}
 & \leq 2^{\half} \parenthese{\normmu{\nax^2 u_f}+\normmu{\nax^{*}\nax u_f}+c_3^\half \normmu{\nax u_f} } \eqsp. 
\end{align}
Then combining \eqref{eq:bound_a_s_zz_1} and \eqref{eq:bound_a_s_zz_2} completes the proof by \Cref{lemma:RS bounded}-\ref{lem:bound_a_s_item_b}.
\end{proof}

For $a,b \in \rset^d$ ($A,B \in \rset^{d \times d}$), we denote by $a \odot b \in \rset^d$ ($A \odot B \in\rset^{d \times d}$) the Hadamard product between $a$ and $b$ defined for any $i \in \{1,\ldots,d\}$ ($i,j \in \{1,\ldots,d\}$) by $(a\odot b)_i = a_i b_i$ ($(A\odot B)_{i,j} = A_{i,j}B_{i,j}$).

\begin{proof}[Proof of \Cref{lemma:ZZ RT bounded}]
 We use \Cref{lemma:RT bounded} and its notations, where $K=d$, for $k \in \{1,\ldots,d\}$, $F_k = \partial_{x_k} U \bfe_k$ and $\rmn_k = \sign(\partial_{x_k} U) \bfe_k$. In this setting and by \eqref{eq:def_bfM}, it follows that 
\[
\bfM(x)=\nax^{2}u_f(x)+{\rm diag}\big(\nax u_f\odot\nax U\big),
\]
Since $\normmu{\bfM}^{2}=\normmu{{\rm diag}(\bfM)}^{2}+\normmu{\bfM-{\rm diag}(\bfM)}^{2}$, we obtain
\begin{align}
 \nonumber
  &2m_{2,2}\normmu{\bfM}^{2}+3(m_{4}-m_{2,2})\normmu{{\rm diag}(\bfM)}^{2}  \\
  &=2m_{2,2}\normmu{\bfM-{\rm diag}(\bfM)}^{2}+(3m_{4}-m_{2,2})\normmu{{\rm diag}(\bfM)}^{2} \\
 \label{eq:bound_a_t_zz_1}
 & \leq 2m_{2,2}\normmu{\nax^{2}u_f}^{2}+(3m_{4}-m_{2,2})\normmu{{\rm diag}(\bfM)}^{2}\eqsp.
\end{align}
We now bound $\normmu{{\rm diag}(\bfM)}^{2}$. 
First, we apply the triangle inequality and use \Cref{lem:adjoint-nax}-\ref{lem:adjoint-nax_a_0}, to deduce that 
\begin{align}
\nonumber
&  \normpi{\diag(\bfM)}^2 = \sum_{k=1}^d \normmu{2\partial_{x_{k}}^{2}u_f-\partial_{x_{k}}^{2}u_f+\partial_{x_{k}}U\partial_{x_{k}}u_f}^{2} \\
  & \phantom{\normpi{\diag(\bfM)}^2 } \leq \sum_{k=1}^d \left(2\normmu{\partial_{x_{k}}^{2}u_f}+\normmu{-\partial_{x_{k}}^{2}u_f+\partial_{x_{k}}U\partial_{x_k}u_f}\right)^{2} \\
 \label{eq:bound_a_t_zz_2} 
&   \phantom{\normpi{\diag(\bfM)}^2 } \leq \sum_{k=1}^d \parenthese{8\normmu{\partial_{x_{k}}^{2}u_f}^{2}+2\normmu{\partial_{x_{k}}^{*}\partial_{x_{k}}u_f}^{2} }\eqsp,
\end{align}
where we have used for the last inequality that $(a+b)^2 \leq 2a^2 +2b^2$ for any $a,b \in \rset$. 
By \Cref{lem:adjoint-nax}-\ref{lem:adjoint-nax_a_0}, \eqref{eq:commutation_nabla}, \eqref{eq:borne_moment} and the fact that $U \in \rmC^3_{\poly}(\msx)$ by \Cref{as:U}, using that same reasoning as to establish \eqref{eq:borne_nax_u_2}, it holds for any $k \in \{1,\ldots,d\}$,
\begin{align}
  \normmu{\partial_{x_k}^{\star} \partial_{x_k} u_f}^2 &= \normmu{\partial^2_{x_k}u_f}^2 + \psmu{\partial_{x_k} u_f}{ \partial_{x_k,x_k} U \, \partial_{x_k} u_f } \eqsp, \\
  \normmu{\nax^{*}\nax u_f}^{2} & =\normmu{\nax^{2}u_f}^{2}+\psmu{\nax u_f}{\nax^{2}U\nax u_f} \eqsp.
\end{align}
These identities and the condition \eqref{eq:ZZ-cond-hessian-U} imply
\begin{align}
 \nonumber
  \sum_{i=1}^{d}\normmu{\partial_{x_{i}}^{*}\partial_{x_{i}}u_f}^{2} &=\normmu{{\rm diag}\big(\nax^{2}u_f\big)}^{2}+\psmu{\nax u_f}{{\rm diag}\big(\nax^{2}U\big)\nax u_f}\\
  &\leq \normmu{\nax^{2}u_f}^{2}+\psmu{\nax u_f}{{\rm diag}\big(\nax^{2}U\big)\nax u_f}\\
  &\leq\normmu{\nax^{*}\nax u_f}^{2}-\psmu{\nax u_f}{\big(\nax^{2}U-{\rm diag}(\nax^{2}U)\big)\nax u_f}\\
  &\leq \normmu{\nax^{*}\nax u_f}^{2} + c_3 \normmu{\nax u_f}^2 \eqsp. \label{eq:ZZ-bound-sum-partial-star-partial}
\end{align}
Combining \eqref{eq:bound_a_t_zz_2} and \eqref{eq:ZZ-bound-sum-partial-star-partial}, we obtain
\begin{equation}
 \normmu{\diag(\bfM)}^2 \leq 8\sum_{k=1}^d \normmu{\partial_{x_{k}}^{2}u_f}^{2} + 2 (\normmu{\nax^{*}\nax u_f}^{2} + c_3 \normmu{\nax u_f}^2) \eqsp.
\end{equation}
From this inequality, \eqref{eq:bound_a_t_zz_1} and \Cref{lemma:RT bounded}-\ref{lem:bound_a_t_item_b}, we deduce
\begin{align}
  &\normmu{(\Id-\Piv) \tmct \mca^{\star} f}^{2}\leq 6(4m_4-m_{2,2})\normmu{\nax^{2}u_f}^{2}\\
  & \qquad \qquad \qquad \qquad \qquad  +2(3m_{4}-m_{2,2})\left(\normmu{\nax^{*}\nax u_f}^{2}+c_3\normmu{\nax u_f}^{2}\right)\\
& \phantom{\normmu{(\Id-\Piv) }} \leq 6(4m_4-m_{2,2}) \left(\normmu{\nax^{2}u_f}+\normmu{\nax^{*}\nax u_f}+c_3^\half\normmu{\nax u_f}\right)^2 \eqsp,
\end{align}
since for $a,b, c\geq 0$, $a^2+ b^2+c^2\leq (a+b+c)^2$.
\end{proof}




\section{Discussion and link to earlier work} \label{sec:discussion}
As pointed out earlier the scenario $K=0$ where $F_0=\nax U$ is considered in \cite{Dolbeault15} where the authors establish hypercoercivity but also in \cite[Theorem 3.9]{BouRabee17} where the authors establish geometric convergence, that is the existence of constants $A, \alpha > 0$ and a measurable function $V\colon\mse \to \Rset_+$  satisfying $\mu\big(\{V=\infty\}\big)=0$, such that for any $(x,v)\in \mse$ and $t \geq 0$,
\begin{equation} \label{eq:def-exponential-convergence}
\tvnorm{ P_{t}\big((x,v),\cdot\big)-\mu(\cdot)} \leq A V(x,v) \rme^{-\alpha t} \eqsp.
\end{equation}

Similar results have been obtained in \cite{deligiannidis2017arXiv170504579D} and \cite{durmus-geometric-2018arXiv180705401D} for the Bouncy particle sampler and in \cite{BRZ2018} for the Zig-Zag process. All these methods rely on guessing such a suitable Lyapounov function $V$ and establishing a so-called drift condition for this function, in conjunction with a minorization condition \cite{meyn2012markov}. Here we have established $\Lmu$-exponential convergence, or equivalently that there exists an absolute $\Lmu$-absolute spectral gap \cite[Proposition 22.3.2]{douc-moulines-priouret-soulier-2019} (by considering the skeleton of the process) and is therefore $\mu$-a.e. uniformly convergent by \cite[Proposition 22.3.3 and Proposition 22.3.5]{douc-moulines-priouret-soulier-2019}, that is \eqref{eq:def-exponential-convergence} holds with $V=\bfone$ and $\mu$-a.e..

An advantage of our approach is that it provides explicit and relatively simple bounds in terms of interpretable quantities which, we show, are informative, and is in contrast with those on minorization and drift conditions in most scenarios. One exception is the study of BPS on the torus carried out in \cite{durmus-geometric-2018arXiv180705401D} for $U=0$, using an appropriate coupling argument, which leads to a rate of convergence for the total variation distance with a favourable $\Theta(d^{1/2})$ scaling. Although we have shown that for the Zig-Zag sampler with Rademacher distribution $\lambdaref$ is not required to be bounded away from zero on $\msx$, the results of \cite{BRZ2018} hold with $\lambdaref=0$. It would be interesting to further investigate whether our results can be specialized to consider the scenario $\lambdaref=0$. 

Although we have shown that the theory developed in this paper covers numerous scenarios in a unified set-up, various possible extensions are possible. For example we have restricted this first investigation to deterministic bounces of the type given in \eqref{eq:def-bounces}, but there does not seem to be any obstacle to the extension of our results to the more general set-ups such as considered in \cite{Vanetti17,Wu17,Michel17}. In the same vein, great parts of our calculations could be used to consider distributions of the velocity $\nu$ that are neither Gaussian, nor the uniform distribution on the hypersphere. For $\nu$ of density proportional to $\exp(-\mathrm{K}(v))$ with $\mathrm{K}: \rset^d  \to \Rset$ the Liouville operator involved in the definition of \eqref{eq:generator_r_d_2} would take the form $\nabla_v \mathrm{K}(v)^\top \nax \ff(x,v) - m_2 F_0^\top \nabla_v \ff(x,v)$, leading to a different expression for $\calT$.  Such modified kinetic energies have been proposed to speed up the computation, introducing the Modified Langevin Dynamics for which convergence to equilibrium has been studied in~\cite{Redon16}.


\appendix
\section{Optimization and estimates of the rate of convergence $\alpha(\epsilon)$} \label{sec:optim-varepsilon}

We let $\rset_+^*=(0,\infty)$. Consider the functions $R,\tilde{\alpha} : \rset_+^* \to \rset_+^*$ given for any $\epsilon \geq 0$ by 
\begin{align}
 \label{eq:def_R}
 R(\epsilon) &=[1-\epsilon(1-\lambda_{x})]^{2}-4\epsilon\lambda_{x}(1-\epsilon)+\epsilon^{2}R_{0}^{2}\\ &=R_{1}^{2}\left(\epsilon-\frac{1+\lambda_{x}}{R_{1}^{2}}\right)^{2}+1-\frac{(1+\lambda_{x})^{2}}{R_{1}^{2}}>0 \eqsp, \\
  \label{eq:def_alpha_tilde}
  &\tilde{\alpha}(\epsilon) = \frac{ \Lambda(\epsilon)}{1+2^\half \lambda_{v}\epsilon} = \frac{ 1 - \epsilon(1-\lambda_x) - R^{\half}(\epsilon) }{2(1+2^\half \lambda_{v}\epsilon)} \eqsp,
\end{align}
where
\begin{equation}
 \label{eq:def_R_1}
 R_{1}^{2}=(1+\lambda_{x})^{2}+R_{0}^{2} \eqsp,
\end{equation}
and $\Lambda$ is given in \eqref{eq:def_lambda_0}. 
We show that optimizing $\epsilon \mapsto \Lambda(\epsilon)$ is a good enough proxy
for optimizing $\epsilon \mapsto \tilde{\alpha}(\epsilon)$, whose maximum is unique, but intractable. Since $\epsilon \mapsto \alpha(\epsilon)$ defined by \eqref{eq:def_alpha_C_eps} is proportional to $\epsilon \mapsto \tilde{\alpha}(\epsilon)$, the same conclusion holds for this function. 
\begin{lemma}
\label{lem:derivativeLambda0}Let $\Lambda\colon\mathbb{R}_+\to\mathbb{R}$
be defined by \eqref{eq:def_lambda_0}. Then with $\lambda_{x}\in(0,1)$ and $R_0 >0$,
\begin{enumerate}
\item \label{lem:derivativeLambda0_a} $\Lambda(\epsilon)\geq0$ for $\epsilon \in \ccint{0, 4\lambda_{x}/(4\lambda_{x}+R_{0}^{2})}$
and $\Lambda(0)=0$.
\item \label{lem:derivativeLambda0_b} $\Lambda$ has first order derivative
\[
\Lambda'(\epsilon)=-(1/2)\big[(1-\lambda_{x})R^{\half}(\epsilon)+\epsilon R_{1}^{2}-(1+\lambda_{x})\big]R^{-\half}(\epsilon) \eqsp,
\]
and $\Lambda'(0)=\lambda_{x}>0$. 
\item \label{lem:derivativeLambda0_c} $\Lambda\colon\mathbb{R}_{+}\to \mathbb{R}$ has a unique stationary point ($\Lambda'(\epsilon_{0})=0$)
\begin{equation}
 \label{eq:def_eta_star}
 \epsilon_{0}=\frac{(1+\lambda_{x})-(1-\lambda_{x})\parentheseDeux{R_{0}^{2}/(R_{0}^{2}+4\lambda_{x})}^{\half}}{(1+\lambda_{x})^{2}+R_{0}^{2}}>0 \eqsp,
\end{equation}
such that $\Lambda(\epsilon_{0})>0$.
\end{enumerate}
\end{lemma}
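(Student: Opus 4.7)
The plan is to treat the three parts separately, working throughout with the simplified expression
\begin{equation}
R(\epsilon)=[1-\epsilon(1-\lambda_{x})]^{2}-4\epsilon\lambda_{x}(1-\epsilon)+\epsilon^{2}R_{0}^{2}=1-2\epsilon(1+\lambda_{x})+\epsilon^{2}R_{1}^{2},
\end{equation}
which follows by direct expansion and the definition $R_{1}^{2}=(1+\lambda_{x})^{2}+R_{0}^{2}$. This is the form that appears in \eqref{eq:def_R} and it has the merit that both $R(\epsilon)$ and $R'(\epsilon)=-2(1+\lambda_{x})+2\epsilon R_{1}^{2}$ are elementary.

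For \ref{lem:derivativeLambda0_a}, first note $\Lambda(0)=(1-\sqrt{R(0)})/2=(1-1)/2=0$. Next, since $\lambda_{x}\in(0,1)$, $1-\epsilon(1-\lambda_{x})\geq 0$ on the interval $[0,1/(1-\lambda_{x})]$, which contains $[0,4\lambda_{x}/(4\lambda_{x}+R_{0}^{2})]$. On this interval $\Lambda(\epsilon)\geq 0$ is equivalent, after squaring, to $(1-\epsilon(1-\lambda_{x}))^{2}\geq R(\epsilon)$, i.e.\ $4\epsilon\lambda_{x}(1-\epsilon)\geq \epsilon^{2}R_{0}^{2}$, which rearranges to $\epsilon\leq 4\lambda_{x}/(4\lambda_{x}+R_{0}^{2})$.

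For \ref{lem:derivativeLambda0_b}, the derivative formula comes from direct differentiation using $R'(\epsilon)=2\epsilon R_{1}^{2}-2(1+\lambda_{x})$:
\begin{equation}
\Lambda'(\epsilon)=\tfrac{1}{2}\Big[-(1-\lambda_{x})-\tfrac{R'(\epsilon)}{2R^{\half}(\epsilon)}\Big]=-\tfrac{(1-\lambda_{x})R^{\half}(\epsilon)+\epsilon R_{1}^{2}-(1+\lambda_{x})}{2R^{\half}(\epsilon)},
\end{equation}
and evaluating at $\epsilon=0$ (where $R(0)=1$) gives $\Lambda'(0)=\lambda_{x}>0$.

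Part \ref{lem:derivativeLambda0_c} is the main step. The equation $\Lambda'(\epsilon)=0$ is
\begin{equation}
(1-\lambda_{x})R^{\half}(\epsilon)=(1+\lambda_{x})-\epsilon R_{1}^{2}.
\end{equation}
Since the left-hand side is non-negative, any stationary point must satisfy $\epsilon\leq(1+\lambda_{x})/R_{1}^{2}$; squaring is then an equivalence, and I will substitute the closed form of $R(\epsilon)$ obtained above and simplify. The cross term $(1-\lambda_{x})^{2}-(1+\lambda_{x})^{2}=-4\lambda_{x}$ and $R_{1}^{2}-(1-\lambda_{x})^{2}=4\lambda_{x}+R_{0}^{2}$ bring the quadratic into the form $R_{1}^{2}(4\lambda_{x}+R_{0}^{2})\epsilon^{2}-2(1+\lambda_{x})(4\lambda_{x}+R_{0}^{2})\epsilon+4\lambda_{x}=0$. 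Its discriminant factors as $4R_{0}^{2}(1-\lambda_{x})^{2}(4\lambda_{x}+R_{0}^{2})$ (this is the pleasant computation that makes the whole argument tractable, and is the single step I expect requires the most care), so that the two roots are
\begin{equation}
\epsilon_{\pm}=\frac{(1+\lambda_{x})\pm(1-\lambda_{x})\sqrt{R_{0}^{2}/(4\lambda_{x}+R_{0}^{2})}}{R_{1}^{2}}.
\end{equation}
The root $\epsilon_{+}$ exceeds $(1+\lambda_{x})/R_{1}^{2}$ and is a spurious solution introduced by squaring, whereas $\epsilon_{0}:=\epsilon_{-}$ lies in $[0,(1+\lambda_{x})/R_{1}^{2}]$ and is the unique stationary point on $\mathbb{R}_{+}$ (for $\epsilon>(1+\lambda_{x})/R_{1}^{2}$ the sign analysis rules out stationary points). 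Positivity $\Lambda(\epsilon_{0})>0$ then follows from the boundary data: $\Lambda(0)=0$, $\Lambda'(0)=\lambda_{x}>0$ by \ref{lem:derivativeLambda0_b}, uniqueness of the critical point forces $\Lambda$ to be increasing on $[0,\epsilon_{0}]$, and hence $\Lambda(\epsilon_{0})>\Lambda(0)=0$.
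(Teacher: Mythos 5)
Your argument is correct and follows essentially the same route as the paper's: part (a) by squaring on the region where $1-\epsilon(1-\lambda_x)\geq 0$, part (b) by direct differentiation, and part (c) by reducing $\Lambda'(\epsilon)=0$ to a quadratic, computing its discriminant (which both you and the paper reduce to $R_0^2(1-\lambda_x)^2(4\lambda_x+R_0^2)$ up to a factor of $4$), discarding the spurious root $\epsilon_+$ that violates the sign constraint $(1+\lambda_x)-\epsilon R_1^2\geq 0$, and then deducing $\Lambda(\epsilon_0)>0$ from $\Lambda(0)=0$, $\Lambda'(0)=\lambda_x>0$ and continuity/uniqueness of the stationary point.
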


\begin{proof}
From \eqref{eq:def_lambda_0} we see that $\Lambda(\epsilon)\geq0$ requires
\[
0 \leq \epsilon\leq\frac{1}{1-\lambda_{x}}\wedge\frac{4\lambda_{x}}{4\lambda_{x}+R_{0}^{2}}=\frac{4\lambda_{x}}{4\lambda_{x}+R_{0}^{2}} \eqsp,
\]
where the equality follows from $\lambda_{x}>0$, which completes the proof of \ref{lem:derivativeLambda0_a}. The proof of \ref{lem:derivativeLambda0_b} is a simple calculation and is omitted. We now show \ref{lem:derivativeLambda0_c}. If we set $\Lambda'(\epsilon) = 0$, it implies that $\epsilon >0$ satisfies
\begin{equation}
  \label{eq:optim_conv_rate_1_1}
(1+\lambda_{x})-\epsilon R_{1}^{2}  =R^{\half}(\epsilon)(1-\lambda_{x}) \eqsp,
\end{equation}
and imposes the condition $(1+\lambda_{x})-\epsilon R_{1}^{2} \geq 0$ so 
\begin{equation}
 \label{eq:condition_eps_proof}
\epsilon \in \ccint{0,\frac{1+\lambda_{x}}{(1+\lambda_{x})^{2}+R_{0}^{2}}} \eqsp.
\end{equation}
Squaring both sides of \eqref{eq:optim_conv_rate_1_1} implies the following sequence 
of equalities using \eqref{eq:def_R}
\begin{align}
(1-\lambda_{x})^{2}R(\epsilon) & =\left[\epsilon R_{1}^{2}-(1+\lambda_{x})\right]^{2},\\
(1-\lambda_{x})^{2}\left[R_{1}^{2}\epsilon^{2}-2(1+\lambda_{x})\epsilon+1\right] & =R_{1}^{4}\epsilon^{2}-2R_{1}^{2}(1+\lambda_{x})\epsilon+(1+\lambda_{x})^{2} \eqsp,
\end{align}
which is equivalent by \eqref{eq:def_R_1} to 
\begin{align}
R_{1}^{2} \epsilon^{2}\left[(1-\lambda_{x})^{2}-R_{1}^{2}\right]-2\epsilon(1+\lambda_{x})\left[(1-\lambda_{x})^{2}-R_{1}^{2}\right]-4\lambda_{x} & =0\\
\left[(1+\lambda_{x})^{2}+R_{0}^{2}\right] \epsilon^{2}\left[-4\lambda_{x}-R_{0}^{2}\right]-2\epsilon(1+\lambda_{x})\left[-4\lambda_{x}-R_{0}^{2}\right]-4\lambda_{x} & =0\\
\left[(1+\lambda_{x})^{2}+R_{0}^{2}\right]\epsilon^{2}-2(1+\lambda_{x})\epsilon+4\lambda_{x}/(R_{0}^{2}+4\lambda_{x}) & =0 \eqsp.
\end{align}
The two strictly positive roots are 
\begin{align}
\epsilon_{\pm} & =\frac{(1+\lambda_{x})\pm\parentheseDeux{(1+\lambda_{x})^{2}-4\lambda_{x}\{(1+\lambda_{x})^{2}+R_{0}^{2}\}/(R_{0}^{2}+4\lambda_{x})}^{\half}}{(1+\lambda_{x})^{2}+R_{0}^{2}}>0,
\end{align}
where the inequality follows from $\lambda_{x}>0$ and $R_0>0$. Further 
\[
(1+\lambda_{x})^{2}\big(R_{0}^{2}+4\lambda_{x}\big)-4\lambda_{x}\big[(1+\lambda_{x})^{2}+R_{0}^{2}\big]=R_{0}^{2}\big[(1+\lambda_{x})^{2}-4\lambda_{x}\big] = R_0^2 [1-\lambda_x]^2 \eqsp,
\]
and since $\lambda_{x}\leq1$, this yields the simplified expression
for the two roots
\[
\epsilon_{\pm}=\frac{(1+\lambda_{x})\pm(1-\lambda_{x})\parentheseDeux{R_{0}^{2}/(R_{0}^{2}+4\lambda_{x})}^{\half}}{(1+\lambda_{x})^{2}+R_{0}^{2}} \eqsp\eqsp.\]
From the conditions on $\epsilon$ given by \ref{lem:derivativeLambda0_a} and \eqref{eq:condition_eps_proof}, and the fact
that $\lambda_{x}\leq1$, we retain $\epsilon_{0}=\epsilon_{-}$ only. The
last statement follows from the second statement and the fact that
$\Lambda'$ is continuous.
\end{proof}
The following lemma establishes in particular that $\epsilon_{0}$ is
a global maximum.
\begin{lemma}
\label{lem:second-derivative-Lambda0}  Let $\Lambda\colon\mathbb{R}_+^*\to\mathbb{R}$ be defined by \eqref{eq:def_lambda_0}. Then with $\lambda_{x}\in(0,1)$ and $R_0 >0$,
\begin{enumerate}
\item \label{lem:second-derivative-Lambda0_a} for any $\epsilon >0$, $\Lambda''(\epsilon)<0$ (implying concavity),
\item \label{lem:second-derivative-Lambda0_b} $\Lambda$  is maximized at $\epsilon_{0}$ defined by \eqref{eq:def_eta_star}
and $0<\epsilon_{0}\leq (4\lambda_{x})/(4\lambda_{x}+R_{0}^{2})$.
\item  \label{lem:second-derivative-Lambda0_c} If in addition $R_0 \geq 2$, $\epsilon_0 \leq 3 \lambda_x/(4\lambda_x+R_0^2)$.
\end{enumerate}
\end{lemma}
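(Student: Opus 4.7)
The plan is to establish strict concavity of $\Lambda$ on $\mathbb{R}_+^*$ first, then deduce \ref{lem:second-derivative-Lambda0_b} and \ref{lem:second-derivative-Lambda0_c} by combining this concavity with the characterization of $\Lambda'$ obtained in \Cref{lem:derivativeLambda0}. Throughout I would work with the expression $R(\epsilon) = R_1^2 \epsilon^2 - 2(1+\lambda_x)\epsilon + 1$ coming from \eqref{eq:def_R}, and exploit the crisp identity $R'(\epsilon)/2 = -[(1+\lambda_x) - \epsilon R_1^2]$.

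For \ref{lem:second-derivative-Lambda0_a}, I differentiate the formula $2\Lambda'(\epsilon) = -(1-\lambda_x) + [(1+\lambda_x) - \epsilon R_1^2] R^{-1/2}(\epsilon)$ from \Cref{lem:derivativeLambda0}\ref{lem:derivativeLambda0_b} using the quotient rule. The numerator simplifies by the telescoping $R_1^2 R(\epsilon) - [(1+\lambda_x) - \epsilon R_1^2]^2 = R_1^2 - (1+\lambda_x)^2 = R_0^2$ (a direct expansion), yielding the compact formula $\Lambda''(\epsilon) = -R_0^2 / [2 R^{3/2}(\epsilon)]$, which is strictly negative since $R(\epsilon) > 0$ by \eqref{eq:def_R} and $R_0 > 0$.

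For \ref{lem:second-derivative-Lambda0_b}, strict concavity together with $\Lambda'(0) = \lambda_x > 0$ and the uniqueness of the stationary point $\epsilon_0$ (both from \Cref{lem:derivativeLambda0}) force $\epsilon_0$ to be the unique global maximizer of $\Lambda$. For the upper bound, set $\epsilon^\star = 4\lambda_x/(4\lambda_x + R_0^2)$ and note that this is the value at which $4\epsilon\lambda_x(1-\epsilon) - \epsilon^2 R_0^2 = 0$, so that $R(\epsilon^\star) = [1-\epsilon^\star(1-\lambda_x)]^2$ and hence $\Lambda(\epsilon^\star) = 0 = \Lambda(0)$. Concavity plus $\Lambda'(0) > 0$ then forces the maximizer to lie in $(0, \epsilon^\star]$.

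For \ref{lem:second-derivative-Lambda0_c}, put $\tilde\epsilon = 3\lambda_x/(4\lambda_x + R_0^2)$. By concavity, $\epsilon_0 \leq \tilde\epsilon$ is equivalent to $\Lambda'(\tilde\epsilon) \leq 0$, which by \Cref{lem:derivativeLambda0}\ref{lem:derivativeLambda0_b} amounts to $(1-\lambda_x) R^{1/2}(\tilde\epsilon) \geq (1+\lambda_x) - \tilde\epsilon R_1^2$. I split at $\lambda_x = 1/2$. When $\lambda_x \geq 1/2$, the right-hand side is itself non-positive: rewriting the condition $(1+\lambda_x) \leq \tilde\epsilon R_1^2$ as $R_0^2(1-2\lambda_x) \leq \lambda_x(1+\lambda_x)(3\lambda_x-1)$, the left side is $\leq 0$ and the right side is $\geq 0$, so the inequality is free. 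When $\lambda_x \leq 1/2$ the right-hand side is non-negative, so squaring is allowed; following the computation already present in the proof of \Cref{lem:derivativeLambda0}\ref{lem:derivativeLambda0_c}, the squared inequality is equivalent to $(4\lambda_x + R_0^2) P(\tilde\epsilon) \leq 0$ where $P(\epsilon) = R_1^2 \epsilon^2 - 2(1+\lambda_x) \epsilon + 4\lambda_x/(4\lambda_x + R_0^2)$. A direct expansion gives
\begin{equation*}
(4\lambda_x + R_0^2)^2 P(\tilde\epsilon) = \lambda_x \bigl[\lambda_x(1-3\lambda_x)^2 - R_0^2(2-3\lambda_x)\bigr],
\end{equation*}
so it suffices to show $\lambda_x(1-3\lambda_x)^2 \leq R_0^2(2-3\lambda_x)$. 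Elementary calculus on $[0,1/2]$ gives $\max \lambda_x(1-3\lambda_x)^2 = 1/8$ at $\lambda_x = 1/2$, while $R_0^2(2-3\lambda_x) \geq 4 \cdot 1/2 = 2$ using $R_0 \geq 2$ and $\lambda_x \leq 1/2$, which closes the case. The main subtlety is that neither the linear condition $(1+\lambda_x) \leq \tilde\epsilon R_1^2$ nor the squared/polynomial condition $P(\tilde\epsilon) \leq 0$ is valid on all of $(0,1)$; the cut-off $\lambda_x = 1/2$ is precisely the threshold at which they switch roles, and the hypothesis $R_0 \geq 2$ is needed only to make the squared branch work on $(0, 1/2]$.
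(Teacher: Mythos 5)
Your proof is correct, and for part \ref{lem:second-derivative-Lambda0_c} it takes a genuinely different route from the paper. For \ref{lem:second-derivative-Lambda0_a} and \ref{lem:second-derivative-Lambda0_b} you track the paper's strategy but arrive at the cleaner closed form $\Lambda''(\epsilon) = -R_0^2/[2R^{3/2}(\epsilon)]$ via the telescoping identity $R_1^2 R(\epsilon)-[(1+\lambda_x)-\epsilon R_1^2]^2 = R_0^2$; this is essentially the paper's computation $2R''R-(R')^2 = 4R_0^2$ packaged more tightly, and your observation $\Lambda(\epsilon^\star)=\Lambda(0)=0$ gives a tidy concavity argument for the bound on $\epsilon_0$.

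For \ref{lem:second-derivative-Lambda0_c} the two proofs diverge. The paper works directly on the explicit formula for $\epsilon_0$: it rewrites $\epsilon_0$ as $R_0\bigl[(1+\lambda_x)(1+4\lambda_x/R_0^2)^{1/2}-(1-\lambda_x)\bigr]/\bigl[R_1^2(R_0^2+4\lambda_x)^{1/2}\bigr]$, applies $(1+s)^{1/2}\leq 1+s/2$, uses $4\lambda_x\leq(1+\lambda_x)^2$ to dominate $(R_0^2+4\lambda_x)^{1/2}$ by $R_1$, and finishes with $R_0\geq 2$. You instead exploit the concavity established in \ref{lem:second-derivative-Lambda0_a}: $\epsilon_0\leq\tilde\epsilon$ is equivalent to $\Lambda'(\tilde\epsilon)\leq 0$, which you reduce to the sign of $P(\tilde\epsilon)$, and you compute $(4\lambda_x+R_0^2)^2 P(\tilde\epsilon)=\lambda_x[\lambda_x(1-3\lambda_x)^2-R_0^2(2-3\lambda_x)]$ explicitly. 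I checked this expansion and the optimisation on $[0,1/2]$; both are right. Your route is structurally cleaner (it reuses the concavity you just proved rather than starting fresh from the root formula), whereas the paper's is shorter line-for-line and avoids any case split.

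One small imprecision worth fixing in your write-up: the assertion \emph{``When $\lambda_x\leq 1/2$ the right-hand side is non-negative, so squaring is allowed''} is not true as stated. The sign of $(1+\lambda_x)-\tilde\epsilon R_1^2$ for $\lambda_x$ just below $1/2$ depends on $R_0$ (at $\lambda_x=1/2$ it is already strictly negative, since $R_0^2(1-2\lambda_x)=0<\lambda_x(1+\lambda_x)(3\lambda_x-1)$). Fortunately this does not break the argument: whenever $P(\tilde\epsilon)\leq 0$, you get $(1-\lambda_x)^2R(\tilde\epsilon)\geq[(1+\lambda_x)-\tilde\epsilon R_1^2]^2$, hence $(1-\lambda_x)R^{1/2}(\tilde\epsilon)\geq\bigl|(1+\lambda_x)-\tilde\epsilon R_1^2\bigr|\geq(1+\lambda_x)-\tilde\epsilon R_1^2$ irrespective of the sign. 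So your split at $\lambda_x=1/2$ is a \emph{sufficient} partition even though neither case is characterised exactly by that threshold; you should state the squaring step as a one-sided implication rather than an equivalence, and drop the claim that the threshold is ``precisely'' $\lambda_x=1/2$.
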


\begin{proof}
  \begin{enumerate}[wide, labelwidth=!, labelindent=0pt]
  \item 
  We differentiate $\epsilon\mapsto-2\Lambda(\epsilon)=-[1-\epsilon(1-\lambda_{x})]+R^{\half}(\epsilon)$
twice, yielding the first order derivative 
\[
\epsilon \mapsto (1-\lambda_{x})+(1/2)R'(\epsilon)R^{-\half}(\epsilon)
\]
and the second order derivative follows
\[
  \epsilon \mapsto 
  (1/4)R^{-3/2}(\epsilon)\left(2R''(\epsilon)R(\epsilon)-[R'(\epsilon)]^{2}\right) \eqsp\eqsp.\]
Now from \eqref{eq:def_R}, $R(\epsilon)=a\psi(\epsilon)$ with $\psi(\epsilon)=(\epsilon-b)^{2}+c$
with all constants $b,c$ non-negative. Further $\psi'(\epsilon)=2(\epsilon-b)$
and $\psi''(\epsilon)=2$ and therefore 
\begin{align}
2\psi''(\epsilon)\psi(\epsilon)-\psi'(\epsilon)^{2} & =4[(\epsilon-b)^{2}+c-(\epsilon-b)^{2}]=4c > 0 \eqsp,
\end{align}
which implies that $\Lambda''(\epsilon) \leq 0$ for any $\epsilon \geq 0$. 
\item 
From the concavity we deduce that $\epsilon_{0}$ is a maximum, and the
inequality on $\epsilon_{0}$ follows from the fact that this is required
for $\Lambda(\epsilon_{0})\geq0$. 
\item  Using that for any $s\geq 0$, $(1+s)^{\half} \leq 1+s/2$, and $4 \lambda_x \leq (1+\lambda_x)^2$, we get that
   \begin{multline}
    \epsilon_0 = R_0 \frac{(1+\lambda_x)(4\lambda_x/R_0^2+1)^{\half} - (1-\lambda_x)}{\big[(1+\lambda_x)^2+R_0^2\big](R_0^2 + 4 \lambda_x)^{\half}} \leq \frac{2 \lambda_x R_0 +2 \lambda_x (1+\lambda_x)/R_0}{\big[(1+\lambda_x)^2+R_0^2\big]^\half (R_0^2 + 4 \lambda_x)} \\\leq \frac{2 \lambda_x +2 \lambda_x (1+\lambda_x)/R_0^2}{R_0^2 + 4 \lambda_x} \eqsp.
  \end{multline}
The assumption $R_0 \geq 2$ completes the proof.
\end{enumerate}
\end{proof}
\begin{proposition}
\label{prop:bound-on-Phi-epsilon-star}The function $\tilde{\alpha}\colon\mathbb{R}_{+}\to\mathbb{R}_{+}$, defined by \eqref{eq:def_alpha_tilde},
has a unique maximizer $\epsilon^{\star}\in \ooint{0,\epsilon_{0}}$, where $\epsilon_0$ is given in \eqref{eq:def_eta_star}. In addition, if $2^\half R_{0}\geq\lambda_{v}$ 
then
\begin{equation}
  \label{eq:prop:bound-on-Phi-epsilon-star}
\tilde{\alpha}(\epsilon_{0})\leq\tilde{\alpha}(\epsilon^{\star})\leq3\tilde{\alpha}(\epsilon_{0})\eqsp.
\end{equation}
\end{proposition}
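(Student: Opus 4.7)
The plan is to analyze the critical points of $\tilde{\alpha}$ via the sign of the numerator of its derivative, and then use elementary monotonicity bounds to sandwich $\tilde{\alpha}(\epsilon^\star)$ between $\tilde{\alpha}(\epsilon_0)$ and $3\tilde{\alpha}(\epsilon_0)$.

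Set $g(\epsilon)=1+2^{\half}\lambda_v\epsilon$ so that $\tilde{\alpha}=\Lambda/g$. Its derivative has the sign of $h(\epsilon):=\Lambda'(\epsilon)g(\epsilon)-2^{\half}\lambda_v\Lambda(\epsilon)$. I first compute $h(0)=\Lambda'(0)=\lambda_x>0$ by \Cref{lem:derivativeLambda0}-\ref{lem:derivativeLambda0_b}, and $h(\epsilon_0)=-2^{\half}\lambda_v\Lambda(\epsilon_0)<0$ because $\Lambda'(\epsilon_0)=0$ and $\Lambda(\epsilon_0)>0$ by \Cref{lem:derivativeLambda0}-\ref{lem:derivativeLambda0_c}. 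Differentiating gives $h'(\epsilon)=\Lambda''(\epsilon)g(\epsilon)$, and since $g>0$ on $\rset_+$ and $\Lambda''<0$ on $\rset_+^*$ by \Cref{lem:second-derivative-Lambda0}-\ref{lem:second-derivative-Lambda0_a}, $h$ is strictly decreasing. Hence $h$ admits a unique zero $\epsilon^\star\in(0,\epsilon_0)$, which is therefore the unique critical point of $\tilde{\alpha}$ in $\rset_+^*$. Together with $\tilde{\alpha}(0)=0$ and $\tilde{\alpha}\geq 0$ on a neighbourhood of $\epsilon^\star$ (since $\Lambda\geq 0$ there), this $\epsilon^\star$ is the unique maximizer of $\tilde{\alpha}$.

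For the two-sided estimate \eqref{eq:prop:bound-on-Phi-epsilon-star}, the lower bound $\tilde{\alpha}(\epsilon_0)\leq\tilde{\alpha}(\epsilon^\star)$ is immediate from the maximizing property of $\epsilon^\star$. For the upper bound, I exploit the fact that $\Lambda$ is strictly increasing on $[0,\epsilon_0]$ (this is a direct consequence of \Cref{lem:derivativeLambda0}-\ref{lem:derivativeLambda0_b}-\ref{lem:derivativeLambda0_c} combined with the concavity from \Cref{lem:second-derivative-Lambda0}-\ref{lem:second-derivative-Lambda0_a}), so that $\Lambda(\epsilon^\star)\leq\Lambda(\epsilon_0)$. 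Since $g(\epsilon^\star)\geq g(0)=1$, I get
\begin{equation}
\tilde{\alpha}(\epsilon^\star)=\frac{\Lambda(\epsilon^\star)}{g(\epsilon^\star)}\leq\frac{\Lambda(\epsilon_0)}{1}=g(\epsilon_0)\,\tilde{\alpha}(\epsilon_0).
\end{equation}
It then suffices to show $g(\epsilon_0)\leq 3$ under the standing assumption $2^{\half}R_0\geq\lambda_v$. From the explicit form \eqref{eq:def_eta_star} and the elementary inequality $(1+\lambda_x)^2+R_0^2\geq 2(1+\lambda_x)R_0$ (AM-GM), I obtain $\epsilon_0\leq(1+\lambda_x)/[(1+\lambda_x)^2+R_0^2]\leq (2R_0)^{-1}$. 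Combined with $R_0\geq\lambda_v/2^{\half}$, this yields $2^{\half}\lambda_v\epsilon_0\leq 2^{\half}\lambda_v/(2R_0)\leq 1$, hence $g(\epsilon_0)\leq 2\leq 3$, which closes the argument.

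The main conceptual obstacle is the uniqueness of the critical point: $\tilde{\alpha}$ is a ratio of a concave function and a positive affine one, and a priori the set $\{\tilde{\alpha}'=0\}$ could be nontrivial. The trick is to not try to prove concavity of $\tilde{\alpha}$ itself (which would be false or at least awkward to verify), but rather to exhibit the monotonicity of the auxiliary function $h$, whose sign coincides with that of $\tilde{\alpha}'$; the cancellation $g''=0$ is what makes $h'=\Lambda''g$ tractable. Everything else is bookkeeping that relies only on results already established in \Cref{lem:derivativeLambda0} and \Cref{lem:second-derivative-Lambda0}.
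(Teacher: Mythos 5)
Your argument for the existence and uniqueness of $\epsilon^\star$ coincides with the paper's: both introduce the auxiliary function $h(\epsilon)=\Lambda'(\epsilon)g(\epsilon)-2^{\half}\lambda_v\Lambda(\epsilon)$ (the paper calls it $\Psi$), show it is strictly decreasing because $h'=\Lambda''g<0$, and use the sign change between $h(0)=\lambda_x>0$ and $h(\epsilon_0)=-2^{\half}\lambda_v\Lambda(\epsilon_0)<0$.

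Where you diverge is in the upper bound of \eqref{eq:prop:bound-on-Phi-epsilon-star}, and your route is genuinely different and, I think, more economical. The paper proceeds by a mean-value (Taylor) estimate: it bounds $\sup_{[\epsilon^\star,\epsilon_0]}|\tilde{\alpha}'|$ using the monotonicity of $\Psi$, writes $\tilde{\alpha}(\epsilon^\star)-\tilde{\alpha}(\epsilon_0)\le(\epsilon_0-\epsilon^\star)\sup|\tilde{\alpha}'|$, and then controls $2^{\half}\lambda_v\epsilon_0(1+2^{\half}\lambda_v\epsilon_0)\le2$. You instead exploit two plain monotonicity facts: $\Lambda$ is increasing on $[0,\epsilon_0]$ (since $\Lambda'$ is decreasing and vanishes at $\epsilon_0$) so $\Lambda(\epsilon^\star)\le\Lambda(\epsilon_0)$, and $g(\epsilon^\star)\ge1$, yielding
\begin{equation}
\tilde{\alpha}(\epsilon^\star)\le\Lambda(\epsilon_0)=g(\epsilon_0)\,\tilde{\alpha}(\epsilon_0)\le\big(1+2^{\half}\lambda_v\epsilon_0\big)\,\tilde{\alpha}(\epsilon_0)\le2\,\tilde{\alpha}(\epsilon_0)\le3\,\tilde{\alpha}(\epsilon_0)\eqsp,
\end{equation}
using $\epsilon_0\le(2R_0)^{-1}$ (AM--GM) and $2^{\half}R_0\ge\lambda_v$. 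This bypasses the Taylor step entirely and as a by-product gives the sharper constant $2$ rather than $3$. Both approaches rely on the same preparatory lemmas; the trade-off is that the paper's derivative estimate, while heavier, packages the same $2^{\half}\lambda_v\epsilon_0\le1$ control in a form that generalizes if one wanted a bound depending explicitly on $\epsilon_0-\epsilon^\star$, whereas yours is tailor-made for the two-sided comparison and delivers it more directly.
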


\begin{proof}
First note that for any $\epsilon \geq 0$,

\[
\tilde{\alpha}'(\epsilon)=\frac{\Psi(\epsilon)}{(1+2^\half\lambda_{v}\epsilon)^{2}}\eqsp,
\]
with
\begin{equation}
  \Psi(\epsilon)=\Lambda'(\epsilon)(1+2^\half \lambda_{v}\epsilon)-2^\half \lambda_{v}\Lambda(\epsilon)\eqsp.
\end{equation}
Then  from \Cref{lem:second-derivative-Lambda0}, for any $\epsilon \geq 0$
\begin{align}
  \label{eq:def_psi_optim}
 \Psi'(\epsilon)=(1+2^\half\lambda_{v}\epsilon)\Lambda''(\epsilon)<0 \text{ and }\Psi(\epsilon_0) = - 2^\half \lambda_{v}\Lambda(\epsilon_0) < 0.  
\end{align}
Together with $\Psi(0)=\Lambda'(0)=\lambda_{x}>0$, 
and the fact that $\epsilon\to\Psi(\epsilon)$ is continuous, we
deduce the existence and uniqueness of $\epsilon^{\star}\in(0,\epsilon_{0})$
satisfying $\tilde{\alpha}'(\epsilon^{\star})=0$, and maximizing $\tilde{\alpha}$
on $\mathbb{R}_{+}$. Further since $\tilde{\alpha}'(\epsilon^{\star})=0$ and
$\epsilon\mapsto\Psi(\epsilon)$ is non-increasing, using the first equality of \eqref{eq:def_psi_optim} and the definition of $\tilde{\alpha}$ given in \eqref{eq:def_alpha_tilde},  we deduce 
\[
\sup_{\epsilon\in[\epsilon^{\star},\epsilon_{0}]}\vert\tilde{\alpha}'(\epsilon)\vert\leq\frac{\vert\Psi(\epsilon_{0})\vert}{(1+2^\half\lambda_{v}\epsilon^{\star})^{2}}=2^\half\lambda_{v}\frac{1+2^\half\lambda_{v}\epsilon_{0}}{(1+2^\half\lambda_{v}\epsilon^{\star})^{2}}\tilde{\alpha}(\epsilon_{0})\eqsp,
\]
From Taylor's theorem, we obtain
\[
\tilde{\alpha}(\epsilon^{\star})-\tilde{\alpha}(\epsilon_{0})\leq(\epsilon_{0}-\epsilon^{\star})2^\half\lambda_{v}\frac{1+2^\half\lambda_{v}\epsilon_{0}}{(1+2^\half\lambda_{v}\epsilon^{\star})^{2}}\tilde{\alpha}(\epsilon_{0})\eqsp,
\]
from which we conclude that 
\[
\tilde{\alpha}(\epsilon_{0})\leq\tilde{\alpha}(\epsilon^{\star})\leq\left[1+(\epsilon_{0}-\epsilon^{\star})2^\half \lambda_{v}\frac{1+2^\half\lambda_{v}\epsilon_{0}}{(1+2^\half\lambda_{v}\epsilon^{\star})^2}\right]\tilde{\alpha}(\epsilon_{0})\eqsp.\]
Now if we use $2^\half R_{0}\geq\lambda_{v}$ we have by \eqref{eq:def_eta_star}
that 
\[
\lambda_{v}\epsilon_{0}<\frac{(1+\lambda_{x})\lambda_{v}}{(1+\lambda_{x})^{2}+R_0^2}\leq \lambda_v (2R_0)^{-1} \leq 2^{-\half} \eqsp,
\]
implying 
\[
(\epsilon_{0}-\epsilon^{\star})2^\half\lambda_{v}\frac{1+2^\half \lambda_{v}\epsilon_{0}}{(1+2^\half\lambda_{v}\epsilon^{\star})^{2}}\leq2^\half\lambda_{v}\epsilon_{0}(1+2^\half\lambda_{v}\epsilon_{0})\leq 2 \eqsp,
\]
which completes the proof of \eqref{eq:prop:bound-on-Phi-epsilon-star}.
\end{proof}


\section{Some results on closed operators on Hilbert spaces}
\label{ss:spctral bounds}

In this section we gather classical results concerning densely defined closed operators on a Hilbert space to which we repeatedly refer throughout the manuscript.

We start this section with a well-know result regarding the closure of anti-symmetric operators, for which a proof is given for completeness.
\begin{lemma}
  \label{lem:closure_anti_symm}
  Let $(\mct,\domain(\mct))$ be a densely defined anti-symmetric operator on a Hilbert space $\msh$, of inner product $\lang \cdot, \cdot \rang $ and induced norm $\norm{\cdot}$. In addition, let $\mca$ be a bounded operator on $\msh$. Then, $(\mct, \domain(\mct))$ and $(\mct \mca , \domain(\mct \mca))$ are closable operators. In particular, if $\mca \domain(\mct) \subset \domain(\mct)$, then $(\mct \mca , \domain(\mct))$ is closable.
\end{lemma}
\begin{proof}
  Since $\mct$ is densely defined, its adjoint
  $(\mct^{\star},\domain(\mct^{\star}))$ is well-defined and closed by
  \cite[Theorem 5.1.5]{pedersen1995analysis} and since $\mct$ is
  anti-symmetric, $(\mct^{\star},\domain(\mct^{\star}))$ is therefore
  a closed extension of $(\mct,\domain(\mct))$ which implies that
  $(\mct,\domain(\mct))$ is closable. Finally, it is easy to
  verify that $(\mct^{\star} \mca, \domain(\mct^{\star} \mca))$ is
  closed since $(\mct^{\star},\domain(\mct^{\star}))$ is and is an extension of $(\mct \mca, \domain(\mct \mca))$. This completes the proof. 
\end{proof}

\begin{proposition}
\label{prop:abstract bound}
Let $\calB$ be a closed and densely defined operator on a Hilbert space $\msh$ of inner product $\lang \cdot, \cdot \rang $, induced norm $\norm{\cdot}$ and operator norm $ \normop{\cdot}$.
\begin{enumerate}
\item \label{prop:abstract bound_a} $\Id+ \mcbb^{\star} \mcbb$ is a positive self-adjoint  operator on $\msh$ bijective from $\domain(\mcbb^{\star} \mcbb)$ to $\msh$. In addition, $(\Id+ \mcbb^{\star} \mcbb)^{-1}$ is a positive self-adjoint bounded operator on $\msh$ and $\mcbb (\Id+ \mcbb^{\star} \mcbb)^{-1}$ is a bounded operator.
\item \label{prop:abstract bound_b} For any $h \in \msh$,
 \begin{equation}
\| (\Id+\calB^{\star} \calB)^{-1} h \|^2 + 2 \, \| \calB (\Id+\calB^{\star} \calB)^{-1} h\|^2 \leq \| h \|^2 \eqsp.
 \end{equation}
\item \label{prop:abstract bound_c} $\mcbb^{\star} \mcbb (\Id+ \mcbb^{\star} \mcbb)^{-1}$ is a bounded operator on $\msh$ which satisfies
  \begin{equation}
    \normop{\mcbb^{\star} \mcbb (\Id+ \mcbb^{\star} \mcbb)^{-1}} \leq 1 \eqsp.
  \end{equation}
 \item \label{prop:abstract bound_d} The operator 
$((\Id + \calB^{\star} \calB)^{-1} \calB^{\star},\domain(\calB^{\star}))$  is closable, its closure is a bounded operator  and $\normopLine{\overline{(\Id + \calB^{\star} \calB)^{-1} \calB^{\star}}} \leq 1$. 
\end{enumerate}
\end{proposition}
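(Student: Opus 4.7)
The plan is to first establish \ref{prop:abstract bound_a}, which is the classical theorem of von Neumann, and then deduce the remaining parts as straightforward consequences together with one adjoint-duality argument.

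For \ref{prop:abstract bound_a}, the strategy is to exploit the graph of $\calB$. Since $\calB$ is closed, its graph $\Gamma(\calB) = \{(h, \calB h) : h \in \domain(\calB)\}$ is a closed subspace of $\msh \oplus \msh$. Introducing the unitary $V\colon \msh\oplus\msh \to \msh\oplus\msh$ defined by $V(h,k) = (-k, h)$, one has the orthogonal decomposition $\msh\oplus\msh = \Gamma(\calB) \oplus V\Gamma(\calB^\star)$, where $\calB^\star$ is densely defined because $\calB$ is. Applying this decomposition to $(h,0)$ with $h\in\msh$ yields unique $f\in\domain(\calB)$ and $g\in\domain(\calB^\star)$ with $h = f - \calB^\star g$ and $\calB f + g = 0$, so that $h = (\Id + \calB^\star\calB)f$. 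This identifies $\Id + \calB^\star\calB$ as a bijection $\domain(\calB^\star\calB) \to \msh$. Self-adjointness and positivity then follow from the identity $\langle(\Id + \calB^\star\calB)f, f\rangle = \|f\|^2 + \|\calB f\|^2$ together with the Hellinger--Toeplitz style characterization of self-adjointness via the resolvent set; the bounded inverse theorem gives that $(\Id + \calB^\star\calB)^{-1}$ is bounded and self-adjoint, and positivity is inherited. Boundedness of $\calB(\Id + \calB^\star\calB)^{-1}$ follows directly from \ref{prop:abstract bound_b}, which I establish next. This step is essentially a citation of \cite[Theorem 5.1.9]{pedersen1995analysis} (already used in the paper) and is the heart of the argument.

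For \ref{prop:abstract bound_b}, I set $u = (\Id + \calB^\star\calB)^{-1} h$, so that $h = u + \calB^\star\calB u$ with $u \in \domain(\calB^\star\calB) \subset \domain(\calB)$. Expanding the norm gives
\begin{equation}
\|h\|^2 = \|u\|^2 + 2\langle u, \calB^\star\calB u\rangle + \|\calB^\star\calB u\|^2 = \|u\|^2 + 2\|\calB u\|^2 + \|\calB^\star\calB u\|^2,
\end{equation}
and dropping the last non-negative term yields the claimed inequality. Part \ref{prop:abstract bound_c} follows from the algebraic identity $\calB^\star\calB(\Id + \calB^\star\calB)^{-1} = \Id - (\Id + \calB^\star\calB)^{-1}$ combined with the spectral-theoretic observation that, by \ref{prop:abstract bound_a}, the spectrum of $\Id + \calB^\star\calB$ lies in $[1,\infty)$, hence the spectrum of its inverse lies in $(0,1]$; therefore $\Id - (\Id + \calB^\star\calB)^{-1}$ is a positive contraction.

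For \ref{prop:abstract bound_d}, the key observation is that for any $f \in \domain(\calB^\star)$ and any $g \in \msh$, one has $(\Id + \calB^\star\calB)^{-1} g \in \domain(\calB^\star\calB) \subset \domain(\calB)$, so
\begin{equation}
\langle (\Id + \calB^\star\calB)^{-1}\calB^\star f, g\rangle = \langle \calB^\star f, (\Id + \calB^\star\calB)^{-1} g\rangle = \langle f, \calB(\Id + \calB^\star\calB)^{-1} g\rangle.
\end{equation}
This shows $(\Id + \calB^\star\calB)^{-1}\calB^\star \subset \bigl(\calB(\Id + \calB^\star\calB)^{-1}\bigr)^\star$, so the left-hand side is the restriction to the dense domain $\domain(\calB^\star)$ of a bounded operator whose norm equals $\|\calB(\Id + \calB^\star\calB)^{-1}\| \leq 1/\sqrt{2} \leq 1$ by \ref{prop:abstract bound_b}. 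The bounded linear transformation extension theorem \cite[Theorem I.7]{reed1972methods} then yields that the closure is everywhere defined, bounded, and of norm at most $1$.

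The main obstacle is genuinely only in \ref{prop:abstract bound_a}: once the von Neumann graph-decomposition argument is invoked, the remaining parts are short computations and adjoint duality. The other potentially delicate point is verifying density of $\domain(\calB^\star)$ used in \ref{prop:abstract bound_d}, which is standard for closed densely defined operators.
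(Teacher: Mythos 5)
Your proof is correct and follows essentially the same route as the paper's: part~\ref{prop:abstract bound_a} is the von Neumann theorem (the paper cites \cite[Theorem 5.1.9]{pedersen1995analysis}, which encodes the graph-decomposition argument you spell out); part~\ref{prop:abstract bound_b} is the same energy identity; and part~\ref{prop:abstract bound_d} is the same adjoint-duality computation, with your BLT-extension step equivalent to the paper's passage to the biadjoint. The one place you diverge is part~\ref{prop:abstract bound_c}: the paper obtains the bound $\normop{\Id - (\Id+\calB^{\star}\calB)^{-1}}\leq 1$ from the scalar identity $\absLigne{\langle \calB^{\star}\calB(\Id+\calB^{\star}\calB)^{-1}h,h\rangle} = \max\{\|h\|^2 - \langle (\Id+\calB^{\star}\calB)^{-1}h,h\rangle,\ \langle(\Id+\calB^{\star}\calB)^{-1}h,h\rangle - \|h\|^2\}$ together with $0\preceq (\Id+\calB^{\star}\calB)^{-1}\preceq \Id$, whereas you invoke the spectral mapping theorem (spectrum of $\Id+\calB^{\star}\calB$ in $[1,\infty)$, hence spectrum of $\Id-(\Id+\calB^{\star}\calB)^{-1}$ in $[0,1)$); both are standard and equally short, yours arguably being cleaner conceptually while theirs avoids invoking the spectral theorem for unbounded operators. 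One small caution on part~\ref{prop:abstract bound_a}: the phrase ``the bounded inverse theorem gives that $(\Id+\calB^{\star}\calB)^{-1}$ is bounded and self-adjoint'' is imprecise since $\Id+\calB^{\star}\calB$ is unbounded; the correct path, consistent with your Hellinger--Toeplitz remark, is that the (everywhere-defined) inverse is symmetric by a direct computation, hence bounded and self-adjoint by Hellinger--Toeplitz, and then $\Id+\calB^{\star}\calB$ is self-adjoint as the inverse of an injective bounded self-adjoint operator.
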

\begin{remark}
\label{rem:abstract bound}
 Note that under the condition of \Cref{prop:abstract bound}, we get that $(\Id + \calB^{\star} \calB)^{-1} \calB^{\star}$ can be extended to a bounded operator and 
\begin{equation}
\normop{(\Id + \calB^{\star} \calB)^{-1} } \le 1 \eqsp, \quad \normop{ \calB (\Id + \calB^{\star} \calB)^{-1} }  \le 1/2^{\half} \eqsp. 
 \end{equation} 
\end{remark}
\begin{proof}
  \ref{prop:abstract bound_a} and \ref{prop:abstract bound_b} follow from \cite[Theorem 5.1.9]{pedersen1995analysis} and inspection of the proof.   We now show \ref{prop:abstract bound_c}.

 First note that $(\Id + \calB^{\star} \calB -\Id) (\Id + \calB^{\star} \calB)^{-1} = \Id - (\Id + \calB^{\star} \calB)^{-1}$, from which we deduce that it is a self-adjoint and bounded operator by the triangle inequality with norm less or equal than $2$. To prove the tighter upper bound we use \cite[Proposition 3.2.27 p. 99]{pedersen1995analysis} (twice), the identity for any $h \in \msh$
 \begin{align}
   & \abs{\psH{ \calB^{\star} \calB (\Id + \calB^{\star} \calB)^{-1} h }{ h } }\\
   &\qquad \qquad = \max\defEns{\|h\|^2 -\psH{ (\Id + \calB^{\star} \calB)^{-1}h}{ h }, \psH{ (\Id + \calB^{\star} \calB)^{-1}h}{ h }- \norm{h}^2} \eqsp, 
 \end{align}
 that $(\Id + \calB^{\star} \calB)^{-1}$ is positive and $\vvvert (\Id + \calB^{\star} \calB)^{-1} \vvvert \leq 1$ from the first statement. 
 
It remains to prove \ref{prop:abstract bound_d}.  Since $\mcbb$ is closed and densily defined, $\domain(\mcbb^{\star})$ is dense and therefore $\{(\Id + \calB^{\star} \calB)^{-1} \calB^{\star}\}^{\star}$ is closed and densely defined by \cite[Theorem 5.1.5]{pedersen1995analysis}. By \ref{prop:abstract bound_a}, we have for any $h_1 \in \domain(\mcbb^{\star})$ and  $h_2 \in \msh$, we have
  \begin{equation}
    \psmu{(\Id + \calB^{\star} \calB)^{-1} \calB^{\star}h_1}{h_2} = \psmu{h_1}{\mcbb (\Id + \calB^{\star} \calB)^{-1} h_2} \eqsp,
  \end{equation}
  which implies that $\{(\Id + \calB^{\star} \calB)^{-1} \calB^{\star}\}^{\star} = \mcbb (\Id + \calB^{\star} \calB)^{-1}$. Therefore, the operator  $\{(\Id + \calB^{\star} \calB)^{-1} \calB^{\star}\}^{**}$ is bounded on $\msh$. The proof then follows by \cite[Theorem 5.1.5]{pedersen1995analysis}  which implies that $(\Id + \calB^{\star} \calB)^{-1} \calB^{\star}$ is closable and
  \begin{equation}
    \overline{(\Id + \calB^{\star} \calB)^{-1} \calB^{\star}} = ((\Id + \calB^{\star} \calB)^{-1} \calB^{\star})^{**} \eqsp.
  \end{equation}
\end{proof}
A similar result can be obtained by using that $\mcbb$ is closable only, as a consequence of the following lemma.
\begin{lemma}
  \label{lem:def_inverse_dms}
Assume that $(\mcbb,\domain(\mcbb))$ is a densely defined closable operator. Let $(\overline{\mcbb}, \domain(\overline{\mcbb}))$   be the closure of $(\mcbb,\domain(\mcbb))$ and $m>0$.  Then,
 the conclusions of \Cref{prop:abstract bound} hold changing $\mcbb$ to $\overline{\mcbb}$. 
\end{lemma}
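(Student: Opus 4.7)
The plan is to reduce the statement to a direct application of \Cref{prop:abstract bound}, after verifying that the closure $\overline{\mcbb}$ satisfies its hypotheses. The key observation is that $(\overline{\mcbb},\domain(\overline{\mcbb}))$ is closed by definition of the closure, and densely defined because $\domain(\mcbb) \subset \domain(\overline{\mcbb})$ and $\domain(\mcbb)$ is dense in $\msh$ by assumption. Moreover, since $\mcbb$ is densely defined and closable, its adjoint $\mcbb^\star$ is itself closed by \cite[Theorem 5.1.5]{pedersen1995analysis} and coincides with $(\overline{\mcbb})^\star$; in particular, $\overline{\mcbb}^\star \overline{\mcbb}$ is well defined in the same way as in \Cref{prop:abstract bound}.

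Once these hypotheses are verified, I would simply apply \Cref{prop:abstract bound} to the operator $\overline{\mcbb}$ (in place of $\mcbb$), yielding \textbf{verbatim} the items \ref{prop:abstract bound_a}--\ref{prop:abstract bound_d} with $\Id + \overline{\mcbb}^\star \overline{\mcbb}$ in place of $\Id + \mcbb^\star \mcbb$ and $\overline{\mcbb}$ in place of $\mcbb$ throughout.

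To handle a generic $m > 0$, I would then use a rescaling argument: consider $\mcbb' := m^{-1/2} \overline{\mcbb}$ with $\domain(\mcbb') = \domain(\overline{\mcbb})$. This operator is still closed and densely defined, with $(\mcbb')^\star = m^{-1/2} (\overline{\mcbb})^\star$, and satisfies the identity $\Id + (\mcbb')^\star \mcbb' = m^{-1}\bigl(m\Id + \overline{\mcbb}^\star \overline{\mcbb}\bigr)$. Applying \Cref{prop:abstract bound} to $\mcbb'$ therefore transfers the self-adjointness, bijectivity and bounded-inverse statements to $m\Id + \overline{\mcbb}^\star \overline{\mcbb}$, with the sharp operator-norm bounds of \Cref{rem:abstract bound} becoming $\normop{(m\Id+\overline{\mcbb}^\star \overline{\mcbb})^{-1}} \le m^{-1}$ and $\normop{\overline{\mcbb}(m\Id+\overline{\mcbb}^\star \overline{\mcbb})^{-1}} \le (2m)^{-1/2}$, as required for the application \eqref{eq:defcalA}.

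No serious obstacle is anticipated: once the density of $\domain(\overline{\mcbb})$ and the identification $(\overline{\mcbb})^\star = \mcbb^\star$ are in place, everything reduces to the already-established \Cref{prop:abstract bound}. The only mildly delicate point is ensuring that the bounds stated in items \ref{prop:abstract bound_b}--\ref{prop:abstract bound_d} scale correctly in $m$ under the transformation $\mcbb' = m^{-1/2}\overline{\mcbb}$; this is a bookkeeping exercise tracking where factors of $m^{1/2}$ appear when translating back from $\Id + (\mcbb')^\star\mcbb'$ to $m\Id + \overline{\mcbb}^\star \overline{\mcbb}$.
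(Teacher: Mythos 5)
Your proposal is correct and takes essentially the same route as the paper: both reduce the lemma to \Cref{prop:abstract bound} applied to the closed, densely defined operator $\overline{\mcbb}$, using that $(\overline{\mcbb})^\star = \mcbb^\star$ (the paper cites the same reference, \cite[Theorem 5.1.5]{pedersen1995analysis}, for this). The one difference is that the paper's proof is silent about the scaling constant $m>0$ that appears in the lemma statement (and in the downstream use with $m_2\Id + (\calT\Piv)^\star\overline{\calT\Piv}$), whereas you spell out the rescaling $\mcbb' = m^{-1/2}\overline{\mcbb}$ and track how the bounds transform; this is a correct and useful bit of bookkeeping that the published proof omits.
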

\begin{proof}
This result is a just a consequence of  \cite[Theorem 5.1.5]{pedersen1995analysis} which implies that $\mcbb^{\star}$ is densely defined, $\overline{\mcbb}=(\mcbb^{\star})^{\star}$ and $\mcbb^{\star} = \overline{\mcbb}^{\, \star}$.
\end{proof}

We would like to apply \Cref{prop:abstract bound} to the densely defined and closed operator $m^{-\half} \nabla_x$ for $m >0$, which does not fully fit in the framework of \Cref{prop:abstract bound} since it is an operator from $\mrl^2(\pi)$ to $\mrl^2(\pi)^d$.  This is easily fixed upon noting that the  operator $\nabla_x$ on $\mrl^2(\pi)$  can be extended as an operator on $\mrl^2(\pi)^d$ as follows: for any $\tilde{f} = (f_1,\ldots,f_d) \in \mrl^2(\pi)^d$, $f_1 \in \domain(\nabla_x)$, define $\mcbb \tilde{f} = m^{-\half} \nabla_x f_1$. Then, a direct consequence of \Cref{prop:abstract bound} applied to the operator $\mcbb$ for $m >0$, on $\mrl^2(\pi)^d$ is the following taking $\tilde{f} = (f,0,\ldots,0)$, for $f \in \mrl^2(\pi)$.
\begin{corollary}
 \label{coro:bound regularization}
 Let $m >0$. 
The operators $\nax (m \Id +\nax^{\star} \nax)^{-1}$ and $\nax^{\star} \nax (m \Id +\nax^{\star} \nax)^{-1}$ are bounded on $\mrl^2(\pi)^d$ with 
\begin{equation}
\normoppi{ \nax ( m \Id +\nax^{\star} \nax)^{-1} } \le 1/(2m)^{\half} \eqsp, \, \, \normoppi{\nax^{\star} \nax (m \Id +\nax^{\star} \nax)^{-1}} \leq 1 \eqsp.
\end{equation}
In addition, for any $\ff \in \mrl^2(\pi)$,
\begin{equation}
\label{eq:inequality lemma regul}
	\normmu{(m\Id +\nax^{\star} \nax)^{-1} \ff }^2 + (2/ m) \, \normmu{ \nax (m \Id +\nax^{\star} \nax)^{-1} \ff }^2 \le \{ \normmu{ \ff }/m\}^2 \eqsp,
 \end{equation}
 and
 \begin{equation}
 \label{eq:inequality lemma regul_b}
 \normmu{\nax^{\star} \nax (m \Id +\nax^{\star} \nax)^{-1}f} \leq \normmu{f} \eqsp.
 \end{equation}
\end{corollary}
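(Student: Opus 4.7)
The plan is to apply Proposition \ref{prop:abstract bound} to the operator $\mcbb = m^{-\half}\nabla_x$ viewed between the Hilbert spaces $\mrl^2(\pi)$ and $\mrl^2(\pi)^d$ (the proposition is stated on a single Hilbert space, but its proof via polar decomposition or \cite[Theorem~5.1.9]{pedersen1995analysis} carries through unchanged to closed densely defined operators between two Hilbert spaces). Since $\nabla_x$ is a closed and densely defined operator on $\mrl^2(\pi)$ (as recalled in the notation section), the same holds for $\mcbb = m^{-\half}\nabla_x$ for any $m>0$, so Proposition \ref{prop:abstract bound} applies.

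The first step is to rewrite $\Id + \mcbb^\star\mcbb = \Id + m^{-1}\nabla_x^\star\nabla_x = m^{-1}(m\Id + \nabla_x^\star\nabla_x)$, so that $(\Id + \mcbb^\star\mcbb)^{-1} = m(m\Id + \nabla_x^\star\nabla_x)^{-1}$. From this identity one computes
\begin{align*}
\mcbb(\Id + \mcbb^\star\mcbb)^{-1} &= m^{\half}\,\nabla_x(m\Id + \nabla_x^\star\nabla_x)^{-1}, \\
\mcbb^\star\mcbb(\Id + \mcbb^\star\mcbb)^{-1} &= \nabla_x^\star\nabla_x(m\Id + \nabla_x^\star\nabla_x)^{-1}.
\end{align*}
Then Proposition \ref{prop:abstract bound}-\ref{prop:abstract bound_a}-\ref{prop:abstract bound_c} (combined with Remark \ref{rem:abstract bound}) yields boundedness together with the two operator-norm bounds $\normoppi{\nabla_x(m\Id + \nabla_x^\star\nabla_x)^{-1}} \leq 1/(2m)^{\half}$ and $\normoppi{\nabla_x^\star\nabla_x(m\Id + \nabla_x^\star\nabla_x)^{-1}} \leq 1$, which gives the two claimed operator-norm statements.

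Next, for \eqref{eq:inequality lemma regul}, I would feed any $f\in\mrl^2(\pi)$ into the pointwise estimate of Proposition \ref{prop:abstract bound}-\ref{prop:abstract bound_b}, which reads
\begin{equation*}
\|(\Id + \mcbb^\star\mcbb)^{-1}f\|^2 + 2\|\mcbb(\Id + \mcbb^\star\mcbb)^{-1}f\|^2 \leq \|f\|^2.
\end{equation*}
Substituting the identifications above gives $m^2\normmu{(m\Id + \nabla_x^\star\nabla_x)^{-1}f}^2 + 2m\normmu{\nabla_x(m\Id + \nabla_x^\star\nabla_x)^{-1}f}^2 \leq \normmu{f}^2$, and dividing by $m^2$ produces exactly \eqref{eq:inequality lemma regul}. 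Finally, \eqref{eq:inequality lemma regul_b} is just the operator-norm bound $\normoppi{\nabla_x^\star\nabla_x(m\Id + \nabla_x^\star\nabla_x)^{-1}} \leq 1$ applied to $f$.

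There is essentially no obstacle, as the result is a rescaling of Proposition \ref{prop:abstract bound}; the only mild subtlety is clarifying that the relevant operator $\nabla_x$ maps $\mrl^2(\pi) \to \mrl^2(\pi)^d$, which is why two different Hilbert spaces appear in the statement. I would simply note that the conclusions of Proposition \ref{prop:abstract bound} remain valid in this two-Hilbert-space setting, since its proof relies only on the fact that $\mcbb^\star\mcbb$ is a nonnegative self-adjoint operator on the source space, a fact that also holds for closed densely defined $\mcbb$ between distinct Hilbert spaces.
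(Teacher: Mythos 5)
Your proposal is correct and is essentially the paper's own argument: the corollary is obtained by applying \Cref{prop:abstract bound} to the rescaled operator $m^{-\half}\nax$ and unwinding the identity $(\Id+m^{-1}\nax^{\star}\nax)^{-1}=m(m\Id+\nax^{\star}\nax)^{-1}$, exactly as you do. The only cosmetic difference is how the domain/codomain mismatch is handled — the paper extends $\nax$ to an operator from $\mrl^2(\pi)^d$ to itself (acting on the first component) so as to stay within the single-Hilbert-space setting of \Cref{prop:abstract bound}, whereas you invoke the two-Hilbert-space version of that proposition; both are legitimate.
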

We conclude this section by the following  results  which can be found in \cite{GS2014}.
     \begin{lemma}[\protect{\cite[Lemma 2.2]{GS2014}}] 
        \label{lem:grothaus_lemme_2_3}
        Let $(\mct,\domain(\mct))$ be a anti-symmetric operator on $\rml^2(\mu)
        $ and $\Pi$ be an orthogonal projection on $\mrl^2(\mu)$. Assume that there exists $\msdd \subset \domain(\mct)$ such that $\Pi(\msdd) \subset \domain(\mct)$ and $\msdd$ is dense in $\mrl^2(\mu)$. Then the following statements hold.
        \begin{enumerate}[label=(\alph*)]
        \item         \label{lem:grothaus_lemme_2_3_a} $\domain(\mct) \subset \domain((\mct \Pi)^{\star})$ and for any $f \in \domain(\mct)$, $(\mct \Pi)^{\star}f= -\Pi \mct f$.
        \item \label{lem:grothaus_lemme_2_3_b} For any $f \in \domain((\mct \Pi)^{\star})$, $\Pi(\mct \Pi)^{\star} f = (\mct \Pi)^{\star}f$.
        \end{enumerate}
      \end{lemma}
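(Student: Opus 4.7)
The plan is to unpack the definitions and use anti-symmetry of $\mct$ together with self-adjointness of $\Pi$. Throughout, recall that $\domain(\mct\Pi)=\{h\in\mrl^2(\mu):\Pi h\in\domain(\mct)\}$. Since $\Pi(\msdd)\subset\domain(\mct)$, we have $\msdd\subset\domain(\mct\Pi)$, and density of $\msdd$ implies that $\mct\Pi$ is densely defined, so $(\mct\Pi)^\star$ is well defined as a (possibly unbounded) operator on $\rml^2(\mu)$.

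For (a), fix $f\in\domain(\mct)$ and let $h\in\domain(\mct\Pi)$ be arbitrary. By definition $\Pi h\in\domain(\mct)$, and since both $\Pi h$ and $f$ lie in $\domain(\mct)$, the anti-symmetry of $\mct$ yields $\psmu{\mct\Pi h}{f}=-\psmu{\Pi h}{\mct f}$. Self-adjointness of the orthogonal projector $\Pi$ then gives $\psmu{\Pi h}{\mct f}=\psmu{h}{\Pi\mct f}$, so that $\psmu{\mct\Pi h}{f}=\psmu{h}{-\Pi\mct f}$. Because $-\Pi\mct f\in\rml^2(\mu)$ is a fixed element and the identity holds for every $h\in\domain(\mct\Pi)$, by the definition of the adjoint $f\in\domain((\mct\Pi)^\star)$ and $(\mct\Pi)^\star f=-\Pi\mct f$.

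For (b), the key remark is that $\kernel(\Pi)\subset\domain(\mct\Pi)$: indeed, for $h\in\kernel(\Pi)$ one has $\Pi h=0\in\domain(\mct)$, and moreover $\mct\Pi h=\mct\, 0=0$. Consequently, for any $f\in\domain((\mct\Pi)^\star)$ and any $h\in\kernel(\Pi)$,
\begin{equation}
\psmu{h}{(\mct\Pi)^\star f}=\psmu{\mct\Pi h}{f}=0.
\end{equation}
Hence $(\mct\Pi)^\star f\in\kernel(\Pi)^\perp$. Since $\Pi$ is an orthogonal projection on the Hilbert space $\rml^2(\mu)$, one has $\kernel(\Pi)^\perp=\range(\Pi)=\{g\in\rml^2(\mu):\Pi g=g\}$, which immediately gives $\Pi(\mct\Pi)^\star f=(\mct\Pi)^\star f$, as required.

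The argument is essentially bookkeeping on domains; the only non-obvious point is that the identity in part (a) must be verified on all of $\domain(\mct\Pi)$ (not only on the dense subset $\msdd$) in order to invoke the definition of the adjoint, but this follows immediately because for any $h\in\domain(\mct\Pi)$ the element $\Pi h$ already lies in $\domain(\mct)$, so anti-symmetry of $\mct$ applies directly without any closure argument. No further technical obstacle is anticipated.
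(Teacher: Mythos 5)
Your proof is correct and is the natural direct argument. Part (a) verifies the defining adjoint identity on all of $\domain(\mct\Pi)$ using anti-symmetry of $\mct$ and self-adjointness of $\Pi$; part (b) exploits that $\kernel(\Pi)\subset\domain(\mct\Pi)$ with $\mct\Pi$ vanishing there, so $(\mct\Pi)^\star f\perp\kernel(\Pi)=\range(\Pi)^\perp$. The paper itself does not reproduce a proof (it cites \cite[Lemma 2.2]{GS2014}), so there is nothing to compare against beyond noting that your argument tacitly relies on the paper's convention that, when unspecified, $\domain(\mct\Pi)$ is the maximal domain $\{h:\Pi h\in\domain(\mct)\}$; this is what makes $\kernel(\Pi)\subset\domain(\mct\Pi)$ available in part (b). (Were the adjoint instead taken with respect to $(\mct\Pi,\msdd)$, as is done elsewhere in the paper, one would need $\kernel(\Pi)$ to interact with $\msdd$ or to argue via the closure; but for the lemma as stated your reading is the right one.)
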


\section{Elliptic regularity estimates}
\label{sec:ellipt-regul-estim}


We preface this section with some complements on the adjoint of $\nabla_x$ seen as an operator on $\mrl^2(\pi)^d$.

\begin{lemma} \label{lem:adjoint-nax}
  Assume \Cref{as:U}. 
Consider the operator $(\nabla_x,\domain(\nax))$ from the Hilbert space $\mrl^2(\pi)$ to $\rml^2(\pi)^d$ endowed with the inner product defined by \eqref{eq:def-L2-inner-product}. Then it holds
\begin{enumerate}[label=(\alph*)]
\item \label{lem:adjoint-nax_a_0} for any $i \in \{1,\ldots,d\}$, the $\mrl^2(\pi)$-adjoint of $\partial_{x_i}$ is given for any $g \in \rmC^{1}_{\poly}(\msx)$ by 
\begin{equation}
	\partial_{x_i}^{\star} g = - \partial_{x_i} g + g \partial_{x_i} U \eqsp;
\end{equation}
\item \label{lem:adjoint-nax_a} the $\mrl^2(\pi)$-adjoint of $\nax$ is given for any $G \in \rmC^1_{\poly}(\msx,\rset^d)$ by 
\begin{equation}
	\nax^{\star}G = -\divx G + \nabla_x U^\top G \eqsp.
\end{equation}
\end{enumerate}
\end{lemma}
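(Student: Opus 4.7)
The plan is to establish (a) by an integration by parts against a suitable core of $\partial_{x_i}$, check the resulting functional is $\mathrm{L}^2(\pi)$-continuous in the test function, and then invoke the definition of the adjoint. Statement (b) will follow from (a) by splitting $G = (G_1,\ldots,G_d)$ and summing the contributions from the $d$ component-wise adjoints.

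First, let $\mathrm{C}^1_{\mathrm{c}}(\msx)$ denote the set of compactly supported $\mathrm{C}^1$ functions. This set is dense in $\mathrm{L}^2(\pi)$ and, by standard truncation and mollification arguments combined with the definition of the closed extension of $(\nabla_x,\mathrm{C}^1_{\mathrm{b}}(\msx))$ recalled in the notation section, $\mathrm{C}^1_{\mathrm{c}}(\msx)$ is a core for $\partial_{x_i}$. For $f \in \mathrm{C}^1_{\mathrm{c}}(\msx)$ and $g \in \mathrm{C}^1_{\mathrm{poly}}(\msx)$, a direct integration by parts (with no boundary term since $f$ is compactly supported) gives
\begin{equation}
\langle \partial_{x_i} f, g \rangle_\pi = \frac{1}{Z}\int_{\msx} (\partial_{x_i} f)\, g\, \rme^{-U} \, \rmd x = \frac{1}{Z}\int_{\msx} f \bigl(-\partial_{x_i} g + g\, \partial_{x_i} U\bigr) \rme^{-U} \, \rmd x = \langle f, h_i\rangle_\pi,
\end{equation}
where $h_i := -\partial_{x_i} g + g\, \partial_{x_i} U$. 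The second step is to verify that $h_i \in \mathrm{L}^2(\pi)$: since $g \in \mathrm{C}^1_{\mathrm{poly}}(\msx)$ and, by \Cref{as:U}, $U \in \mathrm{C}^3_{\mathrm{poly}}(\msx)$, both $\partial_{x_i} g$ and $g\,\partial_{x_i} U$ have polynomial growth; the moment estimate \eqref{eq:borne_moment} (which follows from the Poincaré inequality implied by \Cref{as:U}-\ref{item:liminf}) ensures that every polynomial lies in $\mathrm{L}^2(\pi)$, so $h_i \in \mathrm{L}^2(\pi)$.

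With $h_i \in \mathrm{L}^2(\pi)$, the map $f \mapsto \langle \partial_{x_i} f, g\rangle_\pi = \langle f, h_i\rangle_\pi$ is continuous for the $\mathrm{L}^2(\pi)$-topology on $\mathrm{C}^1_{\mathrm{c}}(\msx)$, hence on the whole domain of $\partial_{x_i}$ by closability and the fact that $\mathrm{C}^1_{\mathrm{c}}(\msx)$ is a core. By definition of the adjoint, this yields $g \in \mathrm{Dom}(\partial_{x_i}^\star)$ and $\partial_{x_i}^\star g = h_i$, which is (a).

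For (b), consider $G = (G_1,\ldots,G_d)^\top \in \mathrm{C}^1_{\mathrm{poly}}(\msx,\mathbb{R}^d)$ and $f$ in the core of $\nabla_x$. Then $\langle \nabla_x f, G\rangle_\pi = \sum_{i=1}^d \langle \partial_{x_i} f, G_i\rangle_\pi$, and applying (a) componentwise gives $\langle \nabla_x f, G\rangle_\pi = \langle f, \sum_{i=1}^d (-\partial_{x_i} G_i + G_i \partial_{x_i} U)\rangle_\pi = \langle f, -\mathrm{div}_x G + (\nabla_x U)^\top G\rangle_\pi$, proving the claimed formula. The only non-routine point in the whole argument is the $\mathrm{L}^2(\pi)$-integrability of the polynomial-growth expression $h_i$, which is handled cleanly by the exponential moment estimate \eqref{eq:borne_moment}.
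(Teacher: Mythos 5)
Your proof is correct and takes essentially the same approach as the paper, whose proof of this lemma is the single sentence ``The proof just follows by integration by parts.'' You have simply filled in the routine density, integrability, and closability checks that the paper leaves implicit.
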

\begin{remark}
Note that \Cref{lem:adjoint-nax} implies that for any $g \in \rmC^{2}_{\poly}(\msx)$ and $G \in \rmC^{2}_{\poly}(\msx,\rset^d)$, we have 
\begin{equation}
 \label{eq:commutation_nabla}
\nax^{\star} \nax g =-\Delta_x g +\nax U^\top\nax g \text{ and }
\nax \nax^{\star} G =\nax^{\star}\nax G + \nax^{2}U G \eqsp,
\end{equation}
where we have defined $\nax^{\star}\nax G \in \rmC_{\poly}(\mse,\rset^d)$ for any $(x,v) \in \mse$ and $i \in \{1,\ldots,d\}$ by
\begin{equation}
 \label{eq:1}
\{ \nax^{\star}\nax G(x,v)\}_i = \nax^{\star} \partial_{x_i} G(x,v) = \sum_{j=1}^d - \partial_{x_j,x_i} G_j(x,v) + \partial_{x_j}U(x) \partial_{x_i}G(x,v) \eqsp.
\end{equation} 
\end{remark}
\begin{proof}
The proof just follows by integration by parts.
\end{proof}




\begin{proposition}
Let $m >0$ and assume \Cref{as:U}. 
\label{prop:bound hessian}
Then for any $f \in \rmCb^2(\mse)$,
\begin{equation}
 \label{eq:def_kappa_a}
	\normmuLine{ \nax^2 (m\Id +\nax^{\star} \nax)^{-1} \Piv f } \le \kappavara \normmuLine{\Piv f} \quad {\rm where} \quad \kappavara = (1 + c_1/(2m))^{\half} \eqsp.
\end{equation}
\end{proposition}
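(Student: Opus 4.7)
The plan is to reduce the claim to a Bochner-type identity together with the boundedness estimates already established in \Cref{coro:bound regularization}. Let $\Piv f \in \rmCb^{2}(\mse)$ and set $u = (m\Id + \nabla_x^{\star}\nabla_x)^{-1} \Piv f$. By (a straightforward scaled version of) \Cref{lem:relation_mct_mcd}-\ref{lem:relation_mct_mcd_c} we have $u \in \rmC^{3}_{\poly}(\msx)$, and it satisfies the elliptic equation $mu + \nabla_x^{\star}\nabla_x u = \Piv f$.

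First I would establish the Bochner identity
\begin{equation*}
\normmu{\nabla_x^{2} u}^{2} + \psmu{\nabla_x u}{\nabla_x^{2}U \, \nabla_x u} = \normmu{\nabla_x^{\star}\nabla_x u}^{2} \eqsp.
\end{equation*}
This is obtained by applying the commutator identity \eqref{eq:commutation_nabla}, namely $\nabla_x\nabla_x^{\star}G = \nabla_x^{\star}\nabla_x G + (\nabla_x^{2}U)G$, to the vector field $G = \nabla_x u \in \rmC^{2}_{\poly}(\msx,\rset^d)$, pairing with $\nabla_x u$ in $\mrl^{2}(\pi)^{d}$, and using $\psmu{\nabla_x\nabla_x^{\star}\nabla_x u}{\nabla_x u} = \normmu{\nabla_x^{\star}\nabla_x u}^{2}$ together with $\psmu{\nabla_x^{\star}\nabla_x \nabla_x u}{\nabla_x u} = \normmu{\nabla_x^{2}u}^{2}$. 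Alternatively, one can perform the integrations by parts directly on $\sum_{i,j}\int(\partial_{x_i x_j}u)^{2}\rme^{-U}\rmd x$; polynomial growth of $u$ and its derivatives together with the exponential moment bound \eqref{eq:borne_moment} justify discarding the boundary terms.

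Next I would use \Cref{as:U}-\ref{item:condition_hessian}, which gives $\nabla_x^{2}U \succeq -c_{1}\Idd$ pointwise, hence
\begin{equation*}
-\psmu{\nabla_x u}{\nabla_x^{2}U\,\nabla_x u} \leq c_{1}\normmu{\nabla_x u}^{2} \eqsp,
\end{equation*}
and therefore $\normmu{\nabla_x^{2}u}^{2} \leq \normmu{\nabla_x^{\star}\nabla_x u}^{2} + c_{1}\normmu{\nabla_x u}^{2}$. It remains to insert the two regularity bounds furnished by \Cref{coro:bound regularization} applied to $\Piv f$:
\begin{equation*}
\normmu{\nabla_x^{\star}\nabla_x u} \leq \normmu{\Piv f}\eqsp,\qquad \normmu{\nabla_x u} \leq (2m)^{-\half}\normmu{\Piv f}\eqsp.
\end{equation*}
Combining yields $\normmu{\nabla_x^{2}u}^{2} \leq (1 + c_{1}/(2m))\normmu{\Piv f}^{2}$, which is exactly the desired estimate with $\kappavara = (1+c_{1}/(2m))^{\half}$. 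The claim for general $f \in \rmCb^{2}(\mse)$ follows since the left-hand side depends on $f$ only through $\Piv f$. The main (mild) obstacle is the rigorous justification of the Bochner integration by parts, which is handled by the $\rmC^{3}_{\poly}$ regularity of $u$ and the exponential moment \eqref{eq:borne_moment}; the rest is an algebraic combination of already proved bounds.
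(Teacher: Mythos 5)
Your proposal is correct and follows essentially the same route as the paper: set $u=(m\Id+\nabla_x^\star\nabla_x)^{-1}\Piv f$, invoke $\rmC^3_{\poly}$ regularity, derive the Bochner identity $\normmu{\nabla_x^2 u}^2=\normmu{\nabla_x^\star\nabla_x u}^2-\psmu{\nabla_x u}{\nabla_x^2 U\,\nabla_x u}$ via the commutation relation \eqref{eq:commutation_nabla} and integration by parts, then apply \Cref{as:U}-\ref{item:condition_hessian} together with the two bounds of \Cref{coro:bound regularization}. This is exactly the argument the paper gives.
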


\begin{proof}
 
Let $f \in \rmCb^{2}(\mse)$ and consider $u =(m \Id+\nax^{\star} \nax)^{-1} \Piv f$. By \cite[Theorem 2]{pardoux2001}, $u \in \rmC^{3}_{\poly}(\msx)$. Therefore we obtain by \eqref{eq:commutation_nabla}, \eqref{eq:borne_moment} and the fact that $U \in \rmC^3_{\poly}(\msx)$ using \Cref{as:U}, 
\begin{align}
\normmuLine{ \nax^2 u }^2 &= \pspiLine{ \nax^2 u} { \nax^2 u } = \pspiLine{ \nax u}{ ( \nax^{\star} \nax)[ \nax u] }
\\ 
                          &= \pspiLine{ \nax u}{ (\nax \nax^{\star})[ \nax u] - \nax^2 U \nax u } \\
 \label{eq:borne_nax_u_2}
& = 	\normmuLine{\nax^{\star} \nax u}^2 -\pspiLine{ \nax u}{ \nax^2 U \nax u} \eqsp.
\end{align}
From the definition of $u$, using \Cref{coro:bound regularization} and \Cref{as:U}-\ref{item:condition_hessian} we conclude that
\begin{equation}
\normmuLine{\nax^2 u}^2 \leq \normmuLine{ \Piv f }^2 + c_1 \normmuLine{\nax u}^2 
 \leq \normmuLine{f }^2 + c_1 \normmuLine{\Piv f }^2 /(2m) \eqsp.
\end{equation}
\end{proof}

In order to bound terms of the form $\| F_k^\top \nax u \|$ in \Cref{app:bound RT RS} we need the following Lemma which is a quantitative version of \cite[Lemma 6]{Dolbeault15}.
Consider the function $W : \rset^d \to \rset_+$ defined for any $x \in \rset^d$ by
\begin{equation}
 \label{eq:def_W}
 W(x) = \defEns{1+\abs{\nax U(x)}^2}^{\half} \eqsp.
\end{equation}
\begin{lemma}[\protect{\cite[Lemma 6]{Dolbeault15}}]
\label{lemma:dms 6}
 Assume \Cref{as:U}. Then for any $\varphi \in \domain(\nax)$,
\begin{equation}
\normmu{ \nax \varphi } \geq \left[ 4\left(1 + c_2 d^{1+\varpi} /(4\Cp^2) \right)^{\half} \right]^{-1} \normmu{ \varphi \nax U } \eqsp,
\end{equation}
where $c_2$ and $\Cp$ are defined in \eqref{eq:laplacian bound} and \eqref{eq:poincare_assumption} respectively. 
As a corollary, it holds for any $\varphi \in \domain(\nax)$,
\begin{align}
 \label{eq:def_kappab}
  \normmu{ \nax \varphi } &\ge \kappavarb \normmu{ \varphi W } \eqsp,\\
  {\rm where \, \, \, } \kappavarb^{-1} &= \left( \Cp^{-2} + 16(1 + c_2 d^{1+\varpi} /(4\Cp^2)) \right)^{\half} \\ &= \Cp^{-1} \left(1+4 c_2 d^{1+\varpi}  + 16 \Cp^2 \right)^{\half} \ge \Cp^{-1} \eqsp.
\end{align}
\end{lemma}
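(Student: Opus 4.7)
The plan is to derive the stated inequality via an integration-by-parts argument applied to $\int_{\rset^d}\varphi^{2}\vert\nabla_{x}U\vert^{2}\,\rmd\pi$, followed by the Laplacian bound \eqref{eq:laplacian bound} and the Poincaré inequality \eqref{eq:poincare_assumption}. I would first treat the case of $\varphi\in\rmC_{\rmc}^{\infty}(\msx)$ orthogonal to constants (i.e.\ with $\int\varphi\,\rmd\pi=0$) and extend to the general $\varphi\in\domain(\nabla_{x})$ (with the same orthogonality condition) by density; note that for $\varphi=\bigone$ the claim is vacuously true only if we interpret it as being applied to zero-mean functions, which is the relevant setting for the paper's applications (see \Cref{lem:relation_mct_mcd} where $u_{f}$ is constructed from $\Pi_{v}f\in\rml_{0}^{2}(\pi)$).

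The key identity is the elementary integration by parts, valid for $\varphi\in\rmC_{\rmc}^{\infty}(\msx)$, which uses $\vert\nabla_{x}U\vert^{2}\rme^{-U}=-\nabla_{x}U\cdot\nabla_{x}\rme^{-U}$:
\begin{equation}
\int_{\rset^{d}}\varphi^{2}\vert\nabla_{x}U\vert^{2}\,\rmd\pi=2\int_{\rset^{d}}\varphi\,\nabla_{x}\varphi\cdot\nabla_{x}U\,\rmd\pi+\int_{\rset^{d}}\varphi^{2}\,\Delta_{x}U\,\rmd\pi\eqsp.
\end{equation}
Inserting the upper bound $\Delta_{x}U\leq c_{2}d^{1+\varpi}+\vert\nabla_{x}U\vert^{2}/2$ from \eqref{eq:laplacian bound} and absorbing the $\vert\nabla_{x}U\vert^{2}/2$ term on the left-hand side yields
\begin{equation}
\tfrac{1}{2}\normmu{\varphi\nabla_{x}U}^{2}\leq 2\normmu{\nabla_{x}\varphi}\normmu{\varphi\nabla_{x}U}+c_{2}d^{1+\varpi}\normmu{\varphi}^{2}\eqsp,
\end{equation}
by Cauchy--Schwarz on the cross term. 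This is a quadratic inequality in $X=\normmu{\varphi\nabla_{x}U}$ whose admissible root satisfies $X\leq 2\normmu{\nabla_{x}\varphi}+\sqrt{4\normmu{\nabla_{x}\varphi}^{2}+2c_{2}d^{1+\varpi}\normmu{\varphi}^{2}}$. The first statement then follows by applying the Poincaré inequality \eqref{eq:poincare_assumption} to control $\normmu{\varphi}^{2}$ by $\Cp^{-1}\normmu{\nabla_{x}\varphi}^{2}$, factoring $\normmu{\nabla_{x}\varphi}$ out and using the elementary bound $1+\sqrt{1+y}\leq 2\sqrt{1+y}$ for $y\geq 0$.

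For the corollary, I would use $W^{2}=1+\vert\nabla_{x}U\vert^{2}$ to write $\normmu{\varphi W}^{2}=\normmu{\varphi}^{2}+\normmu{\varphi\nabla_{x}U}^{2}$, then bound each term separately: $\normmu{\varphi}^{2}\leq\Cp^{-1}\normmu{\nabla_{x}\varphi}^{2}$ by Poincaré and $\normmu{\varphi\nabla_{x}U}^{2}\leq 16(1+c_{2}d^{1+\varpi}/(4\Cp^{2}))\normmu{\nabla_{x}\varphi}^{2}$ by the first part. Summing and taking square roots produces the stated $\kappavarb$. The density extension to all $\varphi\in\domain(\nabla_{x})$ uses that $\rmC_{\rmc}^{\infty}(\msx)$ is a core and that both sides of the inequality are continuous in the graph norm.

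The main technical obstacle is managing the interplay between the two Cauchy--Schwarz/Young applications and the Poincaré step to obtain the precise constants advertised in \eqref{eq:def_kappab}: a naive split gives $\kappavarb^{-2}$ of the form $\Cp^{-1}+16+4c_{2}d^{1+\varpi}/\Cp^{2}$, whereas the stated expression $\Cp^{-2}+16(1+c_{2}d^{1+\varpi}/(4\Cp^{2}))$ requires extra care in the quadratic-inequality step (either a different choice of Young parameter or an additional use of Poincaré on an intermediate term). Care must also be taken that all boundary terms vanish when integrating by parts, which follows from $U\in\rmC^{3}_{\poly}(\msx)$ together with \eqref{eq:borne_moment} after a standard cut-off-and-limit argument on $\rmC_{\rmc}^{\infty}(\msx)$.
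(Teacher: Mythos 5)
Your argument is essentially the one the paper gives. Both proofs reduce to $\varphi \in \rmC^\infty_{\rmc}(\msx)$, expand $\normmu{\varphi\nax U}^2$ via the integration-by-parts (carré du champ) identity $\normmu{\varphi\nax U}^2 = 2\psmuLine{\varphi\nax U}{\nax\varphi} + \psmuLine{\varphi^2}{\Delta_x U}$, insert the Laplacian bound \eqref{eq:laplacian bound}, absorb the $\vert\nax U\vert^2/2$ term, and finish with a Young step plus Poincaré. The only substantive difference is the last step: you solve the quadratic inequality in $X=\normmu{\varphi\nax U}$ and invoke $1+\sqrt{1+y}\leq 2\sqrt{1+y}$, whereas the paper applies Young with the fixed parameter $\varepsilon=1/4$ to $2\psmuLine{\varphi\nax U}{\nax\varphi}\leq\varepsilon^{-1}\normmu{\nax\varphi}^2+\varepsilon\normmu{\varphi\nax U}^2$, cancels $\normmu{\varphi\nax U}^2/4$ against $\normmu{\varphi\nax U}^2/2$, and rearranges. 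The latter yields the prefactor $1+c_2 d^{1+\varpi}/(4\Cp^2)$; solving the quadratic and using $1+\sqrt{1+y}\leq 2\sqrt{1+y}$ loses a factor of two inside the square root, giving $2\Cp^2$ in the denominator instead of $4\Cp^2$. This is the ``extra care'' you suspected.

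Your two caveats are both warranted, and neither signals a gap in your own reasoning. First, the mismatch between $\Cp^{-1}$ and $\Cp^{-2}$ is an internal inconsistency in the paper, not a subtlety you are missing: the paper's proof replaces $c_2 d^{1+\varpi}\normmu{\varphi}^2$ by $(c_2 d^{1+\varpi}/\Cp^2)\normmu{\nax\varphi}^2$, which would require $\normmu{\nax\varphi}^2\geq\Cp^2\normmu{\varphi}^2$, while \eqref{eq:poincare_assumption} is stated with exponent one and is used that way elsewhere (e.g.\ in \Cref{lem:linkPoincare-DMSabstract}). Taking \eqref{eq:poincare_assumption} at face value, your $\Cp^{-1}$ is the correct power. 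Second, the zero-mean restriction is indeed needed: for $\varphi=\bigone$ one has $\nax\varphi=0$ while $\normmu{\varphi\nax U}=\normmu{\nax U}>0$ unless $U$ is constant, so the inequality as stated fails; the proof applies \eqref{eq:poincare_assumption} to a generic $\varphi\in\rmC^\infty_{\rmc}(\msx)$ without noting that this is only licit when $\int\varphi\,\rmd\pi=0$. Be aware, though, that simply restricting to zero-mean $\varphi$ does not immediately rescue the downstream use in \Cref{coro:bound Fk}: the lemma is applied there to $\varphi=\partial_{x_i}u_f$, and although $u_f$ has zero $\pi$-mean, $\partial_{x_i}u_f$ need not (its $\pi$-mean equals $\psmuLine{u_f}{\partial_{x_i}U}$). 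A fully rigorous repair would have to carry the mean term explicitly, or use the structure of the resolvent equation defining $u_f$ rather than a standalone pointwise bound of this form.
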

\begin{proof}
Note that we only need to consider $\varphi \in \rmC^{\infty}_{\rmc}(\msx)$ since $\rmC^{\infty}_{\rmc}(\msx)$ is a core for $(\nax,\domain(\nax))$. 
First since $\nax U \in \mrl^2(\mu)$, for any $\varepsilon > 0$, we get
\begin{equation}
 \label{eq:lemma_dms_6_1}
2 \psmu{\varphi \nax U}{\nax \varphi} \leq \veps^{-1}\normmu{\nax \varphi}^2 + \veps \normmu{\varphi \nax U}^2 \eqsp. 
\end{equation}
We then bound from below the left-hand side. Using the \textit{carré
 du champ} identity, \ie~for any $f,g \in \rmC_{\poly}^{2}(\msx)$,
$\psmu{\nax f}{\nax g} = \psmu{\ps{\nax U}{\nax f} - \Delta_x f}{g}$, we get using that $\nax [\varphi^2] = 2 \varphi \nax \varphi$,
\begin{equation}
 2	\psmu{ \varphi \nax U}{ \nax \varphi } = \psmu{ \nax [\varphi^2]}{ \nax U } = \normmu{ \varphi \nax U }^2 - \psmu{ \varphi^2
 }{ \Delta_x U} \eqsp. 
\end{equation}
By \eqref{eq:laplacian bound} and \eqref{eq:poincare_assumption}, we obtain 
\begin{align}
  2	\psmu{ \varphi \nax U}{ \nax \varphi }& \geq \normmu{ \varphi \nax U }^2/2 - c_2 d^{1+\varpi}  \normmu{\varphi}^2 \\
  & \geq \normmu{ \varphi \nax U }^2/2 - (c_2 d^{1+\varpi} /\Cp^2) \normmu{\nax \varphi}^2 \eqsp. 
\end{align}
From this result and \eqref{eq:lemma_dms_6_1}, it follows that
\begin{equation}
\normmu{ \varphi \nax U }^2/2 - (c_2 d^{1+\varpi} /\Cp^2) \normmu{\nax \varphi}^2 \leq \veps^{-1}\normmu{\nax \varphi}^2 + \veps \normmu{\varphi \nax U}^2 \eqsp.
\end{equation}
Rearranging terms and setting $\varepsilon = 1/4$ completes the proof. The last statement is a direct consequence of the first one using the definition of $W$ in \eqref{eq:def_W}. 
\end{proof}

Putting this with Proposition~\ref{prop:bound hessian}, this implies the following.

\begin{corollary}
\label{coro:bound Fk}
Let $m >0$ and assume \Cref{as:U} and \Cref{as:Fk}. For any $f \in \Lmu$ and $k \in \{ 1, \ldots, K \}$, we have
\begin{align}
  \normmu{ F_k^\top \{\nax (m \Id +\nax^{\star} \nax)^{-1} \Piv f \} }& \leq 2^{\half} a_k 	\normmu{ W \{\nax (m \Id +\nax^{\star} \nax)^{-1} \Piv f \} } \\
 &  \leq \frac {2^{\half} a_k \kappavara} {\kappavarb} \normmu{ \Piv f} \eqsp,
 \end{align}
 where $a_k$, $W$, $\kappa_1$ and $\kappa_2$ are defined by \eqref{eq:def_a_k}, \eqref{eq:def_W}, \eqref{eq:def_kappa_a} and \eqref{eq:def_kappab} respectively.
\end{corollary}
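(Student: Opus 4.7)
The plan is to chain together Assumption \ref{as:Fk}-\ref{item:Fk bounded}, Lemma \ref{lemma:dms 6} and Proposition \ref{prop:bound hessian}, applied to the function $u = (m\Id + \nax^{\star}\nax)^{-1}\Piv f$, which by the results invoked in the proof of Proposition \ref{prop:bound hessian} belongs to $\rmC^3_{\poly}(\msx)$.

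For the first inequality, the key observation is that Assumption \ref{as:Fk}-\ref{item:Fk bounded} gives $|F_k(x)| \leq a_k(1 + |\nax U(x)|)$, and the elementary bound $(a+b)^2 \leq 2(a^2+b^2)$ for $a,b\geq 0$ yields $1 + |\nax U(x)| \leq 2^{\half}(1+|\nax U(x)|^2)^{\half} = 2^{\half} W(x)$. Therefore, pointwise on $\msx$, $|F_k(x)^\top \nax u(x)| \leq |F_k(x)||\nax u(x)| \leq 2^{\half} a_k\, W(x) |\nax u(x)|$. Squaring and integrating against $\mu$ gives the first inequality.

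For the second inequality, I would apply Lemma \ref{lemma:dms 6} componentwise. Specifically, for each $i \in \{1,\ldots,d\}$, taking $\varphi = \partial_{x_i}u \in \domain(\nax)$ (using $u \in \rmC^3_{\poly}(\msx)$), the lemma gives
\begin{equation}
\normmu{(\partial_{x_i}u)\, W} \leq \kappavarb^{-1} \normmu{\nax \partial_{x_i}u} \eqsp.
\end{equation}
Summing the squares of these inequalities over $i$, and using $\sum_i |\nax \partial_{x_i}u|^2 = |\nax^2 u|^2$ and $\sum_i |\partial_{x_i}u|^2 = |\nax u|^2$, I obtain
\begin{equation}
\normmu{W\, \nax u}^2 = \sum_{i=1}^d \normmu{(\partial_{x_i}u) W}^2 \leq \kappavarb^{-2} \normmu{\nax^2 u}^2 \eqsp.
\end{equation}
Finally, Proposition \ref{prop:bound hessian} gives $\normmu{\nax^2 u} \leq \kappavara \normmu{\Piv f}$, and combining yields $\normmu{W\nax u} \leq (\kappavara/\kappavarb)\normmu{\Piv f}$. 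Multiplying by $2^{\half} a_k$ concludes the second inequality.

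No step here looks particularly delicate: the only mildly technical point is ensuring that $u$ has enough regularity to apply Lemma \ref{lemma:dms 6} with $\varphi = \partial_{x_i} u$, which is already justified in the proof of Proposition \ref{prop:bound hessian} via \cite[Theorem 2]{pardoux2001}. The proof is essentially a clean concatenation of the two regularity estimates already proved in Appendix \ref{sec:ellipt-regul-estim}.
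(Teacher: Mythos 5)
Your proposal is correct and follows essentially the same route as the paper: the pointwise bound $|F_k(x)| \leq a_k(1+|\nax U(x)|) \leq 2^{\half} a_k W(x)$, then Lemma~\ref{lemma:dms 6} applied componentwise to $\partial_{x_i}u$ and summed, then Proposition~\ref{prop:bound hessian}. The one small point worth making explicit, which the paper states at the outset, is the density reduction: since $\nax(m\Id+\nax^{\star}\nax)^{-1}$ is bounded (Corollary~\ref{coro:bound regularization}), it suffices to prove the estimate for $f\in\rmCb^2(\mse)$, which is exactly where the Pardoux regularity giving $u\in\rmC^3_{\poly}(\msx)$ applies — you gesture at this but do not spell out why the case of general $f\in\Lmu$ then follows.
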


\begin{proof}
Note first that since $\nax (m \Id +\nax^{\star} \nax)^{-1}$ is a bounded operator by  \Cref{coro:bound regularization}, it is sufficient by density to show this result for $f \in \rmCb^{2}(\mse)$. Let $f \in \rmCb^{2}(\mse)$ and $u = (m +\nax^{\star} \nax)^{-1} \Piv f$. By \cite[Theorem 2]{pardoux2001}, $u \in \rmC_{\poly}^3(\msx)$.
Second since for any $t,s \ge 0, \, s + t \le 2^{\half} \sqrt{s^2 + t^2}$, \Cref{as:Fk}-\ref{item:Fk bounded} implies for any $x \in \msx$, 
\begin{equation}
	|F_k| (x) \le a_k (1 + |\nax U|(x)) \le 2^{\half}a_k W(x) \eqsp. 
\end{equation}
Therefore using \Cref{lemma:dms 6} and \Cref{prop:bound hessian} successively, we obtain
\begin{align}
  \normmu{ F_k^\top \nax u } &\le \normmu{ \, |F_k| \, \nax u \, } \le 2^{\half} a_k \normmu{ W \nax u } = 2^{\half} a_k \parenthese{ \sum_{i=1}^d \normmu{ W \partial_{x_i} u }^2}^{1/2} \\ &\leq (2^{\half} a_k / \kappavarb) \parenthese{ \sum_{i=1}^d \normmu{ \nax \parentheseDeux{ \partial_{x_i} u} }^2}^{1/2} = (2^{\half} a_k / \kappavarb) \normmu{ \nax^2 u }\\
  &\le (2^{\half} a_k \kappavara / \kappavarb) \normmu{ \Piv f } \eqsp.
\end{align}
\end{proof}


\section{Supplementary material}
\subsection{Radial distributions}
\label{app:radial}

The following gathers standard results on spherically symmetric distributions on $\bbR^d$ for which we could not find a single reference. In particular we establish that \Cref{as:radial}-\ref{item:as:radial:general} and conditions required in Lemma \ref{lemma:velocities norm bound} are satisfied in this scenario.
\begin{lemma} \label{lem:moments-spherically-symmetric}
Let $d\geq2$. 
\begin{enumerate}
\item Assume $\measv$ is the uniform distribution on the unit hypersphere $\bbS^{d-1}$ ,
then 
\begin{enumerate}[wide, labelwidth=!, labelindent=0pt]
\item for $i,j,k,l\in\{ 1,\ldots, d\}$ such that $\card(\{i,j,k,l\})>2$, we have
$\int_{\sphere^{d-1}} v_{i}v_{j}v_{k}v_{l}\, \rmd \nu (v)=0$,
\item otherwise,
\begin{align}
  m_{2}&=\frac{1}{d} \eqsp, \qquad m_{2,2}=\int_{\sphere^{d-1}} v_{1}^2v_{2}^2\, \rmd \nu (v)=\frac{1}{d(d+2)}\\
   \text{ and } \qquad m_{4}& =\frac{1}{3}\int_{\sphere^{d-1}} v_{1}^{4} \, \rmd \nu(v)=\frac{1}{d(d+2)} \eqsp.
\end{align}
\end{enumerate}
\item For any spherically symmetric distribution $\nu$ \ie~corresponding
to random variables $V=B^{\half} W$ for $W$ uniformly distributed on
the unit hypersphere $\bbS^{d-1}$ and $B$ a non-negative random variable independent
of $w$ and of first and second order moments $\gamma_1$ and $\gamma_2$ respectively,
\begin{enumerate}
\item for $i,j,k,l\in\{ 1,\ldots, d\}$ such that $\card(\{i,j,k,l\})>2$, we have
$\int_{\rset^d} v_{i}v_{j}v_{k}v_{l}\, \rmd \nu (v)=0$,
\item otherwise,
\[
m_{2}=\frac{\gamma_1}{d}\eqsp, \qquad m_{2,2}=\frac{\gamma_2}{d(d+2)}\text{ and } \qquad m_{4}=\frac{\gamma_2}{d(d+2)} \eqsp. 
\]
\end{enumerate}
\end{enumerate}
\end{lemma}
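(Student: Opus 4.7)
The plan is to first establish the assertions for $\measv$ uniform on $\bbS^{d-1}$ and then deduce the spherically symmetric case by factoring expectations across the independent factors $B$ and $W$.

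I would start with the vanishing of the fourth-order moments. Since $\measv$ is the uniform distribution on $\bbS^{d-1}$, it is invariant under the full orthogonal group and in particular under each coordinate reflection $v_i \mapsto -v_i$. Whenever $\card(\{i,j,k,l\}) > 2$, at least one index in the monomial $v_i v_j v_k v_l$ appears with odd multiplicity (either once or three times), so the corresponding reflection flips the sign of the integrand while leaving $\measv$ unchanged; hence $\int_{\bbS^{d-1}} v_i v_j v_k v_l \, \rmd \nu = 0$. The value $m_2 = 1/d$ follows immediately from integrating the constraint $\sum_{i=1}^d v_i^2 = 1$ on $\bbS^{d-1}$ and invoking the permutation symmetry of the coordinates.

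To compute $m_4$ and $m_{2,2}$ I would use the Gaussian representation. Let $G \sim \mathcal{N}(0,\Idd)$ and set $V := G/|G|$. Then $V$ is uniform on $\bbS^{d-1}$ and, by the standard fact that the direction and the norm of a centred isotropic Gaussian vector are independent, $|G|^2 \sim \chi^2_d$ is independent of $V$. Taking expectations in the identities $G_i^4 = |G|^4 V_i^4$ and $G_i^2 G_j^2 = |G|^4 V_i^2 V_j^2$ (for $i\neq j$), and using the standard values $\bbE[G_1^4] = 3$, $\bbE[G_1^2 G_2^2] = 1$, and $\bbE[|G|^4] = d(d+2)$, independence yields $\bbE[V_1^4] = 3/[d(d+2)]$ and $\bbE[V_1^2 V_2^2] = 1/[d(d+2)]$, which translate into $m_4 = m_{2,2} = 1/[d(d+2)]$ by the definitions of these quantities in \eqref{eq:def_m_4} and \eqref{eq:def_m_2_2}.

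For the second statement, given $V = B^{\half} W$ with $B$ independent of $W \sim \text{Unif}(\bbS^{d-1})$, I would simply factor expectations: any degree-$k$ monomial satisfies $\bbE[V_{i_1}\cdots V_{i_k}] = \bbE[B^{k/2}]\,\bbE[W_{i_1}\cdots W_{i_k}]$. The vanishing of the fourth-order cross moments with $\card(\{i,j,k,l\}) > 2$ is then immediate from the first part, while combining $\bbE[B] = \gamma_1$ and $\bbE[B^2] = \gamma_2$ with the moments of $W$ computed above yields the stated $m_2 = \gamma_1/d$, $m_{2,2} = \gamma_2/[d(d+2)]$, and $m_4 = \gamma_2/[d(d+2)]$. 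The only mildly delicate ingredient is the independence of $G/|G|$ and $|G|$ for isotropic Gaussian $G$, which is classical (an immediate consequence of spherical symmetry of the Gaussian density after a change to polar coordinates); one could equivalently avoid it by computing $\int_{\bbS^{d-1}} v_1^4 \, \rmd \nu$ directly via polar coordinates and Beta-function identities, but the Gaussian route keeps the algebra cleanest.
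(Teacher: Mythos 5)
Your proof is correct and follows essentially the same route as the paper: both arguments hinge on the Gaussian/chi-squared factorization $G = |G| \cdot (G/|G|)$ with the independence of radius and direction, then transfer the known Gaussian fourth moments $3$, $1$, and $\bbE[|G|^4]=d(d+2)$ to the sphere, and finally factor expectations over $B$ and $W$ for the general spherically symmetric case. The only cosmetic difference is that you establish the vanishing of cross moments with $\card(\{i,j,k,l\})>2$ directly via coordinate-reflection invariance of $\nu$, whereas the paper deduces it from the vanishing of the corresponding Gaussian moments; both are sound.
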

\begin{remark} Naturally the zero-mean $d$-dimensional Gaussian distribution on $\rset^d$ with covariance matrix $\Idd$. corresponds to $B$ distributed according to $\chi^{2}(d)$, in which case $m_4=m_{2,2}=m_2^2$.
\end{remark}
\begin{proof}
We use the polar parametrization of the multivariate normal distribution.
Let 
\[
v(\phi)=\big(\cos\phi_{1},\sin\phi_{1}\cos\phi_{2},\ldots,\cos(\phi_{k})\prod_{i=1}^{k-1}\sin(\phi_{i}),\ldots,\prod_{i=1}^{d-1}\sin(\phi_{i})\big),
\]
 $\phi\in[0,\uppi]^{d-2}\times[0,2\uppi]$. The probability distribution
for $\phi$ ensuring uniformity of $v(\phi)$ on the surface of the
$d$-sphere has density
\[
f_{\sphere}(\phi)\propto\prod_{i=1}^{d-2}\sin^{d-i-1}(\phi_{i}) \1_{ [0,\uppi]^{d-2}\times [0,2\uppi]} (\phi) \eqsp,
\]
with respect to the Lebesgue measure on $\rset^{d-1}$. Let $\Phi$ be random variable with distribution $f_{\sphere}$.
Further let $B\sim\chi^{2}(d)$ be independent of $\Phi$ then
it is standard knowledge that $W=B^{\half}v(\Phi)$ follows the zero-mean $d$-dimensional Gaussian distribution on $\rset^d$ with covariance matrix $\Idd$.
Therefore, by construction, 
\begin{align}
  &\mathbb{E}\big[W_{i}W_{j}W_{k}W_{l}\big]=\mathbb{E}\big[B^{2}v_{i}(\Phi)v_{j}(\Phi)v_{k}(\Phi)v_{l}(\Phi)\big]\\
  &  \quad =\mathbb{E}\big[B^{2}\big]\mathbb{E}\big[v_{i}(\Phi) v_{j}(\Phi)v_{k}(\Phi) v_{l}(\Phi)\big]=d(d+2)\mathbb{E}\big[v_{i}(\Phi) v_{j}(\Phi) v_{k}(\Phi) v_{l} (\Phi)\big] \eqsp,
\end{align}
and the latter term vanishes when the leftmost term does. We also
deduce that
\begin{equation}
\mathbb{E}\big[W_{1}^{2}\big]\mathbb{E}\big[W_{2}^{2}\big] =\mathbb{E}\big[W_{1}^{2}W_{2}^{2}\big]
 =d(d+2)\mathbb{E}\big[v_{1}^{2} (\Phi) v_{2}^{2} (\Phi)\big] \eqsp,
\end{equation}
from which we obtain $\mathbb{E}\big[v_{1}^{2}(\Phi) v_{2}^{2} (\Phi)\big]$. Similarly
using properties of the moments of the normal distribution,
\begin{equation}
3\mathbb{E}\big[W_{1}^{2}\big]^{2} =\mathbb{E}\big[W_{1}^{4}\big]
 =d(d+2)\mathbb{E}\big[v_{1}^{4}(\Phi)\big] \eqsp,
\end{equation}
leading to the expression for $\mathbb{E}\big[v_{1}^{4} (\Phi)\big]$. The
last statement is straightforward.

\end{proof}


\section{Expectation of quadratic forms of the velocity}

This section provides expressions for second order moments of quadratic forms of $v$ for a large class of distributions for which we could not find adequate references.

 \begin{lemma}
\label{lemma:velocities norm bound} Let $M \in \bbR^{d \times d}$ be a symmetric matrix, $c \in \bbR$ and assume the distribution $\measv$ of $v$ is such that 
\begin{enumerate}[wide, labelwidth=!, labelindent=0pt]
\item   for any bounded and measurable function $f :\bbR^2 \to \bbR$, $i,j \in \{1,\ldots, d \}$ such that $i \neq j$, $\int f(v_i,v_j) \, \rmd \nu(v) = \int f(v_1,v_2) \, \rmd \nu(v)$
\item for $i,j,k,l\in\{1,\ldots,d\}$, we have
  \begin{equation}
    \int v_{i}v_{j}v_{k}v_{l}\, \rmd \nu (v)=0 \eqsp,
  \end{equation}
  whenever $\card(\{i,j,k,l\})>2$.
\end{enumerate}
\noindent Then
\begin{equation}
	\left\| v^\top M v - c \right\|_\nu^2 = 3(m_4 - m_{2,2}) \Tr(M \odot M) + \left(m_2 \Tr(M) - c \right)^2 + 2 m_{2,2} \Tr(M^2),
\end{equation}
where $\odot$ denotes the Hadamard product.
\end{lemma}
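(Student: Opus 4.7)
The plan is to expand the squared norm
\[
\|v^\top M v - c\|_\nu^2 = \int_{\msv} (v^\top M v)^2\,\rmd\nu - 2c\int_{\msv} v^\top M v\,\rmd\nu + c^2
\]
and compute each piece by reducing to integrals of monomials in the components of $v$, which are controlled by the two assumptions of the lemma (together with the implicit anti-symmetry under sign flip of a single coordinate, which matches Assumption~\ref{as:radial}).

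For the linear piece, I would write $\int v^\top M v\,\rmd\nu = \sum_{i,j} M_{ij}\int v_i v_j\,\rmd\nu$ and use condition (1) (applied to $f(a,b)=ab$, and coupled with single-coordinate sign-flip symmetry) to conclude that $\int v_i v_j\,\rmd\nu = 0$ whenever $i\neq j$. Diagonal terms give $\int v_i^2\,\rmd\nu = m_2$, so $\int v^\top M v\,\rmd\nu = m_2 \Tr(M)$.

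For the quadratic piece, I would expand
\[
\int (v^\top M v)^2\,\rmd\nu = \sum_{i,j,k,l} M_{ij}M_{kl}\int v_i v_j v_k v_l\,\rmd\nu,
\]
and classify tuples by $\card(\{i,j,k,l\})$. Condition (2) kills everything with $\card \geq 3$, and the sign-flip/permutation symmetry kills the ``three-one'' patterns ($\int v_a^3 v_b\,\rmd\nu = 0$ for $a\neq b$). So only two sub-cases survive:
\begin{enumerate}
\item $i=j=k=l$, contributing $\sum_i M_{ii}^2 \int v_i^4 \,\rmd\nu = 3 m_4\, \Tr(M\odot M)$;
\item $\{i,j,k,l\}=\{a,a,b,b\}$ with $a\neq b$: by condition (1) the integral equals $m_{2,2}$, and enumerating the $\binom{4}{2}=6$ orderings together with the symmetry $M_{ab}=M_{ba}$ gives a coefficient $2 M_{aa}M_{bb} + 4 M_{ab}^2$ per unordered pair $\{a,b\}$.
\end{enumerate}

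The last step is a bookkeeping collapse, using the identities $\sum_{a\neq b}M_{aa}M_{bb}=\Tr(M)^2-\Tr(M\odot M)$ and $\sum_{a\neq b}M_{ab}^2 = \Tr(M^2)-\Tr(M\odot M)$ (the latter from $\|M\|_F^2 = \Tr(M^2)$ for symmetric $M$), to rearrange the case~(2) contribution into the desired combination of $\Tr(M)^2$, $\Tr(M^2)$ and $\Tr(M\odot M)$. Combining with $-2c\cdot m_2\Tr(M)+c^2$ then completes the square as $(m_2\Tr(M)-c)^2$ and yields the claimed formula. The main (and really only) difficulty is the combinatorial enumeration of the ``two-distinct-values, two-of-each'' configurations and making sure the coefficient of $\Tr(M\odot M)$ after collapsing the diagonal/off-diagonal decomposition comes out as $3(m_4-m_{2,2})$; no analytic estimate is needed.
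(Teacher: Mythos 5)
Your approach matches the paper's line for line: expand $\|v^{\top}Mv-c\|_{\nu}^{2}$, classify the quartic monomials $\int v_iv_jv_kv_\ell\,\rmd\nu$ by how many distinct indices appear, discard $\card(\{i,j,k,l\})>2$ by hypothesis~(2), dispose of the ``three\,--\,one'' patterns $\int v_a^3v_b\,\rmd\nu$ by the single-coordinate sign flip (not in the lemma's stated hypotheses but supplied by \Cref{as:radial}), and collapse the two surviving families into traces. The combinatorial count $2M_{aa}M_{bb}+4M_{ab}^2$ per unordered pair and the trace identities you invoke are exactly what the paper computes.

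There is, however, one step you (and the paper) should re-examine. After the bookkeeping, the quartic contribution equals
\[
3(m_4-m_{2,2})\,\Tr(M\odot M)+m_{2,2}\,\Tr(M)^2+2m_{2,2}\,\Tr(M^2) \eqsp,
\]
with coefficient $m_{2,2}$, \emph{not} $m_{2}^{2}$, on $\Tr(M)^2$. Adding $-2cm_2\Tr(M)+c^2$ therefore does not complete a square to $(m_2\Tr(M)-c)^2$ unless $m_{2,2}=m_2^2$. The identity that the computation actually yields is
\[
\|v^{\top}Mv-c\|_{\nu}^{2}
=3(m_4-m_{2,2})\Tr(M\odot M)+m_{2,2}\Tr(M)^2-2cm_2\Tr(M)+c^2+2m_{2,2}\Tr(M^2) \eqsp,
\]
which differs from the lemma's displayed formula by $(m_{2,2}-m_2^2)\Tr(M)^2$. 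This vanishes for the Gaussian and Rademacher velocity laws (where $m_{2,2}=m_2^2$), but not in general: with $d=2$, $\nu$ uniform on $\bbS^1$, $M=\Idd$ and $c=m_2\Tr(M)=1$, one has $v^\top M v\equiv 1$ so the left-hand side is $0$, whereas the lemma's formula gives $2m_{2,2}\Tr(M^2)=1/2$. The paper's own proof in fact ends with the correct expanded expression and then asserts ``which implies the desired result,'' so the slip sits in the lemma statement rather than in the computation. You should state the corrected identity instead of claiming the square completes; note that in the downstream uses (\Cref{lemma:RT bounded}, \Cref{coro:velocities norm bound}) one takes $c=m_2\Tr(M)$, and the correction term $(m_{2,2}-m_2^2)\Tr(M)^2$ is then what gets silently dropped.
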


\begin{proof}
Using that $M$ is symmetric, and the expectation symbol for expectations with respect to $\nu$,
\begin{equation}
\begin{aligned}
  \bbE \left[ \left(\sum_{i,j=1}^d M_{ij} v_i v_j - c \right)^2 \right] &= \sum_{i,j,k,\ell=1}^d M_{ij} M_{k\ell} \bbE[v_i v_j v_k v_\ell] \\
  & \qquad \qquad - 2 c \sum_{i,j=1}^d M_{ij} \bbE[v_i v_j] + c^2
\end{aligned}
\end{equation}
where
\begin{equation}
\begin{aligned}
&	\sum_{i,j,k,\ell=1}^d M_{ij} M_{k\ell} \bbE[v_i v_j v_k v_\ell] = 3 m_4 \sum_{i=1}^d M_{ii}^2 + m_{2,2} \sum_{i\neq j} M_{ii} M_{jj} + 2 m_{2,2} \sum_{i\neq j} M_{ij}^2 \\
	&\qquad \qquad \qquad = (3 m_4 - 3 m_{2,2}) \sum_{i=1}^d M_{ii}^2 + m_{2,2} \sum_{i,j=1}^d \left( M_{ii} M_{jj} + 2 M_{ij}^2 \right) \\
	&\qquad \qquad \qquad  = (3 m_4 - 3 m_{2,2}) \Tr(M \odot M) + m_{2,2} \left( \Tr(M)^2 + 2 \Tr(M^2) \right).
\end{aligned}
\end{equation}
Therefore
\begin{equation}
\begin{aligned}
	\bbE \left[ \left(\sum_{i,j=1}^d M_{ij} v_i v_j - c \right)^2 \right] &= (3 m_4 - 3 m_{2,2}) \Tr(M \odot M) + m_{2,2} \Tr(M)^2  \\
	&\qquad + 2 m_{2,2} \Tr(M^2)- 2 c m_2 \Tr(M) + c^2\eqsp,
\end{aligned}
\end{equation}
which implies the desired result.
\end{proof}

\begin{corollary}
\label{coro:velocities norm bound}
Given a symmetric matrix $M \in \bbR^{d \times d}$ and a constant $c \in \bbR$,
\begin{equation}
	\left\| v^\top M v - m_2 \Tr(M) \right\|_\nu \le \sqrt{2m_{2,2} + 3 (m_4 - m_{2,2})_+} | M |.
\end{equation}
\end{corollary}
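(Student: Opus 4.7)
The plan is to apply Lemma \ref{lemma:velocities norm bound} with the specific choice $c = m_2 \Tr(M)$, which kills the $(m_2 \Tr(M) - c)^2$ term, yielding the simplified identity
\begin{equation}
\left\| v^\top M v - m_2 \Tr(M) \right\|_\nu^2 = 3(m_4 - m_{2,2}) \Tr(M \odot M) + 2 m_{2,2} \Tr(M^2).
\end{equation}
From here, it remains to bound the right-hand side by $\bigl(2m_{2,2} + 3(m_4-m_{2,2})_+\bigr) |M|^2$, recalling that $|M|^2 = \Tr(M^\top M) = \Tr(M^2)$ since $M$ is symmetric.

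I would split into two cases according to the sign of $m_4 - m_{2,2}$. If $m_4 \geq m_{2,2}$, then $(m_4-m_{2,2})_+ = m_4 - m_{2,2}$ and the coefficient of $\Tr(M\odot M)$ is nonnegative; using the elementary inequality $\Tr(M\odot M) = \sum_i M_{ii}^2 \leq \sum_{i,j} M_{ij}^2 = \Tr(M^2)$ we obtain
\begin{equation}
3(m_4-m_{2,2})\Tr(M\odot M) + 2m_{2,2}\Tr(M^2) \leq \bigl(3(m_4-m_{2,2}) + 2m_{2,2}\bigr) |M|^2.
\end{equation}
If instead $m_4 < m_{2,2}$, then $(m_4-m_{2,2})_+ = 0$ while $3(m_4-m_{2,2})\Tr(M\odot M) \leq 0$, so the right-hand side is bounded by $2m_{2,2} |M|^2$. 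Taking square roots in both cases yields the stated inequality.

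There is no genuine obstacle here: the corollary is a direct simplification of the lemma combined with the two elementary facts $\Tr(M\odot M) \leq \Tr(M^2)$ and nonnegativity of the square. The only point worth being careful about is the sign handling through the positive part $(\cdot)_+$, which is precisely what forces the case split above.
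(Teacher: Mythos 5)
Your proof is correct and is plainly the intended argument: the paper states the corollary immediately after \Cref{lemma:velocities norm bound} without a separate proof, and the constant $\sqrt{2m_{2,2} + 3(m_4-m_{2,2})_+}$ signals exactly the specialization $c = m_2\Tr(M)$ followed by the two elementary bounds you use, namely $\Tr(M\odot M) = \sum_i M_{ii}^2 \leq \sum_{i,j} M_{ij}^2 = \Tr(M^2) = |M|^2$ and nonnegativity of $\Tr(M\odot M)$ in the case $m_4 < m_{2,2}$. The case split on the sign of $m_4-m_{2,2}$ is handled correctly, and both branches give the bound $(2m_{2,2} + 3(m_4-m_{2,2})_+)|M|^2$ before taking square roots.
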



\subsection{Examples of potentials}
\label{sec:examples-potentials}
\begin{lemma} \label{lemma:potential:independent}
Assume that the potential $U$ is defined for any $x \in \msx$ by $U(x)=\sum_{i=1}^{d}\big(1+x_{i}^{2}\big)^{\beta}/2$,
for $\beta\geq1$. Then $U$ is strongly convex and  there exists $c_{2}>0$, dependent on $\beta$ only, such that  \eqref{eq:laplacian bound} is satisfied with $\varpi = 0$.
\end{lemma}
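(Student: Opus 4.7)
My plan is to exploit the fact that the potential $U$ is a separable sum so that $\nabla_x^2 U$ is diagonal and the conditions reduce to one-dimensional bounds applied coordinatewise.

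First I will compute the partial derivatives: a direct calculation gives $\partial_{x_i} U(x) = \beta x_i (1+x_i^2)^{\beta-1}$ and
\begin{equation}
\partial_{x_i,x_i}^2 U(x) = \beta(1+x_i^2)^{\beta-2}\bigl[1+(2\beta-1)x_i^2\bigr],
\end{equation}
while $\partial_{x_i,x_j}^2 U(x) = 0$ for $i \neq j$. Hence $\nabla_x^2 U(x) = \mathrm{diag}\bigl(\partial_{x_i,x_i}^2 U(x)\bigr)_{i=1}^d$.

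For strong convexity, setting $y = x_i^2 \geq 0$ I will study $\psi(y) = (1+y)^{\beta-2}[1+(2\beta-1)y]$. Differentiating gives $\psi'(y) = (1+y)^{\beta-3}(\beta-1)[3+(2\beta-1)y]$, which is nonnegative on $\rset_+$ for $\beta \geq 1$, so $\psi(y) \geq \psi(0) = 1$. Consequently each diagonal entry of $\nabla_x^2 U(x)$ is bounded below by $\beta$, yielding $\nabla_x^2 U(x) \succeq \beta \, \Idd$ with $\beta \geq 1$.

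For \eqref{eq:laplacian bound} with $\varpi=0$, the separable structure reduces the task to showing that there exists a constant $c_2 = c_2(\beta)>0$ (independent of $d$) such that, for all $t \geq 0$,
\begin{equation}
\Phi_\beta(t) := \beta(1+t)^{\beta-2}\bigl[1+(2\beta-1)t\bigr] - \tfrac{\beta^2}{2}\, t\,(1+t)^{2(\beta-1)} \leq c_2(\beta).
\end{equation}
Applying this coordinatewise with $t = x_i^2$ and summing over $i$ gives exactly $\Delta_x U(x) - |\nabla_x U(x)|^2/2 \leq c_2(\beta)\, d$. The bound on $\Phi_\beta$ follows from continuity together with the asymptotics $\Phi_\beta(t) \sim -(\beta^2/2)\, t^{2\beta-1}$ as $t \to \infty$, since the first term grows only like $t^{\beta-1}$ while the negative second term grows like $t^{2\beta-1}$ with $2\beta - 1 > \beta - 1$ when $\beta \geq 1$; the case $\beta = 1$ is trivial as then $\Phi_1(t) = 1 - t/2$. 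In particular $\Phi_\beta$ is a continuous function on $[0,\infty)$ tending to $-\infty$, hence attains a finite maximum, which we take as $c_2(\beta)$.

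The computations are all elementary; the only mild subtlety is the case analysis around $\beta = 1$ and $1 < \beta < 2$ where the exponent $\beta - 2$ is negative so that $(1+t)^{\beta-2}$ decays in $t$, but this is absorbed by the monotonicity argument for $\psi$ above. No step presents a real obstacle.
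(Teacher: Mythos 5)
Your proposal is correct and follows the same separable/coordinatewise reduction as the paper: compute the diagonal Hessian, establish $\nabla_x^2 U \succeq \beta \Idd$, then reduce the Laplacian bound to a one-dimensional inequality in $t = x_i^2$. Where you differ is in how you handle that one-dimensional step: the paper constructs an explicit piecewise bound with a free threshold parameter $c$ (which yields a concrete, if somewhat opaque, value of $c_2(\beta)$), whereas you observe that $\Phi_\beta$ is continuous on $[0,\infty)$, starts at $\Phi_\beta(0) = \beta$, and tends to $-\infty$ because the degree of the negative term ($2\beta-1$) strictly exceeds that of the positive term ($\beta-1$) for all $\beta \geq 1$, so its supremum is a finite constant depending only on $\beta$. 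Your argument is cleaner and sufficient for the existential statement; the paper's pays the extra effort to make $c_2$ explicit, which can matter for the quantitative bounds elsewhere in the paper. Your monotonicity argument for strong convexity, via $\psi'(y) = (1+y)^{\beta-3}(\beta-1)[3+(2\beta-1)y] \geq 0$, is also a welcome piece of rigor that the paper glosses over (the bound $\partial^2_{x_i,x_i} U \geq \beta$ is not trivial for $1 \leq \beta < 2$, where the factor $(1+x_i^2)^{\beta-2}$ is less than one).
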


\begin{proof}

We have for $i,j\in\{1,\ldots,d\}$ and $x \in \msx$,
\[
\big[\nabla_{x}U(x)\big]_{i}=\beta x_{i}\big(1+x_{i}^{2}\big)^{\beta-1}\text{ and }\big[\nabla_{x}^{2}U(x)\big]_{i,j}=\beta[1+(2\beta-1)x_{i}^{2}]\big(1+x_{i}^{2}\big)^{\beta-2} \delta_{i,j} \eqsp,
\]
leading to $\nabla_x^2 U(x) \succeq \beta \Idd$,
and the strong convexity follows.   Using that $\beta \geq 1$ and for
any $s \geq 0$ and $c >0$,
$(1+s^2)^{\beta-2} s^{2} \leq (1+c^2)^{\beta-2}c^2\1_{\ccint{0,c}}(s)+(1+s^2)^{2\beta-2} s^{2}/(1+c^2)^{\beta}\1_{\ooint{c,\plusinfty}}(s)$ and $ (1+s^2)^{\beta-2} \leq  \{1\vee(1+c^2)^{\beta-2}\}  \1_{\ccint{0,c}}(s) + (1+s^2)^{2\beta-2} (s/c)^{2} \1_{\ooint{c,\plusinfty}}(s)$, we get for any $x \in \msx$,
  \begin{align}
    \Delta_x U(x) = \trace(\nabla_x^2U(x)) &= \beta \sum_{i=1}^d
    [1+(2\beta-1)x_{i}^{2}]\big(1+x_{i}^{2}\big)^{\beta-2} \\ &\leq
    \beta d \parentheseDeux{\{1\vee (1+c^2)^{\beta-2}\} + (2\beta-1)(1+c^2)^{\beta-2}c^2} \\
    &  \quad + \beta^{-1}\abs{\nabla_x U(x)}^2\parentheseDeux{c^{-2} + (2\beta-1)(1+c^2)^{-\beta}} \eqsp,
  \end{align}
 which with $c \geq (2 \beta^{-\half})\vee 2^{1/\beta}$ completes the proof.

\end{proof}

\begin{lemma} \label{lemma:potential:polynomial}
Assume that the potential $U$ is defined for any $x \in \msx$ by $U(x)=(1+\vert x\vert^{2})^{\beta}$ with  $\beta\geq1$. Then $U$ is strongly convex and 
there exists $c_{2}>0$, dependent on $\beta$ only, such that \eqref{eq:laplacian bound} is satisfied with $\varpi = 1-1/\beta$.
\end{lemma}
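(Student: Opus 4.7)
The plan is to argue directly from explicit formulas for the gradient, Hessian, and Laplacian of $U$, and then to close the desired inequality by a dichotomy on $|x|^2$ with the threshold chosen to be of order $d^{1/\beta}$, which is precisely what produces the exponent $\varpi = 1 - 1/\beta$.

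\textbf{Step 1 (explicit derivatives).} First I would write
\begin{equation*}
\nabla_x U(x) = 2\beta \, x \, (1+|x|^{2})^{\beta-1}, \qquad
\nabla_x^{2} U(x) = 2\beta \, (1+|x|^{2})^{\beta-1} \Idd + 4\beta(\beta-1)(1+|x|^{2})^{\beta-2} x x^{\top},
\end{equation*}
from which
\begin{equation*}
\Delta_x U(x) = 2\beta d \, (1+|x|^{2})^{\beta-1} + 4\beta(\beta-1)|x|^{2}(1+|x|^{2})^{\beta-2}, \qquad
|\nabla_x U(x)|^{2} = 4\beta^{2}|x|^{2}(1+|x|^{2})^{2\beta-2}.
\end{equation*}

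\textbf{Step 2 (strong convexity).} Because $\beta \geq 1$, the rank-one term in the Hessian is positive semidefinite, so $\nabla_x^{2} U(x) \succeq 2\beta(1+|x|^{2})^{\beta-1}\Idd \succeq 2\beta \Idd$ for every $x$. This gives strong convexity with constant $2\beta$ (and in particular verifies \Cref{as:U}\ref{item:condition_hessian}).

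\textbf{Step 3 (unified upper bound on $\Delta_x U$).} Using $|x|^{2}(1+|x|^{2})^{\beta-2} \leq (1+|x|^{2})^{\beta-1}$, I would collapse the two contributions to
\begin{equation*}
\Delta_x U(x) \leq A_\beta(d) \, (1+|x|^{2})^{\beta-1}, \qquad A_\beta(d) := 2\beta d + 4\beta(\beta-1).
\end{equation*}

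\textbf{Step 4 (dichotomy at scale $d^{1/\beta}$).} Fix a constant $t = t(\beta) \geq 1$ to be chosen, and split according to whether $|x|^{2} \leq t d^{1/\beta}$ or not. On the bounded region, $(1+|x|^{2})^{\beta-1} \leq (1+t d^{1/\beta})^{\beta-1}\leq (2t)^{\beta-1}d^{(\beta-1)/\beta}$ for $d\geq 1$, whence
\begin{equation*}
\Delta_x U(x) \leq A_\beta(d)(2t)^{\beta-1} d^{(\beta-1)/\beta} \leq c_{2}(\beta,t)\, d^{2-1/\beta} = c_{2}(\beta,t)\, d^{1+\varpi}.
\end{equation*}
On the complementary region $|x|^{2} > t d^{1/\beta}$, I would use
\begin{equation*}
|\nabla_x U(x)|^{2}/2 = 2\beta^{2} |x|^{2}(1+|x|^{2})^{2\beta-2} \geq 2\beta^{2} |x|^{2}(1+|x|^{2})^{\beta-1}(1+|x|^{2})^{\beta-1},
\end{equation*}
and note that $|x|^{2}(1+|x|^{2})^{\beta-1} \geq (t d^{1/\beta})^{\beta} = t^{\beta} d$. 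Choosing $t=t(\beta)$ large enough that $2\beta^{2}t^{\beta} \geq A_\beta(1) + 2\beta$ (so that $2\beta^{2}t^{\beta} d \geq A_\beta(d)$ for all $d\geq 1$), I obtain
\begin{equation*}
A_\beta(d)(1+|x|^{2})^{\beta-1} \leq 2\beta^{2}|x|^{2}(1+|x|^{2})^{2\beta-2} = |\nabla_x U(x)|^{2}/2,
\end{equation*}
which combined with Step 3 gives $\Delta_x U(x) \leq |\nabla_x U(x)|^{2}/2$.

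\textbf{Step 5 (conclusion).} Combining the two regions yields \eqref{eq:laplacian bound} with $\varpi = 1-1/\beta$ and $c_2$ depending only on $\beta$. There is no real obstacle here: the only point requiring a small amount of care is the calibration of the threshold $t$ in Step 4 so that the comparison constants are dimension-free. The case $\beta = 1$ is trivial since then the $(\beta-1)$-terms vanish, $\Delta_x U = 2d$, $|\nabla_x U|^{2}=4|x|^{2}$, $\varpi = 0$, and the claim holds with $c_2 = 2$.
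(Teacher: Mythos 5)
Your proof is correct and follows essentially the same route as the paper: explicit computation of gradient and Hessian, strong convexity from positive semidefiniteness of both Hessian summands, and a dichotomy on $|x|^{2}$ at scale $d^{1/\beta}$ to get the $d^{2-1/\beta}=d^{1+\varpi}$ constant term versus the $|\nabla_x U|^{2}/2$ term. The only cosmetic difference is that you first collapse the two Laplacian contributions into the single bound $A_\beta(d)(1+|x|^{2})^{\beta-1}$ and then split, whereas the paper applies the threshold inequality to the dominant $2\beta d\, U^{1-1/\beta}$ term and absorbs the remaining $(\beta-1)$-term via a Young inequality; both calibrations land on the same threshold $\Theta(d^{1/\beta})$ and the same conclusion.
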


\begin{proof}
First, we have that 
\begin{equation}
  \label{eq:nabla_U_second_form_pot_ex}
\nabla_{x}U(x)  =2\beta(1+\vert x\vert^{2})^{\beta-1}x=2\beta U(x){}^{1-1/\beta}x,
\end{equation}
and
\begin{equation}
    \label{eq:nabla_2_U_second_form_pot_ex}
  \nabla_{x}^{2}U(x)=2\beta\left[(1-1/\beta)U^{-1/\beta}(x) \nabla_{x}U(x) x^{\top} +U^{1-1/\beta}(x) \Idd\right] \eqsp. 
\end{equation}
As a result, and since $\beta\geq1$,
\begin{equation}
(1-\beta^{-1})U^{-1/\beta}(x)\nabla_{x}U(x) x^\top  =2\beta U(x){}^{1-2/\beta}x x^\top\succeq0  \eqsp,\,  U^{1-1/\beta}(x)I  \succeq \Idd \eqsp,
\end{equation}
from which we conclude that for any $x\in\mathsf{X}$, $\nabla_{x}^{2}U(x)\succeq2\beta  \Idd$. It remains to show that \eqref{eq:laplacian bound}  holds.
First we have for any $x \in \msx$,
\begin{align}
\trace\left(\nabla_{x}^{2}U(x)\right) & =2(\beta-1) U^{-1/\beta} (x) x^\top \nax U(x)+2 \beta d\,U^{1-1/\beta}(x)\\
   & \leq  2(\beta-1)\vert\nabla_{x}U(x)\vert\frac{\vert x\vert}{1+\vert x\vert^{2}} +2 \beta d\,U^{1-1/\beta}(x) \eqsp.
\end{align}
Using that for any $s \geq 0$ and $a >0$, $2s \leq a^{-2}+(as)^2$, $(1+s^2)^{\beta-1} \leq (1+(2d / \beta)^{1/\beta})^{\beta-1} \1_{\ccintLigne{0,(2d / \beta)^{1/\beta}}}(s^2) + (2d / \beta)^{-1} s^{2 \beta }(1+s^2)^{\beta-1} \1_{\oointLigne{(2d / \beta)^{1/\beta},\plusinfty}}(s^2) \leq  (1+(2d / \beta)^{1/\beta})^{\beta-1} \1_{\ccintLigne{0,(2d / \beta)^{1/\beta}}}(s^2) + (2d / \beta)^{-1} s^{2}(1+s^2)^{2\beta-2} \1_{\oointLigne{(2d / \beta)^{1/\beta},\plusinfty}}(s^2)$,
\eqref{eq:nabla_U_second_form_pot_ex}-\eqref{eq:nabla_2_U_second_form_pot_ex}, we get for any $x \in \msx$, 
\begin{align}
  \trace\left(\nabla_{x}^{2}U(x)\right) & \leq 2(\beta-1) \vert\nabla_{x}U(x)\vert \vert x\vert(1+\vert x\vert^{2})^{-1} +2 \beta d\,U^{1-1/\beta}(x)\\
                                        &  \leq (\beta-1) (4\beta +\abs{\nabla_x U(x)}^2/(4\beta)) +2 \beta d [(1+(2d/\beta)^{1/\beta})^{\beta-1} \\
  & \qquad \qquad \qquad + \abs{\nabla_x U(x)}^2/(8 d \beta)] \\
  & \leq 4(\beta-1) \beta  +2^{\beta-1}\beta  d(1+(2d/\beta)^{1-1/\beta}) + \abs{\nabla_x U(x)}^2/2 \eqsp,
\end{align}
where we used in the last step which completes the proof,  that  $(a +b)^{\beta-1} \leq 2^{\beta-2}(a^{\beta-1}+ b^{\beta-1})$ for any $a,b \geq 0$, applying  Hölder inequality, since $\beta \geq 1$. 
\end{proof}

\section*{Acknowledgments}

JR would like to thank Pierre Monmarch\'e for showing him how  ZZ and BPS fall under a general framework. CA acknowledges support from EPSRC ``Intractable Likelihood: New Challenges from Modern Applications (ILike)'' (EP/K014463/1). All the authors acknowledge the support of the Institute for Statistical Science in Bristol. AD acknowledges support from the Chaire BayeScale ``P. Laffitte''.

\bibliographystyle{abbrv}
\bibliography{../Bibliography/bibliography}

\end{document}